\documentclass[11pt]{amsart}
\usepackage{amsfonts}
\usepackage{amsmath,amsthm}
\usepackage{latexsym}
\usepackage{latexsym}
\usepackage{array}
\usepackage{amssymb}
\usepackage{enumerate}
\usepackage{comment}
\usepackage{graphicx}
\usepackage{tikz}
\usepackage{color}

\DeclareFontFamily{U}{wncy}{}
\DeclareFontShape{U}{wncy}{m}{n}{<->wncyr10}{}
\DeclareSymbolFont{mcy}{U}{wncy}{m}{n}
\DeclareMathSymbol{\Sh}{\mathord}{mcy}{"58}

\usepackage[francais,english]{babel}
\usepackage{fdsymbol}
\usepackage{color}
\usepackage[latin1]{inputenc}

\numberwithin{equation}{section}

\setcounter{tocdepth}{1}

\binoppenalty=9999 \relpenalty=9999

\theoremstyle{plain}
\newtheorem{theorem}{Theorem}[section]
\newtheorem*{theorem*}{Theorem}
\newtheorem{lemma}[theorem]{Lemma}
\newtheorem{proposition}[theorem]{Proposition}
\newtheorem{corollary}[theorem]{Corollary}

\newtheorem{problem}[theorem]{Problem}

\theoremstyle{remark}
\newtheorem{remark}[theorem]{Remark}
\newtheorem{example}[theorem]{Example}

\newtheorem*{lem*}{Lemma}
\newtheorem*{sublem*}{Sublemma}
\newtheorem*{remark*}{Remark}
\newtheorem*{NB*}{NB}

\newcommand{\dy}{  d_{x^\perp}y}
\newcommand{\ddy}{  d_{y^\perp}}
\newcommand{\tr}{  \|_{r}}

\newcommand{\Sv}{ \Sigma^v}
\newcommand{\Svm}{ \Sigma^{vm}_s}
\newcommand{\Sst}{{}^s\!\Sigma_*}
\newcommand{\Sod}{ \Sigma^1}

\newcommand{\nbh}{\text{neighbourhood}}
\newcommand{\id}{\text{id}}

\newcommand{\etl}{e^{-\tau \Lc}}
\newcommand{\etlt}{e^{-(\tau-t) \Lc}}

\newcommand{\nd}{n^{\le2}(0)}

\newcommand{\thb}{ \zeta }
\newcommand{\thet}{\theta }

\newcommand{\R}{ \mathbb{R} }

\newcommand{\C}{ \mathbb{C} }
\newcommand{\Z}{ \mathbb{Z} }
\newcommand{\N}{ \mathbb{N} }

\newcommand{\T}{ \mathbb{T} }

\newcommand{\gF}{\mathfrak F}
\newcommand{\fF}{{\frak F}}
\newcommand{\fn}{\mathfrak n}

\newcommand{\Del}{\Delta}

\newcommand{\bs}{ \bold{l} }

\newcommand{\bm}{ \bold{m} }

\newcommand{\ssum}{{\modtwosum}}

\newcommand{\cA}{ \mathcal{A} }

\newcommand{\cC}{ \mathcal{C} }

\newcommand{\Cc}{ \mathcal{C} }

\newcommand{\cY}{ \mathcal{Y} }

\newcommand{\cF}{ \mathcal{F} }

\newcommand{\EE}{ {\mathbb E}}

\newcommand{\Lc}{ \mathcal{L} }
\newcommand{\cZ}{ \mathcal{Z} }
\newcommand{\cM}{ \mathcal{M} }
\newcommand{\cN}{ \mathcal{N} }

\newcommand{\Rc}{ \mathcal{R} }

\newcommand{\cT}{ \mathcal{T} }

\newcommand{\cJ}{ \mathcal{J} }
\newcommand{\I}{ \mathcal{I} }

\newcommand{\om}{ \omega }
\newcommand{\nuu}{ {\nu'\, }}

\newcommand{\ga}{\gamma }

\newcommand{\s}{ \sigma }

\newcommand{\G}{ \Gamma }
\renewcommand{\phi}{ \varphi }
\newcommand{\oms}{ \omega^{12}_{3s}  }

\newcommand{\eps}{\varepsilon}

\newcommand{\de}{ \delta }
\newcommand{\al}{ \alpha }
\newcommand{\zz}{\mathfrak z}

\newcommand{\la}{ \lambda }

\newcommand{\dess}{\delta'^{12}_{3s}}
\newcommand{\des}{\delta^{12}_{3s}}
\newcommand{\deq}{\delta^{12}_{34}}

\newcommand{\dist}{{\operatorname{dist}}}

\newcommand{\be}{\begin{equation}}
\newcommand{\ee}{\end{equation}}
\newcommand{\ben}{\begin{equation*}}
\newcommand{\een}{\end{equation*}}

\newcommand{\ov}{ \overline }

\newcommand{\lan}{ \langle }

\newcommand{\ran}{ \rangle}

\newcommand{\p}{ \partial}

\newcommand{\wt}{ \widetilde }

\newcommand{\lbl}{\label}
\newcommand{\non}{\nonumber}
\newcommand{\qu}{\quad}
\newcommand{\qmb}{\quad\mbox}
\newcommand{\qnd}{\qmb{and}\qu}

\newcommand{\volna}{\thicksim}

\newcommand{\atw}[1]{a^{(2)}_{#1}}

\newcommand{\bo}[1]{\bar a^{(1)}_{#1}}

\newcommand{\Ga}{\Gamma}
\newcommand{\goa}{\mathfrak{a}}

\newcommand{\dep}{\delta'^{12}_{3s}}
\newcommand{\depp}{\delta'^{1'2'}_{3's}}

\title[Stochastic model for wave turbulence 1: kinetic limit]
{Formal expansions in stochastic model for wave turbulence 1: kinetic limit}

\author{Andrey  Dymov}
\address{Steklov Mathematical Institute of RAS, Moscow 119991, Russia 
	\& National Research University Higher School of
	Economics, Moscow 119048, Russia} \email{dymov@mi-ras.ru}
\author{Sergei Kuksin}
\address{Universit\'e Paris-Diderot (Paris 7), UFR de Math\'ematiques - Batiment Sophie Germain, 5 rue Thomas Mann, 75205 Paris,
 France  \& School of Mathematics, Shandong University, Jinan, PRC \& Saint Petersburg State University, Universitetskaya nab., St. Petersburg, Russia}
\email{ Sergei.Kuksin@imj-prg.fr}

\begin{document}

\begin{abstract}
We consider the damped/driven (modified) cubic NLS equation on a large torus with a properly scaled forcing and dissipation, and 
decompose its solutions to formal series in the amplitude. We study  the second order truncation of this series and prove that when the 
amplitude goes to zero and the torus' size goes to infinity the energy spectrum of the truncated solutions becomes close to a
solution of the damped/driven wave kinetic equation. Next we discuss higher order truncations of the series. 
\end{abstract}

\date{}

	\maketitle
	\tableofcontents

	
	
	\section{Introduction}\label{s2}

	\subsection{The setting}
	The wave turbulence (WT) was developed in 1960's as a heuristic tool to study  small-amplitude
	oscillations in nonlinear Hamiltonian PDEs and Hamiltonian systems on lattices. 
	We start with recalling basic concepts of the theory in application to the cubic non-linear Schr\"odinger equation (NLS).
	\smallskip

	\noindent {\it  Classical setting}.
	Consider the cubic NLS equation
	\be\label{nls}
	\frac{{{\partial}}}{{{\partial}} t}  u +i \Delta u- i\lambda \, |u|^2 u= 0\,,\qquad 
	x\in {\mathbb{T}}^d_L={  \mathbb{R}   }^d/(L{  \mathbb{Z}   }^d)\,,
	\ee
	where 
	$
	\Delta = {(2\pi )^{-2}}  \sum_{j=1}^d ({{\partial}}^2 / {{\partial}} x_j^2)\,,
	$
	$d\ge2$,
	$L\ge1$ and $0<\lambda\le 1$. Denote by  $H$ the space  $ L_2({{\mathbb T}}^d_L;{{\mathbb C}})$, 
	given  the $L_2$--norm  with respect to the
	normalised Lebesgue measure:
	$$
	\|u\|^2 =\|u\|^2_{L_2(\T^d_L)}=
	\lan u, u\ran\,,
	\quad
	\lan u, v\ran =
	L^{-d} \Re\int_{{{\mathbb T}}^d_L} u \bar v\,dx\,.
	$$
	The NLS equation is a hamiltonian system in $H$ with two integrals
	of motion -- the Hamiltonian and  $\|u\|^2$. The equation with the Hamiltonian  $ \lambda \|u\|^4$ is
	$\ \frac{{{\partial}}}{{{\partial}} t}  u  - 2 i\lambda \|u\|^2u=0$ and its flow commutes with that of NLS. We modify the NLS 
	equation by subtracting  $\lambda \|u\|^4$ from its Hamiltonian, thus arriving at the equation 
	\be\label{newnls}
	\frac{{{\partial}}}{{{\partial}} t}  u +i\Delta u- i\lambda \,\big( |u|^2 -2\|u\|^2\big)u =0 , \qquad x\in {\mathbb{T}}^d_L.
	\ee
	This modification is used by mathematicians, working with hamiltonian PDEs, since it keeps the main features 	of the original equation, reducing some non-crucial 	 technicalities. It is also used by physicists, studying WT;  e.g.  see 
	 \cite{Naz11}, pp.~89-90.\footnote{Note in addition that if $u(t,x)$ satisfies  eq. \eqref{newnls}, then $u'= e^{ 2it\lambda \| u \|^2}u$ is a solution of \eqref{nls}.} Below we work with  eq.~\eqref{newnls}  and write its solutions as functions 
	$u(t,x)\in {{\mathbb C}} $ or as curves  $u(t)\in H$.

	The objective  of  WT is to study solutions of \eqref{nls} and \eqref{newnls} 
	when
	\be\label{limit}
	\text{
	$\lambda\to0\ $ \ and  \ $\ L\to\infty$}
	\ee
	{  on large time intervals}.

	We will write the Fourier series for an $u(x)$ as
	\begin{equation}\label{ku2}
	u(x)= L^{-d/2} {\sum}_{s\in{{\mathbb Z}}^d_L} v_s e^{2\pi  i s\cdot x},\qquad {{\mathbb Z}}^d_L = L^{-1} {{\mathbb Z}}^d\,,
	\end{equation}
	 where 
	  the vector   of Fourier  coefficients $v=\{v_s, s\in \Z^d_L\}$    is the Fourier transform of $u(x)$: 
	\be\label{vs}
	v= \hat u=\cF(u), \qquad 
	v_s=\hat u(s) = L^{-d/2}\int_{\T^d_L} u(x) e^{-2\pi i s\cdot x}\,dx \,.
	\ee
	Given a    vector $v=\{v_s, s\in \Z^d_L\}$ we will regard the sum in  \eqref{ku2} as its inverse Fourier transform 
	 $(\cF^{-1}  v)(x)$,   which we will also write as $(\cF^{-1}  v_s)(x)$. \footnote{  
	 The symmetric form of the Fourier transform which we use -- with the same
	 scaling factor $ L^{-d/2}$ for the direct and the inverse transformations -- is convenient for the heavy calculation below in the paper.}
	  I.e.,
	$$
	u(x) =( \cF^{-1} v_s)(x)= ( \cF^{-1} v)(x)\,.
	$$
	Then
	$\ 
	\|u\|^2 = {\ssum}_s |v_s|^2 , 
	$
	where for a complex sequence $(w_s, s\in \Z^d_L)$ we denote 
	$$
	\ssum_{s\in\Z^d_L}w_s = L^{-d} {\sum}_{s\in\Z^d_L} w_s\,
	$$
	(this equals to the integral over $\R^d$ of $w_s$, extended to a function, constant on the cells of the mesh in  $\R^d$
	of size $L^{-1}$). By $h$ we will denote the Hilbert space 
	$h=L_2(\Z^d_L;\C)$, given the norm
	$$
	\|w\|^2 =\ssum |w_s|^2\,. 
	$$
	Abusing a bit  notation we denote by the same symbol the norms in the spaces $H$ and $h$. This is justified by the fact 
	that the  Fourier transform \eqref{vs} defines an isometry
	$
	\cF: H\to h. 
	$
	
	Equations \eqref{nls}, \eqref{newnls} and  other NLS equations on the torus $\T^d_L$ with fixed $L$ are intensively studied by 
	mathematicians, e.g. see the book \cite{Bour} and references in it. The limit $\lambda\to 0$ with $L$ fixed was rigorously treated in a number of publications, e.g. see \cite{HKM}.  But there are just a few   mathematical works, addressing the limit \eqref{limit}. 
	 In  the paper \cite{FGH}  $d=2$ and the
	limit \eqref{limit} is taken in a specific regime, when $L\to\infty$ much slower than $\lambda^{-1}$. The elegant description of the limit, obtained there, 
	 is far from the prediction of  WT, and rather should be regarded as a kind of averaging.  
	In  recent  paper \cite{BGHS} the authors study eq. \eqref{nls}  
	with random initial  data $u(0,x)=u_0(x)$ such that    the phases $\{\arg v_{0s}, s\in\Z^d_L\}$ of components of the vector 	$v_0=\cF (u_0)$  
	 are  independent  uniformly distributed  random variables. In the notation of our work  they 
	 prove that under the limit \eqref{limit}, if $L$ goes to infinity slower than $\lambda^{-1}$ but not too slow, then 
  for the values of time   of order $\lambda^{-1}L^{-\de}$, $\de>0$, the energy spectrum $n_s(\tau)$ approximately satisfies the WKE, 
  linearised on $u_0(x)$ and  scaled by the factor $\lambda$.      The authors of \cite{LS} start with eq.~\eqref{nls}, replace  there the space--domain
$\T_L^d$ by the discrete torus $\Z^d/ (L\Z^d)$,  modify the discrete Laplacian on  $\Z^d/ (L\Z^d)$ to a suitable operator, diagonal in the Fourier 
basis, and  study the obtained equation,
  assuming  that the distribution of the initial data $u(0,x)$ is given by the Gibbs measure of the equation. 
See \cite{Faou} for a related result.

	From other hand, there  are plenty of physical works on	equations \eqref{nls} and  \eqref{newnls} under the limit \eqref{limit};
	 many references  may be found in \cite{ZLF92, Naz11, NR}.  These  papers contain some different (but consistent) approaches
 to the limit. Non of them was ever rigorously justified, despite the strong  interest in physical and 
 mathematical communities to the questions,  addressed by these works. 
	
	\medskip
		
	\noindent
	{\it Zakharov-L'vov setting}. When  studying  eq. \eqref{newnls}, members of the WT community  talk
	about ``pumping  the energy to low modes  and dissipating it in  high modes".
	To make this rigorous,  Zakharov-L'vov \cite{ZL75} (also see \cite{CFG}, Section~1.2)
	suggested to consider the NLS equation, dumped by a (hyper) viscosity and driven by a random force: 
	 \be\label{nice_form}
	 \begin{split}
	 \frac{{{\partial}}}{{{\partial}} t}
u +i\Delta u - i\lambda (|u|^2 - 2\|u\|^2)u = -\nu\, \frak A (u) 
  +\sqrt\nu\,  \frac{{{\partial}}}{{{\partial}} t} \eta^\omega(t,x),\\
\eta^\omega( t,x)
	=\cF^{-1} \big(b(s)\beta_s^\omega( t)\big)(x) = L^{-d/2} {\sum}_{s\in\Z^d_L} b(s) \beta^\omega_s( t) e^{2\pi  i s\cdot x} .
\end{split}
\ee
Here $0<\nu \le1/2$, \ 
	$\{\beta_s(t), s\in{{\mathbb Z}}^d_L\}$ are standard independent complex Wiener processes,\footnote{
		i.e. $\beta_s = \beta_s^1 + i\beta_s^2$, where $\{\beta_s^j, s\in{{\mathbb Z}}^d_L, j=1,2  \}$ are
		standard independent real Wiener processes.} 	  $b$
	is a positive  Schwartz  function on ${{\mathbb R}}^d\supset\Z^d_L$ and $\frak A$ is the dissipative linear operator 
$$
\frak A(u(x)) = L^{-d/2} {\sum}_{s\in \Z^d_L} \ga_s v_s e^{2\pi is\cdot x},  \qquad v= \cF(u),\;\;
\ga_s =\ga^0(|s|^2), 
$$
where $\ga^0(y)$ is    a smooth increasing function of $y\in\R$  that has at most a polynomial growth at infinity together with its derivatives of all orders, such that $\ga^0\ge1$ and 
\be\label{gamma0}
 c(1+y)^{r_*} \le \ga^0(y), \; 
   \; |\p^k  \ga^0(y)| \le  C (1+y)^{r_*-k}\;\; \text{for} \; 0\le k \le 3, 
  \; \forall\, y\ge0.
\ee
The exponent $r_*\ge0$ and $c,C$ are positive constants.\footnote{For example, if $\ga_s = (1+ |s|^2)^{r_*}$, then $\frak A= (1-\Delta)^{r_*}$.}

It is convenient to pass to the slow time $\tau= \nu t$ and re-write eq. \eqref{nice_form}  as 
		\begin{equation}\label{ku3}
	\begin{split}
	\dot u +&i \nu^{-1} \Delta u + \frak A( u)  =  i\rho\, \big( |u|^2 - 2\|u\|^2\big)u  + \dot \eta^\omega(\tau,x),  
	 \\
	&\eta^\omega(\tau,x)=
	L^{-d/2} {\sum}_{s\in\Z^d_L} b(s) \beta^\omega_s(\tau) e^{2\pi  i s\cdot x}, 
	\end{split}
	\end{equation}
	where $\rho = \lambda\nu^{-1}$.
	Here and below the upper-dot stands for $\p/ \p\tau$ and 
	$\{\beta_s(\tau)\}$ is another set of standard independent complex Wiener processes.
	
Solutions $u(t)$ and 	 $u(\tau)$ are random processes in the space $H$. 	If $r_*\gg1$, 
 then equations \eqref{nice_form} and 	\eqref{ku3} are well posed. In the context of equation \eqref{ku3}, 
 the objective of WT is to study its solutions  when
		\begin{equation}\lbl{limit'}
			\nu\to 0,\quad L\to\infty.
			\end{equation}
Below we show that the main characteristics of the solutions have non-trivial behaviour under the limit above if 
$\rho\sim \nu^{-1/2}$. We will choose 
\be\label{r_scale}
\rho= \nu^{-1/2}\eps^{1/2},
\ee
and will examine the limiting characteristics of the solutions 
when $\eps$ is a fixed small positive constant. We will show that $\sqrt\eps$ measures the 
properly scaled amplitude of the solutions, so indeed it should be small for the methodology of WT to apply.
The parameter $\la =\nu\rho = \nu^{1/2} \eps^{1/2}$ will be used only in order to discuss the obtained  results in the original scaling 
\eqref{nice_form}. Note that as $\la\sim\sqrt\nu$, then we study eq.~\eqref{nice_form}=\eqref{ku3} in the regime when it is a perturbation 
of the original NLS equation \eqref{newnls} by dissipative and stochastic terms much smaller then the terms of \eqref{newnls}.
 We will examine solutions of  \eqref{ku3} on the time-scale $\tau\sim1$. For  the original fast time $t$ in 
eq.~\eqref{nice_form} it corresponds to the scale $t\sim\nu^{-1}\sim\la^{-2}\!=${\it(size of the nonlinearity)}${}^{-2}$.  This  is exactly the time-scale usually considered in  WT, see in \cite{ ZLF92, Naz11, LS}.

	Applying  Ito's formula to a solution $u$  of \eqref{ku3} and denoting  $B =  {\ssum}_s b(s)^2$
	we arrive at the balance of energy relation: 
	\begin{equation}\label{ku8}
	{{\mathbb E}} \|u(\tau)\|^2 +2{ {\mathbb E} } \int_0^\tau  \| \frak A( u(s))\|^2 \, ds ={{\mathbb E}} \|u(0)\|^2 +2B \tau\,.
	\end{equation}
	The quantity ${{\mathbb E}} \|u(\tau)\|^2$ -- the ``averaged  energy per volume" of a
	 solution $u$ -- should be of order one,
	see \cite{ZLF92, Naz11, NR}. This agrees  well with \eqref{ku8}  since there   $B\sim \int{b}^2 \,dx \sim1$ as $L\to \infty$, and we
 immediately get from \eqref{ku8} that
	\be\non
	\EE \|u(\tau)\|^2 \le B+ (\EE\|u(0)\|^2 -B) e^{-2\tau}\,,
	\ee
	uniformly in $\nu$,  $\rho$ and $L$.	
	
		Using \eqref{ku2} and the relations  
	$$
	\widehat{u_1u_2}(s)  = L^{-d/2} {\sum}_{s_1}\hat u_1(s_1) \hat u_2(s-s_1), \qquad 
	\widehat{\bar u} (s) = \bar{\hat u}_{-s},
	$$
	we write  \eqref{ku3} as the system of equations 
	\begin{equation}\label{ku33}
	\dot v_s -i\nu^{-1}|s|^2v_s + \gamma_s v_s
	=  i\rho  L^{-d}   \Big({\sum}_{s_1, s_2} 
	 \dess v_{s_1} v_{s_2} \bar v_{s_3} -|v_s|^2v_s\Big)+b(s) \dot\beta_s \,,
	\end{equation}
	$ s\in{{\mathbb Z}}^d_L\,,$ 	where 
	$$
 \dess =
   \left\{\begin{array}{ll}
 1,& \text{ if $s_1+s_2=s_3+s$ and $\{s_1, s_2\} \ne \{s_3, s\}$}
 \,,
 \\
 0,  & \text{otherwise}.
\end{array}\right.
$$
Note that 
	$\text{ if  $\{s_1, s_2\}\cap \{s_3, s\}\ne\emptyset$, then  ${\delta}'^{1 2}_{3 s} =0$. }$
 In  view of the factor  $\dess $,
	in  the double  sum in \eqref{ku33}  the index $s_3$ is a function of $s_1, s_2, s$.

Now using the interaction representation
	\be\label{interac}
	v_s = \exp({  i \nu^{-1}\tau |s|^2} ) \, \mathfrak{a}_s\,, \;\; s\in {{\mathbb Z}}^d_L\,,
	\ee
	we re-write  \eqref{ku33} as
	\begin{equation}\label{ku4a}
	\begin{split}
	&\dot \goa_s + \gamma_s \goa_s
	=  i\rho \Big(\cY_s(\goa,\nu^{-1}\tau) -L^{-d} |\goa_s|^2\goa_s\Big)
	+b(s) \dot\beta_s  \,,\quad s\in{{\mathbb Z}}^d_L\,,\\
	&\cY_s(\goa; t )=L^{-d} {\sum}_{s_1} {\sum}_{s_2} \dess \goa_{s_1} \goa_{s_2} \bar \goa_{s_3}
	e^{  i t \omega^{12}_{3s}}\,.
	\end{split}
	\end{equation}
	Here $\{\beta_s\}$ is an another set of  standard independent complex  Wiener processes, and
	\be\label{omega}
	\omega^{12}_{3s}= |s_1|^2+|s_2|^2-|s_3|^2 - |s|^2 = 2(s_1-s)\cdot (s-s_2)\,
	\ee
	(the second equality holds since $s_3=s_1+s_2-s$  due to  the  factor $\dess$).

\smallskip

Despite stochastic models of WT (including \eqref{nice_form}) are popular in physics, it seems that the only their mathematical
studies were performed by  the authors of this work and their collaborators,  and in paper  \cite{Faou}. 
In the same time the models, given by Hamiltonian chains of equations with stochastic perturbations, have 
received significant attention from the mathematical community, e.g. see \cite{EPRB,KOR, Dym16} and references in these 
works. The equations in the  corresponding works are related to the Zakharov--L'vov model, written as the perturbed chain of hamiltonian equations
\eqref{ku33}, but differ from it  significantly  since there, in difference with \eqref{ku33}, the interaction between the 
equations is local. This leads to rather different results, obtained by tools, rather different from those in our paper.

\subsection{The results.}

	The {\it energy spectrum} of a solution $u(\tau)$ of eq.~\eqref{ku3} is the function
\be\lbl{real_spect}
{{\mathbb Z}}^d_L \ni s\mapsto \cN_s(\tau) = \cN_s(\tau;\nu, L)
=   {{\mathbb E}} |v_s (\tau) |^2 =  {{\mathbb E}}|\goa_s (\tau)|^2.
\ee
Traditionally in the center of attention of  the WT community is 
 the limiting behaviour of the energy spectrum  $\cN_s$, as well as of  correlations of solutions $\goa_s(\tau)$ and $v_s(\tau)$ 
under the limit \eqref{limit'}.
Exact meaning of the latter   is unclear since no relation between the small parameters $\nu$ and
$L^{-1}$ is postulated by the theory. In \cite{KM16, HKM} it was proved  that for $\rho$ and $L$
fixed, eq.~\eqref{ku3} has a limit as $\nu\to0$, called the {\it limit of discrete turbulence,} see \cite{Kar, Naz11} and Appendix~\ref{app_discr_turb}. 
Next it was demonstrated in \cite{KM15} on the physical level of
rigour that if we scale $\rho$ as $\tilde\eps\sqrt L$, $\tilde\eps\ll1$, then the iterated limit $L\to\infty$ leads to a 
kinetic equation for the energy spectrum. Attempts to justify this  rigorously    so far  failed. Instead in
this work we specify the limit \eqref{limit}  as follows:
\be\label{assumption}
\begin{split}
&\nu\to0 \text{ and } L\ge\nu^{-2-\epsilon} \text{ for some }  \epsilon>0, \\
&\text{ or first $L\to\infty$ and next $\nu\to0$}
\end{split}
\ee
(the second option formally corresponds to the first one with $\epsilon=\infty$). 
To present the results  it is convenient  for  the moment  to regard 
 $\rho$ as an independent parameter, however  later we will choose it  to be of the 
form \eqref{r_scale} with  a fixed   small  positive constant.

Accordingly to \eqref{assumption}, everywhere in the introduction we assume that 
\be\label{L_is_big}
L\ge \nu^{-2-\epsilon} \ge 1 , \quad \epsilon >0. 
\ee

Let us supplement  equation \eqref{ku3}=\eqref{ku4a} with the initial condition
\be\label{in_cond}
u(-T)=0, 
\ee
for some $0<T \le +\infty$, and in the spirit of WT decompose a solution of \eqref{ku4a}, \eqref{in_cond} to formal series in $\rho$:
\be\label{decompa}
	\goa=\goa^{(0)}+\rho \goa^{(1)}+  \dots .
\ee
Substituting the series in the equation we  get that $\goa^{(0)}$ satisfies the linear equation 
	$$
	\dot \goa_s^{(0)} + \gamma_s \goa_s^{(0)}
	= b(s) \dot\beta_s  \,,\quad s\in{{\mathbb Z}}^d_L\,,
	$$
	so this is the Gaussian process
	\be\label{a0}
	\goa^{(0)}_s(\tau) = b(s) \int_{-T}^\tau e^{-\gamma_s(\tau-l)}d\beta_s(l),
	\ee
	while $\goa^{(1)}$ satisfies
	$$
	\dot \goa^{(1)}_s (\tau)+ \gamma_s \goa^{(1)}_s (\tau)
	=  i \cY_s(\goa^{(0)}(\tau),\nu^{-1}\tau)  - iL^{-d}|\goa^{(0)}_s(\tau)|^2\goa_s^{(0)}(\tau),\qquad \tau> - T\,,
	$$
	so
	\be\label{a1a}
		\begin{split}
	\goa^{(1)}_s(\tau) = iL^{-d} \int_{-T}^\tau  e^{-\ga_s(\tau-l)} 
			\Big( \sum_{s_1,s_2}&\delta'^{12}_{3s} (\goa^{(0)}_{s_1} \goa^{(0)}_{s_2}{\bar \goa}^{(0)}_{s_3})(l)	e^{i\nu^{-1} l \oms} \\
			& - |\goa_s^{(0)}(l)|^2\goa_s^{(0)}(l)\Big)
	\,dl\,
	\end{split}
	\ee
	is a Wiener chaos of third order (see \cite{Jan}). 
	Similarly, for $ n\ge1$
	\be\label{ana}
	\begin{split}
	\goa^{(n)}_s&(\tau) 
	= iL^{-d} \int_{-T}^\tau  e^{-\ga_s(\tau-l)}\,\sum_{n_1+n_2+n_3=n-1} \\ 
	& \Big(\sum_{s_1,s_2}
	 \delta'^{12}_{3s}\big( \goa_{s_1}^{(n_1)} \goa_{s_2}^{(n_2)} {\bar \goa_{s_3}}^{(n_3)}\big)(l)
	e^{i\nu^{-1} l \oms} \,- \,\big(\goa_s^{(n_1)}\goa_s^{(n_2)}\bar \goa_s^{(n_3)}\big)(l)\Big)\,dl
	\end{split}
	\ee
 is a Wiener chaos of order $2n+1$.

\medskip
		
\noindent
{\it  Quasisolutions}. 		
It is traditional in  WT to retain the quadratic in $\rho$ part of the decomposition \eqref{decompa} and analyse it, postulating that it
well approximates  small amplitude solutions. Thus motivated we start our analysis with the quadratic truncations of the series \eqref{decompa}, 
 which we call the {\it quasisolutions} and denote $\cA(\tau)$. So 
\be\lbl{A-intro}
\cA(\tau) = (\mathcal{A}_s(\tau), s\in\Z^d_L), \qquad \mathcal{A}_s(\tau) = \goa^{(0)}_s(\tau) +\rho \goa^{(1)}_s(\tau) +\rho^2 \goa^{(2)}_s(\tau)\,,
\ee
where $\goa^{(0)}$, $\goa^{(1)}$ and $\goa^{(2)}$ were defined above. 
The energy spectrum of a quasisolution $\cA(\tau)$ is 
\be\lbl{n_s-a}
\mathfrak{n}_s(\tau) = \EE |\cA_s(\tau)|^2,\quad s\in\Z_L^d, 
\ee
where 
\be\lbl{n_s_decomp}
\fn_s=\fn_s^{(0)} + \rho \, \fn_s^{(1)}+\rho^2 \fn_s^{(2)} 
+ \rho^3 \fn_s^{(3)} + \rho^4 \fn_s^{(4)}, \qu s\in\Z_L^d,
\ee
with
\be\label{n_s^ka}
\fn_s^{(k)} = \sum_{k_1+k_2=k,\, k_1, k_2\le2} \EE \goa^{(k_1)}_s \bar \goa^{(k_2)}_s, \quad 0\le k\le 4. 
\ee
In particular, $\fn_s^{(0)} = \EE |\goa_s^{(0)}|^2 $ and we easily derive from \eqref{a0} that 
\be\label{n_s_0}
\fn_s^{(0)} (\tau)= \EE |\goa_s^{(0)}(\tau)|^2 =B(s) \big( 1-e^{-2\ga_s(T+\tau)}\big), \quad
B(s) =  \frac{b(s)^2}{\ga_s}. 
\ee
So 
 $\fn_s^{(0)} $ extends to  a Schwartz function of $s\in \R^d$, uniformly in $\tau\ge-T$. Also it is not hard 
  to see that $\fn_s^{(1)} =0$, 
 see in Section \ref{s_series}. The coefficients $\fn_s^{(k)}$ with $k\ge2$ are  more complicated.
 
 The processes $\goa_s^{(r)}(\tau)$ and the functions $\fn_s^{(r)} (\tau)$, $r\ge0$,  depend on $\nu$ and $L$. When it will be needed
 to indicate this dependence, we will write them as 
 $\goa_s^{(r)}(\tau)=\goa_s^{(r)}(\tau;\nu,L)$, etc. The  dependence of the  objects on $T$  will not be  indicated.

 It was explained on the heuristic and half--heuristic level  in many physical works concerning various models of WT that
 the term $\rho^2 \fn^{(2)} (\tau)$ is the crucial non-trivial component of the energy spectrum, while the terms 
  $\rho^3 \fn^{(3)} (\tau)$ and  $\rho^4 \fn^{(4)} (\tau)$ make its  perturbations. 
  Our results justify
   this insight on the 
  energy spectra of quasisolutions. 
  Firstly we consider $\fn_s^{(2)} = \EE  |\goa_s^{(1)}|^2 +2\Re \EE \goa_s^{(2)} \bar \goa_s^{(0)}$. 
  The two terms on the right are similar. Consider the first one, $\EE  |\goa_s^{(1)}|^2$. The theorem below describes its asymptotic 
   behaviour under the limit \eqref{assumption}, where for simplicity we assume that $T=\infty$. 
   
   We set $B(s_1,s_2,s_3):=B(s_1)B(s_2)B(s_3)$ and denote by
  $C^\#(s)$ various  positive continuous  functions of $s$ which decay as $|s|\to\infty$
 faster than any negative degree of $|s|$. 
 \be\label{constants}
 \begin{split}
&\text{The constants $C, C_1$ etc and the functions  $C^\#(s)$}\, \\ 
&\text {never depend on $\nu, L, \rho, \eps$  and on the times $T,\tau, \quad$} \\
 \end{split} 
 \ee
 unless the dependence is indicated. 
 
 \medskip

  \noindent {\bf Theorem A.} 
{\it  Let in \eqref{in_cond} $T=\infty$. Then for any $\tau$ and 	any $s\in\Z^d_L$, 
\be\label{0_asympt}
	\begin{split}
	 \Big| \EE  |\goa_s^{(1)}(\tau)|^2 -
		 \frac{\pi \nu }{\ga_s}\int_{\Sigma_s} 
		\frac{ B(s_1, s_2, s_1+s_2-s) }{{\sqrt{|s_1-s|^2+|s_2-s|^2}} \ }
		\, ds_1ds_2\!\!\mid_{\Sigma_s} \Big|\\
\le  \big(\nu^2 +L^{-2} \nu^{-2}\big)C^\#(s).
	\end{split}
	\ee
 Here 
 $\Sigma_s$ is the quadric  $ \{(s_1, s_2): (s_1-s) \cdot (s_2-s)=0\}$ and 
  $ds_1ds_2\!\!\mid_{\Sigma_s}$ is the volume element  on it, corresponding to the Euclidean 
 structure on $\R^{2d}$.  
 	If $d=2$, then the term $C^\#(s)\nu^2$ in the r.h.s. of \eqref{0_asympt}   should be replaced by $ C^\#(s;\aleph) \, \nu^{2-\aleph}$, where $\aleph$ is arbitrary positive number.}
 \medskip
 
 See Theorems~\ref{t_sum_integral}, \ref{t_singint} and Corollary~\ref{t_for_sum}. Due to \eqref{omega}, $\Sigma_s =\{(s_1, s_2, s_3): s_1+s_2 =s_3+s$ and $ \oms=0\}$. This quadric is the set of resonances for 
 eq.~\eqref{ku4}.

 Denote $F_s(s_1,s_2) := (s_1-s)\cdot (s_2-s) = -\tfrac12 \oms$. Then $|\partial F_s | $
 equals the divisor of the integrand in \eqref{0_asympt}. So the integral in \eqref{0_asympt} is  exactly what physicists call 
 the integral of $B$  over the delta-function of $F_s$ and denote 
$  \int B\delta(F_s)$, see \cite{ZLF92}, p.~67.  
For rigorous mathematical treatment of this object see \cite{Gelf}, Section~III.1.3, 
or \cite{Khin}, pp.~36-37.
As
$\
\delta(F_s) = \delta \big(-\tfrac12\, \oms \big) =  -2\delta( \oms ),
$
where $s_3:= s_1+s_2-s$, 
then neglecting the minus-sign (as physicists do) we may write  
 the integral from \eqref{0_asympt} as 
\be\label{nazar}
 \nu\frac{2\pi }{\ga_s} \int  B(s_1, s_2, s_3)\,\delta( \oms) \des\,ds_1 ds_2 ds_3
 \ee
(since $\int\dots\, \des\,ds_1 ds_2 ds_3 = \int\dots\mid_{s_3=s_1+s_2-s} ds_1 ds_2$).
 \medskip

 Theorem A and its variations play important role in our work since the terms, quadratic in $\goa^{(1)}$ and in
  its increments, as well as quadratic terms,  linear in  $\goa^{(0)}$, 
    $\goa^{(2)}$ and in their increments, play a leading role in the analysis of the energy spectrum  $\fn_s(\tau)$. It
     turns out that their asymptotic behaviour under the limit 
 \eqref{assumption} is described by integrals, similar to \eqref{0_asympt}.
 These results imply that 
 	\be\lbl{n^2_est}
 	\fn_s^{(2)}\sim \nu,
 	\ee 
 (recall \eqref{L_is_big}).
 	The terms  $\fn_s^{(3)}$  and  $\fn_s^{(4)}$ and their variations which appear in our analysis are 
 	smaller:
 	\be\label{n_34}
 	|\fn_s^{(3)}|,\, |\fn_s^{(4)}| \le C^\#(s) \nu^2. 
 	\ee
 	If $d=2$ the estimate for $\fn_s^{(3)}$ is slightly weaker: $|\fn_s^{(3)}|\leq C^\#(s)\nu^2\ln\nu^{-1}$.
  Upper bounds \eqref{n_34} may be obtained by annoying but rather straightforward analysis of certain integrals with oscillating exponents, based on results from
 	Sections~\ref{sec:quasisol} and \ref{s_9},  but instead we get them from a much  deeper result, presented below in Theorem~D. 
 	
 	 Estimates \eqref{n^2_est} and \eqref{n_34} justify the scaling \eqref{r_scale} since for such a choice of $\rho$  
 	\be\lbl{eps-expan}
 	\fn_s=\fn_s^{(0)}+ \eps \wt \fn_s^{(2)} + O(\eps^2)
 	\ee 
 	(if $\nu\ll1$), where $\wt \fn_s^{(2)}=\nu^{-1} \fn_s^{(2)}\volna 1$. 
 	Thus, the principal non-trivial component of $\fn_s$ is given by $\rho^2\fn_s^{(2)}=\eps \wt \fn_s^{(2)}\volna \eps$, 
while the other terms in \eqref{n_s_decomp}=\eqref{eps-expan} are smaller, of the size $O(\eps^2)$. This also shows that 
	 the small  parameter $\sqrt\eps$ measures the  properly scaled  amplitude  of the oscillations.

 \medskip
 
 Since  equation \eqref{ku4a} for the processes $\goa_s(\tau)$ fast oscillates in  time, then the task to describe the behaviour of
  $\fn_s(\tau)$  has obvious similarities with the problem, treated by the Krylov--Bogolyubov averaging (see in \cite{AKN}). 
  Accordingly
 it is natural to try to study the required  limiting behaviour following the suit of  the Krylov--Bogolyubov theory. That is, by considering the increments 
 $\fn_s(\tau+\theta) -\fn_s(\tau)$ with $\nu \ll \theta \ll1$ and passing firstly to the limit as $\nu\to0$ and next -- to the limit $\theta\to0$. That insight was exploited heuristically 
 in many works on WT (e.g. see \cite{Naz11}, Section~7), while  in \cite{KM16, HKM} it was rigorously applied to pass to the  limit of  discrete turbulence
 $\nu\to0$ with $L$ and $\rho$ fixed. In this work we also argue as it is customary in the classical averaging and analyse the increments $\fn_s(\tau+\theta) -\fn_s(\tau)$,
 using the asymptotical results like Theorem~A and estimates like \eqref{n_34}.  
 This analysis shows that due to the important role, played by the integrals like \eqref{0_asympt}, 
 the leading nonlinear contribution to the increments is described by the cubic wave kinetic integral operator 
  $K$,  sending a function $y(s)$, $s\in \R^d$, to the function 
  $$
 K_s(y(\cdot)) = 2\pi  
  \int_{\Sigma_s} \frac{ds_1\, ds_2\!\mid_{\Sigma_s}\, y_{1}y_{2}y_{3}y_{s}}{ \sqrt{|s_1-s|^2+|s_2-s|^2}}\left(
  \frac1{ y_{s}} +\frac1{ y_{3}} -\frac1{ y_{1}}-\frac1{ y_{2}} 
\right),
 $$
 where $y_s=y(s)$ and for $j=1,2,3$ we denote     $y_j=y(s_j)$ with $s_3 = s_1+s_2-s$. 
     Using the notation  \eqref{nazar}, $ K_s(y(\cdot))$ may be written as
 $$
  4\pi  
  \int
  y_{1}y_{2}y_{3}y_{s} 
  \left(
\frac1{ y_{s}} +\frac1{ y_{3}} -\frac1{ y_{1}}-\frac1{ y_{2}}
\right) \,\delta( \oms) \, \des\,ds_1 ds_2 ds_3. 
 $$
  This is exactly the wave kinetic integral which appears in physical works on  WT to describe the 
   4--waves  interaction, see \cite{ZLF92}, p.~71 and \cite{Naz11}, p.~91.



    The operator $K$ is defined in terms of the measure
 $
 \mu_s \!=\! \frac{ds_1\, ds_2}{ ({|s_1-s|^2+|s_2-s|^2})^{1/2}}    \!\mid_{\Sigma_s}$, 
   and our study of the operator  is based on the following useful disintegration of  $\mu_{0}$ (the measure $\mu_s$ with $s=0$), obtained in 
   Theorem~\ref{t_disintegr}:
 $$
 \mu_{0}(ds_1, ds_2) = |s_1|^{-1} ds_1\,d_{s_1^\perp} s_2, 
 $$
 where for a non-zero vector $s_1$ we denote by $d_{s_1^\perp} s$ the Lebesgue measure on 
   the hyper--space ${s_1^\perp}=\{ s: s\cdot s_1 =0\}$. Based on this disintegration, in Section~\ref{s_kin_int} we prove the result
 below, where for $r\in\R$ we  denote by $\Cc_r(\R^{d})$ the space of continuous complex functions on $\R^{d}$ with finite norm
 $
 | f|_r =\sup_x|f(x)| \lan x\ran^r.
 $

   \medskip
  
  \noindent {\bf Theorem B.} 
{\it For any  $r>d$ the operator $K$ defines a continuous 3--homo\-ge\-neous mapping }
 $
 K:\Cc_r(\R^d) \to \Cc_{r+1}(\R^d).
 $
 \medskip
 
 See Theorem~\ref{t_kin_int}.
Now
consider the damped/driven wave kinetic equation (WKE): 
  \be\label{wke}
 \dot{m}(\tau, s) =-2\ga_s {m}(\tau,s) +\eps K({m}(\tau, \cdot))(s) +2b(s)^2,\quad {m}(-T)=0.
\ee
 In Theorem~\ref{t_kin_eq} we easily derive from Theorem~B that for small $\eps$ this equation has a unique solution $m$. The latter can be 
 written as 
 $
 m =m^0(\tau,s) + \eps m^1(\tau,s),
 $
 where $m^0, m^1\sim1$,  $m^0$ is a solution of the linear equation  \eqref{wke}$\!{}\mid_{\eps=0}$ and equals $\fn^{(0)}$. 
 Analysing the increments $\fn_s(\tau+\theta) - \fn_s(\tau) $ using the 
 results, discussed above, in Section~\ref{sec:quasisol} we show that they  have approximately 
 the same form as the increments of 
 solutions for eq.~\eqref{wke}.  Next in Section~\ref{s_WKE},  arguing by analogy with 
  the classical averaging theory, we get the stated below 
   main result of this work (recall  agreement  \eqref{constants}): 
   \medskip

  \noindent {\bf Theorem C  (Main theorem)}.
{\it   The energy spectrum  $\fn_s(\tau)= \fn_s(\tau;\nu,L)$  of the quasisolution $\cA_s(\tau)$ of \eqref{ku4a}, \eqref{in_cond} 
	 satisfies the estimate $\fn_s(\tau)\leq C^\#(s)$ and	is close to the solution $m(\tau, s)$ of WKE \eqref{wke}. 
Namely,  under the scaling \eqref{r_scale} for 
any $r$ there exists $\eps_r>0$  such that for $0<\eps \le \eps_r$ we have 
 \be\lbl{TC-est}
| \fn_\cdot(\tau) - m(\tau, \cdot )|_r \le C_r\eps^2\qquad \forall\, \tau\ge-T, 
 \ee
 if $0<\nu\le \nu_\eps(r)$ for a suitable $ \nu_\eps(r)>0$,  and if $L$ satisfies \eqref{L_is_big}. 
  Moreover, the limit 
 $
  \fn_s(\tau;\nu,\infty)$ of $ \fn_s(\tau;\nu,L) $ as ${L\to\infty} $
 exists,  is a Schwartz function of $s\in\R^d$} and also satisfies the estimate above for any $r$. 
\smallskip

 Since the energy spectrum $\fn_s$ is defined for $s\in\Z^d_L$ with finite $L$, then the
  norm in \eqref{TC-est} is understood as $|f|_r=\sup_{s\in\Z^d_L}|f(s)| \lan s\ran^r$.

For $\eps=0$ eq. \eqref{wke} has the unique steady state $m^0$, $m^0_s= b_s^2/\ga_s$, which  is asymptotically stable. By
the inverse function theorem, for $\eps<\eps'_r$ $(\eps'_r>0)$, eq.~\eqref{wke} has a unique steady state $m^\eps \in\Cc_r(\R^d)$, close to $m^0$.
It also is asymptotically stable. Jointly with Theorem~C this result describes the asymptotic in time behaviour of the 
energy spectrum $\fn_s$:
\be\label{time_ass}
| \fn_\cdot(\tau) -m^\eps_\cdot|_r \le |m^\eps_\cdot|_r e^{-\tau-T} +C_r\eps^2, \qquad \forall\, \tau\ge -T. 
\ee
See in Section \ref{s_WKE}. 
\medskip

Due to Theorem A and some modifications of this result, the iterated limit
$ \ 
\lim_{\nu\to0} \lim_{L\to\infty} \nu^{-1} \fn_s^{(2)}(\tau; \nu, L) 
$
exists and is non-zero. It is hard to doubt  (however, we have not proved this yet) that a similar iterated limit also exists for $ \nu^{-2} \fn_s^{(4)}$  
(cf. estimate \eqref{n_34}). Then, in view of  \eqref{n_34} for $\fn_s^{(3)}$, 
under the scaling \eqref{r_scale} exists a limit
$
\fn_s(\tau; 0,\infty)=   \lim_{\nu\to0} \lim_{L\to\infty} \fn_s(\tau; \nu,L).
$
If so, then $\fn_s(\tau; 0,\infty)$ also satisfies the assertion of Theorem~C and obeys  the time--asymptotic \eqref{time_ass}.

\medskip

Theorems A--C are proved in Sections \ref{s_series}--\ref{s_WKE} and Sections \ref{s_singint}--\ref{s_proof_kinetic}, where Sections~\ref{s_singint}--\ref{s_proof_kinetic} contain a
demonstration of Theorem~A as well as of some lemmas, needed to prove Theorems~B,~C in Sections~\ref{s_series}--\ref{s_WKE}. 

Section \ref{s_en_spectra} presents the results of our second paper \cite{DK} on  formal expansions \eqref{decompa} 
in series in $\rho$. There we 
decompose the spectrum $\cN_s(\tau)$,  defined in  \eqref{real_spect}, in formal series,
\be\label{N_s}
\cN_s(\tau) = \fn_s^0(\tau) + \rho \fn_s^1(\tau) + \rho^2 \fn_s^2(\tau)+\dots .
\ee
 where $\fn_s^0=\fn_s^{(0)}$ and for $k\ge1$ 
 $\fn_s^k$ is given by the formulas \eqref{n_s^ka} with  the restriction
$k_1, k_2\le2$ being dropped, i.e. 
$
\fn_s^{k} = \sum_{k_1+k_2=k} \EE \goa^{(k_1)}_s \bar \goa^{(k_2)}_s, 
$
so  $\fn^j_s$ equals $ \fn_s^{(j)}$ for $j\le2$, but not for $j=3,4$. Still the estimates \eqref{n_34} remain true for $\fn^3_s$ and $\fn^4_s$. The bounds  on  $\fn_s^{0},\ldots,\fn_s^{4}$ suggest that $|\fn_s^k|\lesssim \nu^{k/2}$. The
 results of Section~\ref{s_en_spectra} put some light on this assumption. Namely, it is  proved there  that $\fn_s^k(\tau)$ may be approximated by a finite sum $\sum_{\cF_k} I_s^k(\cF_k)$ 
of integrals $I_s^k(\cF_k)$, naturally parametrised by a certain class of Feynman diagrams $\cF_k$. 
These integrals satisfy the following assertion, where  $\lceil x\rceil$ stands for the smallest integer $\ge x$. 
  \medskip
  
  \noindent {\bf Theorem D}.
{\it For each $k$, 

a) every integral $I_s^k(\cF_k)$ satisfies 
  \be\label{IS_est}
| I_s^k(\cF_k)| \le C^\#(s;k) \max(\nu^{\lceil k/2\rceil }, \nu^d),
\ee
if $L$ is so big that $L^{-2} \nu^{-2} \le \max(\nu^{\lceil k/2\rceil }, \nu^d)$. 
If $d=2$ and $k=3$ then the maximum in the r.h.s.  above should be multiplied by $\ln\nu^{-1}$.

b)  estimate \eqref{IS_est}  is sharp in the sense that if $k>2d$ 
 (so that $\max(\nu^{\lceil k/2\rceil }, \nu^d)$ $=\nu^d$),  then for some diagrams $\cF_k$ we have 
  \be\label{IS_est>}
| I_s^k(\cF_k)| \sim C^\#(s;k)\nu^d \gg C^\#(s;k) \nu^{\lceil k/2\rceil }. 
\ee}
\smallskip

Theorem D is a new result on the integrals with fast oscillating quadratic exponents (see the integral in \eqref{I_F}). We hope that the theorem
and its variations (cf. \cite{Dym} and the last section of \cite{DK}) will find applications outside the framework of WT. 

As  $\fn_s^k$ is  approximated by a finite sum of integrals $I_s^k(\cF_k)$, it 
also obeys  \eqref{IS_est}. Since $d\ge2$, then \eqref{IS_est} implies estimate \eqref{n_34}, needed to
prove Theorem~C. Also, by \eqref{IS_est}  
\be\label{true?}
| \fn^k_s| \lesssim  \nu^{k/2} \quad  \text{if}\quad k \le 2d.
\ee
Validity of inequality $| \fn^k_s|\lesssim \nu^{k/2} $ for $k>2d$ is a delicate issue. Assertion \eqref{IS_est>} implies that the inequality  
does not hold for all summands, forming $\fn^k_s$,  but still it may hold 
 for  $\fn_s^k = \sum_{\cF_k} I_s^k(\cF_k)$ due to  cancellations. And indeed,
we prove that  some cancellations do happen
in the sub-sum over a certain subclass $\gF_k^B$ of diagrams $\cF_k$ which give  rise to the biggest integrals $I_s^k(\cF_k)$. Namely, for any $\cF_k\in\gF_k^B$ we have \eqref{IS_est>} but
$\big|\sum_{\cF_k\in\gF_k^B} I_s^k(\cF_k)\big| \le C^\#(s;k) \nu^{k-1}\leq C^\#(s;k)\nu^{k/2}$. This 
does not imply the validity of \eqref{true?} for all $k$,  which remains an open problem:

\begin{problem}\label{c_1}
Prove that for  any $k\in\N$ 
\be\label{conjecture}
| \fn_s^k(\tau)| \le C^\#(s;k)\nu^{k/2} \quad\forall\,s,\ 
 \forall \tau\ge-T,
\ee
if $L$ is sufficiently big in terms of $\nu^{-1}$. 
\end{problem}

If the conjecture \eqref{conjecture} 
is correct, then under the scaling \eqref{r_scale}, for any $M\ge2$ the order $M$ truncation  of the series \eqref{N_s}, 
 namely 
$\ 
\cN_{s,M} (\tau) = \sum_{0\le k\le M} \rho^k \fn_s^k(\tau) , 
$
also meets the assertion of Theorem~C, i.e. satisfies the WKE with the accuracy $\eps^2$. 
It is unclear for us if $\cN_{s,M}$ satisfies the equation with better accuracy, i.e. if it 
better approximates a solution of \eqref{wke} than $\fn_s(\tau)$. 
On the contrary, if  \eqref{conjecture} fails in the sense that for some $k$ we have 
$ \|\fn^{k}_\cdot\| \gtrsim  C\nu^{k'}$  with  $ k'<k/2$, 
then under the scaling \eqref{r_scale}  the sum \eqref{N_s}  is not  a formal series in $\sqrt\eps$,  uniformly in $\nu$ and $L$.

\subsection{Conclusions.} 

$\bullet$ If in eq.  \eqref{ku3} $\rho$ is chosen to be $\rho=\nu^{-1/2} \eps^{1/2}$ with  $0<\eps\le1$, then the 
energy spectra $\fn_s(\tau)$ 
 of quasisolutions for the equation (i.e. of quadratic in  $\rho$ truncations of solutions $u$, decomposed in formal series in  $\rho$) under the limit
\eqref{assumption} satisfy the damped/driven wave kinetic equation \eqref{wke} with accuracy $\eps^2$. If we write the
 equation which we study  using the original fast time $t$ in the form \eqref{nice_form}, then the kinetic limit exists if there 
$\lambda\sim\sqrt\nu$. The time, needed to arrive at the kinetic regime is $t\sim\lambda^{-2}$. 

$\bullet$ If \eqref{conjecture} is true, 
then the energy spectra of higher order truncations for decompositions of solutions for \eqref{ku3} in series in $\rho$ also satisfy \eqref{wke}
at least  with the same accuracy $\eps^2$. 

$\bullet $ Similar, if the energy spectrum $\cN_s(\tau)= \EE |v_s|^2$ admits  the second order truncated Taylor decomposition in $\rho$ of the form 
$$
\cN_s(\tau) = \fn_s^{(0)} +\rho^2 \fn_s^{(2)}  + O(\rho^3\nu^{3/2}),
$$
then under the scaling \eqref{r_scale} and the limit  \eqref{assumption}
the spectrum $\cN_s(\tau)$ satisfies the WKE \eqref{wke} with accuracy $\eps^{3/2}$.   At this point
we recall that different  NLS equations and their damped/driven versions 
appear in physics as models for small oscillations in various media, obtained by neglecting in 
the exact equations  terms of higher order of 
smallness. So it is not  impossible that the kinetic limit holds for the energy spectra of the second order
 jets in $\rho$ of solutions $v(\tau)$ (which we call quasisolutions),   but not for the solutions themselves 
since the former are closer to the physical reality. 

$\bullet$ 
To prove our results we have developed in this work and in its second part  \cite{DK}  new  analytic and combinatoric  techniques.
 They apply to quasisolutions of  equations  \eqref{nice_form} under the WT limit \eqref{limit'} 
 and for this moment give no non-trivial information about the  exact solutions. 
   Still we believe that these techniques make a basis for further study of the 
damped/driven NLS equations under the WT limit, as well  as of other stochastic models  of WT.
 In particular, we hope that being applied to some other   
 models  they will give there  stronger and ``more final"  results. To verify this belief is our next goal.

	\subsection{Notation}
	\lbl{s:notation}
	 By $\R_+^n$ and  $\R_-^n$ we denote, respectively, the sets $[0,\infty)^n$ and $(-\infty,0]^n$. For a vector $v$ we denote by $|v|$ its Euclidean norm and by $v\cdot u$ its Euclidean scalar product with a vector $u$. We write
	$\lan v\ran =(1 +|v|^2)^{1/2}$. For a real number $x$, $\lceil x\rceil$ stands for the smallest integer $\ge x$. 
	We denote  by  $\chi_d(\nu)$ 
		the constant 
		\be\label{chi_d}
		\chi_d(\nu)=
		\left\{\begin{array}{ll}
			1,& d\ge3 \,,
			\\
			 \ln \nu^{-1} ,& d=2\,.
		\end{array}\right.
		\ee 	
The exponent $\aleph_d$ is zero if $d\ge3$ and is any positive number if $d=2$. 
	 	 For an integral $I= \int_{\R^N} f(z)\,dz$ and a domain $M\subset \R^N$, open or closed, 
		 we write 
		 \be\lbl{<I,M>}
		 \lan I, M\ran  =\int_M f(z)\,dz. 
		 \ee
		 Similar we write
	 $\lan |I|, M\ran =\int_M |f(z)|\,dz$.

	We denote
	$\ssum_{s_1,\ldots,s_k\in\Z^d_L}:=L^{-kd}\sum_{s_1,\ldots,s_k\in\Z^d_L}$.
	Following the tradition of WT, we often abbreviate $v_{s_j},\, \goa_{s_j}, \,\ga_{s_j},\ldots$ to $v_{j},\, \goa_{j},\, \ga_{j},\ldots$,
	and abbreviate the sums 
	$\sum_{s_1,\ldots,s_k\in\Z^d_L}$ and $\ssum_{s_1,\ldots,s_k\in\Z^d_L}$ to   $\sum_{1,\ldots,k}$ and $\ssum_{1,\ldots,k}$.
	 By $\des$ we denote the Kronecker delta of the relation $s_1+s_2=s_3+s$. 

	 Finally,   $C^\#(\cdot), C_1^\#(\cdot),\dots $ stand for various  non-negative  continuous functions, fast decaying at infinity:
	\be\label{S}
	0\le C^\#(x)\le C_N\lan x\ran^{-N}\quad \forall\, x\,,
	\ee
	for every $N$, with suitable constants $C_N$.
	By $C^{\#}(x;a)$ we denote a   function $C^{\#}$, depending on a
	 parameter $a$. Below we discuss some properties of the functions $C^\#$.
\smallskip

	{\it Functions $C^\#$}.	
	For any 
	 Schwartz function $f$,   $|f(x)|$  may be written as $C^\#(x)$.
	If $\Lc: \R^n \to \R^n$ is a linear isomorphism, then  the function $g(y) = C^\#(\Lc y)$ may be
	written as $C_1^\#(y)$. 
	Next, for  any $C^\#(x,y)$, where $(x,y)\in\R^{d_1+d_2}$, $d_1, d_2\ge1$,  there exist  $C_1^\#(x)$  and $C_2^\#(y)$
	such that
	$$
	C^\#(x,y)  \le C_1^\#(x) C_2^\#(y).
	$$
	Indeed, consider $C_0^\#(t) = \sup_{|(x,y)|\ge t} C^\#(x,y)$, $t\ge0$. This is a non-increasing continuous function,  satisfying \eqref{S}. Then
	$$
	C^\#(x,y)  \le   \sqrt{ C_0^\#(
		\tfrac1{\sqrt2}(
		|x|+|y|))}\, \sqrt{ C_0^\#(\tfrac1{\sqrt2}( |x|+|y|))} \le C_1^\#(x)  C_2^\#(y) \,,
	$$
	where 
	$C_{j}^\#(z) = \big( C_0^\#(\tfrac1{\sqrt2} |z|)\big)^{1/2}$, $j=1,2$. 	Finally,  for any $C_1^\#(x) $ and $C_2^\#(y)$ the function 
	$C_1^\#(x)  C_2^\#(y) $ may be written as $C^\#(x,y)$. 
	\bigskip

\noindent
{\bf Acknowledgments.} AD was supported by Russian Foundation for Basic Research
 according to the research project 18-31-20031, and SK -- by  Agence Nationale de la Recherche through 
 the grant    17-CE40-0006.
 We thank Johannes~Sj\"os\-trand for discussion and an anonymous referee for careful reading of the paper and pointing out some flaws.

	\section{Formal decomposition of solutions 
	 in series in $\rho$ 
	}\label{s_series}
	
	\subsection{ Approximate $a$-equation}

	Despite that the term $L^{-d}|\goa_s|^2\goa_s$ is 
	 only a small perturbation in equation \eqref{ku4a}, it is rather inconvenient for our analysis.
	  Dropping it  we consider the  following more convenient  equation:
	\begin{equation}\label{ku4}
	\dot a_s + \gamma_s a_s
	=  i\rho \cY_s(a,\nu^{-1}\tau)
	+b(s) \dot\beta_s  \,,\quad a_s(-T)=0\,, \qu s\in{{\mathbb Z}}^d_L\,.
	\end{equation}
	Similar to  the process $\goa_s$, we decompose  $a_s$ to the formal series in~$\rho:$
	\be\label{decomp}
	a=a^{(0)}+\rho a^{(1)}+  \dots .
	\ee
	Here $a_s^{(0)}(\tau) =\goa^{(0)}_s(\tau)$ is  the Gaussian process  given by \eqref{a0}, 	while the processes  $a_s^{(n)}(\tau)$ with $n\ge1$, where 
	\be\label{a1}
			a^{(1)}_s(\tau) = iL^{-d} \int_{-T}^\tau  e^{-\ga_s(\tau-l)} 
			 \sum_{s_1,s_2}\delta'^{12}_{3s} (a^{(0)}_{s_1} a^{(0)}_{s_2}{\bar a}^{(0)}_{s_3})(l)	e^{i\nu^{-1} l \oms} 
			\,dl
	\ee
	and for $ n\ge1$
	\be\label{an}
	\begin{split}
			a^{(n)}_s&(\tau) 
			= iL^{-d} \int_{-T}^\tau  e^{-\ga_s(\tau-l)}\,\\
			&\times \sum_{n_1+n_2+n_3=n-1} 
			\sum_{s_1,s_2}
			\delta'^{12}_{3s} (a_{s_1}^{(n_1)} a_{s_2}^{(n_2)} {\bar a_{s_3}}^{(n_3)}\big)(l)
			e^{i\nu^{-1} l \oms} \,dl,
	\end{split}
	\ee
	are Wiener chaoses of order $2n+1.$
	 In Section~\ref{s:a-goa} we prove that the processes $a^{(n)}(\tau)$ and $\goa^{(n)}(\tau)$ are $L^{-d}$-close:
	\begin{proposition}\lbl{l:a-goa} 
		For any  $n\geq 0$,
		\be\lbl{a-goa}
		\EE|a_s^{(n)}(\tau)-\goa_s^{(n)}(\tau)|^2\leq L^{-2d}C^\#(s;n), \qquad s\in\Z^d_L,
		\ee
		uniformly in $\tau\geq -T$ and $\nu$.
	\end{proposition}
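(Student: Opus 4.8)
The plan is to compare the two families $\{a^{(n)}_s\}$ and $\{\goa^{(n)}_s\}$ recursively in $n$, exploiting that they obey almost identical integral recursions, the only difference being the extra ``diagonal'' cubic term $|\goa^{(n_1)}_s\goa^{(n_2)}_s\bar\goa^{(n_3)}_s|$ (resp.\ $|\goa^{(0)}_s|^2\goa^{(0)}_s$ at the bottom level) appearing in the $\goa$-recursion \eqref{a1a}, \eqref{ana} but absent in \eqref{a1}, \eqref{an}. First I would set up the induction hypothesis in a slightly strengthened form, namely a bound of the type $\EE|\goa^{(n)}_s(\tau)|^2\le C^\#(s;n)$ together with $\EE|a^{(n)}_s(\tau)-\goa^{(n)}_s(\tau)|^2\le L^{-2d}C^\#(s;n)$, both uniform in $\tau\ge-T$ and $\nu$; one also needs the same bounds for the pure-$a$ process $\EE|a^{(n)}_s(\tau)|^2\le C^\#(s;n)$, which should be proved in parallel (or cited from earlier in the paper). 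The base case $n=0$ is immediate since $a^{(0)}=\goa^{(0)}$, so the difference is identically zero, and $\EE|\goa^{(0)}_s(\tau)|^2=\fn^{(0)}_s(\tau)$ is bounded by $B(s)=C^\#(s)$ via \eqref{n_s_0}.

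For the inductive step, subtract \eqref{an} from \eqref{ana}. Writing $\Delta^{(n)}_s=a^{(n)}_s-\goa^{(n)}_s$, the difference $\Delta^{(n)}_s(\tau)$ is an integral $iL^{-d}\int_{-T}^\tau e^{-\ga_s(\tau-l)}(\cdots)\,dl$ whose integrand splits into two groups of terms. The first group is a telescoping/multilinearity rearrangement of the off-diagonal sums: each trilinear monomial $a^{(n_1)}_{s_1}a^{(n_2)}_{s_2}\bar a^{(n_3)}_{s_3} - \goa^{(n_1)}_{s_1}\goa^{(n_2)}_{s_2}\bar\goa^{(n_3)}_{s_3}$ is rewritten, in the standard way, as a sum of three monomials each carrying exactly one factor of the form $\Delta^{(n_j)}$ and two factors that are either $a$'s or $\goa$'s; by the induction hypothesis each such factor has $L^2$-norm $\le C^\#(\cdot)$ and the $\Delta$ factor contributes the gain $L^{-d}C^\#(\cdot)$. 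The second group is the genuinely new diagonal term $|\goa^{(n_1)}_s\goa^{(n_2)}_s\bar\goa^{(n_3)}_s|$ (with $n_1+n_2+n_3=n-1$), which is present only on the $\goa$-side; crucially it has no summation over $s_1,s_2$, only the prefactor $L^{-d}$, so it too is of size $L^{-d}C^\#(s;n)$ in $L^2$. (At $n=1$ the diagonal term is $|\goa^{(0)}_s|^2\goa^{(0)}_s$, handled the same way.)

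The estimation itself proceeds by taking $L^2(\Omega)$ norms, using Minkowski's integral inequality to pull $\|\cdot\|_{L^2(\Omega)}$ inside $\int_{-T}^\tau dl$, bounding $\int_{-T}^\tau e^{-\ga_s(\tau-l)}\,dl\le \ga_s^{-1}\le1$, and controlling the expectations of products of Wiener-chaos factors. For the trilinear monomials one uses hypercontractivity of Gaussian chaos (Wick's theorem / the Jan reference already cited) to reduce $\EE$ of a product of three chaos variables to a finite sum of products of pairwise covariances, each of which is bounded by the inductive $C^\#$ estimates; the only subtlety is keeping the spatial decay, i.e.\ checking that after carrying out the $\ssum_{s_1,s_2}$ against the Schwartz-type weights $B(\cdot)$ and the constraint $\des$ one still gets a $C^\#(s;n)$ — but this is exactly the kind of absolutely convergent lattice sum against Schwartz weights that appears throughout the paper, with $L^{-d}\sum_{s_1}$ comparable to $\int ds_1$ uniformly in $L\ge1$. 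Summing over the finitely many partitions $n_1+n_2+n_3=n-1$ and over the three choices of which factor is a $\Delta$, one collects a constant depending only on $n$, giving $\EE|\Delta^{(n)}_s(\tau)|^2\le L^{-2d}C^\#(s;n)$.

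The main obstacle I anticipate is bookkeeping rather than conceptual: one must verify that the induction on the auxiliary bound $\EE|a^{(n)}_s|^2\le C^\#(s;n)$ (not just on the difference) goes through uniformly in $\nu$ and $L$, since the oscillatory factor $e^{i\nu^{-1}l\oms}$ must be discarded by absolute values at this crude level — which is fine here because we are only after an $L^{-d}$-order comparison, not a sharp asymptotic. A secondary point of care is that the trilinear rearrangement mixes $a$'s and $\goa$'s in the ``two spectator factors'', so one needs the $C^\#$ bound to hold for \emph{both} families at all lower orders; organizing the induction to establish $\{a^{(n)}\}$, $\{\goa^{(n)}\}$ and $\{\Delta^{(n)}\}$ bounds simultaneously at each level $n$ resolves this cleanly.
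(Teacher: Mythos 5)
Your identification of where the two hierarchies differ (the diagonal terms present in \eqref{ana} but absent from \eqref{an}), the telescoping of the trilinear monomials so that each off-diagonal difference carries exactly one $\Delta^{(n_j)}$ factor, and the base case $a^{(0)}=\goa^{(0)}$ are all correct, and organizing the comparison as an induction on $n$ is reasonable. The gap is in the quantitative step. The recursion \eqref{an} has the prefactor $L^{-d}$ in front of a \emph{double} lattice sum over $(s_1,s_2)\in(\Z^d_L)^2$. Once you pull the $L^2(\Omega)$ norm inside and bound $\|XYZ\|_{L^2(\Omega)}$ by a product of norms via hypercontractivity, you are estimating that sum termwise, which costs a factor of order $L^{2d}$ (since $\sum_{s_1,s_2}C^\#(s_1)C^\#(s_2)C^\#(s_3)\sim L^{2d}C^\#(s)$). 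The off-diagonal telescoped term is then bounded only by $L^{-d}\cdot L^{2d}\cdot L^{-d}C^\#(s)=C^\#(s)$ in $L^2(\Omega)$, i.e.\ $O(1)$ rather than the required $O(L^{-d})$; the same accounting applied to $a^{(1)}_s$ itself would give $\|a^{(1)}_s\|_{L^2(\Omega)}\lesssim L^{d}$, which already shows the method is too lossy to see even the boundedness of the $a^{(n)}$, let alone the $L^{-2d}$ gain. Nor can you repair this by reducing the expectation of the squared sum (a product of six chaos variables) to ``products of pairwise covariances'': that reduction is valid only for jointly Gaussian families, and $\Delta^{(n_1)}_{s_1}$, $a^{(n_2)}_{s_2}$, etc.\ are higher-order chaoses.

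What actually produces the correct powers of $L$ is the near-orthogonality in $L^2(\Omega)$ of the summands for distinct $(s_1,s_2)$, and to exploit it one must expand $\EE|\cdot|^2$ as a quadruple sum over $(s_1,s_2,s_1',s_2')$ with prefactor $L^{-2d}$, express everything through the Gaussian variables $a^{(0)}$, and use the Kronecker deltas in $\EE a^{(0)}_{s'}\bar a^{(0)}_{s''}$ coming from Wick's theorem to collapse the summation lattice back down to dimension $2d$, so that $L^{-2d}\sum$ becomes $\ssum_{1,2}=O(1)$. This is precisely the paper's argument: it writes $a^{(n)}$ and $\Delta^n=\goa^{(n)}-a^{(n)}$ as iterated integrals of polynomials in $a^{(0)}$, applies Wick at that level, and observes that each diagonal substitution replaces $\dess$ by $-\de^{s_1}_{s_2}\de^{s_2}_{s_3}\de^{s_3}_{s}$, cutting the dimension of the affine summation lattice by at least $2d$ while the prefactor $L^{-2nd}$ is unchanged. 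It is this dimension count, not a termwise bound, that yields \eqref{a-goa}; your plan as written does not close the induction.
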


	Since $\EE|a_s^{(n)}(\tau)|^2,\,\EE|\goa_s^{(n)}(\tau)|^2\leq C^\#(s;n)$ for any $\tau,\,n$ (this follows from Theorem~\ref{l:est_a^ia^j} jointly 
	  with \eqref{a-goa}), Proposition~\ref{l:a-goa} and the Cauchy  inequality imply
	\begin{corollary}\lbl{c:a-goa}
		For any $m,n\geq 0$,
		\be\lbl{corr:a-goa}
		\big|\EE a_s^{(n)}(\tau_1)\bar a_s^{(m)}(\tau_2)-\EE\goa_s^{(n)}(\tau_1)\bar \goa_s^{(m)}(\tau_2)\big|\leq L^{-d}C^\#(s;m,n), \qquad s\in\Z^d_L,
		\ee
		uniformly in  $\tau_1,\tau_2$ and $\nu$.
	\end{corollary}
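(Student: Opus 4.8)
The plan is to obtain \eqref{corr:a-goa} as a direct consequence of Proposition~\ref{l:a-goa} and the a priori $L^2$-bounds $\EE|a_s^{(n)}(\tau)|^2,\ \EE|\goa_s^{(n)}(\tau)|^2\le C^\#(s;n)$ (valid uniformly in $\tau\ge-T$ and in $\nu$, by Theorem~\ref{l:est_a^ia^j} together with \eqref{a-goa}), using a one-step telescoping identity followed by the Cauchy--Schwarz inequality. Namely, I would write
\[
\EE a_s^{(n)}(\tau_1)\bar a_s^{(m)}(\tau_2)-\EE\goa_s^{(n)}(\tau_1)\bar\goa_s^{(m)}(\tau_2)
=\EE\bigl(a_s^{(n)}(\tau_1)-\goa_s^{(n)}(\tau_1)\bigr)\bar a_s^{(m)}(\tau_2)
+\EE\goa_s^{(n)}(\tau_1)\bigl(\bar a_s^{(m)}(\tau_2)-\bar\goa_s^{(m)}(\tau_2)\bigr),
\]
which reduces the problem to estimating each of the two terms on the right.

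For the first term, Cauchy--Schwarz and \eqref{a-goa} give
\[
\bigl|\EE\bigl(a_s^{(n)}(\tau_1)-\goa_s^{(n)}(\tau_1)\bigr)\bar a_s^{(m)}(\tau_2)\bigr|
\le\bigl(\EE|a_s^{(n)}(\tau_1)-\goa_s^{(n)}(\tau_1)|^2\bigr)^{1/2}\bigl(\EE|a_s^{(m)}(\tau_2)|^2\bigr)^{1/2}
\le L^{-d}\bigl(C^\#(s;n)\,C^\#(s;m)\bigr)^{1/2}.
\]
The second term is treated identically, with the roles of $a$ and $\goa$ interchanged, this time using the $L^2$-bound for $\goa_s^{(n)}$; it is bounded by the same quantity $L^{-d}(C^\#(s;n)\,C^\#(s;m))^{1/2}$. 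Adding the two contributions gives the bound $2L^{-d}(C^\#(s;n)\,C^\#(s;m))^{1/2}$ for the left-hand side of \eqref{corr:a-goa}.

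To conclude, I would note that $2(C^\#(s;n)\,C^\#(s;m))^{1/2}$ may itself be written as a single function $C^\#(s;m,n)$: by \eqref{S}, the square root of such a function again satisfies \eqref{S} (with $N$ halved and a new constant), and a product of two functions of this type is again of this type, as recorded in the discussion of the $C^\#$-functions in Section~\ref{s:notation}. The uniformity in $\tau_1,\tau_2,\nu$ is inherited automatically from that of the ingredients \eqref{a-goa} and Theorem~\ref{l:est_a^ia^j}. I do not anticipate any genuine obstacle here; the only step that requires minor care is the formal bookkeeping of the $C^\#$-factors, everything else being a routine application of Cauchy--Schwarz.
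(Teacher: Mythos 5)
Your proposal is correct and is exactly the argument the paper intends: the paper derives the corollary from the $L^2$-bounds of Theorem~\ref{l:est_a^ia^j} (combined with \eqref{a-goa}), Proposition~\ref{l:a-goa} and the Cauchy inequality, which is precisely your telescoping-plus-Cauchy--Schwarz computation with the routine bookkeeping of the $C^\#$-factors spelled out.
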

	
	 We define a quasisolution $A(\tau)$ of equation \eqref{ku4}  as in definition \eqref{A-intro},  where the processes $\goa^{(n)}$ are replaced by $a^{(n)}$, and define its energy spectrum as $n_s(\tau)=\EE |A_s(\tau)|^2$, cf. \eqref{n_s-a}.  Due to  Corollary~\ref{c:a-goa} 
	   it suffices to prove the results, formulated in the introduction, replacing in their  statements processes  $\goa^{(k)}(\tau)$ and $\mathfrak{n} (\tau)$ with   $a^{(k)}(\tau)$ and $n(\tau)$. Indeed,  for Theorem A this assertion holds 
	  since $d\geq 2$. For Theorem C it follows from \eqref{assumption}  as
	$|n_s-\mathfrak{n}_s|\leq \rho^4L^{-d}C^\#(s)\leq \nu^{-2}L^{-d}C^\#(s)$, where we recall that $\rho$ is given by \eqref{r_scale}. 
	 For other results the argument is similar.

	Below we work with the processes $a^{(j)}(\tau)$ and  $n(\tau)$,  and never with   $\goa^{(j)}(\tau)$ and  $\mathfrak{n}(\tau)$.

	\subsection{Nonlinearity $\cY_s$}
	The cubic nonlinearity $\cY$ in \eqref{ku4} 
	 defines the 3-linear over real numbers  mapping
	 $(u, v,w)\mapsto \cY(u, v,w;t)$, where 
	$$
	\cY_s(u,v,w; t) = L^{-d} \sum_{s_1, s_2}   \delta'^{1 2}_{3 s}\, u_{s_1} v_{s_2} \bar w_{s_3}
	 e^{it \omega^{12}_{3s}}\,,
	$$
	so $\cY_s(a;t) = \cY_s(a,a,a;t)$. 
	Often it will be  better to use the  symmetrisation 
	\be
	\cY_s^{sym}(u,v,w;t) =\frac{ L^{-d}}3\sum_{s_1, s_2}   \delta'^{1 2}_{3 s} 
\big(	u_{s_1} v_{s_2} \bar w_{s_3}+v_{s_1} w_{s_2} \bar u_{s_3}+w_{s_1} u_{s_2} \bar v_{s_3}\big)\, e^{it \omega^{12}_{3s}} \,.
	\ee
	Clearly \ $ \cY^{sym}(v,v,v) = \cY(v)$.  Besides, 
	\begin{equation*}
	\begin{split}
	\cY_s&(v^0+ \delta v^1+\delta^2 v^2) =\cY_s(v^0) + 3\delta\  \cY_s^{sym}(v^0,v^0,v^1)\\
	&+  3\delta^2\Big(  \cY_s^{sym}(v^0,v^1,v^1) +   \cY_s^{sym}(v^0,v^0,v^2)\Big) +O(\delta^3). 
	\end{split}
	\end{equation*}
	\smallskip
	
\subsection{Correlations of first terms in the formal decomposition}
	
	 Let us go back to the formal decomposition \eqref{decomp}.
Correlations of the processes $a^{(n)}_s(\tau)$ are important for what follows. Since $a^{(0)}=\goa^{(0)}$, then  by \eqref{a0} 
for any $\tau_1,\tau_2$,
\be\label{corr_a_in_time}
	\begin{split}
	& \EE a_s^{(0)}(\tau_1)  a_{s'}^{(0)}(\tau_2) \equiv  \EE \bar a_s^{(0)}(\tau_1) \bar a_{s'}^{(0)}(\tau_2)\equiv 0,
	\\ 
	& \EE a_s^{(0)}(\tau_1) \bar a_{s'}^{(0)}(\tau_2)
	= \delta^s_{s'} \,  \frac{b(s)^2}{\gamma_s} \big( e^{-\gamma_s|\tau_1-\tau_2|} -
	e^{-\ga_s(2T + \tau_1+ \tau_2)}\big).
	\end{split}
	\ee
	Indeed, the first relations are obvious. To prove the last we may  assume that 
	 $-T\le \tau_1\le\tau_2$. Then the l.h.s. vanishes if $s\ne s'$, while for $s=s'$ it equals
	$$
	 b(s)^2 \EE \Big(  \int_{-T}^{\tau_1} e^{-\gamma_s(\tau_1-l_1)}d\beta_s(l_1)\Big)
	\Big(  \int_{-T}^{\tau_1} e^{-\gamma_s(\tau_2-l_2)}d\bar\beta_s(l_2)
	+ \int_{\tau_1}^{\tau_2} e^{-\gamma_s(\tau_2-l_2)}d\bar\beta_s(l_2)
	\Big).
	$$
	The expectation of the second term  vanishes, and that of the first equals 
	$
	2  \, b(s)^2   \int_{-T}^{\tau_1} e^{-\gamma_s(\tau_1-l +\tau_2-l) }dl,
	$
which is the  r.h.s.
	of \eqref{corr_a_in_time} (recall that $d\beta\cdot d\bar\beta = 2dt$).

For the process $a^{(1)}$ we have 
\begin{lemma}\label{l_a1_1}
	For any $\tau_1, \tau_2$ and $s', s''$,
	
	\noindent
	i) $\EE a^{(1)}_{s'} (\tau_1) a^{(1)}_{s''}(\tau_2) =0$;\\
	ii) $\EE a^{(1)}_{s'}(\tau_1) \bar a^{(1)}_{s''}(\tau_2)=0$  if $s'\ne s''$;\\
	iii) $\EE  a^{(1)}_{s'}(\tau_1) a^{(0)}_{s''}(\tau_2)=\EE a^{(1)}_{s'}(\tau_1) \bar a^{(0)}_{s''}(\tau_2)=0$.
\end{lemma}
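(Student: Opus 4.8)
Everything follows from the Gaussian structure of $a^{(0)}$ recorded in \eqref{corr_a_in_time} together with the support property of the combinatorial factor $\delta'$ in \eqref{a1}, namely that it vanishes once $\{s_1,s_2\}\cap\{s_3,s\}\ne\emptyset$. Recall that each $a^{(0)}_s(\tau)$ is a centred complex Gaussian with $\EE a^{(0)}_s(\tau_1)a^{(0)}_{s'}(\tau_2)\equiv0$, so by Wick's theorem the expectation of any finite product of factors $a^{(0)}_p(l)$ and $\bar a^{(0)}_q(l')$ equals the sum, over all pairings of the unconjugated factors with the conjugated ones, of the products of covariances $\EE a^{(0)}_p(l)\bar a^{(0)}_q(l')$, each such covariance vanishing unless $p=q$ by \eqref{corr_a_in_time}; in particular the whole expectation is $0$ whenever the numbers of unconjugated and conjugated factors differ. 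Since $L<\infty$, the sums in \eqref{a1} are finite and all moments of $a^{(0)}$ finite, so $\EE$ may be pulled inside the $l$-integrals by Fubini; I do this tacitly. Equivalently, as recorded just after \eqref{a1a} (cf.\ \cite{Jan}), $a^{(1)}_s$ lies in the third homogeneous Wiener chaos and $a^{(0)}_s,\bar a^{(0)}_s$ in the first, and distinct homogeneous chaoses are orthogonal for the Hermitian $L^2$-pairing $\langle X,Y\rangle=\EE X\bar Y$.

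Assertion (iii) I would get immediately from chaos orthogonality: $a^{(0)}_{s''}(\tau_2)$ and its conjugate both lie in the first chaos while $a^{(1)}_{s'}(\tau_1)$ lies in the third, so $\EE a^{(1)}_{s'}(\tau_1)\bar a^{(0)}_{s''}(\tau_2)=\langle a^{(1)}_{s'}(\tau_1),a^{(0)}_{s''}(\tau_2)\rangle=0$ and $\EE a^{(1)}_{s'}(\tau_1)a^{(0)}_{s''}(\tau_2)=\langle a^{(1)}_{s'}(\tau_1),\bar a^{(0)}_{s''}(\tau_2)\rangle=0$. (One may instead substitute \eqref{a1} and count: the integrands carry three unconjugated and one conjugated Gaussian, resp.\ two of each, and in the second case any admissible Wick pairing would force $s_1=s_3$ or $s_2=s_3$, i.e.\ $s_2=s'$ or $s_1=s'$, both excluded by the support property.) For (i), substituting \eqref{a1} twice makes $\EE a^{(1)}_{s'}(\tau_1)a^{(1)}_{s''}(\tau_2)$ an integral and index-sum of expectations of products of four unconjugated and two conjugated Gaussian factors, which vanish by the parity count above. (Alternatively: complex Brownian motion is invariant under $\beta_s\mapsto e^{i\vartheta}\beta_s$, under which $a^{(0)}_s\mapsto e^{i\vartheta}a^{(0)}_s$, hence $a^{(1)}_s\mapsto e^{i\vartheta}a^{(1)}_s$, so $\EE a^{(1)}_{s'}a^{(1)}_{s''}=e^{2i\vartheta}\EE a^{(1)}_{s'}a^{(1)}_{s''}$ for every $\vartheta$ and must vanish; this remark also re-proves $\EE a^{(1)}_{s'}a^{(0)}_{s''}=0$.)

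The substantive case is (ii), and this is where the support property of $\delta'$ really enters. Substituting \eqref{a1} and its conjugate, $\EE a^{(1)}_{s'}(\tau_1)\bar a^{(1)}_{s''}(\tau_2)$ becomes an integral over $l_1\in[-T,\tau_1]$, $l_2\in[-T,\tau_2]$ and a sum over $s_1,s_2,r_1,r_2$, with $s_3=s_1+s_2-s'$ and $r_3=r_1+r_2-s''$ and subject to the two support constraints at $s'$ and at $s''$, of a fixed prefactor times $\EE[\,a^{(0)}_{s_1}(l_1)a^{(0)}_{s_2}(l_1)\bar a^{(0)}_{s_3}(l_1)\,\bar a^{(0)}_{r_1}(l_2)\bar a^{(0)}_{r_2}(l_2)a^{(0)}_{r_3}(l_2)\,]$. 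By Wick this is a sum over the bijections matching the unconjugated triple $\{a^{(0)}_{s_1},a^{(0)}_{s_2},a^{(0)}_{r_3}\}$ with the conjugated triple $\{\bar a^{(0)}_{s_3},\bar a^{(0)}_{r_1},\bar a^{(0)}_{r_2}\}$, a matched pair contributing $0$ unless its two Fourier indices coincide. I would then observe that $\bar a^{(0)}_{s_3}$ cannot be matched with $a^{(0)}_{s_1}$ or $a^{(0)}_{s_2}$, since $s_3=s_1$ forces $s_2=s'$ and $s_3=s_2$ forces $s_1=s'$, both excluded by the support property at $s'$; hence $\bar a^{(0)}_{s_3}$ is matched with $a^{(0)}_{r_3}$, so $r_3=s_3$, and the two remaining pairs match $\{a^{(0)}_{s_1},a^{(0)}_{s_2}\}$ with $\{\bar a^{(0)}_{r_1},\bar a^{(0)}_{r_2}\}$, forcing $s_1+s_2=r_1+r_2$; combining these, $s'=s_1+s_2-s_3=r_1+r_2-r_3=s''$, against $s'\ne s''$. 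So every summand vanishes. I do not expect a genuine obstacle; the one point to treat with care is the degenerate subcases $s_1=s_2$ or $r_1=r_2$, where the relevant monomial is a square and the Wick expansion acquires extra combinatorial factors, but the index identities it can enforce are still only $\{s_1,s_2\}=\{r_1,r_2\}$ and $r_3=s_3$, so the conclusion $s'=s''$ is unchanged.
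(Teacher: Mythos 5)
Your proof is correct and for the substantive part (ii) it is exactly the paper's argument: Wick pairing plus the observation that the support property of $\delta'$ forbids coupling $\bar a^{(0)}_3$ with $a^{(0)}_1$ or $a^{(0)}_2$, forcing $s_3=s_{3'}$ and hence $s'=s''$. For (i) and (iii) the paper simply says the proof is "similar" (i.e.\ the same Wick/parity count), so your chaos-orthogonality and gauge-invariance remarks are just equivalent repackagings of that count.
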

\begin{proof}
	Let us verify  ii). Due to \eqref{a1}, the expectation we examine is a sum over $s_1,s_2$ and $s_1',s_2'$ of  integrals of functions 
	$$
	\EE\Big(\delta'^{12}_{3s'} (a^{(0)}_1 a^{(0)}_2{\bar a}^{(0)}_3)(l) \,\delta'^{1'2'}_{3's''} (\bar a^{(0)}_{1'} \bar a^{(0)}_{2'}{ a}^{(0)}_{3'})(l')\Big),
	$$
	multiplied by some density-functions.
	Since each $a^{(0)}_j$ is a Gaussian process, then Wick's theorem applies. By \eqref{corr_a_in_time} and due to the
	the factor $\dess$,
	$a^{(0)}_1$ must be coupled with $\bar a^{(0)}_{1'}$ or with $\bar a^{(0)}_{2'}$,  $a^{(0)}_2$ -- with $\bar a^{(0)}_{2'}$ or $\bar a^{(0)}_{1'}$,
	and $\bar a^{(0)}_3$ -- with ${a}^{(0)}_{3'}$.  So $s'= s_1+s_2-s_3= s'_1+s'_2-s'_3=s''$ and ii) follows. The proof of i) and iii)
	is similar.
\end{proof}

By a similar argument it can be shown that $\EE a_{s'}^{(m)}a_{s''}^{(n)}=0$ for any $m,n$ and $s',s''$, while
$\EE a_{s'}^{(m)}\bar a_{s''}^{(n)}=0$ for any $m,n$ and $s'\neq s''$. 
 Moreover, $\Re\EE \goa^{(1)}_{s} \bar \goa^{(0)}_{s}=0$ (while Corollary~\ref{c:a-goa} implies only that this expectation is of the size $L^{-d}$), so, as it is claimed in the introduction, $\fn_s^{(1)}=0$.
We do not prove and do not use this observations.
	
\subsection{Second moments $\EE a^{(1)}_{s}{(\tau)} \bar a^{(1)}_{s}(\tau)$} 
\lbl{s:2nd moments}

In the center of attention of WT are the limiting, as $L\to\infty$, $\nu\to0$, correlations of solutions $a_s(\tau)$ (and  $v_s(\tau)$). 
Accordingly we should analyse limiting correlations of the processes $a^{(n)}_s(\tau)$. To give an idea what we should expect there, 
let us consider the second moments of the process $a^{(1)}_s(\tau)$.
 The tools, needed for this analysis, will be systematically used later.

We have
	\begin{equation*}
	\begin{split}
	\EE& |a^{(1)}_{s}{(\tau)}|^2  =L^{-2d} \int_{-T}^\tau dl_1  \int_{-T}^\tau dl_2 \, e^{\ga_s(l_1+l_2-2\tau)} \\
	&\times\sum_{1, 2}\sum_{1',2'} \EE \Big(\dess\delta'^{1'2'}_{3's} (a^{(0)}_1a^{(0)}_2\bar a^{(0)}_3)(l_1)  (\bar a^{(0)}_{1'}\bar a^{(0)}_{2'} a^{(0)}_{3'})(l_2)
	e^{i\nu^{-1}( l_1\oms -l_2\omega^{1'2'}_{3's} ) }\Big).
	\end{split}
	\end{equation*}
	By the Wick theorem
	$$
	\sum_{s_{1'}, s_{2'}}
	\EE\Big(\delta'^{12}_{3s} (a^{(0)}_1 a^{(0)}_2{\bar a}^{(0)}_3)(l) \,\delta'^{1'2'}_{3's} (\bar a^{(0)}_{1'} \bar a^{(0)}_{2'}{ a}^{(0)}_{3'})(l')\Big)=
	2 \prod_{j=1}^3 \EE a^{(0)}_j(l)  \bar a^{(0)}_j(l'),
	$$
	where we used that $\EE \bar a^{(0)}_j(l)  a^{(0)}_j(l')  = \EE a^{(0)}_j(l)  \bar a^{(0)}_j(l')$. 
	Then, recalling the notation $\ssum$ introduced in Section~\ref{s:notation},  by \eqref{corr_a_in_time} we get 
	\begin{equation*} 
	\begin{split}
	\EE |a^{(1)}_{s}{(\tau)}|^2
	&=2\ssum_{1,2}\dess \int_{-T}^\tau dl_1  \int_{-T}^\tau dl_2 \, B_{123} \\
	&\times \prod_{j=1}^3 \big( e^{-\ga_j|l_1-l_2|} -  e^{-\ga_j(2T+ l_1+ l_2)}\big)
	 e^{ \ga_s (l_1+l_2-2\tau) +i\nu^{-1}\oms(l_1-l_2)  }\,,
	\end{split}
	\end{equation*}
	where we denoted
	\be\lbl{B_123 def}
	B_{123} = B_1 B_2 B_3, \qquad 
	B_r =  b(s_r)^2/\ga_{s_r}\quad\text{for}\; r=1,2,3.
	\ee
	
	 To simplify the computations, we first assume that $T=\infty$.
In  this case $\EE |a^{(1)}_{s}{(\tau)}|^2$ does not depend on $\tau$ and equals 
	 $\Sigma_s$, where 
	\begin{equation}\label{a1_covar1}
	\begin{split}
	\Sigma_s
	=2\ssum_{1,2}\dess \int_{-\infty}^0 dl_1  \int_{-\infty}^0 dl_2  \,
	  B_{123} e^{-|l_1-l_2|(\ga_1+\ga_2+\ga_3) +\ga_s (l_1+l_2) +i\nu^{-1}\oms(l_1-l_2)  }\,.
	\end{split}
	\end{equation}
	Since for $a, b>0, c\in\R$ we have
	\be\non
	\int_{-\infty}^0 dl_1  \int_{-\infty}^0 dl_2 e^{-a|l_1-l_2| +b(l_1+l_2) +ic (l_1-l_2)} = \frac{a+b}{ b\big((a+b)^2+c^2\big)}\,,
	\ee
	then 
	\begin{equation}\label{a1_covar}
	\begin{split}
	\Sigma_s
	&=\frac{2\nu^2}{\ga_s}
	\ssum_{1,2}\dess  B_{123}  \,\frac{\ga_1+\ga_2+\ga_3+\ga_s}{(\oms)^2  +\nu^2 (\ga_1+\ga_2+\ga_3+\ga_s)^2}\,.
	\end{split}
	\end{equation}
	
	For the reason of equality \eqref{a1_covar}, below we call  expressions like those in the r.h.s. of \eqref{a1_covar1} ``sums", meaning that
	they become sums 	after the explicit integrating over $dl_j$.


	\section{Limiting behaviour of   second moments}		
\label{s_assymt}

In this section we study the asymptotic behaviour of the sum  $\Sigma_s $ (see \eqref{a1_covar1}=\eqref{a1_covar}) 
as $L\to\infty$ and $\nu\to 0$,
assuming that
$
L\gg \nu^{-1}\gg1.
$
The latter inequality will be specified later.

	\subsection{Approximation of the sums  $\Sigma_s $ by  integrals 
	}
	\label{s_sing_integral}
	Let us  naturally extend 
	 $\ga_s= \ga^0(|s|^2)$ to a function on $\R^d$ and denote
	$$
	\ga_1+\ga_2+\ga_3+\ga_s =: \Gamma(s_1, s_2, s_3,s), \qquad s_j,s\in\R^d.
	$$
	We also extend $B_s,\, s\in \Z^d_L$, to the  function
	$\ 
	B(s) = {b}(s)^2/\ga_s, \qquad s\in\R^d,
	$
	and extend $B_{123}$ to 
	$
	B(s_1, s_2, s_3):=  B(s_1)B( s_2) B(s_3)$, 
	$s_j\in \R^d$.
	Recalling \eqref{omega} we see that
	\be\label{Ss_extends}
	\text{
	 $\Sigma_s$  naturally extends to a function on $\R^d\ni s$,}
	\ee
	 both in the form  \eqref{a1_covar1} and \eqref{a1_covar}. Doing that we  understand the factor $\dess$ as
	the rule ``substitute $s_3= s_1+s_2-s\,$". In this case in  \eqref{a1_covar1} and \eqref{a1_covar} the indices $s_1, s_2$
	belong to $\Z_L^d$, while $s$ and $s_3$ are vectors in $\R^d$. 
	Here and in similar situations below we	will keep denoting the extended functions by
	the same letters. 
	
	Considering  \eqref{a1_covar} with $s\in \R^d$ we may replace there the sum $\ssum_{1,2}$ by the integral over $\R^d\times \R^d$,
	thus getting the function $I_s$, defined as 
	\be\label{I_s}
	\begin{split}
		I_s = \frac{\nu^2}{2\ga_s}
		\int_{\R^d\times\R^d}  ds_1\,ds_2\, \,
		\frac{
		\des\,  B(s_1, s_2, s_3)\Gamma(s_1, s_2, s_3,s)}{ ((s_1-s)\cdot 
		(s_2-s))^2 +(\frac12\, \nu \Gamma(s_1, s_2, s_3,s))^2}\\
		=: \nu^2 \int_{\R^d\times\R^d}  ds_1\,ds_2\, \,  
		 \frac{ \des\,  F_s(s_1,s_2)}{ ((s_1-s)\cdot (s_2-s))^2 +(\frac12\, \nu \Gamma(s_1, s_2, s_3,s))^2}\,.
	\end{split}
	\ee
	Here
	$ \des $ is the Kronecker delta of the relation $s_1+s_2=s_3+s$, so  the factor $\des$ in the integrands 
	 means that there  $s_3:=s_1+s_2-s$. By $F_s$ we denoted the positive Schwartz function on $\R^{3d}$	 
	\be\label{F}
	F_s(s_1,s_2) = \Gamma(s_1,s_2,s_3,s) B(s_1,s_2,s_3) /2\ga_s.
	\ee
 Reverting the transformation, used to get
\eqref{a1_covar} from \eqref{a1_covar1} we find that 
\begin{equation}\label{I_s1}
	\begin{split}
	I_s= 2\int_{\R^d\times \R^d}ds_1  ds_2& \int_{-T}^0 dl_1  \int_{-T}^0 dl_2\, \des\,
  B(s_1,s_2, s_3) \\
& \times e^{-|l_1-l_2|(\ga_1+\ga_2+\ga_3) +\ga_s (l_1+l_2) +i\nu^{-1}\oms(l_1-l_2)  }\,.
	\end{split}
	\end{equation}
	

	 Our goal in this section 
	is to estimate the difference between the sum $\Sigma_s$  and the integral $I_s$, while in the next section we will study asymptotical behaviour of the latter as 
	$\nu\to0$. Moreover, we  consider a bit more general sums,  needed   in  the work \cite{DK}.

We will study sums with the summation index $(s_1, \dots, s_k)=:z \in \Z_L^{kd}$, $k\ge1$,
and integrals  with the integrating 
variable $z=(s_1, \dots, s_k) \in \R^{kd}$. 
 For $s\in \R^d$ let us consider the union
$	D_s=\cup_{j=1}^p D_s^j$ of $p\geq 0$ affine subspaces $D_s^j$ in $\R^{kd}$ (if $p=0$ then $D_s$ is empty), 
\be\non
 D_s^j=\big\{z=(s_1, \dots s_k) \in\R^{kd}:\, 
	c_0^j s + \sum_{i=1}^k c_i^j s_i=0\in\R^{d}\big\}, 
	\ee 
	where $c_i^j$ are some real numbers.
	We assume that the $k$-vectors $(c_1^j,\ldots,c_k^j)\ne 0\in\R^k$ for all $j$,
	so the subspaces $D_s^j$ have dimension $d(k-1)$ for every $s$.
Consider the sum/integral
	$$
	S_s = \int_{\R^m} \ssum_{\substack{z\in \Z^{kd}_L\setminus D_s}} G_s(z,\theta;\nu) \,d\theta, \qquad m\in \N\cup\{0\}, 
	\quad s\in \R^d, 
	$$ 
(if $m=0$, then there is no 
 integration  over $\R^m$), where $G_{s}$ is a measurable function  of $(z,\theta,\nu)\in \R^{kd}\times \R^m\times (0, 1/2]$, 
 $C^2$-- smooth  in  $z$  and satisfying 
\be\label{op1}
\big| \p^{\alpha}_{z}G_s(z,\theta;\nu)\big| \le \nu^{-|\alpha|} C^\#(s)C^\#(z) C^\#(\theta)\qquad \text{if} \quad 0\leq |\alpha| \le2,
\ee
for all values of the arguments. 
Our goal is to compare $S_s$ with the integral
$$
J_s =  \int_{\R^m} \int_{\R^{kd}} G_s(z,\theta;\nu)\,dz d\theta, \qquad s\in \R^d.
$$
\begin{theorem}\label{t_sum_integral}
Under the assumption \eqref{op1}, 
\be\label{op2}
		| S_s -J_s | \le C^\#(s)\nu^{-2} L^{-2}, \qquad s\in \R^d.
		\ee
	\end{theorem}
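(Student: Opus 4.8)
The plan is to reduce the claim to a one-dimensional (in each coordinate block) Poisson-summation/Euler--Maclaurin estimate and then bound the error term using the derivative bound \eqref{op1}. The key point is that $S_s$ is, for fixed $\theta$, a Riemann-type sum over the lattice $\Z^{kd}_L$ of mesh $L^{-1}$ approximating the integral $J_s$ (for fixed $\theta$), with the minor complication that the lattice points lying in the ``bad'' set $D_s$ are omitted. First I would dispose of the omitted points: since $G_s$ satisfies $|G_s(z,\theta;\nu)|\le C^\#(s)C^\#(z)C^\#(\theta)$ and $D_s$ is a finite union of affine subspaces of codimension $d\ge 1$, the number of lattice points of $\Z^{kd}_L$ in $D_s$ with $|z|\le R$ is $O(L^{(kd-d)}R^{kd-d})$-ish, and weighting by the fast-decaying $C^\#(z)$ the total contribution $L^{-kd}\sum_{z\in\Z^{kd}_L\cap D_s}|G_s|$ is bounded by $C^\#(s)C^\#(\theta)L^{-d}$, which after integrating in $\theta$ is $\le C^\#(s)L^{-d}\le C^\#(s)\nu^{-2}L^{-2}$ (here one uses $L\gg\nu^{-1}$, or more precisely that the claimed bound $\nu^{-2}L^{-2}$ dominates $L^{-d}$ when $d\ge 2$ — actually $L^{-d}\le L^{-2}\le \nu^{-2}L^{-2}$ since $\nu\le 1/2$). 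So the omission of $D_s$ is harmless, and it remains to compare the full lattice sum $L^{-kd}\sum_{z\in\Z^{kd}_L}G_s(z,\theta;\nu)$ with $\int_{\R^{kd}}G_s(z,\theta;\nu)\,dz$.

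For the genuine sum-versus-integral comparison I would use the standard estimate: if $g\in C^2(\R^N)$ with integrable second derivatives, then
\[
\Bigl| h^N\!\!\sum_{n\in h\Z^N} g(n) - \int_{\R^N} g\,dz \Bigr| \le C_N\, h^2 \sum_{|\alpha|=2}\int_{\R^N}|\partial^\alpha g|\,dz,
\]
which follows by applying the one-dimensional midpoint (or trapezoidal) rule coordinate by coordinate and telescoping, or equivalently from Poisson summation with the first two terms of the Taylor expansion of $\hat g$ near the nonzero dual-lattice points. Here $N=kd$, $h=L^{-1}$, and $g(\cdot)=G_s(\cdot,\theta;\nu)$. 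By \eqref{op1} the second $z$-derivatives are bounded by $\nu^{-2}C^\#(s)C^\#(z)C^\#(\theta)$, so $\sum_{|\alpha|=2}\int|\partial^\alpha_z G_s|\,dz\le \nu^{-2}C^\#(s)C^\#(\theta)\int C^\#(z)\,dz = \nu^{-2}C^\#(s)C^\#(\theta)$ since $C^\#(z)$ is integrable. This gives, for each fixed $\theta$,
\[
\Bigl|\,L^{-kd}\!\!\sum_{z\in\Z^{kd}_L}\!\!G_s(z,\theta;\nu)-\int_{\R^{kd}}\!\!G_s(z,\theta;\nu)\,dz\,\Bigr| \le C^\#(s)C^\#(\theta)\,\nu^{-2}L^{-2}.
\]
Integrating over $\theta\in\R^m$ and using $\int_{\R^m}C^\#(\theta)\,d\theta<\infty$ yields $|S_s-J_s|\le C^\#(s)\nu^{-2}L^{-2}$, as claimed (after absorbing the $\theta$-integral into a new $C^\#(s)$ and combining with the $D_s$ contribution).

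I expect the main obstacle to be the precise justification of the multidimensional Euler--Maclaurin/midpoint estimate with only $C^2$ regularity and with error controlled purely by the $L^1$ norm of the second derivatives — one must be a little careful telescoping the one-variable bound across the $kd$ coordinates so that the constant depends only on $k$ and $d$ (not on $s$, $\nu$, $L$) and so that intermediate ``mixed'' terms are still controlled by $\int C^\#$. A clean way is to write the error as a sum over coordinate directions of telescoping remainders, estimate each by a one-dimensional Taylor-with-integral-remainder argument, and at each stage integrate out the already-processed variables against $C^\#$; the fast decay of $C^\#$ makes all the resulting constants finite and $s,\nu,L$-independent. A secondary, more bookkeeping-type issue is making sure the bad-set contribution is genuinely $O(L^{-d})$ rather than larger: here one uses that each $D^j_s$ has codimension exactly $d$ (guaranteed by the hypothesis $(c^j_1,\dots,c^j_k)\ne 0$), so intersecting with the mesh-$L^{-1}$ lattice and a ball of radius $R$ gives $O((RL)^{kd-d})$ points, and summing $C^\#(z)$ over them converges with total weight $O(L^{-d})$ uniformly in $s$.
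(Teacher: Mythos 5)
Your proposal is correct and follows essentially the same route as the paper: the contribution of the omitted set $D_s$ is bounded by $C^\#(s)L^{-d}\le C^\#(s)L^{-2}$ using that each $D_s^j$ has codimension $d\ge2$, and the remaining sum--integral comparison is the midpoint-rule estimate obtained by Taylor expanding $G_s$ to second order about each cell center of the mesh of size $L^{-1}$ (the linear term integrates to zero over each cell by symmetry), with the quadratic remainder controlled via \eqref{op1} and the integrability of $C^\#(z)C^\#(\theta)$. The paper carries this out by a direct $kd$-dimensional Taylor expansion rather than your coordinate-by-coordinate telescoping, but this is a cosmetic difference.
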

Theorem \ref{t_sum_integral} applies to the sum $\Sigma_s$ where, due to the factor $\de'$,  we take the summation over $s_1,s_2\neq s$.
Indeed,  in this case $D_s=D_s^1\cup D_s^2$ with $D_s^j=\{(s_1,s_2):\,s-s_j=0\}$.

\begin{proof}
		Denote by $\hat S_s$ the sum $S_s$, where $D_s$ is replaced by the empty set. 
	Firstly we claim that
	\be\lbl{no D_s}
	|S_s-\hat S_s|\leq C^\#(s)L^{-d}\leq C^\#(s)L^{-2},
	\ee 
	where in the last inequality we used that $d\geq 2$.
	Indeed, due to \eqref{op1} with $\al=0$, 
	$|S_s-\hat S_s|\leq C^\#(s)L^{-kd}
	\sum_{z\in  D_s\cap\Z^{kd}_L} C^\#(z),$ and the claim follows from the fact that the affine subspaces $D_s^j$ have dimension  $d(k-1)$. 

Now we consider a  mesh in $\R^d$, formed by cubes of size $L^{-1}$, centred at the points
	of the lattice $\Z_L^d$. For any $l\in \Z_L^d$ denote by $m(l)$ the cell of the mesh with the centre in $l$, and consider
	the measurable 	mapping
	$$
	\Pi:\R^d\mapsto \Z^d_L\,,\quad \Pi(x) =
	\left\{\begin{array}{ll}
	l, & \text{if $x\in$ interior of $m(l),\; l\in\Z^d_L$} \,,
	\\
	0, & \text{if $x\in\p m(l')$ for some $l'\in \Z^d_L$.
	}
	\end{array}\right.
	$$
	Let $\Pi^k=\Pi\times\ldots\times\Pi$, where the product is taken $k$ times.
Then 
$$
\hat S_s=  \int_{\R^m} \int_{\R^{kd}} G_s(z,\theta;\nu)\circ \big(\Pi^k \times\id\big)\,
dz d\theta.
$$
Setting  $G_s^\Delta =G_s -G_s\circ(\Pi^k\times \id)$, we
see that 
$$
J_s - \hat S_s = \int_{\R^m} \int_{\R^{kd}}  G^\Delta_s(z,\theta;\nu)\,dz  d\theta. 
$$
	For any 
	$\bs=(l_1,\ldots,l_k)\in \Z^{dk}_L$ let us  denote $\bm(\bs) = m(l_1)\times\ldots\times m(l_k)$, restrict 
	 $G_s^\Delta$ to the cell
	$\bm(\bs)$ and write it as
	$$
	G_s^\Delta(z,\theta) = \p_{z}G_s(\bs ,\theta)\cdot (z-\bs)+ T_s(z,\theta), \quad z \in \bm(\bs).
	$$
	Here 
	$
	| T_s(z,\theta)  | \le C L^{-2} \sup_{\xi\in \bm(\bs)} |\p_\xi^2G_s(\xi,\theta) | 	$
	for $z\in \bm(\bs)$. 
	Then 
	$$
	\Big| \int_{\bm(\bs)}  G_s^\Delta(z,\theta) 
	dz\Big| \le  L^{-kd} L^{-2}\nu^{-2} C^\#(s) C^\#(\bs) C^\#(\theta)    ,
	$$
	and accordingly 
	\begin{equation}\lbl{d ne 1}
	| J_s - \hat S_s | \le \nu^{-2} L^{-2}  C^\#(s)   \int_{\R^m} \int_{\R^{kd}}    C^\#(z) C^\#(\theta)  \,dz d\theta \\
	\le C_1^\#(s)\nu^{-2} L^{-2}.  
	\end{equation}
	The theorem is proved. 
\end{proof}

\begin{remark}
	Theorem~\ref{t_sum_integral} remains true for $d=1$ if we replace  \eqref{op2} by the weaker estimate
	$$
	| S_s -J_s | \le C^\#(s)(\nu^{-2} L^{-2} + L^{-1}), \qquad s\in \R^d.
	$$
	Indeed, in the proof of the theorem  relation $d\geq 2$ was used only once, to get the second inequality in \eqref{no D_s}.
\end{remark}

\subsection{Limiting behaviour of the integrals $I_s$.}

Here we study the integral $I_s$, written in the form \eqref{I_s}, when  $\nu\to0$. 
With the notation $\Ga_s(s_1,s_2)= \tfrac12\Ga(s_1,s_2, s_1+s_2-s,s)$ the integral takes the form
\be\non
	\begin{split}
		I_s =
		\nu^2 \int_{\R^{2d}}  ds_1\, ds_2\, \,  
		 \frac{{F}_s(s_1, s_2)}{ ((s_1-s)\cdot (s_2-s))^2 +( \nu \Gamma_s(s_1, s_2))^2}.\\
	\end{split}
	\ee
	We study its asymptotical behaviour when $\nu\to 0$ in an abstract setting and do not use the explicit forms of the functions $F_s$ and $\Ga_s$. Instead we assume that 
they are  $C^2$--smooth and $C^3$--smooth  real functions,  correspondingly, 
 satisfying certain restrictions on their behaviour at 
infinity. Namely, it  suffices to assume that 
\be\label{B_condition}
|\p_{s_1, s_2,s}^\al{F}_s(s_1, s_2)|\le C_1^\#(s, s_1, s_2) \qquad \forall\,s_1, s_2, s\in\R^d,
\quad \forall\, |\al|\le 2,
\ee
and, for some real numbers $r_1\ge0$ and $K>0$, 
\be\label{Ga_condition}
|\Ga_s(s_1, s_2)|\ge K^{-1}, \quad  | \p_{s_1, s_2,s}^\al\Ga_s(s_1, s_2)|\le K \lan(s_1, s_2, s)\ran^{r_1-|\al|}\;\; 
\ee
for all $ s_1, s_2, s$ and all $|\al|\le 3$. 

Note that 
function $F_s$ from \eqref{F} satisfies \eqref{B_condition}, while function
$
\Ga_s= \frac12\Ga(s_1, s_2, s_1+s_2-s, s)
$
satisfies \eqref{Ga_condition} with $r_1=\max(2r_*, 3) $, 
as well as the functions 
$
\Ga_s(s_1, s_2) =\ga^0(|s_i|^2),
$
  $i=1,2,3,4$, where $s_3=s_1+s_2-s$ and $s_4=s$.

In  the theorem below  we denote by  ${}^s\!\Sigma $ the quadric 
\be\label{sigmaS}
{}^s\!\Sigma =
\{(s_1,s_2):\oms=0\}, \qquad s_3=s_1+s_2-s.
\ee
The latter has a singularity at the point $(s,s)$, and we 
denote by  ${}^s\!\Sigma_*$ its smooth part 
${}^s\!\Sigma_*={}^s\!\Sigma\setminus \{(s,s)\}$.

\begin{theorem}\label{t_singint}
	As $\nu\to0$, the  integral $ I_s$, $s\in\R^d$, may be written as 
	\be\label{asymp_exp}
	I_s  = \nu I_s^0+  \nu^2I_s^\Delta, 
	\ee
	\be\label{p22}
	I_s^0 = \pi   \int_{{}^s\!\Sigma_*} \frac{{F}_s(s_1, s_2)}{\sqrt{|s-s_1|^2 +|s-s_2|^2}\ 
		\G_s(s_1, s_2) }\,ds_1\,ds_2\!\mid_{{}^s\!\Sigma_*}.
	\ee
	Here  
	$ds_1ds_2\!\!\mid_{{}^s\!\Sigma_*}$ is the volume element on ${}^s\!\Sigma_*$, induced from $\R^{2d}$, 
	and the integral for $I_s^0$ converges absolutely.
	The functions $I_s^0$ and $I_s^\Delta$ satisfy the estimates 
	\be\non
	| I_s^0| \le  C^\#(s)\,,\qquad 
	|I_s^\Delta|\le 
		C^\#(s;\aleph_d) \, \nu^{-\aleph_d}, 
	\ee
	where $\aleph_d=0$ if $d\ge3$, while for $d=2$, $\aleph_d$ is any positive number.
\end{theorem}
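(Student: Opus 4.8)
The plan is to reduce the $2d$-dimensional integral $I_s$ to an integral over the level sets of the quadratic phase $F_s(s_1,s_2)=(s_1-s)\cdot(s_2-s)=-\tfrac12\oms$, and then extract the leading term by a one-dimensional asymptotic analysis of the Lorentzian kernel $\nu^2/(t^2+(\nu\Gamma_s)^2)$ in the variable $t=F_s(s_1,s_2)$. First I would change variables so that the full measure $ds_1\,ds_2$ on $\R^{2d}$ is disintegrated as $dt\, \mu_{s,t}(d\sigma)$, where for each value $t$ the fiber is the quadric $\{(s_1,s_2): F_s(s_1,s_2)=t\}$ and $\mu_{s,t}$ is the quotient measure; concretely, since $|\nabla_{(s_1,s_2)} F_s|=\sqrt{|s_2-s|^2+|s_1-s|^2}$, the coarea formula gives $ds_1\,ds_2 = \big(|s_1-s|^2+|s_2-s|^2\big)^{-1/2}\,dS_t\,dt$ where $dS_t$ is the Euclidean surface measure on the fiber. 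Writing $\Phi_s(t):=\int_{\{F_s=t\}} \tfrac{F_s(s_1,s_2)\,(\text{numerator})}{\sqrt{|s_1-s|^2+|s_2-s|^2}\,\Gamma_s}\,dS_t$ — here I am abbreviating; the integrand is $F_s(s_1,s_2)$ divided by $\sqrt{\,\cdot\,}$, times the smooth bounded factor built from ${F}_s$ (the Schwartz function of \eqref{F}) — one gets
\be\non
I_s = \int_{\R} \frac{\nu^2}{t^2+(\nu \Gamma_s)^2}\,\Phi_s(t)\,dt,
\ee
with the caveat that $\Gamma_s$ is not constant on a fiber, so strictly speaking this is a slight abuse and the $\Gamma_s$ dependence must be carried inside $\Phi_s$; I will handle this honestly below.

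The second step is the scaling $t=\nu\tau$, giving $I_s=\nu\int_\R \tfrac{1}{\tau^2+\Gamma_s(\cdot)^2}\,\Phi_s(\nu\tau)\,d\tau$, and then Taylor-expanding $\Phi_s$ around $t=0$. If $\Phi_s$ were smooth and bounded with bounded derivative near $0$, then $\Phi_s(\nu\tau)=\Phi_s(0)+O(\nu|\tau|)$, the leading term integrates against $\int_\R \tfrac{d\tau}{\tau^2+a^2}=\pi/a$ to produce $\nu\cdot\pi\,\Phi_s(0)/\Gamma_s$, which after unpacking the definition of $\Phi_s(0)=\int_{{}^s\!\Sigma_*}(\cdots)$ is exactly $\nu I_s^0$ as in \eqref{p22} (the factor $F_s(s_1,s_2)$ in the numerator cancels against a $\Gamma_s$ and the factor $B(s_1,s_2,s_3)/2\ga_s$ against the corresponding piece of ${F}_s$, leaving the clean form stated). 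The remainder $\nu^2 I_s^\Delta$ collects the contribution of $\Phi_s(\nu\tau)-\Phi_s(0)$; one needs $|I_s^\Delta|\le C^\#(s;\aleph_d)\nu^{-\aleph_d}$, i.e. essentially a bound on $\int |\tau|\cdot |\Phi_s'(\nu\tau)|/(\tau^2+\Gamma_s^2)\,d\tau$ together with the large-$\tau$ tail.

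The main obstacle — and the source of the dimension-dependent loss $\aleph_d$ — is the singularity of the fiber measure at the point $(s,s)$, where $\nabla F_s=0$ and the quadric ${}^s\!\Sigma$ is a cone rather than a smooth manifold. Near $(s,s)$, writing $s_1=s+x$, $s_2=s+y$ with $x,y$ small, $F_s=x\cdot y$ and $\sqrt{|s_1-s|^2+|s_2-s|^2}=\sqrt{|x|^2+|y|^2}$, so the density $1/\sqrt{|x|^2+|y|^2}$ and the geometry of $\{x\cdot y=t\}$ must be controlled uniformly in $t$; a direct computation (polar-type coordinates in $(x,y)\in\R^{2d}$, or the disintegration $\mu_0(ds_1,ds_2)=|s_1|^{-1}ds_1\,d_{s_1^\perp}s_2$ from Theorem \ref{t_disintegr} applied after translating $s$ to the origin) shows the fiber integral $\Phi_s(t)$ is finite for $d\ge 2$ (absolute convergence of $I_s^0$), but $\Phi_s$ is merely Hölder-continuous in $t$ near $t=0$ when $d=2$ — one has $\Phi_s(t)-\Phi_s(0)=O(|t|^{1-\aleph})$ for every $\aleph>0$ when $d=2$, while for $d\ge 3$ one gets genuine $C^1$ behaviour, $\Phi_s(t)-\Phi_s(0)=O(|t|)$. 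Plugging this modulus of continuity into the remainder integral $\int_\R \tfrac{|\Phi_s(\nu\tau)-\Phi_s(0)|}{\tau^2+\Gamma_s^2}\,d\tau$ produces $O(\nu)$ for $d\ge3$ (hence $I_s^\Delta=O(1)$) and $O(\nu^{-\aleph})$ for $d=2$ (hence $I_s^\Delta=O(\nu^{-\aleph})$), which is precisely the claimed estimate. The decay $C^\#(s)$ in $s$ is obtained throughout from the Schwartz decay of ${F}_s$ via \eqref{B_condition} and the lower bound $\Gamma_s\ge K^{-1}>0$ from \eqref{Ga_condition}, which keeps all denominators away from zero; the upper bounds on derivatives of $\Gamma_s$ in \eqref{Ga_condition} are what make the Taylor remainder of the combined integrand manageable. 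I would organize the write-up as: (i) coarea disintegration and reduction to the one-variable integral in $t$; (ii) the smooth-coordinates analysis away from $(s,s)$, which is routine; (iii) the local analysis near $(s,s)$ giving the Hölder/Lipschitz modulus of $\Phi_s$, which is the crux; (iv) assembling the pieces and tracking the $s$-dependence to get the $C^\#(s;\aleph_d)$ constants.
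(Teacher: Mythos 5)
Your overall strategy --- disintegrate transversally to the quadric, reduce to a one-dimensional Lorentzian integral in the transverse variable, freeze the fiber integral on the quadric, and control the remainder by a modulus of continuity --- is the same idea as the paper's, though you organize it through a global coarea formula in $t=F_s(s_1,s_2)$, whereas the paper works in scale-invariant normal coordinates $(\eta,t,\theta)$ on the cone $\Sigma$ (with $x\cdot y=t^2\theta$) and treats separately the vertex, a tube around $\Sigma_*$, and the exterior. However, as written your plan does not deliver the stated bound for $d\ge3$. Using only $|\Phi_s(\nu\tau)-\Phi_s(0)|\le C\min(1,\nu|\tau|)$ with absolute values,
\[
\int_{\R}\frac{\min(1,\nu|\tau|)}{\tau^2+\Gamma_s^2}\,d\tau = O(\nu\ln\nu^{-1}),
\]
not $O(\nu)$ as you claim (the range $1\le|\tau|\le\nu^{-1}$ alone contributes $\sim 2\nu\ln\nu^{-1}$), so you would only obtain $|I_s^\Delta|\le C^\#(s)\ln\nu^{-1}$ rather than $C^\#(s)$. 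To remove the logarithm you must write $\Phi_s(t)-\Phi_s(0)=A_s t+R_s(t)$ and use that the linear term integrates to zero against the even Lorentzian kernel over a symmetric interval; this cancellation is exactly what the paper engineers (it splits the unfrozen-minus-frozen numerator as $A_s\zeta+B_s\zeta^2$ and symmetrizes the integration domain to $[-\zeta_0,\zeta_0]$ precisely so that the odd part vanishes). For $d=2$ your H\"older modulus absorbs the log into $\nu^{-\aleph}$, so that case goes through.

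Two further points you flag but do not resolve. First, $\Gamma_s$ is not constant on the fibers $\{F_s=t\}$, so the reduction to $\int\nu^2\big(t^2+(\nu\Gamma_s)^2\big)^{-1}\Phi_s(t)\,dt$ is not legitimate as it stands; the paper factors $\Gamma_s(\eta,t,\theta)=h(\theta)\Gamma_s(\eta,t,0)$ and changes the fiber variable to $\zeta=\theta/h(\theta)$, which makes the denominator exactly $\zeta^2+\eps^2$ with $\eps$ constant along the fiber --- some such device is indispensable, and it consumes the derivative bounds of \eqref{Ga_condition}. Second, since $\Gamma_s$ may grow polynomially (exponent $r_1$), the Lorentzian width $\nu\Gamma_s$ is not uniformly small, and the approximation $\tan^{-1}(\theta_0/\eps)\approx\pi/2$ fails where $|(s_1,s_2)|$ or $|s|$ is a negative power of $\nu$; the paper excises these regions (cutoffs at radii $\nu^{b}$ and $\nu^{-a}$ with $2b+2ar_1<1$) and uses Schwartz decay there. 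It is the tension between the vertex-ball error $\nu^{2bd}$ and the constraint $2b<1$ that actually forces the $\nu^{-\aleph}$ loss when $d=2$, so attributing that loss solely to the modulus of continuity of $\Phi_s$ at $t=0$ is at best half the story.
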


The theorem is proved in Section \ref{s_singint}.
If $F_s$ and $\Gamma_s$ do not depend on $s$, then the theorem holds under related (but milder) restrictions on $F$
and $\Gamma$, and in that case 	$|I_s^\Delta|\le C \chi_d(\nu)$, where $\chi_d$ is defined in \eqref{chi_d},
 see  \cite{K}. 

Theorem \ref{t_singint} implies that 
\be\label{int_est}
|I_s| \le \nu C^\#(s).
\ee
In Appendix \ref{a_stat_phase}
 we show that this inequality may be obtained easier and under weaker restrictions on the functions $F_s$ and $\Ga_s$. This
observation is important since later in the text we use various generalisations of  inequality \eqref{int_est} in situations, where analogies 
of the asymptotic expansion \eqref{asymp_exp} are not known for us. 
	
Applying Theorems \ref{t_sum_integral} and \ref{t_singint} to the sum  $ \EE |a^{(1)}_{s}{(\tau)}|^2 =\Sigma_s$ in \eqref{a1_covar} and 
 recalling that $\Sigma_s$ was extended to a function on $\R^d$,  we get

	\begin{corollary}\label{t_for_sum}   Assume $T=\infty$. Then for $s\in \R^d$, 
	\begin{equation*}
\begin{split}
\Big| \EE |a^{(1)}_{s}{(\tau)}|^2  - \nu \frac{\pi }{\ga_s} \int_{{}^s\!\Sigma_*} 
		\frac{B(s_1, s_2, s_1+s_2-s) }{{\sqrt{|s_1-s|^2+|s_2-s|^2}} \ 
		}
		\, ds_1ds_2\!\!\mid_{{}^s\!\Sigma_*}\Big|\\
	\le  C^\#(s;\aleph_d) (L^{-2} \nu^{-2} +
	\nu^{2-\aleph_d} ). 
\end{split}
\end{equation*}
	\end{corollary}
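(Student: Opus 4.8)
The plan is to obtain Corollary~\ref{t_for_sum} as a straightforward combination of the two preceding results, applied to the sum $\Sigma_s$ in the form \eqref{a1_covar}. First I would observe that, when $T=\infty$, we have $\EE|a^{(1)}_s(\tau)|^2 = \Sigma_s$ independently of $\tau$, and that $\Sigma_s$ extends to a function of $s\in\R^d$ as noted in \eqref{Ss_extends}. Rewriting $\Sigma_s$ via \eqref{a1_covar} in the form
\be\non
\Sigma_s = \frac{2\nu^2}{\ga_s}\ssum_{s_1,s_2\ne s} \des\, B(s_1,s_2,s_3)\,
\frac{\Gamma(s_1,s_2,s_3,s)}{(\oms)^2+\nu^2\,\Gamma(s_1,s_2,s_3,s)^2},
\ee
and recalling $\oms = -2\,(s_1-s)\cdot(s_2-s)$, I recognize the summand (after dividing by $L^{-2d}$ to pass from $\ssum$ to $\sum$) as the restriction to $\Z^{2d}_L$ of the integrand appearing in $I_s$, see \eqref{I_s}. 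Thus $\Sigma_s$ is exactly a sum $S_s$ of the type covered by Theorem~\ref{t_sum_integral}, with $m=0$, $k=2$, and $D_s = D_s^1\cup D_s^2$ where $D_s^j=\{(s_1,s_2): s-s_j=0\}$ (these are the loci excluded by the factor $\de'$, and they have dimension $d$, so condition on the $c^j_i$ holds).

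Next I would verify the hypothesis \eqref{op1}. The function $G_s(z;\nu)$ is here
\be\non
G_s(s_1,s_2;\nu)=\frac{\nu^2}{2\ga_s}\cdot\frac{\des\,B(s_1,s_2,s_3)\,\Gamma(s_1,s_2,s_3,s)}{((s_1-s)\cdot(s_2-s))^2+(\tfrac12\nu\Gamma(s_1,s_2,s_3,s))^2}
= \nu^2\,\frac{\des\,F_s(s_1,s_2)}{((s_1-s)\cdot(s_2-s))^2+(\tfrac12\nu\Gamma_s(s_1,s_2))^2},
\ee
with $F_s$ as in \eqref{F}. Since $F_s$ satisfies \eqref{B_condition} (Schwartz-type decay, uniform in $s$) and $\Gamma_s$ satisfies \eqref{Ga_condition} with $r_1=\max(2r_*,3)$, the elementary inequality $a^2+b^2\ge 2|ab|$ applied to the denominator gives the pointwise bound $|G_s|\le C^\#(s)C^\#(s_1,s_2)/|\Gamma_s|\le C^\#(s)C^\#(s_1,s_2)$ after splitting $C^\#(s,s_1,s_2)$ as in the Notation section; the two $z$-derivatives produce an extra factor $\nu^{-1}$ per derivative from differentiating the denominator (the worst term), confirming \eqref{op1} with the required $\nu^{-|\al|}$. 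Hence Theorem~\ref{t_sum_integral} yields $|\Sigma_s - I_s|\le C^\#(s)\nu^{-2}L^{-2}$, where $I_s=J_s$ is exactly the integral \eqref{I_s} (the restriction $s_1,s_2\ne s$ in the sum versus all of $\R^{2d}$ in the integral differs only on a null set, which is absorbed into the bound \eqref{no D_s}).

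Finally I would invoke Theorem~\ref{t_singint} to expand $I_s = \nu I_s^0 + \nu^2 I_s^\Delta$ with $I_s^0$ given by \eqref{p22} and $|I_s^\Delta|\le C^\#(s;\aleph_d)\nu^{-\aleph_d}$, $|I_s^0|\le C^\#(s)$. Specializing \eqref{p22} to our data, $F_s = \Gamma_s B(s_1,s_2,s_3)/\ga_s$ (from \eqref{F}, noting $\Gamma_s=\tfrac12\Gamma$), so $F_s/\Gamma_s = B(s_1,s_2,s_1+s_2-s)/\ga_s$ and
\be\non
I_s^0 = \frac{\pi}{\ga_s}\int_{{}^s\!\Sigma_*}\frac{B(s_1,s_2,s_1+s_2-s)}{\sqrt{|s_1-s|^2+|s_2-s|^2}}\,ds_1ds_2\!\!\mid_{{}^s\!\Sigma_*}.
\ee
Combining, $|\Sigma_s - \nu I_s^0|\le |\Sigma_s-I_s| + \nu^2|I_s^\Delta| \le C^\#(s)\nu^{-2}L^{-2} + C^\#(s;\aleph_d)\nu^{2-\aleph_d}$, which is precisely the claimed estimate. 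I do not anticipate a genuine obstacle here: the entire content is bookkeeping to match $\Sigma_s$ against the abstract hypotheses of Theorems~\ref{t_sum_integral} and~\ref{t_singint}. The only point requiring a little care is checking \eqref{op1}, i.e.\ that each $z$-derivative of $G_s$ costs no more than one power of $\nu^{-1}$ — this is where the lower bound $|\Gamma_s|\ge K^{-1}$ in \eqref{Ga_condition} is used to control the denominator away from zero relative to the $\nu$-regularization, and where one must track that differentiating $((s_1-s)\cdot(s_2-s))^2+(\tfrac12\nu\Gamma_s)^2$ and dividing by itself produces the factor $\nu^{-1}$ but not $\nu^{-2}$.
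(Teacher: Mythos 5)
Your proposal is correct and follows essentially the same route as the paper, which obtains the corollary precisely by applying Theorem~\ref{t_sum_integral} (with $D_s=D_s^1\cup D_s^2$, $D_s^j=\{s-s_j=0\}$, exactly as you note) and then Theorem~\ref{t_singint} to the sum \eqref{a1_covar}, identifying $F_s/\Gamma_s=B(s_1,s_2,s_1+s_2-s)/\ga_s$. One small slip in your verification of \eqref{op1}: the inequality $a^2+b^2\ge 2|ab|$ does not give a bound uniform near the quadric $\oms=0$; you should instead bound the denominator below by $(\tfrac12\nu\Gamma_s)^2\ge \nu^2/(4K^2)$, which cancels the prefactor $\nu^2$ and yields $|G_s|\le C^\#(s)C^\#(z)$ as claimed.
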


It is convenient to pass in  \eqref{p22} from the variables $(s_1, s_2,s)$ 
to 
\be\label{p01}
(x,y,s) = (s_1-s, s_2-s, s),\qquad (x,y)=:z.
\ee
 Then  the 
quadric  ${}^s\!\Sigma$ becomes 
\be\non
 \Sigma=\{z: x\cdot y=0\}\subset \R^d_x\times \R^d_y =\R^{2d}_z.
\ee
 The locus of $\Sigma$ 
  is the point $(0,0)$, and the regular part is  $\Sigma_*=\{(x,y)\ne0: x\cdot y=0\}$. Now we write the integral $I_s^0$  as 
\be\label{new_int}
\int_{\Sigma_*} f(z) |z|^{-1}\,dz\mid_{\Sigma_*},
\ee
where $f=\pi F_s/\Gamma_s$.

Integrals of the form \eqref{new_int} 
are important for what follows. In next section we discuss some their properties.

\subsection{Integrals  \eqref{new_int} } 

Let us extend the measure $ |z|^{-1}\,dz\mid_{\Sigma_*}$ to a measure $\mu^\Sigma$ on $\Sigma$, where  $\mu^\Sigma(\{0\})=0$, and next extend
$\mu^\Sigma$ to  a Borel measure on $\R^{2d}$, supported by $\Sigma$, keeping for the latter the same name.   Then the 
 integrals \eqref{new_int} may be written as $\int_{\Sigma_*} f(z)\, \mu^\Sigma (dz)$, or as  $\int_\Sigma f(z)\, \mu^\Sigma (dz)$, or 
 as $\int_{\R^{2d}} f(z)\, \mu^\Sigma (dz)$.

For any real number $r$ let  $\Cc_r(\R^{2d})$ be the space of continuous complex 
functions on 
   $\R^{2d}$ with the    finite norm  
  \be\label{norm_m}
   |f|_r = \sup_z|f(z)| \lan z\ran^r.
   \ee
      
    \begin{proposition}\label{p_prop} 
 Integral \eqref{new_int} as a function of $f$  
 defines a continuous linear functional on the space $\Cc_r(\R^{2d})$ if $r>2d-2$.
 \end{proposition}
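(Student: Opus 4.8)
The plan is to reduce the statement to a bound of the form
$$
\Big| \int_{\Sigma_*} f(z)\,|z|^{-1}\,dz\!\mid_{\Sigma_*}\Big| \le C\, |f|_r,
$$
valid for $r>2d-2$, which gives both boundedness and (by linearity in $f$, which is manifest) continuity of the functional. Since $|f(z)| \le |f|_r \lan z\ran^{-r}$, it suffices to show that the measure $\mu^\Sigma$ restricted to $\Sigma_*$ has finite total mass against the weight $\lan z\ran^{-r}$, i.e. that
$$
\int_{\Sigma_*} \lan z\ran^{-r}\,|z|^{-1}\,dz\!\mid_{\Sigma_*} < \infty .
$$
So the whole matter is an integrability check for the explicit singular measure $|z|^{-1}\,dz\!\mid_{\Sigma_*}$ on the cone $\Sigma=\{x\cdot y=0\}\subset\R^d_x\times\R^d_y$, and there are two regions to control: the neighbourhood of the vertex $z=0$, where $|z|^{-1}$ blows up, and the region $|z|\to\infty$, where the polynomial decay $\lan z\ran^{-r}$ must beat the growth of the surface area of $\Sigma$.

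First I would set up coordinates adapted to $\Sigma_*$. The quadric $\Sigma = \{x\cdot y = 0\}$ is a smooth hypersurface away from the origin, of dimension $2d-1$. A convenient parametrisation near a generic point is: for $x\neq 0$, write $y = y^\perp$ with $y^\perp \in x^\perp = \{y: y\cdot x = 0\}$, so that $\Sigma_* = \{(x,y): x\neq 0,\ y\in x^\perp\}$ together with the symmetric chart $\{y\neq 0,\ x\in y^\perp\}$ (these cover $\Sigma_*$). Using the disintegration philosophy already present in the paper (the formula $\mu_0(ds_1,ds_2) = |s_1|^{-1}\,ds_1\,d_{s_1^\perp}s_2$ from Theorem~\ref{t_disintegr}, which is exactly this measure up to the change of variables \eqref{p01}), the induced volume element on $\Sigma_*$ in the chart $\{x\neq 0\}$ is comparable to $dx\,d_{x^\perp}y$, where $d_{x^\perp}y$ is Lebesgue measure on the hyperplane $x^\perp$. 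Hence
$$
\int_{\Sigma_*} \lan z\ran^{-r}\,|z|^{-1}\,dz\!\mid_{\Sigma_*}
\ \lesssim\ \int_{\R^d_x} \frac{dx}{|x|}\int_{x^\perp} \frac{d_{x^\perp}y}{(1+|x|^2+|y|^2)^{r/2}},
$$
the factor $|z|^{-1}\asymp (|x|^2+|y|^2)^{-1/2}$ being dominated by $|x|^{-1}$ on the chart; the symmetric chart gives the same bound by interchanging roles. Here I would split: on $\{|y|\le|x|\}$ use $|z|^{-1}\lesssim|x|^{-1}$, on $\{|y|\ge|x|\}$ use $|z|^{-1}\lesssim|y|^{-1}$ and integrate $y$ first — but in fact the clean estimate $|z|^{-1}\le|x|^{-1}$ on all of the chart $\{x\neq0\}$ is enough since we have two charts.

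Next I would carry out the two-stage integration. The inner integral over the $(d-1)$-dimensional hyperplane $x^\perp$ is
$$
\int_{x^\perp}\frac{d_{x^\perp}y}{(1+|x|^2+|y|^2)^{r/2}} = \frac{c_{d-1}}{(1+|x|^2)^{(r-d+1)/2}},
$$
which is finite precisely when $r>d-1$, with the displayed closed form (up to the constant $c_{d-1}=\int_{\R^{d-1}}(1+|w|^2)^{-r/2}dw$, itself finite for $r>d-1$). Substituting back, the remaining integral is
$$
\int_{\R^d_x}\frac{dx}{|x|\,(1+|x|^2)^{(r-d+1)/2}}.
$$
Near $x=0$ the integrand behaves like $|x|^{-1}$ in dimension $d$, hence integrable since $d\ge 2>1$. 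For $|x|\to\infty$ it behaves like $|x|^{-1-(r-d+1)} = |x|^{d-r-2}$, and $\int_{|x|\ge 1}|x|^{d-r-2}\,dx<\infty$ iff $(r+2-d) - d > 0$, i.e. $r > 2d-2$. This is exactly the hypothesis, so the full double integral is finite, giving
$$
\Big| \int_{\Sigma_*} f(z)\,|z|^{-1}\,dz\!\mid_{\Sigma_*}\Big| \le C_{d,r}\,|f|_r,
$$
and the linear functional $f\mapsto\int_{\Sigma_*}f|z|^{-1}\,dz\!\mid_{\Sigma_*}$ is bounded on $\Cc_r(\R^{2d})$ for $r>2d-2$.

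The only genuinely non-routine point is justifying that the induced surface measure on $\Sigma_*$ is, in each of the two charts, comparable to (in fact equal to, by the disintegration of the paper) the product measure $dx\,d_{x^\perp}y$ — i.e. that the Jacobian factor relating the Riemannian volume $dz\!\mid_{\Sigma_*}$ to these coordinates is bounded above and below uniformly. This is where I would invoke Theorem~\ref{t_disintegr}: it states precisely that $\mu_0 = |s_1|^{-1}ds_1\,d_{s_1^\perp}s_2$, and after the translation \eqref{p01} this is the measure $|x|^{-1}dx\,d_{x^\perp}y$ on $\Sigma$, so in fact the Jacobian is exactly $1$ and no comparability argument is needed — one just quotes the disintegration and performs the elementary integrations above. (Alternatively, one computes the Jacobian directly: parametrising $\Sigma_*$ by $(x, w)$ with $y = w \in x^\perp$, a short linear-algebra computation shows the induced metric has volume density independent of the choice of orthonormal frame on $x^\perp$, and the resulting density combined with the weight $|z|^{-1}$ reproduces $|x|^{-1}$; this is self-contained but slightly tedious.) With that identification in hand, everything reduces to the two one-variable integrals above, whose convergence is controlled by the stated threshold $r > 2d-2$.
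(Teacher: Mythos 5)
Your proof is correct, and it takes a genuinely different route from the paper's. The paper (Section~\ref{s_proofprop}) exploits the homogeneity of the cone: using the polar-type coordinates $(\eta,t)$ on $\Sigma_*$ from Section~\ref{s_13.3} and the volume element \eqref{vol_on_*}, the measure $\mu^\Sigma=|z|^{-1}\,dz\!\mid_{\Sigma_*}$ has radial density $t^{2d-3}\,m(d\eta)\,dt$, so the $\mu^\Sigma$-mass of the shell $\{R\le|z|\le R+1\}$ is $\lesssim (R+1)^{2d-3}$, and the bound follows from the convergence of $\sum_R (R+1)^{2d-3}\lan R\ran^{-r}$ for $r>2d-2$. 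You instead disintegrate over the base $\R^d_x$ via Theorem~\ref{t_disintegr} and compute the iterated integral explicitly: the fiber integral over $x^\perp$ costs $(1+|x|^2)^{(d-1-r)/2}$ (needing $r>d-1$, which is weaker than $r>2d-2$), and the remaining radial integral $\int |x|^{-1}(1+|x|^2)^{(d-1-r)/2}dx$ converges at the origin since $d\ge2$ and at infinity precisely for $r>2d-2$. Both arguments hit the same threshold and are about equally short; yours has the advantage of relying only on Theorem~\ref{t_disintegr}, proved in the same section as the proposition, rather than on the $(\eta,t,\theta)$ machinery set up later for Theorem~\ref{t_singint}. One small tidying remark: the detour through ``$|z|^{-1}\lesssim|x|^{-1}$ and two charts'' is unnecessary --- as you yourself note at the end, Theorem~\ref{t_disintegr} gives $\mu^\Sigma=|x|^{-1}dx\,d_{x^\perp}y$ exactly on $\Sigma^x$, and $\mu^\Sigma(\Sigma_*\setminus\Sigma^x)=0$, so the single chart already carries the whole integral.
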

 
 The proposition is proved in Section \ref{s_proofprop}.  
 To study further the measure $\mu^\Sigma$ we consider  the  projection 
\be\label{Pi}
\Pi: \Sigma_* \to \R^d_x,\quad z=(x,y)\mapsto x.
\ee
It defines a fibering of $ \Sigma_*$, 
where the fiber $\Pi^{-1}0 = \{0\}\times \{\R^d_y\setminus \{0\}\}$ is singular, while for any non-zero $x$ the fiber $\Pi^{-1}x$ equals $ \{x\}\times x^\perp $, where $x^\perp$ is the orthogonal complement to $x$ in $\R^d_y$. 
So the restriction of $\Pi$ to  the domain $\ \Sigma^x= \Sigma \setminus (\{0\}\times \R^d_y)$ is  a smooth euclidean vector bundle over $\R^d \setminus \{0\}$.

 Let us us abbreviate to  $\mu$ the volume element   $ dz\mid_{\Sigma_*}$. Since the measure $\mu^\Sigma\mid_{\Sigma_*}$ 
  is absolutely continuous with respect to $\mu$, then  $\mu^\Sigma( \Sigma_* \setminus \Sigma^x)=  \mu( \Sigma_* \setminus \Sigma^x)=0$; so  to calculate the integrals \eqref{new_int} 
it suffices to know  the restriction of   $\mu^\Sigma$ to $\Sigma^x$.
The result below shows how to disintegrate the measures $\mu\mid_{\Sigma^x}$  and $\mu^\Sigma\mid_{\Sigma^x}$ 
with respect to $\Pi$, and allows to integrate explicitly over them.

\begin{theorem}\label{t_disintegr} The measures $\mu\mid_{\Sigma^x}$  and $\mu^\Sigma\mid_{\Sigma^x}$  disintegrate as follows:
\be\label{des1}
\mu(dz) = |x|^{-1} dx  |z| \,d_{x^\perp} y,
\ee
\be\label{desSigma}
\mu^\Sigma(dz) = |x|^{-1}  dx\,d_{x^\perp} y,
\ee
where $d_{x^\perp} y$ is the volume element on the space ${x^\perp} $ (the orthogonal complement to $x$ in $\R^d_y$). 
\end{theorem}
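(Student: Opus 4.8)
Since $\mu^\Sigma=|z|^{-1}\mu$ on $\Sigma_*$ by the very definition of $\mu^\Sigma$, formula \eqref{desSigma} follows from \eqref{des1} by dividing the conditional density by $|z|$, so it suffices to establish \eqref{des1}. I would read \eqref{des1} as the disintegration of the $(2d-1)$-dimensional volume measure $\mu=dz\mid_{\Sigma_*}$ along the submersion $\Pi$ from \eqref{Pi}. Recall that $\Pi\mid_{\Sigma^x}\colon\Sigma^x\to\R^d\setminus\{0\}$ is a smooth submersion whose fibre over $x$ is the linear subspace $\{x\}\times x^\perp$, carrying the volume element $d_{x^\perp}y$ induced from $\R^d_y$. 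By the coarea formula applied to this submersion, $\mu$ disintegrates as $\mu(dz)=\mathrm{NJ}(z)^{-1}\,dx\,d_{x^\perp}y$, where $\mathrm{NJ}(z)$ is the normal Jacobian of $\Pi$ at $z$, i.e.\ the absolute value of the determinant of the restriction of $d\Pi_z$ to the orthogonal complement, taken inside $T_z\Sigma$, of the vertical space $\ker d\Pi_z$. Thus the entire statement reduces to showing $\mathrm{NJ}(z)=|x|/|z|$.

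First I would compute $\mathrm{NJ}(z)$ at a point $z=(x,y)$ with $x\neq0$. Here $T_z\Sigma=\{(\xi,\eta)\in\R^{2d}:\xi\cdot y+x\cdot\eta=0\}$ and $\ker d\Pi_z=\{0\}\times x^\perp=\{(0,\eta):\eta\cdot x=0\}$. A vector $(\xi,\eta)\in T_z\Sigma$ is $\R^{2d}$-orthogonal to all of $\ker d\Pi_z$ exactly when $\eta$ is parallel to $x$; together with $\xi\cdot y+x\cdot\eta=0$ this forces $\eta=-|x|^{-2}(\xi\cdot y)\,x$, so the horizontal space is $H_z=\{(\xi,-|x|^{-2}(\xi\cdot y)\,x):\xi\in\R^d\}$, a $d$-dimensional space on which $d\Pi_z$ acts by $(\xi,\cdot)\mapsto\xi$. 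Parametrising $H_z$ by $\xi\in\R^d$, its Gram matrix in the metric induced from $\R^{2d}$ is $G(z)=I_d+|x|^{-2}\,y\,y^{T}$, a rank-one perturbation of the identity with $\det G(z)=1+|y|^2/|x|^2=|z|^2/|x|^2$. Hence $\mathrm{NJ}(z)=(\det G(z))^{-1/2}=|x|/|z|$, and $\mu(dz)=(|x|/|z|)^{-1}dx\,d_{x^\perp}y=|x|^{-1}\,dx\,|z|\,d_{x^\perp}y$, which is \eqref{des1}; dividing by $|z|$ gives \eqref{desSigma}. These formulas then hold on all of $\Sigma_*$, since $\Sigma_*\setminus\Sigma^x$ is $(d-1)$-dimensional and hence $\mu$-null.

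This computation is elementary, so there is no serious obstacle; the single point that must be watched is that the orthogonal complement defining the normal Jacobian is taken inside $T_z\Sigma$ (not inside $\R^{2d}$), and that the horizontal lift of a unit vector of $\R^d_x$ is \emph{not} a unit vector of $\R^{2d}$ — this is precisely what produces the weight $|z|$ in \eqref{des1}. As an independent check one can bypass Jacobians altogether: since $|z|=|\nabla g|$ for $g(x,y)=x\cdot y$ and $\mu^\Sigma=|z|^{-1}dz\mid_{\Sigma_*}$, the measure $\mu^\Sigma$ is exactly the coarea slice of $g$ at level $0$; writing $\int_{\R^{2d}}\phi\,dz=\int_{\R^d_x}\big(\int_{\R^d_y}\phi(x,y)\,dy\big)\,dx$ and applying the one-dimensional coarea formula to the linear functional $y\mapsto x\cdot y$ (whose gradient has norm $|x|$) gives a second disintegration of $\int_{\R^{2d}}\phi\,dz$ over the level parameter, and identifying the two disintegrations at level $0$ recovers \eqref{desSigma}, and hence \eqref{des1}.
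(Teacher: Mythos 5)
Your proof is correct, and it reaches the theorem by a genuinely different route from the paper. The paper fixes a small ball $U_{x'}$ in the base, realises $\Sigma^x$ over it as a graph $y_1=\phi(x,\bar\eta)=-\bar x\cdot\bar\eta/x_1$, computes the graph area density $\bar p=(1+|\p\phi|^2)^{1/2}$, pushes it forward to the fibre coordinate $y=\Phi_x(\bar\eta)$, argues that the resulting density $p(z)$ is independent of the trivialisation, and finally evaluates $p(z_*)=|x_*|^{-1}|z_*|$ by rotating coordinates so that $e_1=x_*/|x_*|$. You instead apply the coarea formula to the submersion $\Pi\mid_{\Sigma^x}$ and reduce everything to the normal Jacobian; your identification of the horizontal space $H_z=\{(\xi,-|x|^{-2}(\xi\cdot y)x)\}$ inside $T_z\Sigma$, the Gram matrix $G=I_d+|x|^{-2}yy^{T}$, and $\det G=|z|^2/|x|^2$ are all correct, and they produce the same factor $|z|/|x|$ that the paper obtains as $\bar p$ at the adapted point. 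Your version is shorter and coordinate-free, at the price of invoking the coarea formula for Riemannian submersions as a black box; the paper's version is longer but self-contained (it only uses the first-fundamental-form computation spelled out in its footnote) and needs the extra uniqueness-of-density argument precisely because it works in a chosen trivialisation. Your second check via double slicing of Lebesgue measure by $g(x,y)=x\cdot y$ is a nice independent confirmation of \eqref{desSigma}. One cosmetic slip: $\Sigma_*\setminus\Sigma^x=\{(0,y):y\ne0\}$ has dimension $d$, not $d-1$; it is still $\mu$-null because $d<2d-1$ for $d\ge2$, so your concluding remark survives, and in any case the theorem as stated only concerns the restrictions to $\Sigma^x$.
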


We recall that equality \eqref{des1} means that for any continuous function  $f$   on $\Sigma^x$  with compact support
\be\label{recall}
\int_{\Sigma^x} f(z)\mu(dz) =\int_{\R^d\setminus \{0\}}  |x|^{-1}   dx\int_{x^\perp}  f(z) \, |z|\,d_{x^\perp} y. 
\ee

\begin{proof}
It suffices to verify \eqref{recall} for all continuous functions $f$, supported by a compact set $K$, for any 
 $K\Subset (\R^d\setminus\{0\}) \times \R^d$. 
For $x'\in \R^d\setminus\{0\}$ and $m\in\N$ we denote $r'=|x'|>0$ and set
$
U_{x'} = \{x: |x-x' |<\tfrac12 r'\}$ and $
 U^m =\{y: |y|<m\}.
$
Since any $K$ as above can be covered by a finite system of domains $U_{x'} \times U^m$, it suffices to prove \eqref{recall} for 
 any set 
  $U_{x'} \times U^m =: U$ and any 
   $f\in C_0(U)$, where $C_0(U)$ is the space of continuous compactly supported functions on $U$.  

Now we construct explicitly a trivialisation of the linear bundle $\Pi$ over $U_{x'} $. To do this we fix in $\R^d$ a coordinate system such that
\be\label{holds}
x_1\ge\kappa>0\quad\text{for any}\quad x=(x_1,x_2,\dots,x_d) =: (x_1, \bar x)\in  U_{x'} .
\ee
  We denote $Q= \Pi^{-1}U_{x'} \subset \Sigma^x $
 and construct  a linear in the  second argument $\bar\eta$
coordinate mapping 
$
\Phi: U_{x'}\times \R^{d-1} \to Q$
of the form
 $$
 \Phi(x, \bar\eta) = \big(x,  \Phi_x(\bar\eta)\big), 
 \qquad  \Phi_x(\bar\eta) =  ({\phi}(x, \bar\eta), \bar\eta).
 $$
 The function $\phi$  should be  such  that $\Phi_x(\bar\eta)\in x^\perp$. That is, it should satisfy
 $x\cdot \Phi_x(
 \bar \eta)= x_1 {\phi} + \bar x\cdot\bar\eta=0$.
 From here we find that 
 $\ 
 {\phi} = -\frac{\bar x \cdot\bar\eta}{x_1} 
 $. 
 Thus obtained mapping $\Phi_x$ is linear in $\bar\eta$, and the image of $\Phi$  is the set $Q$. 
In the coordinates $(x, \bar\eta) \in U_{x'}\times\R^{d-1}$ the hypersurface $\Sigma^x$ is embedded in $\R^{2d}$ as a graph of the function $\phi(x, \bar\eta)$. 
Accordingly  in these  coordinates 
  the volume element $\mu$ on $\Sigma^x$ reeds $\mu =\bar p(x,\bar\eta) dx\,d\bar\eta ,$ 
where\,\footnote{Indeed, denoting $\xi =(x,\bar\eta)$ we see that the first fundamental form $I^\xi$ of $\Sigma^x$ is
given by $I^\xi_{ij} =(\delta_{i,j} +\theta_i\theta_j)$, where $\theta=\p_\xi \phi \in\R^{2d-1}$. So for $X\in \R^{2d-1}$, 
$$
I^\xi(X,X) = \sum X_j^2 + \sum_{i,j} X_i\theta_i X_j\theta_j = \sum X_j^2 +(X\cdot\theta)^2. 
$$
Choosing in $\R^{2d-1} $ a coordinate system with the first basis vector $\theta/|\theta|$ we find that $ I^\xi(X,X)  = X_1^2(1+|\theta|^2) + \sum_{j\ge2} X_j^2$. So 
det$I^\xi = 1+|\p_\xi \phi |^2$, which implies the formula for the density $\bar p$.
}
$$
 \bar p(x,\bar\eta)= \big(1+|\p {\phi} (x,\bar\eta) |^2\big)^{1/2}=
\Big( 1+ x_1^{-2}\big( (x_1^{-1}\bar x\cdot \bar\eta)^2 +|\bar\eta|^2 + | \bar x|^2 \big)
\Big)^{1/2} .
$$
So 
$$
\int_{U\subset Q} f(z) \mu(dz)  = \int_{U_{x'}} \Big( \int_{\R^{d-1}} f(x, \Phi_x(\bar\eta)) \,\bar p(x, \bar\eta)\,d\bar\eta\Big)dx. 
$$
Passing from the variable $\bar\eta$ to $y= \Phi_x(\bar\eta)\in x^\perp$ we write the measure $\bar p(x, \bar\eta)d\bar\eta $ as
$p(z) d_{x^\perp} y$ with 
\be\non
p(z) = p(x,y)=\bar p(x, \Phi_x^{-1}y) |\det \Phi_x|^{-1}.
\ee
Then 
\be\label{rela}
\int_{U} f(z) \mu(dz)  =  \int_{U_{x'}} \Big( \int_{U^m \cap x^{\perp}} f(z) p(z) d_{x^\perp} y \Big) dx.
\ee
The smooth function $p$ in the integral above is defined on $U \cap \Sigma^x$ 
in a unique way and does not depend on the trivialisation of $\Pi$ over 
${U_{x'}}$, used to obtain it. Indeed, if $p_1(z)$ is another smooth function on $U \cap \Sigma^x$
 such that \eqref{rela} holds with $p:= p_1$, 
 then 
$$
\int dx  \int_{x^{\perp}} f(z) (p(z)-p_1(z))  d_{x^\perp} y =0\qquad \forall\,f\in C_0(U),
$$
which obviously implies that $p=p_1$.
 To establish \eqref{recall} it remains to verify that 
in \eqref{rela}
\be\label{remains}
p(x_*, y_*) = |x_*|^{-1} |z_*| \qquad \forall\, z_*= (x_*, y_*) \in U\cap\Sigma^x. 
\ee
To prove this equality 
let us choose in $\R^d$ euclidean coordinates with the first basis vector 
 $e_1= x_*/R_*$, $ R_*=|x_*| \ge \tfrac12 r'$. In these coordinates  condition \eqref{holds} holds, $x_*=( R_*,0)$ and
   $y_*=(0,\bar\eta_*)$, $\bar\eta_*\in\R^{d-1}$. Then
$$
\bar p(x_*, \Phi_{x_*}^{-1} y_*) = \bar p(x_*,   \bar\eta_*) =
 R_*^{-1} ( R_*^2+ | \bar\eta_*|^2)^{1/2} = |x_*|^{-1} |z_*|, 
$$
and \eqref{remains} follows since  $\det \Phi_{x_*}=1$.
This proves \eqref{des1}. Relation 
\eqref{desSigma} follows from  \eqref{des1} and the definition of the measure $\mu^\Sigma$. 
\end{proof}

Considering the projection $(x,y)\mapsto y$ instead of \eqref{Pi} we see that the measure 
$\mu^\Sigma$, restricted to  the domain $\Sigma^y=  \{(x,y)\in\Sigma: y\ne0\}$, disintegrates as
\be\label{Pii}
\mu^\Sigma\mid_{\Sigma^y}= 
dy\, |y|^{-1}  d_{y^\perp} x, \quad y\in \R^d\setminus\{0\}.
\ee

\begin{example}
Let us calculate the $\mu^\Sigma$--volumes of the balls $B_R^{2d} =\{|z|\le R\}$. Denoting by $A_n$ (by $V_n$) the area of the unit sphere
(the volume of the unit ball) in $\R^n$, 
 we have
$$
\mu^\Sigma(B_R^{2d} ) =\! \int_{|x|\le R} \frac1{|x|} \!\int_{ |y|\le \sqrt{R^2-|x|^2}}1\, dy=
A_dV_{d-1}\! \int_0^R r^{d-2} (R^2-r^2)^{(d-1)/2} dr.
$$
If $d=3$ this equals $ A_3 V_2 \tfrac14 R^4 = \pi^2 R^4$. If $d=2$, this equals 
 $A_2V_1 \int_0^R \sqrt{R^2-r^2} dr$ $= \pi^2R^2$.

\end{example}

\section{Wave kinetic integrals and equations}\label{s_kin_int}
\subsection{Wave kinetic integrals}\label{ss_wki}
For a complex 
function $v(s)$, $s\in\R^d$, and $s_1, s_2, s_3, s_4\in \R^d$ with $s_4=s$ we denote
$
v_j = v(s_j),\; j=1,2,3,4$. 
In this section we study the wave kinetic integral $\big(Kv \big)(s)$, defined as follows:
\be\label{k1}
(Kv)(s) = 2\pi
 \int_{\Sst} \frac{ds_1\, ds_2\mid_{\Sst} \deq\,  v_1v_2v_3v_4}{ \sqrt{|s_1-s|^2+|s_2-s|^2}}\left(
\frac1{ v_4} +\frac1{ v_3} -\frac1{ v_2}-\frac1{ v_1}
\right),
\ee
where ${\Sst} $ is the quadric \eqref{sigmaS} without the singular point $(s,s)$, and $s_4=s$.
  Passing to the variable $z=(x,y)=(s_1-s, s_2-s)$ we write $K$ as an integral over $\Sigma_*$ with respect to the 
measure $\mu^\Sigma = |z|^{-1}d_{\Sigma_*}$ (see \eqref{new_int} and Proposition~\ref{p_prop}):
\be\label{k11}
\begin{split}
	(Kv) (s) =  
	2\pi\int_{\Sigma_*} & d\mu^\Sigma(z) \deq
	 \big( v_1v_2v_3 + 
	  v_1v_2 v_4 
	 -  v_1 v_3v_4-   v_2v_3v_4\big)(z)\\ &=: K_4(s) + K_3(s) + K_2(s) + K_1(s),
\end{split}
\ee
where now $v_1, v_2,v_3$ should be written as functions of $z$ and $s_4=s$ (note the minus--signs for $K_2$ and $K_1$).
\medskip

Now evoking Theorem \ref{t_disintegr} 
we will show that the wave 
kinetic integral $K$ defines 1-smoothing 
continuous operators in the
complex  spaces 
$\Cc_r(\R^d)$ (see \eqref{norm_m})
with $r$ not too small. 

\begin{theorem}\label{t_kin_int}
	If $v\in \Cc_r(\R^d)$ with 
	$r>d$, 
 then $K(v) \in \Cc_{r+1}(\R^d)$ and 
	\be\label{k2}
	|K(v)|_{r+1} \le C_r |v|_r^3,
	\ee
	 where $C_r$ is an absolute constant.
\end{theorem}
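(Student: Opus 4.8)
The plan is to estimate the four pieces $K_1,K_2,K_3,K_4$ of the splitting \eqref{k11} separately, reducing each of them to a single bilinear bound on the quadric $\Sigma_*$. Throughout write $z=(x,y)$ and $s_1=s+x$, $s_2=s+y$, $s_3=s+x+y$, $s_4=s$, so that $x\cdot y=0$ on $\Sigma_*$ and $s_1+s_2-s_3=s$. Three of the four integrands carry the factor $v_4=v(s)$; for those I would simply pull out $|v(s)|\le|v|_r\lan s\ran^{-r}$. For $K_4$, whose integrand $v_1v_2v_3$ has no $v(s)$, I would instead use $s_3=s_1+s_2-s$ to get $\max(\lan s_1\ran,\lan s_2\ran,\lan s_3\ran)\ge\tfrac13\lan s\ran$, so that pointwise on $\Sigma_*$
\[
\prod_{j=1}^{3}|v(s_j)|\ \le\ 3^{r}|v|_r^{3}\,\lan s\ran^{-r}\sum_{1\le j<k\le3}\lan s_j\ran^{-r}\lan s_k\ran^{-r}.
\]
In every case it then suffices to prove the bilinear estimate
\[
\int_{\Sigma_*}\lan s_j\ran^{-r}\lan s_k\ran^{-r}\,\mu^\Sigma(dz)\ \le\ C_r\,\lan s\ran^{-1},\qquad r>d,
\]
for each pair $\{j,k\}\subset\{1,2,3\}$; by the symmetry $x\leftrightarrow y$ (which fixes $s_3$) only the pairs $\{s_1,s_2\}$ and $\{s_2,s_3\}$ need to be treated.

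To prove the bilinear estimate I would disintegrate $\mu^\Sigma$ via Theorem~\ref{t_disintegr}: the projection $z\mapsto x$ with \eqref{desSigma} for the pair $\{s_1,s_2\}$, and the projection $z\mapsto y$ with \eqref{Pii} for the pair $\{s_2,s_3\}$. The point of this choice is that $|v(s_j)v(s_k)|$ concentrates near the configuration where the two selected vertices collapse to the origin (for $\{s_2,s_3\}$ this is $z_0=(0,-s)$, which is the singular fibre of the $x$-projection), and one should disintegrate so that the Jacobian weight $|x|^{-1}$ or $|y|^{-1}$ is \emph{non-singular} there; on that region it equals $\approx\lan s\ran^{-1}$, which is exactly the gained power. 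After the disintegration the inner integral runs over a hyperplane isometric to $\R^{d-1}$ (namely $x^\perp$ or $y^\perp$) and has the shape $\int_{\R^{d-1}}(a^2+|u|^2)^{-r/2}(b^2+|u|^2)^{-r/2}\,du$ with $a,b\ge1$ the $\lan\cdot\ran$ of the components of $s$ and of $s+x$ along the projection direction; splitting the integration at $|u|=\max(a,b)$ yields the elementary bound $\le C_r\max(a,b)^{-r}\min(a,b)^{-(r-d+1)}$, valid for $r>d-1$. It then remains to integrate this against $|x|^{-1}\lan s+x\ran^{-r}$ over $\R^d$; I would pass to polar coordinates $x=t\omega$ and split into the regions where $x$ is near $0$, near $-s$, or in a generic bulk, each of which reduces to an explicit one-dimensional integral together with the elementary spherical bound
\[
\int_{S^{d-1}}\lan s\cdot\omega\ran^{-M}\,d\omega\ \le\ C_M\,\lan s\ran^{-1},\qquad M>1 .
\]

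The genuine obstacle is exactly this gain of one full power $\lan s\ran^{-1}$ — the ``one smoothing'' $K\colon\Cc_r\to\Cc_{r+1}$. Crude pointwise bounds — the triangle inequality, together with the fact that two of the $s_j$ necessarily lie at mutual distance $\sim\lan s\ran$ — only produce $\lan s\ran^{-r}$, not $\lan s\ran^{-r-1}$; the missing power must be extracted from the integrable singularity $|z|^{-1}$ of $\mu^\Sigma$ at the vertex of the cone $\Sigma$ (which on the relevant region is $\approx\lan s\ran^{-1}$) or, equivalently, from the curvature of the quadric expressed by the spherical integral above. One has to verify that \emph{every} one of the regions above — in particular the generic bulk, where $\lan s\cdot\omega\ran\sim\lan s\ran$ for most $\omega$, and the neighbourhood of $-s$ — keeps this gain rather than letting one region eat it. This is also where the hypothesis $r>d$ enters: the near-origin region contributes $\lan s\ran^{-r}\int_{S^{d-1}}\lan s\cdot\omega\ran^{-(r-d+1)}\,d\omega$, and the exponent $r-d+1$ must be strictly larger than $1$ for this to be $O(\lan s\ran^{-r-1})$ without a logarithm. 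Everything else is routine weighted-convolution bookkeeping, and the resulting constant $C_r$ depends only on $r$ and $d$.
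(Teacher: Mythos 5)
Your argument is correct and reaches the same estimate, but it is organized differently from the paper's proof (Lemma~\ref{l_kin_int} in Section~\ref{s_proof_kin_int}). The shared core is Theorem~\ref{t_disintegr}: both proofs gain the extra power $\lan s\ran^{-1}$ by disintegrating $\mu^\Sigma$ along whichever projection ($x$ or $y$) makes the Jacobian weight $|x|^{-1}$ or $|y|^{-1}$ nonsingular at the point where the integrand concentrates, and your choice of projection for each of $K_1,K_2,K_3$ coincides with the paper's for $J_1,J_2,J_3$. The genuine divergence is in the hardest term $K_4$: the paper keeps the trilinear integrand, splits $\R^d\times\R^d$ into nine annular regions relative to $R=|s|$, and uses the quadric inequality $\lan x+y+s\ran\ge C^{-1}(|x|+|y|)$ on $\Sigma$ (inequality \eqref{k8}) to handle the region where both $|x|$ and $|y|$ are large; you instead invoke the pigeonhole bound $\max_j\lan s_j\ran\ge\tfrac13\lan s\ran$ coming from $s_1+s_2-s_3=s$ to reduce $K_4$ to the \emph{same} bilinear estimate that serves $K_1,K_2,K_3$. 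That is a real simplification: it removes the nine-region case analysis and the use of \eqref{k8} entirely, at the price of an extra factor $3^r$ in the constant. Your alternative mechanism for the $\lan s\ran^{-1}$ gain via $\int_{S^{d-1}}\lan s\cdot\omega\ran^{-M}\,d\omega\le C_M\lan s\ran^{-1}$ is also sound and equivalent to the paper's use of \eqref{k100}, though for $r>d$ the crude radial bound already suffices in the near-origin region. Two small points to tidy up: (i) your description of the inner fibre integral as a product of \emph{two} factors $(a^2+|u|^2)^{-r/2}(b^2+|u|^2)^{-r/2}$ is slightly off --- with the projections you chose, only one of the selected factors actually varies along the fibre, the other being constant there and belonging to the outer integral; this only makes the inner integral easier, so nothing breaks. (ii) Membership in $\Cc_{r+1}$ requires continuity of $K(v)$, not just the pointwise bound; the paper supplies this by a translation-plus-dominated-convergence argument after Lemma~\ref{l_kin_int}, and you should append the same one-line remark.
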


That is, the kinetic integral defines a continuous complex--homogeneous mapping of third degree 
$K:  \Cc_{r}(\R^d) \to  \Cc_{r+1}(\R^d)$ if $r>d$. 
We will derive the theorem's assertion from an auxiliary lemma, stated below and proved later in Section~\ref{s_proof_kin_int}.

For $l=1,\dots, 4$ let $J_l(u^1,u^2,u^3,u^4)$ be the complex poly-linear operator of the third order, 
 which does not depend on $u^l$ and 
sends the quadruple of complex 
functions $(u^1(s), \dots, u^4(s))$, $s\in \R^d$, to the function $U_l(s)$, equal to the integral which
defines $K_l(s)$ without the factor $2\pi$
(see \eqref{k11}), where we substitute 
\be\lbl{u_j->s}
v_1 :=u^1(x+s), \quad v_2 :=u^2(y+s),\quad v_3:= u^3(x+y+s),\quad v_4:= u^4(s), 
\ee
in accordance to the relation between the coordinates $(s_1,s_2,s_3,s_4)$ and $(x,y)$. 
That is, 
\be\label{J4}
J_l(u^1,u^2,u^3,u^4)(s)= 
\int_{\Sigma_*}  d\mu^\Sigma(z) 
\prod_{\substack{1\leq i\leq 4 \\ i\ne l}} u^i, \qu l=1,\ldots,4,
\ee
where  $u^1\,\dots, u^4$ depend on the argument $z=(x,y)$ 
as in  \eqref{u_j->s}. Then for $l=1,\dots, 4$ 
\be\label{l1}
(K_lv) (s) = 2\pi \s_l  J_l(v,v,v,v)(s),
\ee
where $\s_1,\s_2:=-1$ and  $\s_3,\s_4:=1$. 
Theorem~\ref{t_kin_int} is an easy consequence of the following assertion:

\begin{lemma}\label{l_kin_int} Let $u^1, \dots, u^4 \in \Cc_r(\R^d)$ where $r>d$. 
	Then for $s\in \R^d$ and $1\le l\le4$ the integral, defining $J_l(u^1,u^2,u^3,u^4)(s) =:J_l(s)$
	converges absolutely and satisfies 
	\be\label{l3}
	| J_l|_{r+1} \le C_r \prod_{j\ne l} |u^j|_r.
	\ee
\end{lemma}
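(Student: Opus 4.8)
\smallskip
\noindent\textit{Proof plan.}
First I would reduce the statement to a purely ``weighted'' estimate. Using $|u^j(w)|\le|u^j|_r\lan w\ran^{-r}$ and the substitution \eqref{u_j->s}, for each $l$ the integral \eqref{J4} obeys $|J_l(s)|\le\bigl(\prod_{j\ne l}|u^j|_r\bigr)W_l(s)$, where $W_l(s):=\int_{\Sigma_*}d\mu^\Sigma(z)\prod_{i\ne l}\lan s_i\ran^{-r}$ with $s_1=x+s$, $s_2=y+s$, $s_3=x+y+s$, $s_4=s$ and $z=(x,y)\in\Sigma_*$. Thus it suffices to show $W_l(s)\le C_r\lan s\ran^{-(r+1)}$; absolute convergence of $J_l$ then comes along the way, since all bounds are on absolute values. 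The integrand of $W_l$ is invariant under $(x,y)\mapsto(y,x)$, which interchanges $s_1$ and $s_2$, so $W_1=W_2$, and it remains to treat $l\in\{1,3,4\}$.

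The analytic input is two elementary estimates. (i) A fibre estimate: if $V\subset\R^d$ is a linear subspace of dimension $d-1$ and $\alpha,\beta\ge1$, then $\int_V(\alpha^2+|t|^2)^{-r/2}\,dt=C_r\,\alpha^{-(r-d+1)}$, and $\int_V(\alpha^2+|t|^2)^{-r/2}(\beta^2+|t|^2)^{-r/2}\,dt\le C_r(\min\{\alpha,\beta\})^{-(r-d+1)}(\max\{\alpha,\beta\})^{-r}$; both need only $r>d-1$. (ii) An angular estimate: for $p>1$ and $d\ge2$, $\int_{S^{d-1}}\lan s\cdot\omega\ran^{-p}\,d\omega\le C_{d,p}\lan s\ran^{-1}$. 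Now for $l\in\{1,3\}$ the factor $\lan s_4\ran^{-r}=\lan s\ran^{-r}$ is constant on $\Sigma_*$ and pulls out, so $W_l=\lan s\ran^{-r}B_l(s)$, where $B_l(s)$ is the integral over $\Sigma_*$ against $d\mu^\Sigma$ of the two remaining weights ($\{s_1,s_2\}$ for $l=3$, $\{s_2,s_3\}$ for $l=1$), and it is enough to prove $B_l(s)\le C_r\lan s\ran^{-1}$. I would compute $B_l$ via the disintegration $\mu^\Sigma=|x|^{-1}\,dx\,d_{x^\perp}y$ of Theorem~\ref{t_disintegr} (legitimate because $\mu^\Sigma$ is carried by $\Sigma^x$ and $|x|^{-1}$ is locally integrable on $\R^d$, as $d\ge2$): after an orthogonal splitting, each weight restricted to the fibre $x^\perp$ has the form $(\lan p\ran^2+|t|^2)^{-r/2}$ for a suitable projection $p$ of $s$ (and the two fibre weights share the same ``horizontal'' centre, since $x\perp x^\perp$), so the fibre integral is controlled by (i); one is left with an integral over $\R^d\setminus\{0\}$ of $|x|^{-1}$ times weights whose arguments are affine in $x$ and a factor $\lan\mathrm{pr}_x s\ran^{-(r-d+1)}$ (with $\mathrm{pr}_x s$ the orthogonal projection of $s$ onto $\R x$). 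Passing to polar coordinates $x=\rho\omega$, the radial integrals are uniformly bounded by means of the identity $\lan\rho\omega+s\ran^2=\lan\mathrm{pr}_{\omega^\perp}s\ran^2+(\rho+s\cdot\omega)^2$ together with (i), and the angular integral is bounded by (ii), yielding $B_l(s)\le C_r\lan s\ran^{-1}$. The single power of $\lan s\ran$ extracted here by (ii) — available precisely because $r>d$ forces the exponent $r-d+1$ to exceed $1$ — is exactly the smoothing $\Cc_r\to\Cc_{r+1}$ asserted in \eqref{l3}.

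Finally, for $l=4$ no factor is constant on $\Sigma_*$; but from $s=s_1+s_2-s_3$ one has $\max\{\lan s_1\ran,\lan s_2\ran,\lan s_3\ran\}\ge\tfrac13\lan s\ran$, so I would split $\Sigma_*$ into the three subsets on which each of these is the largest, bound that weight by $3^r\lan s\ran^{-r}$, and observe that on each subset one is left with a two-weight integral majorised by $\int_{\Sigma_*}d\mu^\Sigma\,\lan s_a\ran^{-r}\lan s_b\ran^{-r}$, i.e.\ by $B_1(s)$ or $B_3(s)$; hence $W_4(s)\le C_r\lan s\ran^{-r}\cdot\lan s\ran^{-1}=C_r\lan s\ran^{-(r+1)}$, which finishes \eqref{l3}. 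The step I expect to be the main obstacle is the base-integral bound $B_l(s)\le C_r\lan s\ran^{-1}$: it calls for a somewhat fussy case analysis — in particular the regime where the relevant projection of $s$ is bounded, so that the fibre integral is a flat factor of size $O(1)$ and the $\lan s\ran$-decay must be carried entirely by a weight $\lan s_i\ran^{-r}$ and the angular averaging, versus the regime near the locus where one of the edges vanishes — together with careful tracking of which weight pays for the $|x|^{-1}\,dx$-singularity at $x=0$, which supplies the decay as $|x|\to\infty$, and which yields the gain $\lan s\ran^{-1}$.
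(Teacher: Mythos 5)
Your overall architecture is sound and in fact close to the paper's: both arguments reduce the lemma to the weighted integrals $W_l$, both rest on the disintegration of Theorem~\ref{t_disintegr} (the paper invokes \eqref{Pii} for $J_1,J_2$ where you use the $(x,y)$-swap symmetry, which amounts to the same thing), and your reduction of the $l=4$ case via $\max_i\lan s_i\ran\ge\tfrac13\lan s\ran$ is a genuinely cleaner substitute for the paper's decomposition of $\R^d\times\R^d$ into the regions $O_i\times O_j$. There is, however, a concrete false step: the claim that the radial integrals $\int_0^\infty\rho^{d-2}\lan \rho\omega+s\ran^{-r}d\rho$ are uniformly bounded. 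This is true for $d=2$ (where the Jacobian factor is absent), but fails for $d\ge3$: taking $\omega=-s/|s|$ with $R=|s|$ large, your identity gives $\lan\rho\omega+s\ran^2=1+(\rho-R)^2$, and the window $|\rho-R|\le1$ alone contributes $\gtrsim R^{d-2}$. In general, writing $s=s^\perp+(s\cdot\omega)\omega$ with $s^\perp\perp\omega$, $a=s\cdot\omega$, $\beta=\lan s^\perp\ran$, the radial integral is only $O\big(1+|a|^{d-2}\beta^{1-r}\big)$; discarding the factor $\beta^{1-r}$ and feeding the crude bound $|a|^{d-2}$ into your angular estimate (ii) requires $(r-d+1)-(d-2)>1$, i.e.\ $r>2d-2$, which is strictly stronger than the hypothesis $r>d$ once $d\ge3$.

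The gap is repairable, but only by keeping the factor $\beta^{1-r}$ (equivalently, by noting that the directions $\omega$ on which the radial integral is large form an angular cap of measure $\sim\lan s\ran^{-(d-1)}$) — i.e.\ exactly the "fussy case analysis" you defer; the mechanism you actually state, uniform radial bound plus (ii), does not close the argument. For comparison, the paper sidesteps polar coordinates entirely: its inequalities \eqref{k100}, \eqref{k101} bound $\int|x|^{-1}\lan x+s\ran^{-r}dx$ by $CR^{-1}$ directly via the rescaling $x=Ry$ and a split at $|y+s/R|\le\tfrac1{10}$, and it forgoes your sharper fibre decay $\lan s\cdot\omega\ran^{-(r-d+1)}$ (using only the crude \eqref{k10}), compensating with \eqref{k6} and \eqref{k8} on the regions $O_i\times O_j$. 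If you wish to keep your fibre estimate (i), you must carry the full two-parameter bound $C(1+|a|^{d-2}\beta^{1-r})$ through the angular integration; as written the proof is complete only for $d=2$.
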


The lemma is proved in  Section~\ref{s_proof_kin_int}. To derive from it  Theorem~\ref{t_kin_int} we note that if the four 
functions $J_l(s)$ are proved to be continuous, then the theorem's assertion would follow from \eqref{l1} and \eqref{l3}. 
To establish the continuity of -- say -- function $J_4$, we note that for any $d$-vector $\xi$, $J_4(s+\xi)$ equals 
$ J_4(u^1_\xi, \dots, u^4_\xi)(s)$, where $u^j_\xi(\eta) = u^j(\eta+\xi)$. So
$$
|J_4(s) -J_4(s+\xi)| \le 
\int_{\Sigma_*}  d\mu^\Sigma(z) | u^1 u^2 u^3 - u^1_\xi u^2_\xi u^3_\xi |.
$$
If $|\xi|\le 1$, then  $ |u^j_\xi|_r \le 2^r |u^j |_r$
for each $j$, and by \eqref{l3} the integral in the r.h.s. is bounded uniformly in $|\xi|\le 1$. Since the integrand
converges to 0 with $\xi$ for each $z$, then 
$|J_4(s) - J_4(s+\xi)|\to 0$ as $\xi\to0$ by Fatou's lemma. So $J_4$ is a continuous function. For the same reason all other 
functions $J_j$ are continuous, and we have completed the derivation of  Theorem~\ref{t_kin_int}  from the lemma. 

The representation \eqref{l1}  together with  \eqref{l3} imply an estimate for 
increments of $K_l$: 

\begin{corollary}\label{c_7.1}
	If $v^1,v^2\in \Cc_r(\R^d)$ are such that   $|v^1|_r, |v^2|_r \le R$,  where $r>d$,  then
	\be\non
	| K_l(v^1) -K_l(v^2)|_{r+1} \le C_r R^2 |v^1-v^2|_r, \quad l=1, \dots, 4. 
	\ee 
\end{corollary}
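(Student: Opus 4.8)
The plan is to read off the bound from the multilinear representation \eqref{l1} of $K_l$ combined with the trilinear estimate of Lemma~\ref{l_kin_int}; everything else is a routine telescoping. Fix $l\in\{1,\dots,4\}$. By \eqref{l1} we have $K_l(v)=2\pi\s_l\,J_l(v,v,v,v)$, where $J_l$ is complex-trilinear and, by construction, does not depend on its $l$-th argument. Consequently
\[
K_l(v^1)-K_l(v^2)=2\pi\s_l\big(J_l(v^1,v^1,v^1,v^1)-J_l(v^2,v^2,v^2,v^2)\big),
\]
and the right-hand side in fact depends only on the three slots with index $\ne l$.

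First I would expand this difference as a telescoping sum over those three active slots. Listing the elements of $\{1,2,3,4\}\setminus\{l\}$ in increasing order as $i_1<i_2<i_3$, trilinearity of $J_l$ gives
\[
J_l(v^1,v^1,v^1,v^1)-J_l(v^2,v^2,v^2,v^2)=T_1+T_2+T_3,
\]
where $T_k$ denotes the value of $J_l$ on the quadruple whose slot $i_k$ carries $v^1-v^2$, whose slots $i_1,\dots,i_{k-1}$ carry $v^2$, whose slots $i_{k+1},\dots,i_3$ carry $v^1$, and whose $l$-th slot is immaterial (the operator being independent of it).

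Next I would apply Lemma~\ref{l_kin_int} to each $T_k$: it is a value of $J_l$ on a quadruple with one argument of $\Cc_r$-norm at most $|v^1-v^2|_r$ and two further (non-$l$) arguments of $\Cc_r$-norm at most $R$, so $|T_k|_{r+1}\le C_r R^2\,|v^1-v^2|_r$ for $k=1,2,3$. Summing the three bounds, multiplying by $|2\pi\s_l|=2\pi$, and renaming the constant yields $|K_l(v^1)-K_l(v^2)|_{r+1}\le C_r R^2|v^1-v^2|_r$; the continuity of $K_l(v^1)-K_l(v^2)$, needed for the left-hand side to be a genuine $\Cc_{r+1}$-norm, is already granted by Theorem~\ref{t_kin_int}. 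I do not expect any real obstacle here; the only point requiring a modicum of care is to keep track of which slot of $J_l$ is inert, so that the telescoping runs over exactly the three remaining ones.
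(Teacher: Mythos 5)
Your proposal is correct and is exactly the argument the paper intends: the corollary is stated as an immediate consequence of the representation \eqref{l1} and the trilinear bound \eqref{l3} of Lemma~\ref{l_kin_int}, with the telescoping over the three active slots left implicit. Your write-up simply makes that routine step explicit.
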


\subsection{Wave kinetic equations}

Now we pass to the main topic of this section -- the wave kinetic equation:
\be\label{w_k_e}
\dot u(\tau,s) = -\Lc u + \eps K(u) +f(\tau,s), \qquad s \in \R^d,
\ee
\be\label{wk1}
u(0,s) = u_0(s),
\ee
where $0<\eps\le1$, 
$K(u)(\tau,s) = K(u(\tau, \cdot))(s)$ is the wave kinetic integral \eqref{k1} and $\Lc$ is the linear operator
\be\label{operL}
(\Lc u)(s) = 2\ga_s u(s), \quad s \in \R^d. 
\ee
This operator defines in the spaces $\Cc_r$ semigroups of contractions:
\be\label{normL}
\| \exp(-t\Lc)\|_{ \Cc_r(\R^d), \Cc_r(\R^d)} \le \exp(-2t), \qquad \forall\, t\ge0, \; \forall\, r.
\ee
We denote by $X_r$ the space of continuous curves $u: [0, \infty)\to \Cc_r(\R^d)$,  given the uniform norm
$\ 
\|u\tr = \sup_{t \ge0} | u(t)|_r. 
$

\begin{theorem}\label{t_kin_eq}
	If $r>d$, then
	
1) for any $u_0\in \Cc_r(\R^d)$, $f\in X_r$ and any $\eps$ the problem  \eqref{w_k_e}, \eqref{wk1} has at most one solution in $X_r$. 

2) If
	\be\label{wk4}
	|u_0|_r\le C_*,\; \| f\tr \le C_*
	\ee
	for some constant $C_*$, then there exist positive constants $\eps_*=\eps_*(C_*, r)$ and $R=R(C_*, r)$ such that if $0<\eps\le \eps_*$,
	then the problem \eqref{w_k_e}, \eqref{wk1} has a  unique solution 
	$u\in X_r$, and $\|u\tr \le R$.  Moreover, if $(u_{0 1}, f_1)$ and  $(u_{0 2}, f_2)$ are two sets of initial data, satisfying 
	\eqref{wk4}, and $u^1, u^2$ are the corresponding  solutions, then
	\be\label{wk6}
	\| u^1 -u^2\tr \le C(r) (|u_{01} - u_{02}|_r + \| f_1-f_2\tr). 
	\ee
\end{theorem}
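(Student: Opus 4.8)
The plan is to treat \eqref{w_k_e}--\eqref{wk1} through its mild (Duhamel) formulation and a Banach fixed point argument. First I would rewrite the problem as $u=\Phi(u)$, where
$$
\Phi(u)(t)=e^{-t\Lc}u_0+\int_0^t e^{-(t-\sigma)\Lc}\bigl(\eps\,K(u(\sigma))+f(\sigma)\bigr)\,d\sigma ,
$$
and show that for $\eps$ small $\Phi$ is a contraction on a ball of $X_r$. The two ingredients are the semigroup bound \eqref{normL}, which gives $\|e^{-(t-\sigma)\Lc}\|_{\Cc_r,\Cc_r}\le e^{-2(t-\sigma)}$ and hence $\int_0^t e^{-2(t-\sigma)}\,d\sigma\le\tfrac12$, and the analytic properties of the kinetic integral: by Theorem~\ref{t_kin_int}, $|K(v)|_{r+1}\le C_r|v|_r^3$; and, summing Corollary~\ref{c_7.1} over $l=1,\dots,4$, $|K(v^1)-K(v^2)|_{r+1}\le C_r R^2|v^1-v^2|_r$ whenever $|v^j|_r\le R$. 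Since $r+1>r$ and $\langle x\rangle\ge1$ we have the norm-one embedding $\Cc_{r+1}\hookrightarrow\Cc_r$ (as $|f|_r\le|f|_{r+1}$), so both bounds may be used with $|\cdot|_r$ on the left-hand side. Continuity of $t\mapsto\Phi(u)(t)$ in $\Cc_r$ is a routine consequence of \eqref{normL} together with the continuity of $\sigma\mapsto K(u(\sigma))$ (which follows from $u\in X_r$ and the local Lipschitz bound for $K$), so $\Phi$ maps $X_r$ into itself; note $X_r$ is a Banach space and closed balls in it are complete, so Banach's theorem is applicable.

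For part~2) assume \eqref{wk4}. From $|e^{-t\Lc}u_0|_r\le|u_0|_r\le C_*$ and the estimates above one gets, on the closed ball $B_R=\{u\in X_r:\|u\tr\le R\}$,
$$
\|\Phi(u)\tr\le\tfrac32 C_*+\tfrac12\,\eps\,C_r\,\|u\tr^3 ,\qquad
\|\Phi(u^1)-\Phi(u^2)\tr\le\tfrac12\,\eps\,C_r R^2\,\|u^1-u^2\tr .
$$
Taking $R=2C_*$ and then $\eps_*=\eps_*(C_*,r)>0$ so small that $\tfrac12\eps_* C_r(2C_*)^2\le\tfrac12$ (contraction) and $\tfrac12\eps_* C_r(2C_*)^3\le\tfrac12 C_*$ (self-mapping), we obtain, for $0<\eps\le\eps_*$, that $\Phi(B_R)\subset B_R$ and $\Phi$ is a contraction there; the unique fixed point $u\in B_R$ is then the desired solution, with $\|u\tr\le R=:R(C_*,r)$. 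For the stability estimate \eqref{wk6}, subtract the Duhamel identities for $(u_{01},f_1)$ and $(u_{02},f_2)$ (both solutions lying in the same $B_R$), estimate as above, and absorb the term $\tfrac12\eps C_r R^2\|u^1-u^2\tr$ into the left-hand side to get $\|u^1-u^2\tr\le C(r)\bigl(|u_{01}-u_{02}|_r+\|f_1-f_2\tr\bigr)$ with $C(r)$ an absolute constant.

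For part~1), let $u^1,u^2\in X_r$ be two solutions with the same $(u_0,f)$ and the same $\eps$, and set $R_0=\max(\|u^1\tr,\|u^2\tr)<\infty$. Subtracting the Duhamel identities and using $e^{-2(t-\sigma)}\le1$ together with the summed Corollary~\ref{c_7.1},
$$
|u^1(t)-u^2(t)|_r\le\eps\,C_r R_0^2\int_0^t|u^1(\sigma)-u^2(\sigma)|_r\,d\sigma ,\qquad t\ge0 ,
$$
so Gr\"onwall's inequality forces $u^1\equiv u^2$; this holds for every $\eps$, not only $\eps\le\eps_*$. Combined with part~2), the solution constructed there is therefore the unique solution of \eqref{w_k_e}--\eqref{wk1} in all of $X_r$.

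I do not expect a deep obstacle here: this is the standard fixed point scheme for a semilinear equation with a small cubic nonlinearity. The only points requiring a little care are the simultaneous choice of $R$ and $\eps_*$ so that the self-mapping and contraction conditions both hold (a mild cubic inequality, as above), and — if one wishes to be pedantic — the verification that the Duhamel curve is genuinely $\Cc_r$-continuous, which rests on \eqref{normL} and on the one-step gain of decay supplied by Theorem~\ref{t_kin_int}.
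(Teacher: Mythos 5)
Your proposal is correct and follows essentially the same route as the paper: the mild (Duhamel) reformulation, the semigroup bound \eqref{normL} giving the factor $\tfrac12$ from the time integral, Theorem~\ref{t_kin_int} and Corollary~\ref{c_7.1} for the self-mapping and contraction estimates on a ball of $X_r$, and a Gr\"onwall argument for uniqueness in all of $X_r$ (the paper invokes the same mechanism by reference to the proof of Proposition~\ref{p_stab}). The only differences are cosmetic, e.g.\ your choice $R=2C_*$ versus the paper's $R=\tfrac32 C_*+1$.
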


The first assertion is obvious in view of the contraction property \eqref{normL}, 
since the mapping $K$ is locally Lipschitz by Corollary~\ref{c_7.1}, cf. the proof of Proposition~\ref{p_stab} below. 
The second  result follows elementary from Theorem~\ref{t_kin_int}. Details are given in  Section~\ref{proof_t_kin_eq}.
\medskip

Let $u^0(\tau,s)$ solves \eqref{w_k_e}, \eqref{wk1} with $\eps:=0$. Writing the solution $u$, constructed in  Theorem~\ref{t_kin_eq},
 as $u^0+\eps v$ we get for $v$
the equation
$\ 
\dot v = -\Lc v+K(u^0+\eps v), \;  v(0)=0.
$
So $\| v\|_r \le C(C_*, r)$ and Corollary \ref{c_7.1} implies that $K(u^0+\eps v) = K(u^0) + O(\eps)$. Accordingly, 
 $v=u^1+O(\eps)$,  where 
$$
\dot u^1 =  -\Lc u^1+K(u^0), \qquad u^1(0)=0.
$$
We have seen that the solution $u(\tau, s)$, built in Theorem \ref{t_kin_eq}, may be written as 
\be\label{u_present}
u(\tau, s)=u^0(\tau, s) +\eps u^1(\tau, s) + O(\eps^2), 
\ee
where $u^0$ and $u^1$ are defined above. 

\medskip
Let us fix any $r_0>d$ and denote
 $\eps(f)=\eps_*\big(\| f\|_{r_0}, r_0 \big). 
$

\begin{corollary}\label{c_kin_eq}
Let $u_0=0$, $f \in X_r\ \forall\,r$ and $0\le\eps\le\eps(f)$. Then the problem  \eqref{w_k_e}, \eqref{wk1} has a unique solution $u$ such that 
$u\in X_r$ $\forall\,r$. Its norms $\| u\|_r$ are bounded by constants, depending only on $r$ and $\| f\|_r$.
\end{corollary}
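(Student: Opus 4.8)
The plan is to obtain $u$ from Theorem~\ref{t_kin_eq} at the single regularity level $r_0$ and then to bootstrap its decay to all $r$ using the smoothing property $K:\Cc_{q}(\R^d)\to\Cc_{q+1}(\R^d)$ of Theorem~\ref{t_kin_int}. First I would apply part~2) of Theorem~\ref{t_kin_eq} with $r:=r_0$ and $C_*:=\|f\|_{r_0}$: this is legitimate because $u_0=0$ gives $|u_0|_{r_0}=0\le C_*$, and the hypothesis $\eps\le\eps(f)=\eps_*(\|f\|_{r_0},r_0)$ is exactly the required smallness. We thus get a solution $u\in X_{r_0}$, unique in $X_{r_0}$ by part~1), with $N_0:=\sup_{\tau\ge0}|u(\tau)|_{r_0}\le R(\|f\|_{r_0},r_0)$. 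By construction (see Section~\ref{proof_t_kin_eq}) $u$ satisfies the mild (Duhamel) form of \eqref{w_k_e}, \eqref{wk1}, which for $u_0=0$ reads
\ben
u(\tau)=\int_0^\tau e^{-(\tau-\sigma)\Lc}\bigl(\eps K(u(\sigma))+f(\sigma)\bigr)\,d\sigma .
\een

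Next I would bootstrap the a~priori bound: by induction on $k\in\{0,1,2,\dots\}$ one shows $\sup_{\tau\ge0}|u(\tau)|_{r_0+k}\le N_k<\infty$, with $N_{k+1}$ depending only on $r_0+k$, $N_k$ and $\|f\|_{r_0+k+1}$. Indeed, from the bound at level $k$ the map $\sigma\mapsto K(u(\sigma))$ is, by Theorem~\ref{t_kin_int}, a continuous (hence measurable) $\Cc_{r_0+k+1}(\R^d)$-valued function with $|K(u(\sigma))|_{r_0+k+1}\le C_{r_0+k}N_k^{3}$; since also $|f(\sigma)|_{r_0+k+1}\le\|f\|_{r_0+k+1}$, the integrand above lies in $\Cc_{r_0+k+1}(\R^d)$ with norm $\le M_{k+1}:=\eps C_{r_0+k}N_k^{3}+\|f\|_{r_0+k+1}$, uniformly in $\sigma$. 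Evaluating the Bochner integral in $\Cc_{r_0+k+1}(\R^d)$ (it agrees with the $\Cc_{r_0}(\R^d)$-valued one via the continuous embedding $\Cc_{r_0+k+1}(\R^d)\hookrightarrow\Cc_{r_0}(\R^d)$) and using the contraction bound \eqref{normL} in $\Cc_{r_0+k+1}(\R^d)$ gives $|u(\tau)|_{r_0+k+1}\le M_{k+1}\int_0^\tau e^{-2(\tau-\sigma)}\,d\sigma\le M_{k+1}/2=:N_{k+1}$ for all $\tau$. No smallness of $\eps$ beyond the one already used at level $r_0$ is needed here, since $K$ enters only through its boundedness as a $3$-homogeneous map.

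The delicate point --- and the main obstacle --- is that $u$ should be a \emph{continuous} curve into the smoother spaces $\Cc_{r_0+k}(\R^d)$, $k\ge1$, whereas the semigroup $e^{-\tau\Lc}$ is multiplication by $e^{-2\tau\ga_s}$ with $\ga_s$ of polynomial growth and is therefore not strongly continuous on any $\Cc_r(\R^d)$. I would handle level $r_0+k$ by using the bound one order higher: writing $g(\sigma):=\eps K(u(\sigma))+f(\sigma)$, the previous paragraph shows $g$ is uniformly bounded in $\Cc_{r_0+k+1}(\R^d)$. Splitting $u(\tau+h)-u(\tau)=\int_\tau^{\tau+h}e^{-(\tau+h-\sigma)\Lc}g(\sigma)\,d\sigma+\int_0^\tau e^{-(\tau-\sigma)\Lc}\bigl(e^{-h\Lc}-\Id\bigr)g(\sigma)\,d\sigma$, the first term has $\Cc_{r_0+k}(\R^d)$-norm $\le h\,M_{k+1}$, and for the second it suffices that $\|(e^{-h\Lc}-\Id)g\|_{r_0+k}\to0$ as $h\to0$, uniformly over $g$ in a fixed bounded subset of $\Cc_{r_0+k+1}(\R^d)$; this is proved by splitting the supremum over $s$ into $|s|\le R_0$, where $\ga_s$ is bounded so $\sup|e^{-2h\ga_s}-1|\to0$ uniformly in $h$, and $|s|>R_0$, where the extra factor $\lan s\ran^{-1}$ from the embedding $\Cc_{r_0+k+1}(\R^d)\hookrightarrow\Cc_{r_0+k}(\R^d)$ bounds the tail by $\lan R_0\ran^{-1}$ times a constant independent of $h$. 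Finally, for arbitrary $r$ take $k=\lceil r-r_0\rceil$: then $u\in X_{r_0+k}\subset X_r$ with a bound depending only on $r_0+k$ and $\|f\|_{r_0+1},\dots,\|f\|_{r_0+k}$, hence --- since $|g|_{r'}\le|g|_{r''}$ for $r'\le r''$ because $\lan s\ran\ge1$ --- only on $r$ and $\|f\|_r$; for $r\le r_0$ simply $X_{r_0}\subset X_r$ and $\sup_\tau|u(\tau)|_r\le R$. Uniqueness of $u$ in each $X_r$ with $r>d$ is part~1) of Theorem~\ref{t_kin_eq}, consistent with the construction because $X_r\subset X_{r_0}$ for $r\ge r_0$.
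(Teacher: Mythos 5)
Your proposal is correct and follows essentially the same route as the paper: obtain the solution at the base level $r_0$ from Theorem~\ref{t_kin_eq} and then bootstrap to all $r$ using the $1$-smoothing property of $K$ and the equation (Duhamel formula). The extra care you take with the continuity of $u$ as a curve into the higher-regularity spaces (since $e^{-\tau\Lc}$ is not strongly continuous on $\Cc_r(\R^d)$) is a detail the paper's one-line argument glosses over, but it does not change the approach.
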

\begin{proof}
By the theorem the problem has a unique solution $u\in X_r$  with $r=r_0$. By Corollary~\ref{c_7.1}, $K(u) \in X_{r+1}$ 
and we get from the equation  \eqref{w_k_e} 
that also $u\in  X_{r+1}$. Iterating this argument we see that $u\in\cap X_r$. The second assertion follows from the theorem. 
\end{proof}

 Now let us assume that in \eqref{w_k_e} the function $f$ does not depend on $\tau$, so $f(\tau,s)=f_s$, where 
\be\label{fr}
f\in \Cc_r, \quad  r>d.
\ee
Then for $\eps=0$ the only steady state of \eqref{w_k_e}, i.e. a solution of the equation 
$-\Lc u +f=0$, is $u^0=\Lc^{-1} f$; it is asymptotically stable. By the  implicit  function theorem, there exists
 $\eps'_r$ such 
that for $0<\eps\le \eps'_r$ equation \eqref{w_k_e} has a unique steady state $u^\eps$, close to $u^0$, 
$
-\Lc u^\eps +\eps K(u^\eps) +f=0,
$
and  $| u^\eps - u^0|_r \le C_r\eps$.  

\begin{proposition}\label{p_stab}
Let $f$ and $r$  be as in \eqref{fr}  and \eqref{wk4} holds for some $C_*$. Let 
$\eps\le \eps_*(C_*,r)$ and $u(\tau)$ be 
a solution of  \eqref{w_k_e}, \eqref{wk1}. Then there exists $\eps'=\eps'(C_*,r)$ 
such that if $0<\eps \le\min(\eps_*, \eps')$, then
\be\non
|u(\tau) - u^\eps|_r \le |u_0 - u^\eps|_r e^{-\tau} \quad\text{for}\quad \tau\ge0. 
\ee
\end{proposition}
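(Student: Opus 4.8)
The plan is to control the difference $w(\tau):=u(\tau)-u^\eps$ in the norm $|\cdot|_r$ by a Grönwall argument: the linear semigroup $e^{-\tau\Lc}$ contracts at rate $2$ by \eqref{normL}, while the nonlinear term, once multiplied by $\eps$, only feeds in a growth rate of size $\eps C_r R^2$, which is made $\le 1$ by taking $\eps$ small.

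First I would record the a priori bounds. Since $\eps\le\eps_*(C_*,r)$ and $|u_0|_r\le C_*$, Theorem~\ref{t_kin_eq} gives a unique solution $u\in X_r$ of \eqref{w_k_e}, \eqref{wk1} with $\|u\tr\le R:=R(C_*,r)$. For the steady state, $u^0=\Lc^{-1}f$ satisfies $|u^0|_r\le\tfrac12|f|_r\le\tfrac12 C_*$ because $\ga_s\ge1$, and $|u^\eps-u^0|_r\le C_r\eps$; hence $|u^\eps|_r\le C_*$ as soon as $\eps\le C_*/(2C_r)$, and applying Theorem~\ref{t_kin_eq} to the constant curve $u^\eps$ (which solves \eqref{w_k_e}, \eqref{wk1} with initial datum $u^\eps$) gives $|u^\eps|_r\le R$ as well. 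Thus $|u(\tau)|_r,|u^\eps|_r\le R$ for all $\tau\ge0$. Subtracting the equations for $u$ and $u^\eps$ and using the steady-state relation $-\Lc u^\eps+\eps K(u^\eps)+f=0$, the function $w$ solves
\[
\dot w=-\Lc w+\eps\big(K(u)-K(u^\eps)\big),\qquad w(0)=u_0-u^\eps.
\]

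Next I would pass to the Duhamel formula
\[
w(\tau)=e^{-\tau\Lc}w(0)+\eps\int_0^\tau e^{-(\tau-\sigma)\Lc}\big(K(u(\sigma))-K(u^\eps)\big)\,d\sigma
\]
and estimate in $\Cc_r(\R^d)$. By \eqref{normL} the semigroup contracts at rate $2$, and by Corollary~\ref{c_7.1} (summed over $l=1,\dots,4$, together with $|\cdot|_r\le|\cdot|_{r+1}$) one has $|K(u(\sigma))-K(u^\eps)|_r\le C_r R^2|w(\sigma)|_r$. Therefore
\[
|w(\tau)|_r\le e^{-2\tau}|w(0)|_r+\eps C_r R^2\int_0^\tau e^{-2(\tau-\sigma)}|w(\sigma)|_r\,d\sigma.
\]
Setting $g(\tau)=e^{2\tau}|w(\tau)|_r$ this becomes $g(\tau)\le|w(0)|_r+\eps C_r R^2\int_0^\tau g(\sigma)\,d\sigma$, so Grönwall's inequality gives $g(\tau)\le|w(0)|_r e^{\eps C_r R^2\tau}$, i.e.
\[
|u(\tau)-u^\eps|_r\le|u_0-u^\eps|_r\,e^{-(2-\eps C_r R^2)\tau},\qquad\tau\ge0.
\]
Finally I would set $\eps'=\eps'(C_*,r):=\min\big(\eps'_r,\,C_*/(2C_r),\,(C_r R^2)^{-1}\big)$; then for $0<\eps\le\min(\eps_*,\eps')$ the exponent is $\le-\tau$, which is the assertion.

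I do not expect a genuine obstacle here. The only points deserving care are that the constants $C_r$, $R$, $\eps'$ depend on nothing beyond $C_*$ and $r$, and that a single radius $R$ simultaneously controls the trajectory $u(\tau)$ and the steady state $u^\eps$ — the latter handled, as above, by viewing $u^\eps$ itself as a solution of \eqref{w_k_e}, \eqref{wk1} and invoking Theorem~\ref{t_kin_eq} once more.
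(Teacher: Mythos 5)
Your proof is correct and follows essentially the same route as the paper: the Duhamel formula for $w=u-u^\eps$, the Lipschitz bound from Corollary~\ref{c_7.1} together with the contraction \eqref{normL}, and Gronwall applied to $e^{2\tau}|w(\tau)|_r$, with $\eps$ chosen so that $\eps C_r R^2\le 1$. The only (harmless) difference is that you bound $|u^\eps|_r$ by invoking Theorem~\ref{t_kin_eq} for the constant trajectory, whereas the paper simply notes $|u^\eps|_r\le 2C_*$ for small $\eps$.
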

\begin{proof}
 Denoting $w(\tau) =   u(\tau) -u^\eps $, we find 
$$
w(\tau) = e^{-\Lc \tau} w(0) +\eps\int_0^\tau e^{-\Lc (\tau-t) } \big( K(u(t)) -K(u^\eps) \big)dt.
$$
Since $|u^0|_r \le |f|_r\le C_*$, then we may assume that $|u^\eps|_r \le 2C_*$ since $\eps$ is sufficiently small. 
Moreover, by Theorem~\ref{t_kin_eq},  
$
|u(\tau) |_r \le R(C_*,r)
$
for all $\tau\ge0$. In particular, this implies that $| w(0)|_r \le C(C_*,r)$. Then, in view of Corollary \ref{c_7.1} and \eqref{normL}, we find
$$
| w(\tau) |_r \le e^{-2\tau} | w(0)|_r + \eps C_r R^2 \int_0^\tau e^{-2(\tau-t)} |w(t)|_r dt.
$$
This relation and Gronwall's lemma, applied to the the function $ e^{2t} |w(t) |_r$, imply that 
$$
|w(\tau)|_r \le |w(0)|_r e^{-\tau(2-\eps  C_r R^2)}. 
$$
Choosing $\eps$ so small that $\eps  C_r R(C_*,r)^2\le1$ we obtain the desired estimate.
\end{proof}

\section{Quasisolutions}
\lbl{sec:quasisol}

In this section we start to study  quasisolutions $A(\tau)=A(\tau;\nu,L)$ of eq.~\eqref{ku4} (where $r_*>0$), 
which are second order truncations of series \eqref{decomp}:
\be\lbl{2nd trunc}
A(\tau) = (A_s(\tau), s\in\Z^d_L), \qquad A_s(\tau) = a^{(0)}_s(\tau) +\rho a^{(1)}_s(\tau) +\rho^2 a^{(2)}_s(\tau)\,.
\ee
Our main goal  is to examine the energy 	spectrum of $A$, 
$$
n_s(\tau) =n_s(\tau;\nu,L)
= \EE |A_s(\tau)|^2,\quad s\in\Z_L^d, 
$$
when $L$ is large and $\nu$ is small and to show  that $n_s(\tau)$  approximately satisfies the   wave kinetic equation  (WKE) \eqref{wke}. 
The energy spectrum  $n_s$ is a polynomial in $\rho$ of degree four, 
\be\lbl{n_s-decomp}
n_s=n_s^{(0)} + \rho \, n_s^{(1)}+\rho^2 n_s^{(2)} 
+ \rho^3 n_s^{(3)} + \rho^4 n_s^{(4)}, \qu s\in\Z_L^d,
\ee
where $n_s^{(k)}(\tau) = n_s^{(k)}(\tau; \nu,L) $,
\be\non
n_s^{(k)} =\sum_{k_1+k_2=k,\, k_1, k_2\le2} \EE a^{(k_1)}_s \bar a^{(k_2)}_s, \quad 0\le k\le 4. 
\ee 
The term $n_s^{(0)}$ is given by
 \eqref{n_s_0} while by 
  Lemma~\ref{l_a1_1},
 \be\lbl{n_s^1}
n_s^{(1)} = 2\Re\EE \bar a_s^{(0)} a_s^{(1)}=0.
\ee
 Writing explicitly $n_s^{(i)}$ with $2\leq i\leq 4$, we find that 
\be\lbl{n^j_s}
n_s^{(2)}=\EE \big(|a_s^{(1)}|^2 + 2\Re \bar a_s^{(0)} a_s^{(2)}\big),
\;\;
n_s^{(3)}=2\Re\EE\bo{s}\atw{s}, \;\;
n_s^{(4)}=\EE|\atw{s}|^2.
\ee
We decompose
$$
n_s=n_s^{\leq 2}+n_s^{\geq 3},
$$
where 
$$
n_s^{\leq 2}=n_s^{(0)} +\rho^2 n_s^{(2)} 
\qnd
n_s^{\geq 3}=\rho^3 n_s^{(3)} + \rho^4 n_s^{(4)}.
$$

Let us extend  $n^{(0)}_s$ to the Schwartz function on $s\in\R^d$ given by \eqref{n_s_0}.
Iterating formula \eqref{an} we  write $a_s^{(n)}(\tau)$, $n\in\N$, as an iterative integral of polynomials of $a_{s'}^{(0)}(\tau')$,
$s'\in \Z^d_L$, $\tau'\le \tau$. Since each $a_{s'}^{(0)}(\tau')$ is a Gaussian random variable \eqref{a0}, 
 the Wick formula applies to every term
$\EE a_s^{(k_1)} \bar a_s^{(k_2)}$   and we see that 
 \begin{equation}\label{schw_ext}
\begin{split}
\text{for any $0\le k_1, k_2\le2$ and any $\nu, L$ the second moment  $\EE a_s^{(k_1)} \bar a_s^{(k_2)}$
}\\
\text{ naturally extends to a Schwartz function of $s\in\R^d$, }
\end{split}
\end{equation} 
cf. \eqref{Ss_extends}. 

The function   $n^{(0)}_s$ is of order one, and   $n^{(1)}_s\equiv 0$ by \eqref{n_s^1}.
 Consider the function $\R^d\ni s\mapsto  n^{(2)}_s(\tau)$.
It is made by two terms (see \eqref{n^j_s}). By Corollary~\ref{t_for_sum} the first may be written as 
$$
\EE |a_s^{(1)}(\tau; \nu,L) |^2 = \nu \Phi_1(s, \tau) +
 O(\nu^{2-\aleph_d}) C^\#(s;\aleph_d)
\quad \text{if}\quad L\ge \nu^{-2}, 
$$
where $\Phi_1$ is a	 Schwartz function of $s$, independent from $L$ and $\nu$. This relation was proved for $T=\infty$, but it 
remains true  for any finite $T$ due to  a similar argument. Similarly if $ L\ge \nu^{-2}$, then
$
\EE \Re \big(\bar a_s^{(0)} a_s^{(2)}(\tau; \nu,L) \big) = \nu \Phi_2(s, \tau) +
 O(\nu^{2-\aleph_d}) C^\#(s;\aleph_d),
$ 
so 
\be\label{decompp}
n^{(2)}_s(\tau; \nu,L) = \nu \Phi(s, \tau) +
 O(\nu^{2-\aleph_d}) C^\#(s;\aleph_d)
\quad \text{if}\quad L\ge \nu^{-2}, 
\ee
where  $s\mapsto \Phi (s, \tau) $ is  a Schwartz function 
 (we are not giving a complete proof of \eqref{decompp} since this relation is used only for
motivation and discussion). To understand the limiting behaviour of $n^{(3)}_s$ and $n^{(4)}_s$ we will use another result, proved 
in \cite{DK}, where $a_s^{(i)}(\tau)$ denote the terms of the series \eqref{decomp}. 
Recall that the function $\chi_d$ is defined in \eqref{chi_d}  and $\lceil\cdot\rceil$ --  in Notation.

\begin{theorem} \lbl{l:est_a^ia^j}
For any $k_1,k_2\geq 0$ and $k:=k_1+k_2$,  we have 
	\be\lbl{main}	
	\big| \EE a_s^{(k_1)}(\tau_1)\bar a_s^{(k_2)}(\tau_2)\big|
	\leq C^{\#}(s;k)\big(\nu^{-2}L^{-2} + \max(\nu^{\lceil k/2 \rceil},\nu^d) \,\chi_d^k(\nu)\big)
	\ee
for any $s\in \Z^d_L$, 
	uniformly in $\tau_1,\tau_2\geq -T$, where $\chi_d^k(\nu)=\chi_d(\nu)$ if $k=3$ and $\chi_d^k(\nu)\equiv 1$ otherwise.
The second moment $ \EE a_s^{(k_1)}(\tau_1)\bar a_s^{(k_2)}(\tau_2)$ extends to a Schwartz function of $s\in\R^d\supset\Z^d_L$ which satisfies the same estimate \eqref{main}.
\end{theorem}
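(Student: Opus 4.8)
The argument is the one developed in the companion paper~\cite{DK}: unfold the recursion, apply Wick's theorem, reduce to a finite sum of oscillatory integrals indexed by Feynman diagrams, pass from sums to integrals, and estimate each integral by a stationary-phase/singular-integral argument generalising Theorem~\ref{t_singint}. In detail, the plan is as follows. First I would iterate \eqref{an} (with base case \eqref{a1}) to write each of $a_s^{(k_1)}(\tau_1)$ and $\bar a_s^{(k_2)}(\tau_2)$ as a finite sum of iterated time integrals over a binary tree with $2k_i+1$ leaves: an interaction time $l_v$ is attached to each of the $k_i$ internal vertices, with damping factors $e^{-\ga(\cdot)}$ along the edges, a phase $e^{\pm i\nu^{-1}l_v\omega_v}$ at each vertex ($\omega_v$ being the $\oms$-type resonance modulus there), factors $a^{(0)}_{s'}(l)$ or $\bar a^{(0)}_{s'}(l)$ at the leaves, and the leaf momenta constrained by $\des$ at every vertex. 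Taking the product of the two trees and the expectation, and using that the $a^{(0)}_{s'}$ are complex Gaussian with the diagonal covariance \eqref{corr_a_in_time} (only $a\bar a$-pairings survive and they identify momenta), Wick's theorem (see~\cite{Jan}) expresses $\EE a_s^{(k_1)}(\tau_1)\bar a_s^{(k_2)}(\tau_2)$ as a finite sum over perfect matchings of the $2(k_1+k_2+1)$ leaves, each matching together with the vertex constraints being encoded by a Feynman diagram $\cF_k$, $k=k_1+k_2$, and contributing an integral of the type $I_s^k(\cF_k)$ of Theorem~D: an integral over the free leaf momenta (in $\Z^d_L$) and the interaction times, with integrand a product of Schwartz factors $B(\cdot)$ and damping exponentials times $\exp\!\big(i\nu^{-1}\sum_v\pm\,l_v\omega_v\big)$. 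Carrying out the $dl_v$-integrals explicitly, exactly as in the passage from \eqref{a1_covar1} to \eqref{a1_covar}, turns each such integral into a Schwartz function divided by a product of resolvent factors $\big(\Gamma - i\nu^{-1}\Omega_a\big)$, the $\Omega_a$ being linear combinations of the $\omega_v$, hence quadratic forms in the momenta. Finally I would replace the sums $\ssum$ over $\Z^d_L$ by integrals over $\R^d$ by Theorem~\ref{t_sum_integral}: the integrand satisfies the derivative bound \eqref{op1} (each $z$-derivative costs at most a power of $\nu^{-1}$ from differentiating the resolvents), and the diagonal loci where momenta coincide form a finite union of affine subspaces of positive codimension, exactly as already met for $\Sigma_s$. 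This step yields the error $C^\#(s)\nu^{-2}L^{-2}$ --- the first term on the right of \eqref{main} --- and reduces matters to bounding the continuous integral $J_s^k(\cF_k)$.

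The heart of the proof is the estimate of $J_s^k(\cF_k)$ by the oscillatory-integral technology behind Theorem~\ref{t_singint} and inequality \eqref{int_est} (and their multi-variable generalisations; cf.~\cite{Dym}). Each momentum that genuinely enters a resonance modulus $\Omega_a$ contributes, on integration, a factor $\sim\nu$: the resolvent $\big(\Gamma-i\nu^{-1}\Omega_a\big)^{-1}$ concentrates on the resonant quadric $\{\Omega_a=0\}$, whose reciprocal gradient $1/|\partial\Omega_a|$ is locally integrable away from the quadric's vertex. A combinatorial lemma about the diagrams then gives that every $\cF_k$ carries at least $\lceil k/2\rceil$ independent such resolvable integrations, hence the bound $\nu^{\lceil k/2\rceil}$; when this count would beat a dimensional obstruction --- the integrability threshold near the loci of these quadrics of the singular measures of $\mu^\Sigma$-type from Theorem~\ref{t_disintegr}, which in $\R^d$ caps the gain at $\nu^d$ --- the estimate saturates, whence the $\max(\nu^{\lceil k/2\rceil},\nu^d)$. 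The factor $\chi_d^k(\nu)$ enters only when $d=2$, $k=3$, where the codimension-one quadric integral is logarithmically divergent at its vertex, in the same way the $\aleph_d$-losses appear in Theorem~\ref{t_singint}. The Schwartz decay $C^\#(s;k)$ and the claimed extension of the second moment to a Schwartz function of $s\in\R^d$ are read off from the Schwartz $B$-factors and the finiteness of the number of diagrams, treating $s$ as a real parameter throughout (cf.~\eqref{Ss_extends}, \eqref{schw_ext}); one checks the $s$-derivative bounds, each derivative again costing at most a power of $\nu^{-1}$ that is absorbed by the gains above. Summing over the finitely many $\cF_k$ gives \eqref{main}.

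I expect the main obstacle to be precisely this step: the combinatorial bookkeeping showing that every diagram yields at least $\lceil k/2\rceil$ effective factors of $\nu$, and the careful treatment of the degenerate configurations --- diagrams in which resonant quadrics intersect, or in which a momentum is ``pinned'' so that no oscillation is available --- which are responsible for the $\nu^d$ floor and, for $d=2$, for the logarithm. (For the sharp part of Theorem~D one needs in addition the cancellation $\big|\sum_{\cF_k\in\gF_k^B}I_s^k(\cF_k)\big|\le C^\#(s;k)\nu^{k-1}$ among the dominant diagrams $\gF_k^B$; this refinement is not needed for \eqref{main} itself.)
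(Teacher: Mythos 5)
Your plan coincides with the approach the paper itself takes: for general $k$ this theorem is not proved in the present paper but in the companion work \cite{DK}, and the strategy sketched in Section~\ref{s_en_spectra} --- tree expansion of \eqref{an}, Wick pairings organised into Feynman diagrams, passage from sums to integrals via Theorem~\ref{t_sum_integral} (which produces exactly the $\nu^{-2}L^{-2}$ error term), and oscillatory estimates on each diagram integral $I_s(\cF)$ through the quadratic phase in \eqref{I_F} --- is precisely what you describe, with the cases $k\le2$ and the model computation $k_1=k_2=1$ (Addendum~\ref{a_stat_phase}) handled directly. One calibration: the paper attributes the $\nu^d$ saturation not to an integrability threshold of the singular measure near the vertex of a resonant quadric, but to the fact that the skew-symmetric matrix $\al^\cF$ in \eqref{I_F} can have rank as low as $2$ for every $k$, so that stationary phase in the momenta can gain at most $\nu^{d}$ (cf.\ Proposition~\ref{t3}); the combinatorial bookkeeping you correctly flag as the main obstacle is the content of \cite{DK} and is not reproduced here.
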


Note that for $k\le2$ the theorem's assertion follows from the preceding discussion since $d\ge2$ (and recall that  for  $k=1$ the l.h.s. 
of \eqref{main} vanishes by Lemma~\ref{l_a1_1}).  A  short direct proof of \eqref{main} with $k_1=k_2=1$ and $T=\infty$ is given in Addendum~\ref{a_stat_phase}. 
 In Section~\ref{s_en_spectra} we discuss a strategy of the theorem's  proof for any $k$,  given in \cite{DK}. 
By Theorem~\ref{l:est_a^ia^j}, for any $s\in\R^d$
\be\label{n_s_34}
|n^{(3)}_s|\le C^\#(s)\nu^2\chi_d(\nu), \quad |n^{(4)}_s| \le   C^\#(s) \nu^2 \quad \text{ if} \quad  L\ge \nu^{-2}. 
\ee
Choosing the parameter $\rho$ in the form \eqref{r_scale}, we will examine the energy spectrum $n_s$ under the limit \eqref{assumption}. Due to the discussion above, under this limit
$$
n_s^{(0)} \equiv B(s) \big( 1-e^{-2\ga_s(T+\tau)}\big), \;\;\  \rho n_s^{(1)} \equiv0 ,\;\;\  \rho^2 n_s^{(2)} \to\eps \Phi(s,\tau), \;\;\ 
 \rho^3 n_s^{(3)} \to0.
$$
Concerning the term $ \rho^4 n_s^{(4)}$, our results do not allow to find its asymptotic under  the limit, but only imply
that $ |\rho^4 n_s^{(4)} |\le \eps^2 C^\#(s) $. Accordingly our goal is to examine the energy spectrum $n_s$ under the limit 
 \eqref{assumption} and the scaling \eqref{r_scale} with precision $\eps^2 C^\#(s) $, regarding the constant 
 $\eps\le1$ (which measures the size of
 solutions for \eqref{ku3} under the proper scaling) as a fixed small parameter.

\subsection{Increments of the energy spectra $n_s^{\leq 2}$ and the reminder $n_s^{\geq 3}$.} \label{s_8.1}

We will show that the process $n_s^{\leq 2}$ approximately satisfies the  WKE  \eqref{wke},  while the reminder $n_s^{\geq 3}$ is small. 
This will imply that  $n_s(\tau)$  is an approximate solution of the WKE. We always assume \eqref{r_scale} and that  $L\ge1$, $0<\nu\le1/2$.

Now for  $u\in \Cc_r(\R^d)$, $ r>d$,  and for $\tau\in(0,1]$, 
we consider the kinetic integral 
 $K^\tau (u) =\big((K^\tau u)(s), s\in \R^d\big)$:
 \be\label{I_tau}
K^\tau (u) = \int_0^{\tau} e^{-t\Lc} K (u)\, dt, 
\ee
where the operator $K=K_1+\dots+K_4$ is defined in Section \ref{s_kin_int} and 
the linear  operator $\Lc$ is  introduced  in \eqref{operL}.  That is,
\be\lbl{kin int^tau}
(K^\tau u)(s) =  \frac{1-e^{-2\ga_s\tau}}{2 \ga_s} (Ku)(s) =
	 \frac{1-e^{-2\ga_s\tau}}{2\ga_s}\sum_{j=1}^4 (K_ju)(s).
\ee
 The result below is the main step in establishing the wave kinetic limit. 
There, using   \eqref{schw_ext}, we regard 
$n_s^{\leq 2}(\tau)$  as  a Schwartz  function of $s\in\R^d$. 

\begin{theorem}\lbl{th:kinetic_for_n^2}
	For any $0<\tau\le1$ we have 
	\be\label{n_increment}
	\begin{split}
		n^{\leq 2}(\tau)& = e^{-\tau \Lc} n^{\leq 2}(0) +2 \int_0^\tau e^{-t\Lc} {b}^2\, dt 
		+\eps K^\tau  (n^{\le2}(0) ) +\Rc,  \\
	\end{split}
	\ee
where $b^2 =\{ b^2(s) , s\in \R^d\}$ and   the reminder  $\Rc(\tau,s)$ satisfies 
	\be\lbl{R_s}
	|\Rc(\tau)|_r \le 
	 C_{r,\aleph_d} \,\eps \big( \nu^{1-\aleph_d}   +\nu^{-3}L^{-2} + \tau^2 + \eps\tau \big)\,,\quad \forall r,
	\ee
	 and $\aleph_d$ is defined as in Theorem~\ref{t_singint}. 
\end{theorem}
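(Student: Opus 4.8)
The plan is to peel off the linear part of $n^{\le2}$, then identify the nonlinear forcing of $n^{(2)}$ with $\nu\,K(n^{(0)}(0))$ via a Wick expansion and a resonant-integral analysis, and finally integrate in time.

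\emph{Reduction.} Since $n_s^{(0)}$ solves $\dot n_s^{(0)}=-2\ga_s n_s^{(0)}+2b(s)^2$ with $n^{(0)}(-T)=0$, Duhamel on $[0,\tau]$ gives the exact identity $n^{(0)}(\tau)=e^{-\tau\Lc}n^{(0)}(0)+2\int_0^\tau e^{-t\Lc}b^2\,dt$. Recalling that $n^{(1)}\equiv0$ (see \eqref{n_s^1}), that $n^{\le2}=n^{(0)}+\rho^2 n^{(2)}$, and that $\rho^2=\eps\nu^{-1}$ (see \eqref{r_scale}), the identity \eqref{n_increment} is equivalent to $\rho^2\big(n^{(2)}(\tau)-e^{-\tau\Lc}n^{(2)}(0)\big)=\eps K^\tau(n^{\le2}(0))+\Rc$. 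Because $n^{\le2}(0)-n^{(0)}(0)=\rho^2 n^{(2)}(0)$ is $O(\eps)$ in every $\Cc_r$ (by \eqref{decompp}) and $|K^\tau u|_r\le\tau\,|Ku|_{r+1}$, Corollary~\ref{c_7.1} turns $\eps\big(K^\tau(n^{\le2}(0))-K^\tau(n^{(0)}(0))\big)$ into an $O(\eps^2\tau)$ term in each $\Cc_r$, while $\eps K^\tau(n^{(0)}(0))=\rho^2\,\nu K^\tau(n^{(0)}(0))$. So it suffices to prove, for every $r$,
\ben
\big|\,n^{(2)}(\tau)-e^{-\tau\Lc}n^{(2)}(0)-\nu\,K^\tau\big(n^{(0)}(0)\big)\,\big|_r\ \le\ C_{r,\aleph_d}\big(\nu^{2-\aleph_d}+\nu^{-2}L^{-2}+\nu\tau^2\big),
\een
since multiplying by $\rho^2=\eps\nu^{-1}$ and restoring the $O(\eps^2\tau)$ term yields \eqref{R_s}.

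\emph{Forcing and resonant analysis.} Write $n_s^{(2)}=\EE|a_s^{(1)}|^2+2\Re\,\EE\,\bar a_s^{(0)}a_s^{(2)}$ (see \eqref{n^j_s}). Differentiating in $\tau$: since $a^{(1)},a^{(2)}$ carry no independent white noise, the only Itô cross‑variation $d\bar a_s^{(0)}\cdot da_s^{(2)}$ vanishes and $\EE(\dot{\bar\beta}_s a_s^{(2)})=0$ by adaptedness, so $\dot n_s^{(2)}=-2\ga_s n_s^{(2)}+\Phi_s(\tau)$ and hence $n_s^{(2)}(\tau)-e^{-2\ga_s\tau}n_s^{(2)}(0)=\int_0^\tau e^{-2\ga_s(\tau-t)}\Phi_s(t)\,dt$, where $\Phi_s=2\Re\EE\big(i\,\cY_s(a^{(0)},\nu^{-1}\tau)\,\bar a_s^{(1)}\big)+6\Re\EE\big(i\,\bar a_s^{(0)}\,\cY_s^{sym}(a^{(0)},a^{(0)},a^{(1)})\big)$. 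Inserting \eqref{a1} for $a^{(1)}$, applying Wick's theorem to the Gaussian family $a^{(0)}$ and using \eqref{corr_a_in_time}: the factor $\delta'^{12}_{3s}$ vanishes when $\{s_1,s_2\}\cap\{s_3,s\}\ne\emptyset$, so all ``self''-contractions are excluded, every surviving pairing is connected, and the two internal phases get locked to $\pm\oms$; after the substitution $\sigma=t-l$ each summand of $\Phi_s(t)$ becomes, up to sign, $2\ssum_{s_1,s_2}\delta'^{12}_{3s}\int_0^{T+t}g_{s_1 s_2}(\sigma;t)\,e^{i\nu^{-1}\sigma\,\oms}\,d\sigma$, with $g$ a product of one factor $e^{-\ga_i\sigma}$ and three two‑point factors $\frac{b_i^2}{\ga_i}\big(e^{-\ga_i\sigma}-e^{-\ga_i(2T+2t-\sigma)}\big)$ ranging over the appropriate triple of $\{s_1,s_2,s_3,s\}$, $s_3=s_1+s_2-s$, so that crucially $g_{s_1 s_2}(0;t)=\prod_i n^{(0)}_{s_i}(t)=\prod_i\EE|a^{(0)}_{s_i}(t)|^2$. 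I would then (i) replace $\ssum_{s_1,s_2}$ by $\int_{\R^{2d}}ds_1ds_2$ via Theorem~\ref{t_sum_integral}, at cost $C^\#(s)\nu^{-2}L^{-2}$; (ii) apply the Sokhotski--Plemelj / stationary‑phase estimate of Theorem~\ref{t_singint}, in its straightforward generalization to these integrands and to finite $T$: the $\sigma$-integral equals $\pi\nu\,g(0;t)$ concentrated on the resonant quadric $\{\oms=0\}={}^s\!\Sigma$, modulo $C^\#(s;\aleph_d)\nu^{2-\aleph_d}$, and the transverse Jacobian of $\oms$ produces exactly the density $(|s_1-s|^2+|s_2-s|^2)^{-1/2}=|z|^{-1}$ of $\mu^\Sigma$ (cf. Theorems~\ref{t_singint},~\ref{t_disintegr}). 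The four families of contractions — $\EE|a_s^{(1)}|^2$, and the three placements of the inner $a^{(1)}$ in $\EE\bar a_s^{(0)}a_s^{(2)}$ — then reconstruct $K_4,K_3,K_2,K_1$ of \eqref{k11} evaluated at $n^{(0)}(t)$, the signs $\s_l$ of \eqref{l1} emerging from the $i\cdot(\mp i)$ prefactors. Thus $\Phi_s(t)=\nu\,K\big(n^{(0)}(t)\big)(s)+C^\#(s;\aleph_d)(\nu^{2-\aleph_d}+\nu^{-2}L^{-2})$, and since $|n^{(0)}_s(t)-n^{(0)}_s(0)|\le C^\#(s)\,t$ for $t\in[0,1]$ and $K$ is locally Lipschitz (Corollary~\ref{c_7.1}), $\Phi_s(t)=\nu K(n^{(0)}(0))(s)+C^\#(s;\aleph_d)(\nu t+\nu^{2-\aleph_d}+\nu^{-2}L^{-2})$.

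\emph{Conclusion and main obstacle.} Integrating against $e^{-2\ga_s(\tau-t)}$ over $[0,\tau]$, with $\int_0^\tau e^{-2\ga_s(\tau-t)}dt=\frac{1-e^{-2\ga_s\tau}}{2\ga_s}\le\tau\le1$ and $\int_0^\tau e^{-2\ga_s(\tau-t)}t\,dt\le\tau^2$, yields $n^{(2)}(\tau)-e^{-\tau\Lc}n^{(2)}(0)=\nu K^\tau(n^{(0)}(0))+C^\#(s;\aleph_d)(\nu\tau^2+\nu^{2-\aleph_d}+\nu^{-2}L^{-2})$, which is the estimate required in the Reduction step; by that step, \eqref{n_increment}--\eqref{R_s} follow. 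I expect the main obstacle to be the resonant step: first, the combinatorics — verifying that the $\delta'$-factor leaves only connected Wick pairings, that these assemble into $K_1+\dots+K_4$ with the correct signs, and, above all, that the finite‑$T$ two‑point factors collapse at $\sigma=0$ precisely to $n^{(0)}_{s_i}(t)=\EE|a^{(0)}_{s_i}(t)|^2$ (which is what makes the kinetic operator act on the running energy spectrum rather than on $B$); and second, the quantitative singular‑integral estimate — the extension of Theorem~\ref{t_singint} to integrands carrying the extra exponentials and a finite integration range, with the $d=2$ logarithmic loss $\aleph_2>0$ — intertwined with the bookkeeping of finite $T$ (needed to recover the base value $n^{(0)}(0)$) and of the short window $\tau\le1$ (needed for the prefactor $\frac{1-e^{-2\ga_s\tau}}{2\ga_s}$ of $K^\tau$ and for the $O(\tau^2)$ remainder).
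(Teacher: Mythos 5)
Your argument is correct and reaches the stated bounds, but it organizes the time analysis genuinely differently from the paper. The paper works with finite increments: it splits $a^{(i)}(\tau)=c^{(i)}(\tau)+\Del a^{(i)}(\tau)$, $c^{(i)}(\tau)=e^{-\ga_s\tau}a^{(i)}(0)$, reduces \eqref{n_increment} to the quantity $Q_s=\EE|\Del a^{(1)}|^2+2\Re\EE\big(\Del a^{(1)}\bar c^{(1)}+\Del a^{(2)}\bar a^{(0)}\big)$ (Proposition~\ref{l:N^2_s}), and must then (i) kill the mixed terms $\EE\Del a^{(1)}\bar c^{(1)}$ and $\wt N_s$, whose time integrals run over $[0,\tau]\times[-T,0]$, via the separate oscillatory estimate of Theorem~\ref{c_6.2}; (ii) handle the two-sided double integral over $[0,\tau]^2$, whose kernel $\cZ^4$ carries the numerator $|e^{i\nu^{-1}\oms\tau}-e^{-\ga_s\tau}|^2$ and hence a $\cos(\nu^{-1}\tau\oms)$ term contributing at order $\nu$, forcing the dedicated Lemma~\ref{l_osc}/Theorem~\ref{t_osc}; and (iii) reconcile the mismatched prefactors $\frac{1-e^{-\ga_s\tau}}{\ga_s}$ and $\frac{1-e^{-2\ga_s\tau}}{2\ga_s}$ at cost $O(\tau^2)$. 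You instead differentiate, getting $\dot n^{(2)}=-2\ga_s n^{(2)}+\Phi_s$ (the It\^o cross-term vanishing as you say), so Duhamel produces the factor $\frac{1-e^{-2\ga_s\tau}}{2\ga_s}$ of $K^\tau$ uniformly for all four kinetic terms, and $\Phi_s(t)$ contains a single one-sided integral $\int_{-T}^t dl$ whose integrand vanishes at $l=-T$ (since $a^{(0)}(-T)=0$); after freezing the two-point functions at coincident times — which correctly yields $\prod_j n^{(0)}_{s_j}(t)$ — the leading term is the real part of the complex Lorentzian $(\Gamma-i\nu^{-1}\oms)^{-1}$, and Theorem~\ref{t_singint} together with the stationary-phase bound of Appendix~\ref{s_stat_ph} supply the $O(\nu^{2-\aleph_d}+\nu^{-2}L^{-2})$ error, so that neither Theorem~\ref{t_osc} nor Theorem~\ref{c_6.2} is needed. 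Two points your sketch leaves implicit must be supplied for a complete proof: first, the principal-value (imaginary) part of the one-sided Lorentzian is of the \emph{same} order $\nu$ as the delta part after the $(s_1,s_2)$-summation, so one must verify that the Wick prefactor multiplying the time integral is real (it is: $i\cdot(\mp i)=\pm 1$ times real two-point functions), so that the outer $2\Re$ annihilates it — otherwise the answer would be corrupted at leading order; second, the remainder from freezing the correlation kernel at coincident times must be controlled over the whole range $\sigma\in[0,T+t]$ (not just a short window), which is where the stationary-phase decay $O(\nu^d\sigma^{-d})$ and the exponential damping are both used, exactly as in the paper's Sections~\ref{ss_2}--\ref{ss_3}. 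With these supplied, your route is a legitimate and arguably leaner alternative; the trade-off is that the paper's increment formulation localizes all correlations to $[0,\tau]$ from the outset, while yours requires uniform control of the kernel over $[-T,t]$.
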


Proof of Theorem~\ref{th:kinetic_for_n^2} is given in Section~\ref{s:th_kinetic}. Since for any $\tau'\ge-T$ the process $\tau \to (A_s(\tau'+\tau), s\in \Z^d_L)$, is a quasisolution of the problem 
\eqref{ku4}, \eqref{in_cond} with $T:= T+\tau'$ and $\beta_s(\tau) := \beta_s(\tau+\tau')$, then the theorem applies to study the 
increments of $n^{\leq 2}$ from $\tau'$ to $\tau'+\tau$, for any $\tau'\ge-T$. 
That is, \eqref{n_increment} remains true if we replace $n^{\leq 2}(0)$ by
$n^{\leq 2}(\tau')$ and $n^{\leq 2}(\tau)$ by $n^{\leq 2}(\tau'+\tau)$.

We also need an estimate on the reminder $n_s^{\geq 3}$. It is  a part of the assertion below, which is an immediate
consequence of  \eqref{main}-\eqref{n_s_34} since $d\geq 2$:

\begin{proposition}\lbl{l:reminder}
If
$ L\ge  \nu^{-2}$, then for  $k=0,1,2,4$ we have
\be\lbl{n_s_est}
\big|n_s^{(k)}(\tau)\big|\leq C^\#(s)\nu^{\lceil k/2 \rceil}, \quad
	\ee
and $\big|n_s^{(3)}(\tau)\big|\leq C^\#(s)\nu^{2}\chi_d(\nu)$,	uniformly in $\tau\geq -T$.
So
\be\label{n_est}
\big|n_s^{\le 2} (\tau)\big| \leq C^{\#}(s)
\ee 
and if $\nu(\chi_d(\nu))^{1/2}\leq\eps$, then 
	\be\lbl{n_s^3-est}
	\big|n_s^{\geq 3}(\tau)\big|\leq C^{\#}(s) \eps^2. 
	\ee
\end{proposition}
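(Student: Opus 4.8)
The plan is to deduce everything from Theorem~\ref{l:est_a^ia^j} (Theorem~D) together with the scaling \eqref{r_scale}, so the proof is essentially bookkeeping once that theorem is in hand. First I would establish the termwise bounds \eqref{n_s_est} and the $n_s^{(3)}$-estimate. Recall that $n_s^{(k)}(\tau)=\sum_{k_1+k_2=k,\,k_1,k_2\le2}\EE a_s^{(k_1)}(\tau)\bar a_s^{(k_2)}(\tau)$ is a sum of at most three terms; I would apply \eqref{main} to each of them. Since $L\ge\nu^{-2}$, the finite-volume error obeys $\nu^{-2}L^{-2}\le\nu^2$; and since $d\ge2$ while we only need $k\le4$, one has $\lceil k/2\rceil\le2\le d$, hence $\max(\nu^{\lceil k/2\rceil},\nu^d)=\nu^{\lceil k/2\rceil}$ and also $\nu^2\le\nu^{\lceil k/2\rceil}$ --- this is the only place where $d\ge2$ is used. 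Thus \eqref{main} gives
\[
|n_s^{(k)}(\tau)|\;\le\;C^\#(s)\bigl(\nu^{-2}L^{-2}+\max(\nu^{\lceil k/2\rceil},\nu^d)\,\chi_d^k(\nu)\bigr)\;\le\;C^\#(s)\,\nu^{\lceil k/2\rceil}\chi_d^k(\nu),
\]
uniformly in $\tau\ge-T$ (because \eqref{main} is). For $k\ne3$ we have $\chi_d^k(\nu)=1$, which is \eqref{n_s_est}; the case $k=1$ also follows at once from $n_s^{(1)}\equiv0$ (Lemma~\ref{l_a1_1}). For $k=3$, with $\chi_d^3(\nu)=\chi_d(\nu)$ and $\lceil3/2\rceil=2$, the same line gives $|n_s^{(3)}(\tau)|\le C^\#(s)\nu^2\chi_d(\nu)$.

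Next I would insert the scaling \eqref{r_scale}, so that $\rho^2=\nu^{-1}\eps$, $\rho^3=\nu^{-3/2}\eps^{3/2}$, $\rho^4=\nu^{-2}\eps^2$, and use the decomposition $n_s=n_s^{\le2}+n_s^{\ge3}$ together with $n_s^{(1)}\equiv0$. For $n_s^{\le2}=n_s^{(0)}+\rho^2 n_s^{(2)}$: the term $n_s^{(0)}$ is the explicit Schwartz function \eqref{n_s_0}, hence $\le C^\#(s)$, while $\rho^2|n_s^{(2)}|\le\nu^{-1}\eps\cdot C^\#(s)\nu=\eps\,C^\#(s)\le C^\#(s)$ because $\eps\le1$; adding the two gives \eqref{n_est}. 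For $n_s^{\ge3}=\rho^3 n_s^{(3)}+\rho^4 n_s^{(4)}$: one has $\rho^4|n_s^{(4)}|\le\nu^{-2}\eps^2\cdot C^\#(s)\nu^2=\eps^2 C^\#(s)$ with no extra hypothesis, and $\rho^3|n_s^{(3)}|\le\eps^{3/2}\,\nu^{1/2}\chi_d(\nu)\,C^\#(s)$, which is $\le\eps^2 C^\#(s)$ under the standing assumption $\nu(\chi_d(\nu))^{1/2}\le\eps$ (trivially so when $d\ge3$, where $\chi_d\equiv1$ and the assumption reads $\nu\le\eps$); summing yields \eqref{n_s^3-est}.

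I do not expect a genuine obstacle: all the analytic difficulty is contained in Theorem~\ref{l:est_a^ia^j}, imported from \cite{DK}, and granted that, the present proposition is routine. The two spots needing a little attention are: (i) the elementary verification that $d\ge2$ forces $\lceil k/2\rceil\le d$ for the range $k\le4$, so that the $\max$ in \eqref{main} collapses to the power of $\nu$ matching $\nu^{\lceil k/2\rceil}$ and, via $L\ge\nu^{-2}$, dominates the finite-volume remainder $\nu^{-2}L^{-2}$; and (ii) carrying the logarithmic factor $\chi_d(\nu)$ through dimension $d=2$ --- it is harmless in the $n_s^{(3)}$-estimate, but one should check there that the hypothesis $\nu(\chi_d(\nu))^{1/2}\le\eps$ (rather than merely $\nu\le\eps$) is what is being used to absorb $\eps^{3/2}\nu^{1/2}\chi_d(\nu)$ into $\eps^2 C^\#(s)$ for \eqref{n_s^3-est}.
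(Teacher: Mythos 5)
Your route is exactly the paper's: Proposition~\ref{l:reminder} is presented there as ``an immediate consequence of \eqref{main}--\eqref{n_s_34} since $d\geq 2$'', and your write-up is precisely that bookkeeping --- $\nu^{-2}L^{-2}\le\nu^{2}\le\nu^{\lceil k/2\rceil}$ under $L\ge\nu^{-2}$, and $\max(\nu^{\lceil k/2\rceil},\nu^{d})=\nu^{\lceil k/2\rceil}$ because $\lceil k/2\rceil\le 2\le d$ for $k\le 4$. The derivations of \eqref{n_s_est}, of the $n_s^{(3)}$ bound, of \eqref{n_est}, and of the $\rho^{4}n_s^{(4)}$ half of \eqref{n_s^3-est} are all correct.

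The one step that does not close as written is the very one you flagged and then waved through. For $d=2$ the inequality $\eps^{3/2}\nu^{1/2}\chi_d(\nu)\le C\eps^{2}$ is equivalent to $\nu\,\chi_d(\nu)^{2}\le C^{2}\eps$, whereas the standing hypothesis is $\nu\,\chi_d(\nu)^{1/2}\le\eps$; since $\chi_d(\nu)=\ln\nu^{-1}\to\infty$, that hypothesis only yields $\rho^{3}|n_s^{(3)}|\le C^{\#}(s)\,\eps^{2}\chi_d(\nu)^{3/4}$, which is not $O(\eps^{2})$ uniformly in $\nu$. This mismatch is inherited from the proposition's own stated hypothesis rather than from your method (the condition should be something like $\nu\chi_d(\nu)^{2}\le\eps$, and the discrepancy is harmless wherever the proposition is used, since there one takes $\nu\le\nu_\eps(r)$ with $\nu_\eps$ as small as needed); for $d\ge 3$, where $\chi_d\equiv1$, your computation is complete. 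But your assertion that the printed hypothesis suffices is, strictly, false in dimension two: you should either strengthen the hypothesis or accept the extra logarithmic factor.
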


In accordance with \eqref{assumption} we will study the energy spectrum $n_s(\tau)$ under the two limiting regimes: 
 {\it $L\gg \nu^{-1}$ when  $\nu\to 0$; }
 or
 {\it first $L\to \infty$ and then $\nu\to 0$. }
To treat the latter we will need the following result. 
\begin{proposition}\lbl{L to infty}
	For any $\nu\in (0,1/2]$, $\tau_1,\tau_2\geq -T$, $k_1,k_2\geq 0$ and $s\in\R^d$ 
	 the moment
	$\EE a_s^{(k_1)}(\tau_1;\nu,L)\bar a_s^{(k_2)}(\tau_2;\nu,L)$ admits a finite limit as $L\to\infty$. The limit is
	a Schwartz function of $s$. 
\end{proposition}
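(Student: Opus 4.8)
We will deduce the proposition from the Feynman-diagram representation of the moment $\EE a^{(k_1)}_s(\tau_1)\bar a^{(k_2)}_s(\tau_2)$ that underlies Theorem~\ref{l:est_a^ia^j} (and is carried out in \cite{DK}), together with Theorem~\ref{t_sum_integral}. Fix $\nu\in(0,1/2]$. For $k:=k_1+k_2\le1$ there is nothing to prove: by \eqref{corr_a_in_time} the moment with $k=0$ equals $\tfrac{b(s)^2}{\ga_s}\big(e^{-\ga_s|\tau_1-\tau_2|}-e^{-\ga_s(2T+\tau_1+\tau_2)}\big)$, which is independent of $L$ and Schwartz in $s$, while for $k=1$ it vanishes by Lemma~\ref{l_a1_1}. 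So assume $k\ge2$.

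First we iterate \eqref{an} to write $a^{(k_1)}_s(\tau_1)$, and likewise $\bar a^{(k_2)}_s(\tau_2)$, as a finite sum of iterated time-integrals of monomials of degree $2k_i+1$ in the centred Gaussian variables $a^{(0)}_{s'}(l')$, $s'\in\Z^d_L$, $l'\le\tau_i$ (see \eqref{a0}), each of the $k_i$ internal vertices carrying a prefactor $iL^{-d}\sum_{s_1,s_2}\dess(\,\cdot\,)e^{i\nu^{-1}l\oms}$ and one time-integration. Multiplying and applying the Wick theorem (by \eqref{corr_a_in_time} the only non-vanishing pair correlations are $\EE a^{(0)}_{s'}(l')\bar a^{(0)}_{s''}(l'')$ with $s'=s''$) expands the moment into a finite sum over diagrams of elementary terms. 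Using the momentum-conservation relations $\dess$ at the vertices and the momentum identities forced by the $k+1$ Wick pairings to eliminate the dependent momenta, each elementary term becomes a normalised sum $\ssum_{z\in\Z^{kd}_L\setminus D_s}$ over the free momenta $z$ (since the two cubic trees are joined by at least one pairing, a loop count shows that exactly $k$ momenta stay free, matching the power $L^{-kd}$), integrated over the internal times $\theta$, of a product of the pair-correlation factors $\tfrac{b(\cdot)^2}{\ga_\cdot}e^{-\ga_\cdot(\cdots)}$, the damping exponentials $e^{-\ga_\cdot(\cdots)}$ and the oscillatory exponentials $e^{\pm i\nu^{-1}l\,\omega(z,s)}$ with $\omega$ quadratic in $(z,s)$ as in \eqref{omega}. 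This is exactly the expansion of \cite{DK} (cf.\ Section~\ref{s_en_spectra}), and each elementary term is literally of the form $S_s$ in Theorem~\ref{t_sum_integral}: $k$ plays the role of the dimension parameter, $D_s$ is the union of the affine subspaces excluded by $\dess$, and $G_s$ is the integrand just described (the indicators of the time-orderings and of the ranges $[-T,\tau_i]$ are folded into the $\theta$-dependence of $G_s$, which need only be measurable in $\theta$).

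It then remains to check the hypothesis \eqref{op1}, and here it is crucial to keep the internal times un-integrated. Since $b$ is Schwartz and $\ga_\cdot=\ga^0(|\cdot|^2)$ satisfies $\ga^0\ge1$ and \eqref{gamma0}, every correlation factor and all its $z$- and $s$-derivatives are bounded by $C^\#(s)C^\#(z)$ times a polynomial in the internal times, which the damping exponentials $e^{-\ga_\cdot(\cdots)}\le e^{-(\cdots)}$ absorb into a factor $C^\#(\theta)$; and each $\p_z$ or $\p_s$ falling on an oscillatory exponential produces $\pm i\nu^{-1}l\,\p\omega$, i.e.\ costs exactly one power of $\nu^{-1}$ and a polynomial in the times again absorbed by the damping. (One uses, as in \cite{DK}, that the $b$-factors cover all the free directions of $z$, so the decay $C^\#(z)$ is genuine.) Thus \eqref{op1} holds and Theorem~\ref{t_sum_integral} gives $|S_s-J_s|\le C^\#(s)\nu^{-2}L^{-2}$, where $J_s=\int G_s\,dz\,d\theta$. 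As $G_s$ is built only from the $\R^d$-extensions of $b$ and $\ga^0$, the integral $J_s$ does not depend on $L$; hence each elementary term converges, as $L\to\infty$, to its $J_s$, and summing the finitely many diagrams shows that $\EE a^{(k_1)}_s(\tau_1;\nu,L)\bar a^{(k_2)}_s(\tau_2;\nu,L)$ converges to the finite, $L$-independent sum $\sum_j J^j_s$. That this limit is a Schwartz function of $s$ follows by differentiating the $J^j_s$ under the integral sign (the bounds above give $|\p_s^\alpha J^j_s|\le\nu^{-|\alpha|}C^\#(s;\alpha)$ for every $\alpha$), or simply by passing the $L$-uniform Schwartz estimate of Theorem~\ref{l:est_a^ia^j} to the limit.

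The main obstacle is the second paragraph: producing and controlling the Feynman-diagram expansion so that each elementary term genuinely matches the template of Theorem~\ref{t_sum_integral} — confirming the loop count (power of $L^{-d}$ equal to the number of free momenta) and verifying \eqref{op1} with precisely the claimed $\nu^{-|\alpha|}$ scaling, which is what forces one to retain the internal times rather than integrate them out. This combinatorial and analytic bookkeeping is exactly the work done in \cite{DK}, from which the present argument would be borrowed.
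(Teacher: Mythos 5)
Your proposal is correct and follows essentially the same route as the paper, which itself only sketches the argument: iterate the Duhamel formula \eqref{an}, expand the moment via the Wick theorem into finitely many diagram terms, and apply Theorem~\ref{t_sum_integral} to replace each sum by an $L$-independent integral (cf.\ the representation \eqref{Mom a-a}, \eqref{I_F} borrowed from \cite{DK}), with the Schwartz property read off from the resulting bounds. Like the paper, you correctly identify that the substantive combinatorial and analytic bookkeeping is exactly what is carried out in \cite{DK}, so there is no genuine divergence to report.
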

In particular, this result implies that any $n_s^{(k)}(\tau;\nu,L)$ converges, as $L\to\infty$, to a Schwartz function 
 $n_s^{(k)}(\tau;\nu,\infty)$ 
of $s\in\R^d$. 
The  proposition can be obtained  directly by iterating the Duhamel formula \eqref{an} and using Theorem~\ref{t_sum_integral} to replace 
the corresponding sum by an $L$--independent integral (cf. Section~\ref{s_sing_integral}, where the moments with $k_1=k_2=1$
are approximated by integrals $I_s$). 
We do not give  here a proof  since it  follows from a stronger result  in \cite{DK},  discussed in Section~\ref{s_en_spectra} (see there
\eqref{Mom a-a} and \eqref{I_F}).

\subsection{Proof of Theorem \ref{th:kinetic_for_n^2}.}
\lbl{s:th_kinetic}

It is convenient to decompose the processes $a^{(i)}_s(\tau)$, $\tau\ge0$, as
\be\lbl{a-a_del}
a_s^{(i)}(\tau) = c_s^{(i)}(\tau) + \Del a_s^{(i)}(\tau), \qquad i=0,1,2,\qu s\in \Z^d_L,
\ee
where 
$$
c_s^{(i)}(\tau)=e^{-\ga_s\tau}a_s^{(i)}(0)
$$
and $\Del a_s^{(i)}$ is defined via \eqref{a-a_del}. That is,   
$
c_\cdot(\tau) := c^{(0)}_\cdot(\tau) +\rho  c^{(1)}_\cdot(\tau) +\rho^2  c^{(2)}_\cdot(\tau)$
with 
$\tau\ge0$ is a solution of the linear equation \eqref{ku4}${}_{\rho=0, b(s)\equiv 0}$, equal $A_\cdot(0)$ at $\tau=0$, and 
$\Delta a_\cdot(\tau)$ equals $ A_\cdot(\tau) - c_\cdot(\tau)$. By \eqref{schw_ext},  for $0\le i,j\le2$ the functions 
 \begin{equation}\label{smooth_ext}
\begin{split}
&\EE c_s^{(i)} \bar c_s^{(j)} ,\quad \EE c_s^{(i)} \Delta \bar a_s^{(j)} ,\quad   \EE   \Delta  a_s^{(i)} \Delta \bar a_s^{(j)} 
\\
&\text{ naturally extend to  Schwartz functions of $s\in\R^d$. }
\end{split}
\end{equation} 
Due to \eqref{n_s^1} and \eqref{n^j_s}, 
$$e^{-2\ga_s\tau}n_s^{\leq 2}(0)
=\EE|c_s^{(0)}(\tau)|^2 + \rho^2\EE \big(|c_s^{(1)}(\tau)|^2 + 2\Re \bar c_s^{(0)}(\tau) c_s^{(2)}(\tau)\big),\quad \forall\, s\in \R^d.
$$ 
Also, 
 \begin{equation}\label{Delta_A^2}
\begin{split}
n_s^{\leq 2}&(\tau)-e^{-2\ga_s\tau}n_s^{\leq 2}(0)
=\EE\Big( |a_s^{(0)}(\tau)|^2  - |c_s^{(0)}(\tau)|^2 \\
	&+\rho^2  \big(|a_s^{(1)}(\tau)|^2  - |c_s^{(1)}(\tau)|^2
	 +2\Re\big(a^{(2)}_s\bar a^{(0)}_s(\tau)- c^{(2)}_s\bar c^{(0)}_s(\tau)\big)\Big).
\end{split}
\end{equation}

Writing explicitly  processes  
$
\Del a_s^{(i)}(\tau)
$,
$s\in\Z^d_L$, 
from eq.~\eqref{ku4a}, we find
\begin{equation}\lbl{deltas_a}
\begin{split}
	\Del a_s^{(0)}(\tau)&=
	b(s)\int_0^\tau e^{-\ga_s(\tau-l)}\,d\beta_s(l), 
	\\
	\Del a_s^{(1)}(\tau)&=
	i\int_0^\tau e^{-\ga_s(\tau-l)} \cY_s(a^{(0)},\nu^{-1}l)\,dl,
	\\ 
	\Del a_s^{(2)}(\tau)&=
	i\int_0^\tau e^{-\ga_s(\tau-l)} 3\cY_s^{sym}(a^{(0)},a^{(0)},a^{(1)};\nu^{-1}l)\,dl,
	\end{split}
\end{equation}
where $a^{(0)} = a^{(0)}(l)$ and 
 we recall that $\cY_s^{sym}$ is defined at the beginnig of Section~\ref{s_series}. 
Let us note that  to get explicit formulas for $c_s^{(i)}(\tau)$, $i=0,1,2$, it suffices to replace in the r.h.s.'s  of the relations in 
\eqref{deltas_a} the range of  integrating from $[0,\tau]$  to  $[-T,0]$.  For example,
$
c_s^{(0)}(\tau) = e^{-\ga_s\tau}  a_s^{(0)}(0) = b(s)\int_{-T}^0 e^{-\ga_s(\tau-l)}\,d\beta_s(l). 
$

Using that 
$\EE c_s^{(i)} \Del \bar a_s^{(0)}=\EE c_s^{(i)} \EE \Del \bar a_s^{(0)}=0$ for any $i$ and $s$, 
we obtain 
\be\lbl{aa0}
\EE \big(a_s^{(2)} \bar a_s^{(0)}(\tau)  - c_s^{(2)} \bar c_s^{(0)}(\tau) \big)
=\EE \Del a_s^{(2)} \bar a_s^{(0)}(\tau) ,
\ee
and from \eqref{a-a_del} we get that 
\begin{equation}\lbl{a^2-a^2}
\begin{split}
&|a_s^{(1)}(\tau)|^2  -  |c_s^{(1)}(\tau)|^2=|\Del a_s^{(1)}(\tau)|^2  + 2\Re \Del a_s^{(1)} \bar c_s^{(1)}(\tau),
	\\
&|a_s^{(0)}(\tau)|^2  -  |c_s^{(0)}(\tau)|^2=|\Del a_s^{(0)}(\tau)|^2.
	\end{split}
\end{equation}

Then, inserting \eqref{aa0} and \eqref{a^2-a^2} into \eqref{Delta_A^2},
we find
\be\non
n_s^{\leq 2}(\tau)-e^{-2\ga_s\tau}n_s^{\leq 2}(0)
= \EE\big|\Del a_s^{(0)}(\tau)\big|^2 + \rho^2 Q_s(\tau),\quad s\in\R^d, 
\ee
where 
\be\lbl{Q_s-def}
Q_s(\tau):=\EE |\Del a_s^{(1)}(\tau)|^2 + 2\Re\EE\big(\Del a^{(1)}_s(\tau) \bar c^{(1)}_s(\tau) + \Del a^{(2)}_s(\tau) \bar a^{(0)}_s (\tau)\big),
\ee
and we recall \eqref{smooth_ext}. 
Since 
$$\displaystyle{\EE\big|\Del a_s^{(0)}(\tau)\big|^2=\frac{b(s)^2}{\ga_s}(1-e^{-2\ga_s\tau})} =
2 \int_0^\tau e^{-t\Lc} {b}^2(s)\, dt,
$$
 then 
\be\non
n^{\leq 2}(\tau)-e^{-t\Lc}n^{\leq 2}(0)= 2 \int_0^\tau e^{-t\Lc} {b}^2\, dt + \rho^2 Q(\tau),
\ee
for $n^{\leq 2} = (n^{\leq 2}_s$, $s\in \R^d)$. 
So  the desired formula \eqref{n_increment} is an 
 immediate consequence  of the assertion below:
\begin{proposition}\lbl{l:N^2_s}
	We have	 
	\be\non
	\rho^2 Q_s(\tau) = \eps K^\tau(n^{\le2}(0))(s) +  \Rc(\tau,s), \quad s\in \R^d, 
	\ee
	where the reminder $\Rc$ satisfies \eqref{R_s}.
	\end{proposition}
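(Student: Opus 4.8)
The plan is to evaluate $Q_s(\tau)$ summand by summand. Since $\rho^2=\eps\nu^{-1}$, the claim is equivalent to showing that $Q_s(\tau)$ equals $\nu\,K^\tau(n^{\le2}(0))(s)$ up to an error of size $\nu\,C^\#(s)\big(\nu^{1-\aleph_d}+\nu^{-3}L^{-2}+\tau^2+\eps\tau\big)$. First I would substitute the Duhamel representations \eqref{deltas_a} for $\Del a^{(0)},\Del a^{(1)},\Del a^{(2)}$ --- and, inside $\Del a^{(2)}$, formula \eqref{an} for $a^{(1)}$ --- into the three summands of $Q_s$ in \eqref{Q_s-def}. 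Everything is built from the Gaussian family $a^{(0)}$, so Wick's theorem turns each summand into a finite sum over $s_1,s_2\in\Z^d_L$ of time-integrals of products of the pair-correlations \eqref{corr_a_in_time}; for $\Del a^{(2)}_s\bar a^{(0)}_s$ this is a sixth Gaussian moment, and the constraints carried by the Kronecker symbols $\dess$ leave only a short explicit list of admissible pairings. In every surviving pairing the oscillating exponent has frequency $\pm\nu^{-1}\oms$ with $s_3=s_1+s_2-s$, because the $\dess$'s force the primed and unprimed indices to match.

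Next I would replace each lattice sum $\ssum_{s_1,s_2}$ by $\int_{\R^{2d}}ds_1\,ds_2$ via Theorem~\ref{t_sum_integral}: the integrands satisfy a bound of type \eqref{op1}, so this costs $C^\#(s)\nu^{-2}L^{-2}$ per term, which after the $\rho^2$-prefactor is the $\nu^{-3}L^{-2}$ reminder. Performing the elementary $dl$-integrations (as in the passage from \eqref{a1_covar1} to \eqref{a1_covar}) and using that the correlations \eqref{corr_a_in_time}, restricted to $l,l'\in[-T,0]\cup[0,\tau]$, coincide at the diagonal $l=l'$, up to $O(\tau)$ errors, with $n^{(0)}_j(0)=B_j\big(1-e^{-2\ga_j T}\big)$ (recall \eqref{n_s_0}), each term becomes an integral of the form $I_s$ of \eqref{I_s} --- with $F_s$ built from the $n^{(0)}_j(0)$'s --- multiplied by the factor $\frac{1-e^{-2\ga_s\tau}}{2\ga_s}$ of \eqref{kin int^tau} coming from the external propagators $e^{-\ga_s(\tau-l)}$ integrated and combined over the relevant ranges; replacing $n^{(0)}_j(l)$ by $n^{(0)}_j(0)$ inside contributes the $\tau^2$ reminder. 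Theorem~\ref{t_singint} then gives $I_s=\nu I_s^0+O(\nu^{2-\aleph_d})C^\#(s)$, with $I_s^0$ the singular integral over ${}^s\!\Sigma_*$; after $\times\rho^2$ this produces the main term and the $\nu^{1-\aleph_d}$ reminder.

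It remains to identify the resulting singular integrals with the four summands $K_1,\dots,K_4$ of $K$. The pair $\EE|\Del a^{(1)}_s|^2$ and $2\Re\EE\Del a^{(1)}_s\bar c^{(1)}_s$ sums to $\EE|a^{(1)}_s(\tau)|^2-e^{-2\ga_s\tau}\EE|a^{(1)}_s(0)|^2$, which by Corollary~\ref{t_for_sum} (and its finite-$T$ variant) equals $\nu K_4^\tau(n^{(0)}(0))(s)$ up to the above reminders --- $K_4$ being the $v_1v_2v_3$-part of $K$, with no factor $v(s)$ since $\bar a^{(0)}_s$ does not occur here. The last summand $2\Re\EE\Del a^{(2)}_s\bar a^{(0)}_s$ carries the single explicit factor $\bar a^{(0)}_s$, hence after Wick one factor $n^{(0)}_s(0)=v(s)$; expanding $3\cY^{sym}_s(a^{(0)},a^{(0)},a^{(1)})$ and the inner $a^{(1)}$ one checks that the admissible Wick pairings are exactly three and reproduce $\nu\big(K_1^\tau+K_2^\tau+K_3^\tau\big)(n^{(0)}(0))(s)$, the three $v(s)$-parts of $K$, with the signs $\s_l$ of \eqref{l1} coming from whether the relevant $a^{(0)}$ stands conjugated. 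Hence $Q_s=\nu K^\tau(n^{(0)}(0))(s)+(\text{reminders})$. Finally I would replace $n^{(0)}(0)$ by $n^{\le2}(0)=n^{(0)}(0)+\rho^2n^{(2)}(0)$ inside $K^\tau$: $3$-homogeneity of $K$, Corollary~\ref{c_7.1}, \eqref{normL} and the bound $|\rho^2n^{(2)}(0)|_r\le C^\#_r\eps$ (from \eqref{decompp}, resp.\ Theorem~\ref{l:est_a^ia^j}) give $\eps\,|K^\tau(n^{\le2}(0))-K^\tau(n^{(0)}(0))|_r\le C_r\eps^2\tau$, the last term of \eqref{R_s}.

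I expect the main obstacle to be the bookkeeping in the third paragraph: carrying out the sixth-moment Wick expansion of $\Del a^{(2)}_s\bar a^{(0)}_s$, checking that exactly the pairings producing a factor $v(s)$ survive and that their count and signs --- together with the $s_1\!\leftrightarrow\! s_2$, $s_3\!\leftrightarrow\! s$ symmetry forced by $\dess$ --- reproduce the $+,+,-,-$ pattern of $K$; and simultaneously showing that every discarded piece (the off-diagonal $l\ne l'$ parts of the time-integrals, the sum-versus-integral defects, and the finite-$\tau$, finite-$T$ transients) is dominated, with a $C^\#(s)$-factor, by one of the four listed terms, uniformly over $T\in(0,\infty]$, $L\ge1$ and $0<\nu\le1/2$. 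A lesser technical point is that Theorems~\ref{t_sum_integral} and~\ref{t_singint} are stated for integrals over all of time, so one must invoke their evident finite-range analogues, the restriction to $[0,\tau]$ and $[-T,0]$ only adding harmless $O(\tau)$ or exponentially small corrections.
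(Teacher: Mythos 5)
Your overall strategy coincides with the paper's: Wick-expand the three summands of $Q_s$ into lattice sums of time integrals of the pair correlations \eqref{corr_a_in_time}, pass from sums to integrals by Theorem~\ref{t_sum_integral} (cost $\nu^{-2}L^{-2}$), freeze the correlations at time $0$ (cost $\nu\tau^2$), apply Theorem~\ref{t_singint} to the resulting singular integrals (cost $\nu^{2-\aleph_d}$), identify the limits with $K_1,\dots,K_4$, and finally trade $n^{(0)}(0)$ for $n^{\le2}(0)$ via Corollary~\ref{c_7.1} (cost $\eps\nu\tau$). Two places where you deviate or cut corners deserve attention.

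First, for the $K_4$ piece you recombine $\EE|\Del a^{(1)}_s|^2+2\Re\EE\Del a^{(1)}_s\bar c^{(1)}_s=\EE|a^{(1)}_s(\tau)|^2-e^{-2\ga_s\tau}\EE|a^{(1)}_s(0)|^2$ and invoke Corollary~\ref{t_for_sum}. For $T=\infty$ this is fine, since $\EE|a^{(1)}_s(\tau)|^2$ is then $\tau$-independent; for finite $T$ it is not, and your ``finite-$T$ variant'' must not only approximate each moment to precision $O(\nu^{2-\aleph_d})$ but also show that the $\tau$-drift of the limiting integral contributes only $O(\nu\tau^2)$ --- a priori it contributes $O(\nu\tau)$, which after multiplication by $\rho^2$ is $\eps\tau$ and does not fit in \eqref{R_s}. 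The paper keeps the two terms separate: $\EE|\Del a^{(1)}_s(\tau)|^2$ is compared with the explicit sum $S^1_s$ of Proposition~\ref{l:N2}, where freezing the correlations at $l=l'=0$ costs only $O(\nu\tau^2)$ because the resonant set $\{|l-l'|\lesssim\nu\}$ inside $[0,\tau]^2$ has measure $O(\nu\tau)$; the asymptotics of $S^1_s$, whose numerator contains the oscillating term $\cos(\nu^{-1}\tau\oms)$, then require the dedicated Lemma~\ref{l_osc}/Theorem~\ref{t_osc} rather than Corollary~\ref{t_for_sum}; and the cross term $\EE\Del a^{(1)}_s\bar c^{(1)}_s$ is shown to be $O(\nu^2\chi_d(\nu)+\nu^{-2}L^{-2})$ outright, i.e.\ it does not enter the main term at all.

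Second, your closing claim that restricting the time integrals to $[0,\tau]$ and $[-T,0]$ adds only ``harmless $O(\tau)$ or exponentially small corrections'' is too optimistic: already an $O(\nu\tau)$ error in $Q_s$ is fatal. The mixed-range contributions (one time in $[0,\tau]$, the other in $[-T,0]$: the cross term above, the $\cT^{r}_{12}$- and $\cM^r_{12}$-type tails inside $\EE\Del a^{(2)}_s\bar a^{(0)}_s$, and $\wt N_s$) are controlled in the paper by a separate oscillatory estimate, Theorem~\ref{c_6.2}, which gains a full factor $\nu^2$ precisely because the resonant set in such a product domain has measure $O(\nu^2)$. This is a genuine quantitative ingredient, not bookkeeping. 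Also note that the diagonal value of \eqref{corr_a_in_time} equals $n^{(0)}_j(\ell)$, which is close to $n^{(0)}_j(0)$ only for $\ell\in[0,\tau]$, not on $[-T,0]$; the freezing step may only be applied where both times lie in $[0,\tau]$. With these two repairs your argument matches the paper's proof.
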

\begin{proof} Below  we abbreviate  $n_s^{(0)}(0)$ to $n_s^{(0)}$.

Since $\rho^2\nu=\eps$, then  we should show that for any $r$,
\be\lbl{t-N2}
\big| Q(\tau) - \nu K^\tau (n^{\le2}(0))\big|_r \leq C_{r,\aleph_d}(\nu^{2-\aleph_d} 
+\nu^{-2}L^{-2} + \nu\tau^2 + \eps\nu\tau).
\ee
To this end, iterating  formula \eqref{an}, we will express the processes $\Del a_s^{(2)}$, $\Del a_s^{(1)}$ and $c_k^{(1)}$,
entering  the definition \eqref{Q_s-def} of $Q_s$,  through the processes $a_k^{(0)}$.
Then, applying the Wick formula, we will see that $Q_s$ depends on the quasisolution $A_s$ only through the correlations 
of the form $\EE a_k^{(0)}(l)\bar a_{k}^{(0)}(l')$. 
We will show that the main  input to $Q_s$ comes from
 those terms which depend only on the correlations with the times $l,l'$ satisfying $0\leq l,l'\leq \tau$. 
Then, approximating these correlations by their values at  $l=l'=0$, 
we will see  that 
 $Q_s(\tau) $ is close to a sum $\cZ_s$ from Proposition~\ref{l:N2} below, which depends only on $\tau$ and the energy 
spectrum $n_k^{(0)}(0)=\EE |a_k^{(0)}(0)|^2$. 
Next we will approximate the sum  $\cZ_s$ by its asymptotic  as $\nu\to 0$ and $L\to\infty$, which is given by the kinetic integral $\nu K^\tau (n^{(0)}(0))$. Finally, 
replacing  $\nu K^\tau (n^{(0)}(0))$ with  $\nu K^\tau (n^{\le2}(0))$ and estimating the difference of the two kinetic integrals we will  get \eqref{t-N2}.

 We will derive \eqref{t-N2} from the following result: 

\begin{proposition}\lbl{l:N2}
	We have
	\be\lbl{l-N2}
	\big| Q_s(\tau) - \cZ_s\big| \leq C^{\#}(s)(\nu^2\chi_d(\nu)+\nu^{-2}L^{-2}+\nu\tau^2), \quad s\in \R^d, 
	\ee
	where  
	\be\lbl{ZZ_s}
\begin{split}
\cZ_s:=2\ssum_{1,2}\dep 
		\big(\cZ^4n^{(0)}_1n^{(0)}_2n^{(0)}_3
		+\cZ^3n^{(0)}_1n^{(0)}_2n^{(0)}_s\\
		-2\cZ^1n^{(0)}_2n^{(0)}_3n^{(0)}_s\big)
	=: 2 S^1_s + 2S^2_s -4 S^3_s,
	\end{split}
\ee
	and the terms $\cZ^i=\cZ^i(s_1,s_2,s_3,s,\tau)$ have the following form: 
	\be\lbl{Z-edi}
	\cZ^4
	=\frac{\big|e^{i\nu^{-1}\oms\tau}-e^{-\ga_s\tau}\big|^2}{\ga_s^2+(\nu^{-1}\oms)^2},
	\qu
	\cZ^j = 2 \,\frac{1-e^{-\ga_s \tau}}{\ga_s}  
	 \frac{\ga_j}{\ga_j^2+(\nu^{-1}\om^{12}_{3s})^2}
	\quad
	\mbox{for}\; j=1,2,3.
	\ee
\end{proposition}

The proposition is proved in the next section.

To deduce the desired estimate \eqref{t-N2} from \eqref{l-N2} we  will 
 approximate  the sums $S^j_s$  ($j=1,2,3$) in  \eqref{ZZ_s}  by their asymptotic  as $\nu\to 0$ and $L\to\infty$:

\smallskip

{\it The  sum $S^1_s$.} It 
has the form \eqref{Sis_1} with $F_s(s_1,s_2)=n_{s_1}^{(0)}n_{s_2}^{(0)}n_{s_1+s_2-s}^{(0)}.$ So by 
 \eqref{Iso_diffe} and  Theorem~\ref{t_osc}, 
\be\lbl{cZ_s^s}
\Big|S^1_s
- \nu\frac{1-e^{-2\ga_s\tau}}{4\ga_s}  (K_4 n^{(0)})(s)
\Big| \le C^\#(s;\aleph_d)(\nu^{2-\aleph_d} + \nu^{-2} L^{-2})
\ee
for all $s\in\R^d$, 
where we recall that the integral $K_4$ is defined in \eqref{k11}.

{\it The  sum  $S^2_s$.}
Let us set $F_s(s_1,s_2):= \ga_3n_{s_1}^{(0)}n_{s_2}^{(0)}n_{s}^{(0)}$ and $\Ga_s(s_1,s_2)=\ga_3/2$.
Then the  sum takes the form
$$
S^2_s
=\frac{1-e^{-\ga_s\tau}}{2   \ga_s}\nu^2\ssum_{s_1,s_2}\dep\frac{ F_s(s_1,s_2)}{(\nu\Ga_s)^2 + (\oms/2)^2}.
$$
Applying  Theorems~\ref{t_sum_integral} and \ref{t_singint}  we get 
\be\lbl{cZ_s^3}
\Big|
S^2_s-  \nu \frac{1-e^{-\ga_s\tau}}{2\ga_s} (K_3n^{(0)})(s)
\Big| \le  C^\#(s;\aleph_d)(\nu^{2-\aleph_d} + \nu^{-2} L^{-2}).
\ee 
{\it The  sum $S^3_s$.}
Let us note that the function $\cZ^1$ equals to $\cZ^3$, if we there replace  $\ga_3$  by $\ga_1$. 
Then,  repeating the argument used to analyse the sum $S^2_s$, we get
\be\lbl{cZ_s^1}
\begin{split}
	\Big| -S^3_s
-  \nu\frac{1-e^{-\ga_s\tau}}{2\ga_s} (K_1n^{(0)})(s)
\Big| 
\le  C^\#(s;\aleph_d)(\nu^{2-\aleph_d} + \nu^{-2} L^{-2}).
\end{split}
\ee

Using the symmetry of the integral $K_1$ with respect to the transformation $(s_1,s_2)\mapsto (s_2,s_1)$ in its integrand, we see that $K_1(n^{(0)})=K_2(n^{(0)})$.

 In \eqref{cZ_s^s} the sum $S_s^1$ is approximated by the integral $\frac{\nu}{2}K_4(n^{(0)})$, multiplied by the factor $\frac{1-e^{-2\ga_s\tau}}{2\ga_s}$ which also arises in \eqref{kin int^tau}, while for the sums $S_s^2$ and $S_s^3$ the corresponding factors are slightly different, see \eqref{cZ_s^3} and \eqref{cZ_s^1}. 
	To handle this difficulty we consider $K_j(n^{(0)}) =: \eta$, where $j$ is 1 or 3.
 By \eqref{n_s_0} and Lemma~\ref{l_kin_int}, 
$\ 
|\eta |_r \le C_r$ for all $r. 
$
Denote
$$
\xi_s = \Big( \frac{1-e^{-\ga_s\tau}}{\ga_s}  - \frac{1-e^{-2\ga_s\tau}}{2\ga_s} 
\Big) \eta_s, \quad s\in\R^d. 
$$
 Since
$\ 
\frac{1-e^{-\ga_s\tau}}{\ga_s}  - \frac{1-e^{-2\ga_s\tau}}{2\ga_s}  =\frac{(1-e^{-\ga_s\tau})^2}{2\ga_s}\leq \tau^2\ga_s$ and $\ga_s=\lan s\ran^{2r_*}$,
we find 
\be
\non
|\xi|_r \le \tau^2 | \eta|_{r+2r_*} \le    \tau^2 C_r.
\ee
This estimate allows to replace in \eqref{cZ_s^3} and  \eqref{cZ_s^1} 
$
\nu \frac{1-e^{-\ga_s\tau}}{2\ga_s} K_j(n^{(0)})
$
by
$
\nu \frac{1-e^{-2\ga_s\tau}}{4\ga_s} K_j(n^{(0)})
$
 with accuracy $\nu\tau^2$.
So recalling 
the definition of $K^\tau$ in \eqref{kin int^tau},  combining \eqref{cZ_s^s}, \eqref{cZ_s^3}, \eqref{cZ_s^1} and using that $K_1(n^{(0)})=K_2(n^{(0)})$, 
 for any $r$ we get 
\be\lbl{almots_kin'}
\big|\cZ_s - \nu (K^\tau n^{(0)})(s)\big|_{r}
\leq  
C_{r,\aleph_d}(\nu^{2-\aleph_d} + \nu^{-2} L^{-2})+C_r\nu\tau^2 . 
\ee
Finally, since by Proposition \ref{l:reminder} together with \eqref{n_s^1}
$
| n_s^{\le 2}(\tau) -n_s^{(0)}(\tau)| = \rho^2 |n_s^{(2)}(\tau)|\le C^\#(s)\eps,
$
then 
$
| K^\tau (n^{\le2}(0)) - K^\tau (n^{(0)}(0))|_{r+1} \le C_r\eps \tau
$
in view of  Corollary~\ref{c_7.1} and \eqref{I_tau}. 
This inequality jointly with estimates  \eqref{almots_kin'} and  \eqref{l-N2} 
imply  the desired relation \eqref{t-N2}.
\end{proof}

\section{Proof of Proposition \ref{l:N2}} \lbl{sec:Q_s-prop}

To conclude the proof of Theorem~\ref{th:kinetic_for_n^2}, it remains to establish  Proposition~ \ref{l:N2}. The proof of the proposition is somewhat cumbersome since we have 
to consider a number of different terms and different cases. 
During the proof  we will often skip the upper index $(0)$, 
so by writing $a$ and $a_s$ we  will mean $a^{(0)}$ and  $a_s^{(0)}$.

We recall that $Q_s$ is given by  formula \eqref{Q_s-def} and first consider the 
 term $\EE \Del a^{(2)}_s(\tau) \bar a_s (\tau)$. 
Inserting the identity
$a^{(1)}(l)=c^{(1)}(l) + \Del a^{(1)}(l)$
into  formula  \eqref{deltas_a} for  $\Del a^{(2)}_s $, we obtain
$$
\EE \Del a^{(2)}_s (\tau) \bar a_s(\tau)= N_s + \wt N_s,
$$
where
\be\lbl{NN22}
N_s:=i\, \EE \Big(\bar a_s(\tau)
\int_{0}^\tau e^{-\ga_s(\tau-l)} 3\cY_s^{sym}(a,a,\Del a^{(1)};\nu^{-1}l)\,dl
\Big)
\ee
and
$$
\wt N_s:= i\,\EE \Big(\bar a_s(\tau)
\int_{0}^\tau e^{-\ga_s(\tau-l)} 3\cY_s^{sym}(a,a,c^{(1)} ;\nu^{-1}l)\,dl
\Big).
$$
Thus, 
\be\non
Q_s= \EE |\Del a_s^{(1)}(\tau)|^2 + 2 \Re N_s + 2\Re \EE\Del a^{(1)}_s (\tau)\bar c^{(1)}_s(\tau) +2\Re \wt N_s,\quad
s\in\R^d. 
\ee
So we have to analyse the four terms in the r.h.s. above.

\subsection{The first term of $Q_s$} \label{ss_2}
First we will show that the term $\EE  |\Del a^{(1)}_s(\tau)|^2$ 
can be approximated by the first sum $2S_1$ from \eqref{ZZ_s}. Indeed,  due to \eqref{deltas_a}, we  have
\be\lbl{E Del a}
\EE  |\Del a^{(1)}_s(\tau)|^2
=\EE\int_0^\tau dl \int_0^\tau dl'\,
e^{-\ga_s(2\tau-l-l') } \cY_s(a,\nu^{-1}l)
\ov{\cY_s(a,\nu^{-1}l')}.
\ee
Writing the functions $\cY_s$ explicitly and applying the Wick theorem, in view of \eqref{corr_a_in_time} we find
\begin{align}\non
	\EE  |\Del a^{(1)}_s(\tau)|^2=
	2\ssum_{1,2}
	\dep\int_0^\tau dl &\int_0^\tau dl'\,
	e^{-\ga_s(2\tau-l-l') + i\nu^{-1}\oms( l - l')}
	\\\lbl{N^2,1_wick}
	&\EE a_1(l) \bar a_{1}(l') \,
	\EE a_2(l) \bar a_{2}(l') \,
	\EE \bar a_3(l) a_{3}(l').
\end{align}
 Denoting $g_{s123}(l,l',\tau)= -\ga_s(2\tau-l-l')+i\nu^{-1}\oms (l-l')$ and computing the time integrals of the exponent above, we obtain
$$
\int_0^\tau dl \int_0^\tau dl' \,
e^{g_{s123}(l,l',\tau)}=\cZ^4,
$$
where $\cZ^4$ is defined in \eqref{Z-edi}.
Together with the sum in  \eqref{N^2,1_wick} we consider a sum obtained from the latter, without factor 2, by replacing
the processes $a_k(l)$, $a_k(l')$ by their value at zero $a_k(0)$:
\be\lbl{proc->proc at 0}
\ssum_{1,2}\dep
\cZ^4 \,
\EE |a_1(0)|^2\EE|a_2(0)|^2 \EE|a_3(0)|^2 = 
\ssum_{1,2}\dep
\cZ^4 \,
n_1^{(0)}n_2^{(0)}n_3^{(0)}=S^1_s. 
\ee
Our goal in this section is to show that
	\be\lbl{N21-hat}
	\big| \EE  |\Del a^{(1)}_s(\tau)|^2 - 2S^1_s\big| \le C^\#(s) (\nu^{-2} L^{-2} + \tau^2\nu).
	\ee
Due to \eqref{N^2,1_wick} and \eqref{corr_a_in_time},  
\be\lbl{Del aaa}
\EE  |\Del a^{(1)}_s(\tau)|^2 = 2\ssum_{1,2} \dess \int_0^\tau dl \int_0^\tau dl'\,
e^{g_{s123}(l,l',\tau)}
B_{123} h_{123}(l,l'),
\ee
where $B_{123}$ is the function defined in \eqref{B_123 def},   extended from $(\Z^d_L)^3$ to $(\R^d)^3$, and 
$
h_{123}(l,l')=\prod_{j=1}^3\big(e^{-\ga_j|l-l'|}-e^{-\ga_j(2T+l+l')}\big)
$
 (also viewed as a function on $(\R^d)^3$). Let  $f_{123}$ denotes the increment of the function $h_{123}$, that is
$$
  f_{123}(l,l')=h_{123}(l,l')-h_{123}(0,0).
$$
It is straightforward to see that
\be\lbl{f_123 est}
|\p^{\al}_{s_1,s_2,s_3} f_{123}(l,l')| \le C_{|\al|} \lan (s_1,s_2,s_3) \ran^{m_{|\al|}}\, \tau\qquad \forall |\al|\geq 0,
\ee
for appropriate constants $C_k,m_k>0$, uniformly in $0\leq l,l'\leq \tau$.
Since $n_1^{(0)}n_2^{(0)}n_3^{(0)}=B_{123}h_{123}(0,0)$, from \eqref{Del aaa} and \eqref{proc->proc at 0} we see that 
$
\EE  |\Del a^{(1)}_s|^2  -  2S^1_s = 2S_s^{\Delta} 
$ with 
$$
S_s^{\Delta} = \ssum_{1,2} \dess \int_0^\tau dl \int_0^\tau dl'\,
e^{g_{s123}(l,l',\tau)} B_{123} f_{123}(l,l').
$$
Now in the expression above we replace $\ssum_{1,2}$
by $\int ds_1\,ds_2$ and denote the obtained integral by $I_s^\Delta$,
\begin{equation}\lbl{I^Del_s}
\begin{split}
I_s^\Delta =  \int_0^\tau dl &\int_0^\tau dl' \int_{\R^{2d}}ds_1\, ds_2 \,  
e^{g_{s123}(l,l',\tau)}
\\
&B_{123}
f_{123}(l,l')
, \qquad s_3= s_1+s_2-s.
\end{split}
\end{equation}
Since $B_{123}$ with $s_3=s-s_1-s_2$ is a Schwartz function of $s,s_1,s_2$ and the function $f_{123}$ satisfies \eqref{f_123 est}, then Theorem~\ref{t_sum_integral} applies and we find
\be\label{f_1}
|S_s^\Delta - I_s^\Delta| \le C^\#(s) \nu^{-2}L^{-2}.
\ee
To establish \eqref{N21-hat} it remains to prove that 
$|I_s^\Delta|\le C^\#(s)\tau^2\nu$. To do that we divide the 
external integral (over $dldl'$) in  \eqref{I^Del_s} to two integrals:

\smallskip

\noindent {\it Integral over $|l-l'|\le\nu$}. In view of \eqref{f_123 est} with $\al=0$, in this case  
the internal integral (over $ds_1ds_2$)  is bounded by $ C^\#(s)\tau$, so $I_s^\Delta$ is bounded by $C_1^\#(s)\tau^2\nu$.

\smallskip

\noindent {\it Integral over $|l-l'|\ge\nu$}. 
 If $\tau<\nu$ than this integral vanishes, so we assume that $\tau\geq\nu$.
Since $\oms = 2(s_1-s)\cdot(s-s_2)$ is a non-degenerate quadratic form  (with respect to the variable $z=(s_1-s,s_2-s)\in\R^{2d}$), for any $l,l'$ from the considered domain the integral over  $ds_1ds_2=dz$  in \eqref{I^Del_s} has the form \eqref{I_la} with $\nu:=\nu |l-l'|^{-1}\leq 1$ and $n=2d$. 
In view of \eqref{f_123 est}, estimate
\eqref{stph_est1} together with \eqref{sob} implies that this integral is
 bounded by $C^\#(s) \tau \nu^d |l-l'|^{-d}$. So  
\begin{equation*}
\begin{split}
|I^\Del_s|\leq C^\#(s) \tau  \nu^d\int_0^\tau dl& \int_0^\tau dl'\, |l-l'|^{-d}\chi_{\{|l-l'| \ge\nu\}}\\
&\le C_1^\#(s) \tau^2  \nu^d  \int_\nu^\tau
x^{-d}\,dx\le C_2^\#(s)\tau^2 \nu.
\end{split}
\end{equation*}

We saw that  $|I_s^\Delta|\le C^\#(s)\tau^2 \nu$. This relation and \eqref{f_1} imply \eqref{N21-hat}.

\subsection{The second term of $Q_s$}  \label{ss_3}
To study the term  $2\Re N_s$ we  use the same strategy as above. Namely, 
expressing  in \eqref{NN22}  the function $3\cY_s^{sym}$ via   $\cY_s$, we write $N_s$ as 
$
N_s= N^1_s + 2N_s^2, \; s\in\R^d, 
$
where
\begin{align*}
	N^1_s&=i\,\EE \Big(\bar a_s(\tau) \int_0^\tau e^{-\ga_s(\tau-l)} 
	\cY_s(a,a,\Del a^{(1)};\nu^{-1}l)\, dl\Big),
	\\
	N^2_s&=i\,\EE \Big(\bar a_s(\tau) \int_0^\tau e^{-\ga_s(\tau-l)} 
	\cY_s(\Del a^{(1)},a,a;\nu^{-1}l)\, dl
	\Big).
\end{align*}
We will show that the terms $2\Re N_s^1$ and  $4\Re N_s^2$ 
can be approximated by the second and the third sums from \eqref{ZZ_s}.

{\it Term $N_s^1$.} Let us start with the term $N_s^1$: writing explicitly the function $\cY_s$ and then $\Del \bar a_3^{(1)}$ we get
\begin{align}\non
	N^1_s&=
	i\,L^{-d}\sum_{1,2}\dep \int_0^\tau dl\, e^{-\ga_s(\tau-l)+i\nu^{-1}\oms l} 
	\EE \big(a_1(l)a_2(l) \Del\bar a_3^{(1)}(l)\bar a_s(\tau)\big)
	\\\lbl{N_s-odin}
	&=L^{-2d}\sum_{1,2}\sum_{1',2'}\dep\de'^{1'2'}_{3'3}
	\int_0^\tau dl\,\int_{0}^l dl'\, 
	e^{-\ga_s(\tau-l)+i\nu^{-1}\oms l} e^{-\ga_3(l-l')-i\nu^{-1}\om^{1'2'}_{3'3} l'} 
	\\\non
	&{}\qu \qu \qu \times \EE \big(a_1(l)a_2(l)
	\bar a_{1'}(l')\bar a_{2'}(l') a_{3'}(l')
	\bar a_s(\tau)\big).
\end{align}
By the Wick theorem, we need to take the summation only over $s_{1'},s_{2'},s_{3'}$ satisfying
$s_{1'}=s_1$, $s_{2'}=s_2$, $s_{3'}=s$ or $s_{1'}=s_2$, $s_{2'}=s_1$, $s_{3'}=s$.
Since in the both cases we get 
$\de'^{1'2'}_{3'3}=\dep$
and 
$\om^{1'2'}_{3'3}=\oms$,
we find
\begin{align}\lbl{N221-start}
	N^1_s=
	2&\ssum_{1,2}\dep
	\int_0^\tau dl\,\int_{0}^l dl'\, 
	e^{-\ga_s(\tau-l)-\ga_3(l-l')+i\nu^{-1}\oms (l-l')}  
	\\\non
	&\times
	\EE a_1(l)\bar a_{1}(l')\,
	\EE a_2(l)\bar a_{2}(l')\,
	\EE a_s(l')\bar a_{s}(\tau).
\end{align}
Replacing  in \eqref{N221-start} the processes $a_k^{(0)}(l)$, $a_k^{(0)}(l')$ and $a_k^{(0)}(\tau)$ by their value at zero, 
we get instead of $N_s^1$  the sum
$$
\hat N^1_s:=2 \ssum_{1,2}
\dep
T_{12}\,
n_1^{(0)} n_2^{(0)} n_s^{(0)},
$$
where $T_{12}$ denotes the integral of the exponent above
\be\lbl{T_12}
T_{12}:=
\int_0^\tau dl\,\int_0^l dl'\, 
e^{-\ga_s(\tau-l)-\ga_3(l-l')+i\nu^{-1}\oms (l-l')}.
\ee
Arguing as in Section \ref{ss_2}  we find
\be\lbl{N221-hat}
| N_s^1 -  \hat  N_s^1| \le C^\#(s) (\nu^{-2} L^{-2} + \tau^2\nu).
\ee
The term $2\Re \hat N_s^1$ is not equal to the second sum from \eqref{ZZ_s} yet. 
To extract the latter from the former we
write the integral over $dl'$ in \eqref{T_12} as 
$\int_0^l = \int_{-\infty}^l -\int_{-\infty}^0 $. We get 
\be\lbl{Tbeau}
T_{12}=\cT_{12}+\cT_{12}^r,
\ee
where
\be\non
\cT_{12}=
 e^{-\ga_s\tau} \int_0^\tau dl \, e^{l(\ga_s -\ga_3 +i\nu^{-1}\oms)} \int_{-\infty}^l dl' e^{l'(\ga_3 - i\nu^{-1}\oms)}=
\frac{1-  e^{-\ga_s\tau}} {(\ga_3 -i \nu^{-1}\oms) \ga_s}
\ee
and
\be\lbl{T_12^r}
\cT_{12}^r=-\int_0^\tau dl\,\int_{-\infty}^0 dl'\, 
e^{-\ga_s(\tau-l)-\ga_3(l-l')+i\nu^{-1}\oms (l-l')}.
\ee
Computing the real part of $\cT_{12}$, we find that 
\be\lbl{Re T}
2\Re\cT_{12}=2\, \frac{1-e^{-\ga_s \tau}}{\ga_s}
  \frac{ \ga_3}{\ga_3^2+(\nu^{-1}\om^{12}_{3s})^2}
=\cZ^3,
\ee
where $\cZ^3$ is defined in \eqref{Z-edi}.
On the other hand,
 \be\label{n_esti}
\left| \hat N_s^1 -2 \ssum_{1,2}
	\dep \cT_{12}
	\,
	n_1^{(0)} n_2^{(0)} n_s^{(0)}\right| = 2\left|
	 \ssum_{1,2}
\dep
\cT^r_{12}\,
n_1^{(0)} n_2^{(0)} n_s^{(0)}\right|
 \ee
by \eqref{T_12^r} equals to
 \be\lbl{T_12^r+}
2 \Big| \ssum_{1,2} \dep  
  \int_0^\tau dl\,\int_{-\infty}^0 dl'\,  
  e^{l' + i\nu ^{-1} \oms (l-l')} F_s(s_1,s_2,l,l',\tau)
 \Big|
 \ee
 with $F_s=e^{-\ga_s(\tau-l)-\ga_3 l + (\ga_3-1)l'} \,n_1^{(0)} n_2^{(0)} n_s^{(0)}$.
Since by \eqref{n_s_0} $n_r^{(0)}=n_r^{(0)}(0)$ is a Schwartz function of $r\in\R^d$ and $\ga_3-1\geq 0$, then Theorem~\ref{c_6.2} applies and 
implies that \eqref{T_12^r+}
 is bounded by $ C^\#(s) \big(\nu^2\chi_d(\nu) + \nu^{-2}L^{-2}\big)$, 
 so  the l.h.s. of \eqref{n_esti} also is. 

Thus, due  to  \eqref{N221-hat} and  \eqref{Re T}
 we arrive at the relation
\be\lbl{N221}
\Big| 2\Re N_s^1-
2\ssum_{1,2}
\dep \cZ^3
\,
n_1^{(0)} n_2^{(0)} n_s^{(0)} \Big|
\leq  C^{\#}(s)(\nu^2\chi_d(\nu)+\nu^{-2}L^{-2}+\tau^2\nu).
\ee

{\it Term $N_s^2$.} Finally, we study the term $N_s^2$ by  literally repeating the argument  we have applied to  $N_s^1$. 
We find that
\begin{align}\non
	N_s^2&=
	i\,L^{-d}\sum_{1,2}\dep \int_0^\tau dl\, e^{-\ga_s(\tau-l)+i\nu^{-1}\oms l} 
	\EE \Del \big(a_1^{(1)}(l)a_2(l)\bar a_3(l)\bar a_s(\tau)\big)
	\\\lbl{N_s-dva}
	&=-L^{-2d}\sum_{1,2}\sum_{1',2'}\dep\de'^{1'2'}_{3'1}
	\int_0^\tau dl\,\int_0^l dl'\, 
	e^{-\ga_s(\tau-l)+i\nu^{-1}\oms l} e^{-\ga_1(l-l')+i\nu^{-1}\om^{1'2'}_{3'1} l'} 
	\\\non
	& \qu\times\EE \big(a_{1'}(l')a_{2'}(l')
	\bar a_{3'}(l')a_{2}(l) \bar a_{3}(l)
	\bar a_s(\tau)\big).
\end{align}
By the Wick theorem we should  take  summation either 
under the condition
$s_{1'}=s_3$, $s_{2'}=s$, $s_{3'}=s_2$ or $s_{1'}=s$, $s_{2'}=s_3$, $s_{3'}=s_2$.
Since in both cases 
$\de'^{1'2'}_{3'1}=\dep$
and 
$\om^{1'2'}_{3'1}=-\oms$,
then 
\begin{align}\non
	N_s^2=
	-2&\ssum_{1,2}\dep
	\int_0^\tau dl\,\int_{0}^l dl'\, 
	e^{-\ga_s(\tau-l)-\ga_1(l-l')+i\nu^{-1}\oms (l-l')} 
	\\\lbl{N222-start}
	&\times
	\EE a_2(l)\bar a_{2}(l')\,
	\EE a_3(l')\bar a_{3}(l)\,
	\EE a_s(l')\bar a_{s}(\tau).
\end{align}
We set 
$$
M_{12}:=
\int_0^\tau dl\,\int_0^l dl'\, 
e^{-\ga_s(\tau-l)-\ga_1(l-l')+i\nu^{-1}\oms (l-l')}
$$
  and note that $M_{12}$ equals to $T_{12}$, defined in \eqref{T_12}, if replace $\ga_3$ by $\ga_1$.
Then, as in \eqref{Tbeau}-\eqref{Re T} we get $M_{12}=\cM_{12}+\cM^r_{12}$, where
 $2\Re \cM_{12}=\cZ^1$.
 Arguing again as in Section~\ref{ss_2}, we replace in \eqref{N222-start} the processes $a_k^{(0)}(l)$, $a_k^{(0)}(l')$ and $a_k^{(0)}(\tau)$ by their value at zero $a_k^{(0)}(0)$. Next, using Theorem~\ref{c_6.2} we show that the input to the resulting sum of the term corresponding to $\cM^r_{12}$ is small. 
 Finally, similarly to \eqref{N221}, we get
\be\non
\Big| 4\Re N_s^2
+4 \ssum_{1,2}
\dep \cZ^1
\,
n_2^{(0)} n_3^{(0)}n_s^{(0)}\Big| 
\leq  C^{\#}(s)(\nu^2\chi_d(\nu)+\nu^{-2}L^{-2}+\tau^2\nu),
\ee
 where the sign "+"  in the l.h.s. is due to the sign "-"  in \eqref{N222-start} (cf. \eqref{N221-start}).

\subsection{The last two terms of $Q_s$}\label{ss_1}

In Sections~\ref{ss_2} and \ref{ss_3} we have seen that the first two terms of $Q_s$ approximate the sum $\cZ_s$. So to get assertion of the proposition it suffices to show that
\be\lbl{N^r-est}
|\EE \Del a^{(1)}_s(\tau) \bar c^{(1)}_s (\tau) |,\,|\wt N_s|\leq  C^\#(s) \big(\nu^2\chi_d(\nu) + \nu^{-2}L^{-2}\big).
\ee
We have
$$
\EE \Del a^{(1)}_s \bar c^{(1)}_s(\tau)
=\EE \int_{0}^\tau e^{-\ga_s(\tau-l)} \cY_s(a,\nu^{-1}l)\,dl 
\int_{-T}^0 e^{-\ga_s(\tau-l')} \ov{\cY_s(a,\nu^{-1}l')}\,dl'.
$$
This expression coincides with \eqref{E Del a} in which the integral $\int_0^\tau dl'$ is replaced by  $\int_{-T}^0 dl'$.
Then, $\EE \Del a^{(1)}_s \bar c^{(1)}_s(\tau)$ has the form \eqref{N^2,1_wick} where the same replacement is done.
Since the correlations $\EE a_j(l)\bar a_j(l')$ are given by \eqref{corr_a_in_time}, Theorem~\ref{c_6.2} applies  (see a discussion after its formulation)
 and we get the first inequality from \eqref{N^r-est}.

Expressing the function $\cY^{sym}_s$ through $\cY_s$, for the term $\wt N_s$  we find
$\wt N_s=\wt N^1_s+2\wt N^2_s$,
where
$$
\wt N^1_s=i\EE \Big(\bar a_s(\tau) \int_0^\tau e^{-\ga_s(\tau-l)} 
\cY_s(a,a, c^{(1)};\nu^{-1}l)\, dl\Big)
$$
and 
$$
\wt N^2_s=i\EE \Big(\bar a_s(\tau) \int_0^\tau e^{-\ga_s(\tau-l)} 
\cY_s(c^{(1)},a,a;\nu^{-1}l)\, dl
\Big).
$$
Expressing $c^{(1)}$ through $a^{(0)}$, we see that the terms $\wt N^1_s$ and $\wt N^2_s$
have the forms \eqref{N_s-odin} and \eqref{N_s-dva} correspondingly, where the integral $\int_0^ldl'$ is replaced by $\int_{-T}^0dl'$. 
Then the second inequality from \eqref{N^r-est} again follows from Theorem~\ref{c_6.2}.
\qed

\section{Energy spectra of quasisolutions and  wave kinetic equation}\label{s_WKE}

Everywhere in this section in addition to \eqref{r_scale} we assume that 
\be\label{L_assump}
L\ge  \nu^{-2-\epsilon}, \qquad \epsilon>0.
\ee
Let us consider equation \eqref{w_k_e} with $f(s) =2{b}(s)^2$ for $\tau\ge-T$:
\be\label{WKE}
\dot \zz(\tau ,s) = -\Lc \zz+ \eps K(\zz) + 2{b}(s)^2, \quad s \in \R^d, 
\ee 
 with the initial condition
\be\label{WK}
 \zz(-T) =0.
\ee
 Fix any $r_0>d$. Denoting 
$
\eps(b) = \eps_*\big(2\| b^2(s)\|_{ r_0},  r_0 \big)>0,
$
where $\eps_*$ is the constant from Theorem \ref{t_kin_eq}, we get from Corollary~\ref{c_kin_eq} that for 
$0\le \eps\le\eps(b)$ a  solution $\zz$ of \eqref{WKE}, \eqref{WK}  exists, is unique and $\zz\in X_r$ for each $r$,
	so that
 \be\lbl{bound2-r}
 |\zz(\tau)|_r\leq C_r, \quad \forall  r,
 \ee 
 uniformly in $\tau$.
Our goal is to compare $n_s^{\le2}(\tau) $ (extended to a function on  $ \R^d =\{s\}$) 
 with  $\zz$ in the spaces $\Cc_r(\R^d)$, 
 and we recall that, due to  
\eqref{n_est}, 
 \be\lbl{bound1-r}
 |n^{\leq 2}(\tau)|_r\leq C_r \quad \forall r,
 \ee 
 uniformly in $\nu, L, \tau$.
 The constants $C_r$ below vary from formula to formula and we often 
skip the dependence on $r$ writing simply $C$.

Since both curves  $n^{\le2}$ and   $\zz$
vanish at $-T$, then their difference $w= n^{\le2} - \zz$ also does. Let us estimate 
 the increments of $w$.

\begin{proposition}\label{p_increment_esti} If $\rho, \nu, L$ satisfy  \eqref{r_scale},
\eqref{L_assump},  $r>d$  and    $\eps \le C_{1r}^{-1} \le \eps(b)$,  then 
\be\label{z9}
|w(\tau' + \tau)
|_r \le (1-\tau/2) |w(\tau')|_r    +C_{2r}\tau W \quad \forall \tau'\ge -T, \  0\le \tau\le1/2,
\ee
where $W= \eps\big( \tau +\eps  +
 \tau^{-1} \nu^{1-\aleph_d} 
	+ \tau^{-1}\nu^{-3}L^{-2}\big)$
			and $\aleph_d$ is defined as in Theorem~\ref{t_singint}. The constants $C_{1r},\, C_{2r}$ 
 do not depend on $\tau,\tau',T$ and 
 	$\nu, L, \eps$,	 but $C_{2r}$ depends on $\aleph_d$.
 \end{proposition}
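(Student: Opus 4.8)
The plan is to derive for $\zz$ a representation parallel to the one that Theorem~\ref{th:kinetic_for_n^2} gives for $n^{\le2}$, and then to subtract. Applying Theorem~\ref{th:kinetic_for_n^2} on the interval $[\tau',\tau'+\tau]$ — legitimate by the remark following that theorem, since $\tau\mapsto(A_s(\tau'+\tau))$ is again a quasisolution with $T$ replaced by $T+\tau'$ — gives
\[
n^{\le2}(\tau'+\tau)=e^{-\tau\Lc}n^{\le2}(\tau')+2\int_0^\tau e^{-t\Lc}b^2\,dt+\eps K^\tau\!\big(n^{\le2}(\tau')\big)+\Rc,
\]
with $|\Rc|_r\le C_{r,\aleph_d}\,\eps(\nu^{1-\aleph_d}+\nu^{-3}L^{-2}+\tau^2+\eps\tau)$ for every $r$. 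On the kinetic side, from \eqref{WKE} one has $\frac{d}{dt}(e^{t\Lc}\zz)=e^{t\Lc}(\eps K(\zz)+2b^2)$, so the Duhamel formula plus the change of variable $t\mapsto\tau-t$ yield
\[
\zz(\tau'+\tau)=e^{-\tau\Lc}\zz(\tau')+2\int_0^\tau e^{-t\Lc}b^2\,dt+\eps\int_0^\tau e^{-t\Lc}K\big(\zz(\tau'+\tau-t)\big)\,dt .
\]
I would replace $\zz(\tau'+\tau-t)$ by $\zz(\tau')$ in the last integral. To bound the error I first record that $\zz$ is Lipschitz in time in every $\Cc_r$: indeed $\dot\zz=-\Lc\zz+\eps K(\zz)+2b^2$, with $|\Lc\zz|_r\le C|\zz|_{r+2r_*}$ by \eqref{operL} and $|K(\zz)|_r\le|K(\zz)|_{r+1}\le C_r|\zz|_r^3$ by Theorem~\ref{t_kin_int}; combined with the a priori bound \eqref{bound2-r} in all $\Cc_{r'}$ this gives $|\dot\zz(\tau)|_r\le C_r$, hence $|\zz(a)-\zz(b)|_r\le C_r|a-b|$. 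Then Corollary~\ref{c_7.1} (summed over $K_1,\dots,K_4$) together with \eqref{normL} bounds the error by $\eps C_r\int_0^\tau|\zz(\tau'+\tau-t)-\zz(\tau')|_r\,dt\le C_r\eps\tau^2$ in $\Cc_{r+1}$, so
\[
\zz(\tau'+\tau)=e^{-\tau\Lc}\zz(\tau')+2\int_0^\tau e^{-t\Lc}b^2\,dt+\eps K^\tau\!\big(\zz(\tau')\big)+\Rc',\qquad|\Rc'|_r\le C_r\eps\tau^2 .
\]

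Subtracting the two identities, with $w=n^{\le2}-\zz$ (extended to a function of $s\in\R^d$), gives
\[
w(\tau'+\tau)=e^{-\tau\Lc}w(\tau')+\eps\big(K^\tau(n^{\le2}(\tau'))-K^\tau(\zz(\tau'))\big)+(\Rc-\Rc'),
\]
which I then estimate in $\Cc_r$. By \eqref{normL}, $|e^{-\tau\Lc}w(\tau')|_r\le e^{-2\tau}|w(\tau')|_r\le(1-\tau)|w(\tau')|_r$ for $0\le\tau\le1/2$ (since $e^{-2\tau}\le1-2\tau+2\tau^2\le1-\tau$ there). For the kinetic term, $K^\tau u-K^\tau v=\int_0^\tau e^{-t\Lc}(K(u)-K(v))\,dt$, so Corollary~\ref{c_7.1} with the uniform bounds \eqref{bound1-r} and \eqref{bound2-r} on $|n^{\le2}(\tau')|_r$ and $|\zz(\tau')|_r$ gives $|K^\tau(n^{\le2}(\tau'))-K^\tau(\zz(\tau'))|_r\le|\cdots|_{r+1}\le C_r\tau\,|w(\tau')|_r$. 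Hence
\[
|w(\tau'+\tau)|_r\le(1-\tau+\eps C_r\tau)\,|w(\tau')|_r+|\Rc|_r+|\Rc'|_r .
\]
Choosing $C_{1r}\ge\max(2C_r,\eps(b)^{-1})$, the hypothesis $\eps\le C_{1r}^{-1}$ forces $\eps C_r\le1/2$, so $1-\tau+\eps C_r\tau\le1-\tau/2$, and it also ensures $\eps\le\eps(b)$ so that $\zz$ exists (Corollary~\ref{c_kin_eq}). Finally $|\Rc|_r+|\Rc'|_r\le C_{2r}\,\eps(\tau^2+\eps\tau+\nu^{1-\aleph_d}+\nu^{-3}L^{-2})=C_{2r}\,\tau W$ with $W$ as in the statement, which is exactly \eqref{z9}.

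The one step that needs genuine care is the time-Lipschitz estimate for $\zz$: the operator $\Lc$ loses $2r_*$ derivatives, so $|\Lc\zz|_r$ can only be controlled by $|\zz|_{r+2r_*}$, which is finite precisely because $\zz\in X_{r'}$ for every $r'$ by Corollary~\ref{c_kin_eq}. The rest is routine — the Duhamel manipulation, the two uses of Corollary~\ref{c_7.1}, and the bookkeeping of the powers of $\tau$ and $\eps$ so that $|\Rc|_r+|\Rc'|_r$ matches $\tau W$ on the nose and the contraction constant comes out as $1-\tau/2$.
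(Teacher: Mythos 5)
Your proposal is correct and follows essentially the same route as the paper: both subtract the Duhamel representation of $\zz$ from the increment formula \eqref{n_increment}, split the kinetic discrepancy into a Lipschitz-in-$w$ part (handled by Corollary~\ref{c_7.1} with the uniform bounds \eqref{bound1-r}, \eqref{bound2-r}) and a ``freezing $\zz$ at $\tau'$'' error of size $\eps\tau^2$ controlled by $|\dot\zz|_r\le C_r$, then absorb $\eps C_r\tau|w(\tau')|_r$ into the contraction factor. Your explicit remark that $|\Lc\zz|_r$ must be controlled via $\zz\in X_{r+2r_*}$ (Corollary~\ref{c_kin_eq}) is a point the paper passes over silently, but the argument is otherwise identical.
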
 

\begin{proof}
The calculation below does not depend on $\tau'\ge -T$ and to simplify presentation we take $\tau'=0$. Then 
$$
\zz(\tau) = \etl \zz(0) +2 \int_0^\tau e^{-t\Lc} {b}^2\,dt + \eps   \int_0^\tau \etlt K(\zz(t))\,dt.
$$
From here  and  
\eqref{n_increment} we have
\begin{equation}\label{00}
\begin{split}
w(\tau) &= \etl w(0) + \eps\Delta +
\Rc, 
\end{split}
\end{equation}
where $ \Delta =  \left[K^\tau(n^{\le2} (0)) - \int_0^\tau \etlt K(\zz(t))\,dt\right].$
Now we will  estimate $\Delta$. 
By \eqref{I_tau}, it is made by the sum of four terms
$$
\ \int_0^{\tau} e^{-t\Lc} K_j(n^{\le2}(0))\,dt  - \int_0^\tau \etlt K_j (\zz(t))\,dt, \quad 1\le j\le4.
$$
 Their estimating is very similar. Consider, for example, the term 
with $j=1$ and write it as 
\begin{equation*}
\begin{split}
& \int_0^\tau e^{-(\tau-t)\Lc} \big( K_1(\nd) - K_1( \zz(0))\big)\,dt\\
&+  \int_0^\tau e^{-(\tau-t)\Lc} \big( K_1(\zz(0)) - K_1( \zz(t))\big)\,dt =: \Delta^1 +\Delta^2. 
\end{split}
\end{equation*}
In view of Corollary \ref{c_7.1} and \eqref{normL}, 
$|\Delta^1|_{r} \le C\tau  |w(0)|_{r}$,
where  $C$ depends on the norms 
$|n^{\leq 2}(0)|_{r}$ and $ |\zz(0)|_{r}$. 
By \eqref{bound1-r} and \eqref{bound2-r} 
the two 
norms are bounded by constants,  so the constant $C=C_r$ is absolute.
Consider $\Delta^2$. 
From \eqref{bound2-r} and the equation on $\zz$ 
 we have  that $|\dot\zz|_r \le C_r$,
  so  $|\zz(0) - \zz(\tau)|_r \le C_r\tau$. Hence, by Corollary~\ref{c_7.1} and \eqref{normL},    $|\Delta^2|_r \le C_r\tau^2$. 
 
 We have seen that 
\be\label{z6}
|\Delta |_r \le C_{1r}' \tau (\tau + |w(0)|_r). 
\ee
Since $0\leq \tau\leq 1/2$, then by \eqref{normL} 
we  have $|e^{-\tau\Lc}w(0)|_r \leq (1-\tau)|w(0)|_r$.
So  using the bound \eqref{R_s} for $|\Rc|_r$, 
by \eqref{00} we get  
$$
|w(\tau)|_r \le (1-\tau) |w(0)|_r  +C_{1r}' \eps\tau |w(0)|_r  +C_{2r}\tau W
$$
with $W$ 	as in the proposition. This relation implies the assertion with  
$
C_{1r}^{-1} = \min({(2C_{1r}')}^{-1}, \eps(b))
$.
\end{proof}

Denote $\tilde w(\tau) = e^{-\tau\Lc} w(0)$. Then $\frac{d}{d\tau} \tilde w(\tau) =-\Lc \tilde w(\tau)$, so 
$\ 
|\tilde w(\tau)- w(0)|_r \le \int_0^\tau |\Lc \wt w(t)|_r\,dt\leq C\tau, 
$
where in the latter inequality we  used 
	\eqref{bound2-r} and \eqref{bound1-r}.
Note also that, due to \eqref{R_s}, 
$
|\Rc|_r\leq C\tau
$
provided that  \eqref{L_assump} holds and  $\nu^{1-\aleph_d}\leq C_1\tau$.
These relations, equality \eqref{00}  and estimate \eqref{z6} on the term $\Delta$  imply

\begin{corollary}\label{c_easy_esti}
 	Assume that  $\nu^{1-\aleph_d}\leq C\tau$ 
 for some constant $C>0$. Then, under the assumptions of Proposition \ref{p_increment_esti}, 
 \be\label{z10}
 |w(\tau' +\tau) -w(\tau')|_r \le 
  C_{r,\aleph_d}\tau , \quad \forall\, \tau'\ge-T. 
 \ee
 \end{corollary}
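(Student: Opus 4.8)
The plan is to combine the Duhamel-type identity \eqref{00} with the estimates already assembled just above the statement, together with the uniform bounds \eqref{bound1-r} and \eqref{bound2-r}. By the shift remark following Theorem~\ref{th:kinetic_for_n^2}, identity \eqref{00} remains valid with the time $0$ replaced by an arbitrary $\tau'\ge-T$, so that
\be\non
w(\tau'+\tau) = \etl w(\tau') + \eps\Delta + \Rc,
\ee
where now $\Delta$ and $\Rc$ refer to the increment from $\tau'$ to $\tau'+\tau$. Subtracting $w(\tau')$ from both sides gives
\be\non
w(\tau'+\tau) - w(\tau') = \big( \etl w(\tau') - w(\tau') \big) + \eps\Delta + \Rc,
\ee
so it suffices to bound each of the three terms on the right by $C_{r,\aleph_d}\tau$.

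First I would treat the semigroup term. Writing $\etl w(\tau') - w(\tau') = -\int_0^\tau \Lc e^{-t\Lc} w(\tau')\,dt$ and using that $\Lc$ is multiplication by $2\ga_s$ with $\ga_s\le C\lan s\ran^{2r_*}$, one has $|\Lc u|_r\le C|u|_{r+2r_*}$; combining this with the contraction property \eqref{normL} of $e^{-t\Lc}$ in $\Cc_{r+2r_*}(\R^d)$ and with \eqref{bound1-r}, \eqref{bound2-r} applied at the index $r+2r_*$ yields $|\etl w(\tau')-w(\tau')|_r\le C_r\tau$. Next, for $\Rc$ I would invoke \eqref{R_s}: under \eqref{L_assump} one has $\nu^{-3}L^{-2}\le\nu^{1+2\epsilon}\le\nu\le\nu^{1-\aleph_d}$, and since $0\le\tau\le1/2$ and $\eps\le1$ we also have $\tau^2\le\tau$ and $\eps\tau\le\tau$; together with the hypothesis $\nu^{1-\aleph_d}\le C\tau$ this makes every summand in the bracket of \eqref{R_s} at most $C\tau$, so that $|\Rc|_r\le C_{r,\aleph_d}\tau$. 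Finally, for $\eps\Delta$ I would use the bound \eqref{z6}, namely $|\Delta|_r\le C_{1r}'\tau(\tau+|w(\tau')|_r)$, together with the uniform estimate $|w(\tau')|_r\le|n^{\le2}(\tau')|_r+|\zz(\tau')|_r\le C_r$ coming from \eqref{bound1-r}, \eqref{bound2-r}; as $\eps\le1$ this gives $\eps|\Delta|_r\le C_r\tau$. Adding the three bounds establishes \eqref{z10}.

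There is no genuine obstacle here: the corollary is essentially an assembly of inequalities proved earlier in Section~\ref{s_WKE} and Section~\ref{sec:quasisol}. The only point requiring a little attention is the semigroup term, where one must remember that the generator $\Lc$ costs $2r_*$ powers of decay; this is harmless precisely because the a priori bounds \eqref{bound1-r}, \eqref{bound2-r} hold in every space $\Cc_r(\R^d)$. One should also note that, unlike the per-step estimate with $W$ in Proposition~\ref{p_increment_esti}, no negative power of $\tau$ enters, since here $\Rc$ is estimated directly through \eqref{R_s} rather than through the recursion \eqref{z9}.
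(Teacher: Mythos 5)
Your proposal is correct and is essentially the paper's own argument: the authors likewise split the increment via identity \eqref{00} into the semigroup term (bounded by $C\tau$ through $\frac{d}{d\tau}e^{-\tau\Lc}w = -\Lc e^{-\tau\Lc}w$ and the uniform bounds \eqref{bound1-r}, \eqref{bound2-r}), the remainder $\Rc$ (bounded by $C\tau$ via \eqref{R_s}, \eqref{L_assump} and the hypothesis $\nu^{1-\aleph_d}\le C\tau$), and $\eps\Delta$ (bounded via \eqref{z6}). Your extra remark about the loss of $2r_*$ powers in applying $\Lc$ is a correct and harmless elaboration of the same point.
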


 Now we state and prove the main result of our work.  Let us again   fix any $r_0>d$ and set 
 $$
 \eps_r= C^{-1}_{1\, \max(r,  r_0)}, 
 $$ 
  where $C_{1r}$ is the constant from Proposition~\ref{p_increment_esti}.
  We recall that the energy spectrum $n_s(\tau)=n_s(\tau; \nu, L)$ of a quasisolution $A(\tau)$ 
 naturally extends to a Schwartz function of $s$, see \eqref{schw_ext}. 
 
 \begin{theorem}\label{t_principal}
 Let $\nu$ and $L$ satisfy \eqref{L_assump}, let 
  $A(\tau)$ be the quasisolution \eqref{2nd trunc},
 corresponding to $\rho^2 =\eps\nu^{-1}$,  let  $n_s(\tau)=\EE|A_s(\tau)|^2$ be its energy spectrum and 
  $\zz(\tau,s)$ be the  (unique) solution of \eqref{WKE}, \eqref{WK}.   Then for any $r$ and for  $0<\eps  \le \eps_r$ 
   there exists  $\nu_\eps(r)>0$ such that for $\nu\le\nu_\eps(r)$ we have 
 \be\label{main_est}
 | n(\tau) - \zz(\tau)|_r \le C_r \eps^2\quad \forall\, \tau\ge -T, \quad \forall\, r. 
 \ee
  The constant  $ C_r$ does not depend on  $\tau$ and $ T$. 
  \end{theorem}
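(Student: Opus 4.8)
The plan is to reduce the assertion to the increment estimate of Proposition~\ref{p_increment_esti} and then run a Gronwall-type iteration over the time-step $\tau$. First I would recall that, by Corollary~\ref{c:a-goa} and the reductions in Section~\ref{s_series}, it suffices to prove \eqref{main_est} with $n_s(\tau)=\EE|A_s(\tau)|^2$ the energy spectrum of the quasisolution of the \emph{approximate} equation \eqref{ku4}; and since $\rho^2 n_s^{(2)}$, $\rho^3 n_s^{(3)}$, $\rho^4 n_s^{(4)}$ are all $O(\eps) C^\#(s)$ or smaller by Proposition~\ref{l:reminder}, and $n_s^{\geq 3}$ is $O(\eps^2)C^\#(s)$ by \eqref{n_s^3-est}, the difference $n_s - n_s^{\le 2}$ is already $O(\eps^2)C^\#(s)$. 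Hence it is enough to bound $w=n^{\le2}-\zz$ in $\Cc_r$.

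Next I would fix the time-step. Put $\theta=\sqrt\eps$ (so $0<\theta\le1/2$ for $\eps$ small), and choose $\nu_\eps(r)$ so small that for $\nu\le\nu_\eps(r)$ the quantity $W$ in Proposition~\ref{p_increment_esti}, evaluated at this $\theta$, satisfies $W \le C\eps^{3/2}$; this is where the hypothesis $L\ge\nu^{-2-\epsilon}$ enters, since then $\nu^{-3}L^{-2}\le\nu^{2\epsilon-1}\to0$, so the terms $\theta^{-1}\nu^{1-\aleph_d}$ and $\theta^{-1}\nu^{-3}L^{-2}$ are $\ll\eps$ once $\nu$ is small relative to $\eps$, while $\eps(\theta+\eps)=\eps(\sqrt\eps+\eps)\le 2\eps^{3/2}$. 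Thus for $\eps\le\eps_r$ and $\nu\le\nu_\eps(r)$, estimate \eqref{z9} reads
\be\non
|w(\tau'+\theta)|_r \le (1-\theta/2)\,|w(\tau')|_r + C_{2r}\theta\cdot C\eps^{3/2}, \qquad \forall\,\tau'\ge -T.
\ee
Since $w(-T)=0$, iterating this inequality over the grid $\tau_k=-T+k\theta$ and summing the geometric series gives $|w(\tau_k)|_r \le \frac{C_{2r}C\eps^{3/2}\theta}{1-(1-\theta/2)}=2C_{2r}C\eps^{3/2}$ for every $k$, hence a uniform bound $|w(\tau_k)|_r\le C_r'\eps^{3/2}$. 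To pass from the grid to arbitrary $\tau\ge-T$ I would use Corollary~\ref{c_easy_esti}: writing $\tau=\tau_k+\tau$ with $0\le \tau\le\theta$ and checking that $\nu^{1-\aleph_d}\le C\theta$ holds for $\nu\le\nu_\eps(r)$, I get $|w(\tau)-w(\tau_k)|_r\le C_{r,\aleph_d}\theta = C_{r,\aleph_d}\sqrt\eps$. This only yields $O(\sqrt\eps)$, which is too weak; so I would instead re-examine the off-grid increment more carefully, or — cleaner — refine the argument by taking a single application of \eqref{00} on $[\tau_k,\tau]$: there $|e^{-\tau\Lc}w(\tau_k)|_r\le|w(\tau_k)|_r\le C_r'\eps^{3/2}$, $|\eps\Delta|_r\le \eps C_{1r}'\theta(\theta+|w(\tau_k)|_r)\le C\eps(\sqrt\eps\cdot\sqrt\eps)=C\eps^2$ by \eqref{z6}, and $|\Rc|_r\le C_{r,\aleph_d}\eps(\nu^{1-\aleph_d}+\nu^{-3}L^{-2}+\theta^2+\eps\theta)\le C\eps^2$ for $\nu\le\nu_\eps(r)$ by \eqref{R_s}. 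Adding these, $|w(\tau)|_r\le C_r\eps^2$ for all $\tau\ge-T$, as desired. Finally the claimed bound for general $r$ follows by taking $r\ge\max(r,r_0)$ inside the argument and using the embedding $\Cc_{r'}\subset\Cc_r$ for $r'\ge r$.

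\textbf{The main obstacle} is the bookkeeping of the two competing smallness parameters $\eps$ and $\nu$: Proposition~\ref{p_increment_esti} contracts only at rate $1-\theta/2$ per step of length $\theta$, so the accumulated forcing over $[-T,\tau]$ is $\sim W$, and the only way to reach the target accuracy $\eps^2$ is to choose $\theta=\sqrt\eps$ so that the purely kinetic error $\eps(\theta+\eps)$ is $O(\eps^{3/2})$ \emph{and} to send $\nu\to0$ after $\eps$ is fixed so that $\theta^{-1}\nu^{1-\aleph_d}$ and $\theta^{-1}\nu^{-3}L^{-2}$ become negligible relative to $\eps^{3/2}$ — this is exactly why the statement quantifies ``$0<\eps\le\eps_r$, then $0<\nu\le\nu_\eps(r)$'' in that order, matching \eqref{assumption}. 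The subtlety is that the grid argument alone only gives $\eps^{3/2}$ at grid points and $\sqrt\eps$ off-grid, so one genuinely needs the extra $O(\eps^2)$ gain from a \emph{single} step of the Duhamel identity \eqref{00}, using that $|w(\tau_k)|_r$ is already $O(\eps^{3/2})$ to make the $\eps\Delta$ term $O(\eps^2)$ via \eqref{z6}; carrying this through for all four pieces $K_1,\dots,K_4$ and both limiting regimes of \eqref{assumption} (including the $L=\infty$ case, which follows from Proposition~\ref{L to infty} and the fact that all estimates are uniform) is where the care lies. The $L\to\infty$ statement and the Schwartz-function claim then follow by passing to the limit in \eqref{main_est} using Proposition~\ref{L to infty}, since the right-hand side $C_r\eps^2$ is $L$-independent.
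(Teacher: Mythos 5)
Your overall strategy is the paper's: reduce to bounding $w=n^{\le2}-\zz$, iterate the one-step estimate \eqref{z9} of Proposition~\ref{p_increment_esti} over a uniform grid, and interpolate off the grid. But your choice of time-step $\theta=\sqrt\eps$ is where the argument breaks. The recursion $|w(\tau'+\theta)|_r\le(1-\theta/2)|w(\tau')|_r+C_{2r}\theta W$ stabilises at $\sim W=\eps(\theta+\eps+\theta^{-1}\nu^{1-\aleph_d}+\theta^{-1}\nu^{-3}L^{-2})$, and with $\theta=\sqrt\eps$ this is only $O(\eps^{3/2})$ at the grid points — you compute this correctly. Your attempted repair, a single Duhamel step \eqref{00} on $[\tau_k,\tau]$, cannot upgrade this: the first term of that identity is $e^{-(\tau-\tau_k)\Lc}w(\tau_k)$, whose norm is still $\le|w(\tau_k)|_r\le C\eps^{3/2}$, since the semigroup contracts only by $e^{-2(\tau-\tau_k)}$, which is close to $1$ on an interval of length $\le\sqrt\eps$. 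Your final line "adding these, $|w(\tau)|_r\le C_r\eps^2$" is therefore an arithmetic slip: $C\eps^{3/2}+C\eps^2+C\eps^2=O(\eps^{3/2})$, not $O(\eps^2)$.

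The fix, which is what the paper does, is to take the step $\tau$ in the window $C_0\eps^{-1}\nu^{1-\aleph_d}\le\tau\le C_0^{-1}\eps^2$ (nonempty once $\nu\le\nu_\eps(r)$, since then $\nu^{1-\aleph_d}\le C_0^{-2}\eps^3$). With $\tau\le C_0^{-1}\eps^2$ one has $\eps\tau\le\eps^3$, the unavoidable term $\eps\cdot\eps$ gives $W\le C\eps^2$, and the lower bound on $\tau$ keeps $\eps\tau^{-1}\nu^{1-\aleph_d}\le C_0^{-1}\eps^2$; so the grid values are already $O(\eps^2)$ (the paper runs the recursion via a maximum-over-the-grid argument rather than a geometric series, but either works). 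Crucially, the same smallness $\tau\le C_0^{-1}\eps^2$ makes the crude off-grid interpolation of Corollary~\ref{c_easy_esti}, $|w(\tau'+\sigma)-w(\tau')|_r\le C\sigma\le C\tau$, itself $O(\eps^2)$, so no refined off-grid step is needed at all. In short: the step size must be taken as small as $\eps^2$, not $\sqrt\eps$; with $\sqrt\eps$ the target accuracy $\eps^2$ is unreachable by any single post-processing step.
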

  Note that $n(\tau)$ has the form \eqref{n_s-decomp}, where $n^{(1)}=0$ and the first nontrivial term $\rho^2 n^{(2)}$ is of
  order $\eps$, which is much bigger than the r.h.s. of \eqref{main_est} if $\eps\ll1$. 
   Similarly, $\zz(\tau)$ has the form \eqref{u_present}, where the first nontrivial term $\eps u^1$ is also of the size $\eps$.

  \begin{proof}
  		Since $|\cdot|_{r'}\leq |\cdot|_r$ for $r'\leq r$, then estimate \eqref{main_est} for $r<  r_0$ follows from \eqref{main_est} with $r= r_0	$. Assume now that $r\geq r_0$.
  Since $w(t) = n^{\le2}(t) - \zz(t)$, then in  view of \eqref{n_s^3-est} it suffices to establish that 
  \be\label{z20}
  | w(\tau')|_r \le C\eps^2 \quad \forall\, \tau'\ge -T
  \ee
  (we assume $\nu_\eps\ll1$). Let us 	fix some $0<\aleph_d\ll1$  and any time-step    $\tau$, satisfying
  \be\lbl{ass_tau}
  C_0\eps^{-1}\nu^{1-\aleph_d} 
  \leq \tau\leq C^{-1}_0\eps^2,
  \ee 
  for a sufficiently large constant $C_0>0$.  
  We claim that 
  \be\label{z19}
w_n:=  | w(-T +n\tau)|_r \le C C_{2r} \eps^2 \quad \forall n\in \N\cup\{0\}, 
  \ee
  where  $C_{2r}=C_{2r,\aleph_d}$ 
   is the constant from Proposition \ref{p_increment_esti}.  Indeed, let us fix any $N\in\N$ and let $w_n$, $n=0, \dots, N$, attains 
   its maximum at a point $n$, which we write as $n=k_0+1$. If $k_0+1=0$, then $w_k \equiv 0$. Otherwise, in view of \eqref{z9},
   $$
   w_{k_0} \le  w_{k_0+1} \le (1-\tau/2)  w_{k_0}  +C_{2r} \tau W;
   $$
   so $ w_{k_0} \le 2 C_{2r}  W$. From here 
   $\ 
   \max_{0\le k\le N}  | w(-T +n\tau)|_r  = w_{k_0+1}  \le 3 C_{2r}  W
   $
   for any $N$, and \eqref{z19} follows in view of \eqref{ass_tau} and \eqref{L_assump}.

  Since for any $\tau'>-T$ we can find an  $n$ such that $\tau' \in [ -T+n\tau, -T+ (n+1)\tau]$, then \eqref{z20} follows from 
  \eqref{z19} and \eqref{z10}. 
   \end{proof}

   By Proposition \ref{L to infty},  when $L\to\infty$ and $\nu$
   stays fixed,    the energy spectrum $n_s(\tau; \nu,L)$ admits a limit $n_s(\tau; \nu,\infty)$,  which is a Schwartz function of $s\in\R^d$.   Since  estimate \eqref{main_est} is uniform in $\nu, L$, we immediately get 
   
   \begin{corollary}\lbl{main L-to-infty}
   For $0<\eps \le \eps_r$, $\rho =\eps^{1/2} \nu^{-1/2}$ and $\nu\le\nu_\eps(r)$ ($\nu_\eps(r)$ as in Theorem~\ref{t_principal}),
   the limiting energy spectrum  $n_s(\tau; \nu,\infty)$ satisfies estimate \eqref{main_est}. 
  \end{corollary}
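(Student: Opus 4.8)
The plan is to derive the corollary from two facts already in place: the existence of the $L\to\infty$ limit of the spectrum (Proposition~\ref{L to infty}) and the fact that the bound in Theorem~\ref{t_principal} is uniform in $L$. No new estimate is needed; the whole content is a passage to the limit.

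First I would record the convergence. Fix $\nu\in(0,1/2]$, $\tau\ge-T$ and $s\in\R^d$. By Proposition~\ref{L to infty} each moment $\EE a_s^{(k_1)}(\tau;\nu,L)\bar a_s^{(k_2)}(\tau;\nu,L)$ has a finite limit as $L\to\infty$; summing over the finitely many pairs with $k_1+k_2=k$, $k_1,k_2\le2$, gives $n_s^{(k)}(\tau;\nu,L)\to n_s^{(k)}(\tau;\nu,\infty)$ for $0\le k\le4$. Since $n_s(\tau;\nu,L)$ is the fixed polynomial in $\rho$ with these coefficients, cf.\ \eqref{n_s-decomp}, it follows that $n_s(\tau;\nu,L)\to n_s(\tau;\nu,\infty):=\sum_{0\le k\le4}\rho^k n_s^{(k)}(\tau;\nu,\infty)$ for every $s$, and by the same proposition the limit is a Schwartz function of $s$, so $|n(\tau;\nu,\infty)|_r<\infty$ for every $r$.

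Next I would invoke Theorem~\ref{t_principal}: for $0<\eps\le\eps_r$, $\rho^2=\eps\nu^{-1}$, $\nu\le\nu_\eps(r)$, and every $L$ satisfying \eqref{L_assump}, one has $|n(\tau;\nu,L)-\zz(\tau)|_r\le C_r\eps^2$ for all $\tau\ge-T$, with $C_r$ independent of $L$ (and of $\nu,\tau,T$). In particular, for each fixed $s$ and $\tau$, $|n_s(\tau;\nu,L)-\zz_s(\tau)|\,\lan s\ran^r\le C_r\eps^2$ whenever $L\ge\nu^{-2-\epsilon}$. Letting $L\to\infty$ and using the pointwise convergence from the first step yields $|n_s(\tau;\nu,\infty)-\zz_s(\tau)|\,\lan s\ran^r\le C_r\eps^2$ for every $s$; taking the supremum over $s$ gives $|n(\tau;\nu,\infty)-\zz(\tau)|_r\le C_r\eps^2$, which is \eqref{main_est} for the limiting spectrum, and $r$ was arbitrary.

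There is no genuine obstacle; the one point deserving a word of care is the order of the two operations ``$\sup_s$'' and ``$\lim_{L\to\infty}$''. One must first pass to the limit in $L$ for each fixed $s$ — which is legitimate precisely because the bound $C_r\eps^2$ does not depend on $s$ (nor on $L$) — and only afterwards take the supremum over $s$; performing them in the opposite order would not be justified by the merely pointwise convergence supplied by Proposition~\ref{L to infty}.
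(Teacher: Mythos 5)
Your proposal is correct and follows essentially the same route as the paper: the authors likewise combine the pointwise-in-$s$ convergence of the moments from Proposition~\ref{L to infty} with the $L$-uniformity of the constant in Theorem~\ref{t_principal} (guaranteed by the convention \eqref{constants} and Proposition~\ref{p_increment_esti}) and pass to the limit. Your remark about taking the limit in $L$ for fixed $s$ before taking the supremum is exactly the small point the paper leaves implicit, and it is handled correctly.
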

	
Jointly with Proposition \ref{p_stab}, Theorem \ref{t_principal} implies exponential convergence of $n(\tau)$ to an 
equilibrium, modulo $\eps^2$: 

 \begin{corollary}\lbl{c_stab}
 For $r> d$ there 
  exists $\eps' =\eps'( r, |b^2|_r)$ such that if $0<\eps \le \min(\eps', C_{1r}^{-1})$, then eq.~\eqref{WKE}  has a unique 
 steady state $\zz^\eps$ close to $\zz^0 = 2\Lc^{-1} b^2$, and 
 $$
 | n(\tau) - \zz^\eps |_r \le e^{-\tau-T} | \zz^\eps |_r +C_r \eps^2,\qquad \forall \tau\ge-T.
 $$
   \end{corollary}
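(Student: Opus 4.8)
The plan is to obtain Corollary~\ref{c_stab} by combining the main result Theorem~\ref{t_principal} with the stability Proposition~\ref{p_stab}, the latter applied to the wave kinetic equation \eqref{WKE} itself. Equation \eqref{WKE} is \eqref{w_k_e} with the time--independent forcing $f(s)=2b^2(s)$; since $b$ is Schwartz, $f\in\Cc_r(\R^d)$ for every $r$, so condition \eqref{fr} holds. Then the implicit--function--theorem argument recorded just before Proposition~\ref{p_stab} provides, for $\eps$ below some threshold $\eps'_r$ depending only on $r$ and $|b^2|_r$, a unique steady state $\zz^\eps\in\Cc_r(\R^d)$ of \eqref{WKE} close to $\zz^0=\Lc^{-1}f=2\Lc^{-1}b^2$, with $|\zz^\eps-\zz^0|_r\le C_r\eps$. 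This is the $\zz^\eps$ appearing in the statement.

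Next I would control the solution $\zz(\tau)$ of \eqref{WKE}, \eqref{WK} on $[-T,\infty)$. Since that problem is posed on $[-T,\infty)$ rather than on $[0,\infty)$, I first translate time by $T$: the curve $t\mapsto\zz(t-T)$, $t\ge0$, solves \eqref{w_k_e}, \eqref{wk1} with $f=2b^2$ and $u_0=0$. Taking $C_*=2|b^2|_r$, one has $|u_0|_r=0\le C_*$ and $\|f\|_r=C_*$, so Proposition~\ref{p_stab} applies as soon as $\eps\le\min\big(\eps_*(C_*,r),\eps'(C_*,r)\big)$ and $\eps\le\eps'_r$ (the latter so that $\zz^\eps$ exists). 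After translating back, and using $\zz(-T)=0$, this yields
\be\label{e:stab_zz_cor}
|\zz(\tau)-\zz^\eps|_r\le e^{-(\tau+T)}\,|\zz(-T)-\zz^\eps|_r= e^{-\tau-T}\,|\zz^\eps|_r\,,\qquad\tau\ge-T.
\ee

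Finally, under the standing assumptions of this section (the scaling \eqref{r_scale}, $L$ as in \eqref{L_assump}, and $\nu\le\nu_\eps(r)$), Theorem~\ref{t_principal} gives $|n(\tau)-\zz(\tau)|_r\le C_r\eps^2$ for all $\tau\ge-T$ provided $\eps\le\eps_r$; recalling $\eps_r=C_{1\max(r,r_0)}^{-1}$, it is enough, after enlarging $r$ to $\max(r,r_0)$ if needed, to require $\eps\le C_{1r}^{-1}$. Setting $\eps'$ to be the minimum of $\eps'_r$, $\eps_*(2|b^2|_r,r)$ and $\eps'(2|b^2|_r,r)$, the triangle inequality together with \eqref{e:stab_zz_cor} gives
\be\non
|n(\tau)-\zz^\eps|_r\le|n(\tau)-\zz(\tau)|_r+|\zz(\tau)-\zz^\eps|_r\le C_r\eps^2+e^{-\tau-T}|\zz^\eps|_r\,,\qquad\tau\ge-T,
\ee
for every $0<\eps\le\min(\eps',C_{1r}^{-1})$, which is the asserted estimate. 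I do not expect a genuine difficulty here: the only points needing attention are bookkeeping ones -- the time translation bringing \eqref{WKE}, \eqref{WK} into the form treated by Proposition~\ref{p_stab}, and the mutually compatible choice of the various smallness thresholds for $\eps$ (those of Theorem~\ref{t_principal}, of the implicit function theorem, and of Proposition~\ref{p_stab}); no new analytic estimate is required.
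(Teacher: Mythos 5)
Your proposal is correct and follows essentially the same route as the paper, which states the corollary as an immediate consequence of Theorem~\ref{t_principal} combined with Proposition~\ref{p_stab} (applied to \eqref{WKE} with $f=2b^2$) and the implicit-function-theorem construction of $\zz^\eps$ given just before that proposition. The time translation, the triangle inequality, and the compatibility of the smallness thresholds are exactly the bookkeeping the paper leaves implicit.
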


In view of  Corollary~\ref{c:a-goa}, Theorem C from the introduction follows from Theorem~\ref{t_principal} and Corollary~\ref{main L-to-infty}. Similarly, the asymptotic \eqref{time_ass}  follows from Corollary~\ref{c_stab}. 
   
   \smallskip
   
   By \eqref{u_present}, the solution $\zz$ of \eqref{WKE}, \eqref{WK}  may be written as $\zz =\zz^0+ \eps\zz^1 +O(\eps^2)$, where 
   $$
   \dot\zz^0 =-\Lc \zz^0 +2 b(s)^2, \qquad \zz^0(-T)=0,
   $$
    $$
   \dot\zz^1 =-\Lc \zz^1 +K(\zz^0) , \qquad \zz^1(-T)=0. 
   $$
  From the first equation we see that $\zz^0(s) = n_s^{(0)}$ (see \eqref{n_s_0}), and from the second -- that 
  $$
  \zz^1(\tau) = \int_{-T}^\tau e^{-(\tau-l)\Lc} K( n_s^{(0)}(l))\,dl. 
  $$
Since, by \eqref{n_s^3-est}, $n(\tau) =  n^{(0)}(\tau)  +\eps (\nu^{-1}   n^{(2)}(\tau)) + O(\eps^2)$, then \eqref{main_est} implies that 
\be\label{n2_form}
 n^{(2)}(\tau) =\nu  \int_{-T}^\tau e^{-(\tau-l)\Lc} K( n_s^{(0)}(l))\,dl + O( \nu \eps), 
\ee
where $|O( \nu\eps)|_r \le \nu \eps C_r$ for every $r$. Other way round,  now, after the exact form of the operator $K$ in 
\eqref{WKE}  is established, the validity of the presentation \eqref{n2_form} (obtained by some direct calculation), jointly
with estimate \eqref{n_s^3-est} would imply Theorem~\ref{t_principal}.

\section{Energy spectra of solutions  \eqref{decomp}}
\lbl{s_en_spectra}

Results of Sections \ref{sec:quasisol} and \ref{s_WKE} concerning the quasisolutions 
suggest a natural question if they extend to higher order truncations of  the complete decomposition \eqref{decomp}  in series in $\rho$.
 In this section we  discuss the corresponding positive and 
negative results, obtained in \cite{DK}. For a complex number $z$ we denote by $z^*$ either $z$ or $\bar z$.

Firstly let us return to Section \ref{s_series}. Iterating the  Duhamel integral in the r.h.s. of \eqref{an}  and expressing there iteratively $a^{(n_j)}(l)$ with $1\le n_j <n$ via
integrals \eqref{an} with $n:= n_j$,  we will eventually represent each $a_s^{(n)}(\tau)$ as a sum of 
iterated integrals of the form
\be\label{I1}
J_s(\cT) = J_s(\tau; n,\cT) =\int\dots\int dl_1\dots dl_n L^{-nd}  \sum_{s_1, \dots,  s_{3n} } (\dots).
\ee
The zone of integrating in $l=(l_1,\ldots,l_n)$ is a convex polyhedron in the cube $[-T, \tau]^n$, and the summation is taken over
the vectors  $(s_1, \dots, s_{3n}) \in \big(\Z^d_L\big)^{3n}$ subject to certain linear relations which follow from the 
factor $\dess$ in the definition \eqref{ku4} of $\cY_s$. The summand in  brackets in \eqref{I1} is a monomial of exponents 
$e^{-\ga_{s'}(l_k-l_j)}$, $e^{\pm i\nu^{-1} \omega^{s'_1 s'_2}_{s'_3 s'_4}}$ and the processes $\big(a^{(0)}_{s''}\big)^*(l_r)$,
 which has degree $2n+1$ with respect to the processes. Each integral $J_s(\tau; n,\cT)$ 
 corresponds to an oriented rooted tree $\cT$ from a class $\Ga(n)$ of trees with the root at $a_s^{(n)}(\tau)$, with random variables 
  $\big( a^{(0)}_{s''}\big)^* (l_r)$ at its leaves  and with vertices labelled by $\big(a^{(n')}_{s'} \big)^* (l_r')$ ($1\le n'<n$).   To any vertex  
  $\big(a^{(n')}_{s'} \big)^* (l_r')$  enters one edge of the tree and
  three edges outgo from it to the vertices or leaves, corresponding to some three specific terms 
   $\big(a^{(n_1)} \big)^* $,   $\big(a^{(n_2)} \big)^* $,  $\big(a^{(n_3)} \big)^* $ in the decomposition \eqref{an} of 
    $\big(a^{(n')}_{s'} \big)^* (l_r')$.
So 
\be\label{A_formula}
a_s^{(n)}(\tau) = \sum _{\cT\in \Gamma(n)} J_s(\tau; n, \cT).
\ee
\smallskip

 Now let us consider the formal series 
\be\label{N_sa}
N_s(\tau) =n_s^0(\tau) + \rho n_s^1(\tau) + \rho^2 n_s^2(\tau)+\dots 
\ee
 for the energy spectrum $N_s=\EE|a_s|^2$  of  a solution $a_s(\tau)$, when the latter is 
  written as the  formal series \eqref{decomp}.
There $n_s^0\sim1$, $n_s^1 =0$ and $n_s^2$ are the same as  $n_s^{(0)}$, $n_s^{(1)}$ and $n_s^{(2)}$ in the decomposition \eqref{n_s-decomp} of quasisolutions, but
$n^3_s$  and $n^4_s$ are different. This small ambiguity should not cause a problem, and we will see below that the new 
$n^3_s$  and $n^4_s$ still meet the estimates \eqref{n_s_est}. Let us consider any $n_s^k (\tau)$. It equals 
\be\label{k_formula}
n_s^k(\tau) = \EE \sum_{k_1+k_2=k} a_s^{(k_1)}(\tau) \bar a_s^{(k_2)}(\tau). 
\ee
We analyse each expectation $\EE a_s^{(k_1)} \bar a_s^{(k_2)}$ separately. Due to \eqref{A_formula}, 
\be\lbl{a corr}
\EE a_s^{(k_1)}(\tau)  \bar  a_s^{(k_2 )}(\tau) = \sum_{\cT_1 \in \Gamma(k_1), \cT_2 \in \Gamma(k_2)} 
\EE J_s(\tau; k_1, \cT_1) \overline{ J_s(\tau; k_2, \cT_2)},
\ee
with 
\be\label{I2}
\EE J_{s}(\tau; k_1,\cT_1) \overline{J_{s}(\tau; k_2,\cT_2)}
 = \int\dots\int dl_1\dots dl_k  L^{-kd}\sum_{s_1, \dots, s_{3k}}  \EE (\dots),
\ee
where $k=k_1+k_2$ and under the expectation sign stands a product of the terms in the brackets in the presentations \eqref{I1} for 
$ J_{s}(\tau; k_1,\cT_1) $ and $\overline{ J_{s}(\tau; k_2,\cT_2)}$.

Since  $a_s^{(0)}(l)$ are Gaussian random variables whose correlations are given by \eqref{corr_a_in_time}, then by the Wick
theorem (see \cite{Jan}) each expectation \eqref{I2} is a finite sum over different Wick pairings between the conjugated and 
non-conjugated variables  $a_{s'}^{(0)}(l'_r)$ and $\bar a_{s''}^{(0)}(l''_r)$, 
labelling the leaves of $\cT_1 \cup \overline{\cT_2}$. 
 By \eqref{corr_a_in_time}, for each Wick pairing, in the sum $\sum_{s_1,\ldots,s_{3k}}$ from \eqref{I2} only those summands do not vanish for which indices $s'$ and $s''$ of the Wick paired variables are equal, $s'=s''$. So we take the sum only over vectors $s_1,\ldots,s_{3k}$ satisfying this restriction.
	The Wick pairings in \eqref{I2} can be parametrised by Feynman diagrams $\cF$ from a class ${\frak F}(k_1, k_2)$ of 
diagrams, 
obtained from the union $\cT_1\cup\ov\cT_2$ of various trees $\cT_1 \in \Ga(k_1)$ and $\overline{\cT_2} \in \ov{\Ga(k_2)}$ by all paring of conjugated leaves with non-conjugated leaves in  $\cT_1\cup\ov\cT_2$. We denote the summands, forming the r.h.s. of 
\eqref{I2}, by $I_s(\cF)$, where $\cF\in \fF(k_1, k_2)$, and accordingly  write \eqref{a corr} as 
\be\lbl{Mom a-a}
\EE a_s^{(k_1)}(\tau)  \bar  a_s^{(k_2 )}(\tau)  =  \sum_{\cF \in\fF(k_1, k_2)} I_s(\cF),\qquad
 I_s(\cF)=I_s(\tau;k_1, k_2,\cF). 
\ee
Then, \eqref{k_formula} takes the form
\be\label{I3}
n_s(\tau)  =  \sum_{k_1+k_2=k}\sum_{\cF \in\fF(k_1, k_2)} I_s(\cF). 
\ee
See \cite{DK} for a detailed explanation of the formulas \eqref{I1}--\eqref{I3}. 

Resolving all the restrictions on the indices $s_1, \dots, s_{3k}$ in \eqref{I2} which follow from the rules, used to construct the
trees and the diagrams, we find that among those  indices  exactly $k$ are independent. Suitably parametrizing them by
vectors $z_1, \dots, z_k \in \Z^d_L$ we write the sum in \eqref{I2} as 
$
 \ssum_{z_1, \dots, z_k \in \Z^d_L}. 
$
Approximating the latter by an integral over $\R^{kd}$ using Theorem~\ref{t_sum_integral},  we get for integrals $I_s(\cF)$
with $\cF\in \fF(k_1, k_2)$ an explicit formula: 
\begin{equation}\lbl{I_F}
	I_s(\cF)
=\int_{\R^k}dl \, \int_{\R^{kd}}dz \,F_{\cF}(\tau,s,l,z) e^{i\nu^{-1}\sum_{i,j=1}^k \al_{ij}^\cF(l_i-l_j)z_i\cdot z_j} + O(L^{-2}\nu^{-2}),
\end{equation}
where $k=k_1+k_2$, $z=(z_1,\ldots,z_k)$, the function $F_\cF$ is smooth in $s,z$ and fast decays in $s,z$ and $l$,
while $\al^\cF=(\al^\cF_{ij})$ is a skew-symmetric matrix without zero lines and rows. Its rank is $\ge2$ and for any $k$ 
it may be equal to 2. 
 In particular, \eqref{Mom a-a} together with \eqref{I_F} implies Proposition~\ref{L to infty} since the integrals in
 \eqref{I_F} are independent from $L$ and are Schwartz functions of $s$.

Now let us go back to the series \eqref{N_sa}. We know that $n_s^0\sim1$, $n_s^1=0$ and $n_s^2\sim C^\#(s)\nu$. 
By a direct (but long) calculation, 
similar to that in Appendix~\ref{a_stat_phase}, it is possible to verify that 
$
|n^3_s|\leq C^\#(s) \nu^2\chi_d(\nu)\leq C^\#(s) \nu^{3/2}$ and  $|n^4_s|\leq C^\#(s) \nu^2.$  This
 suggests to assume that 
\be\lbl{sharp est n^k}
|n_s^k|\leq C^{\#}(s;k)\nu^{k/2} \qmb{for any}\qu k.
\ee
If this is the case, then under the "kinetic" scaling $\rho =\nu^{-1/2} \eps^{1/2}$ the series 
\eqref{N_sa} becomes  a formal series in $\sqrt\eps$, uniformly in $\nu$ and $L$, and  its truncation of any order $ m\ge2$ in $\rho$  still is 
$O(\eps^2)$--close to the solution ${\frak z}(\tau)$ of the wave kinetic equation \eqref{WKE}, \eqref{WK}. 
On the contrary, if this is not the case in the sense that $\|n_\cdot^k\|\geq C_k\nu^{k'}$ with $k'<k/2$ for some $k$, 
 then \eqref{N_sa} even is not a formal series, uniformly in $\nu,L$  under the kinetic scaling above.

Analysing formula \eqref{I_F}, we obtain  estimate 
\eqref{main} of Theorem~\ref{l:est_a^ia^j} for every integral $I_s(\cF)$, which implies estimate \eqref{main} itself (see Theorem 1.2 in \cite{DK}).
 By \eqref{k_formula}, estimate \eqref{main} implies 
\begin{theorem}\label{t2}
	For each $k$, 
	\be\label{A_est}
	|n_s^k|\le C^\#(s;k) \max(\nu^{\lceil k/2\rceil }, \nu^d)\, \chi^k_d(\nu),
	\ee
	provided that $L$ is so big that $C^\#(s;k) L^{-2}\nu^{-2}$, 	 is smaller than the r.h.s. of \eqref{A_est}.
\end{theorem}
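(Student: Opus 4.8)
The plan is to deduce Theorem~\ref{t2} from Theorem~\ref{l:est_a^ia^j} together with the decomposition \eqref{k_formula} of $n_s^k$ into a finite sum of second moments $\EE a_s^{(k_1)}(\tau)\bar a_s^{(k_2)}(\tau)$ with $k_1+k_2=k$. First I would fix $k$ and write, using \eqref{k_formula},
\be\non
|n_s^k(\tau)| \le \sum_{k_1+k_2=k} \big|\EE a_s^{(k_1)}(\tau)\bar a_s^{(k_2)}(\tau)\big|.
\ee
Each term on the right is controlled by \eqref{main}, so $|n_s^k(\tau)| \le (k+1) C^\#(s;k)\big(\nu^{-2}L^{-2} + \max(\nu^{\lceil k/2\rceil},\nu^d)\,\chi_d^k(\nu)\big)$, where I absorb the factor $(k+1)$ and the maximum of the finitely many functions $C^\#(s;k_1,k_2)$ into a single $C^\#(s;k)$, using the closure properties of the class $C^\#$ recalled in Section~\ref{s:notation}. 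Here I should note that although \eqref{main} has $\chi_d^k(\nu)$ with the convention that this factor is nontrivial only when the relevant degree equals $3$, for a fixed total $k$ every summand with $k_1+k_2=k$ contributes at most $\chi_d^k(\nu)$ in the sense of the theorem's notation, and the largest such factor over the finitely many splittings is again of the form $\chi_d^k(\nu)$ appearing in the statement of Theorem~\ref{t2}; this is a bookkeeping point rather than a real difficulty.

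Next I would deal with the hypothesis on $L$. The assumption in Theorem~\ref{t2} is that $C^\#(s;k)L^{-2}\nu^{-2}$ is smaller than the right-hand side of \eqref{A_est}, i.e. than $C^\#(s;k)\max(\nu^{\lceil k/2\rceil},\nu^d)\chi_d^k(\nu)$. Under this assumption the error term $\nu^{-2}L^{-2}$ coming from \eqref{main} (more precisely from the passage from sums to integrals, cf. \eqref{I_F} and Theorem~\ref{t_sum_integral}) is dominated by the main term $\max(\nu^{\lceil k/2\rceil},\nu^d)\chi_d^k(\nu)$, so that the bracket in \eqref{main} is $\le 2\max(\nu^{\lceil k/2\rceil},\nu^d)\chi_d^k(\nu)$ and the factor $2$ is again absorbed into $C^\#(s;k)$. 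This yields \eqref{A_est} directly. The uniformity in $\tau\ge -T$ is inherited from the corresponding uniformity in Theorem~\ref{l:est_a^ia^j}, and the fact that $n_s^k$ extends to a Schwartz function of $s$ follows because each second moment does, by the last sentence of Theorem~\ref{l:est_a^ia^j} (or already by \eqref{schw_ext} for $k\le 2$).

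The only genuinely substantive input is Theorem~\ref{l:est_a^ia^j} itself, whose proof is deferred to \cite{DK} and rests on the Feynman-diagram representation \eqref{Mom a-a}--\eqref{I_F} and the oscillatory-integral bounds of Theorem~D; so from the vantage point of the present paper the proof of Theorem~\ref{t2} is a short deduction and there is no real obstacle beyond careful handling of the $C^\#$ constants and the logarithmic factors $\chi_d$. If one instead wanted a self-contained argument here, the hard part would be re-deriving the sharp exponent $\lceil k/2\rceil$ (versus the naive $\nu^{k/2}$), which requires exactly the cancellation structure discussed around \eqref{IS_est>}--\eqref{conjecture}; but since Theorem~\ref{l:est_a^ia^j} is available, I would simply invoke it.
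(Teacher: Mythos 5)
Your proposal is correct and follows exactly the paper's own route: the paper derives Theorem~\ref{t2} precisely by combining the decomposition \eqref{k_formula} with estimate \eqref{main} of Theorem~\ref{l:est_a^ia^j}, absorbing the finitely many constants and using the hypothesis on $L$ to dominate the $\nu^{-2}L^{-2}$ term. Your remark on $\chi_d^k(\nu)$ is harmless (in \eqref{main} this factor depends only on the total $k=k_1+k_2$, so every summand carries the same factor), and nothing further is needed.
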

Below in this section 
we always  assume that $L$ is as big as it is required in  the theorem. 
The theorem implies estimate \eqref{sharp est n^k} only for $k\leq 2d$.
In particular, for $k\leq 4$ since $d\geq 2$. 

To get estimate \eqref{A_est}, we establish it for each integral $I_s(\cF)$ separately.
Our next result shows that estimate \eqref{sharp est n^k}  can not be obtained by improving  \eqref{A_est} for every integral 
$I_s(\cF)$, since inequality \eqref{A_est} is sharp for some of the integrals.

  Let $\gF^B(k)$  be the set of Feynman diagrams in $\cup_{k_1+k_2=k} \fF(k_1, k_2)$ for which 
   the matrix $\al^\cF$ from \eqref{I_F} satisfies $\al_{ij}^\cF=0$ if
    $i\neq p$ and $j\neq p$  (so, only the $p$-th line and column of the matrix $\al^\cF$  are non-zero).  
 This set  is not empty.

\begin{proposition}\label{t3}
 If $k > 2d$, then for any $\cF\in\gF^B(k)$ the corresponding integrals $I_s(\cF) $ satisfy 
 $
 I_s(\cF) \sim C^{\#}(s;k) \nu^d.
 $
\end{proposition}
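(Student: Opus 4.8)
The plan is to start from the oscillatory--integral representation \eqref{I_F} of $I_s(\cF)$ — the remainder $O(L^{-2}\nu^{-2})$ there being $o(\nu^d)$ for $L$ large, as assumed throughout — and to extract its $\nu\to0$ asymptotics by stationary phase in the $z$--variables with $l$ frozen, followed by integration in $l$. The first step is to use the special shape of the phase for $\cF\in\gF^B(k)$: since $\al^\cF_{ij}=0$ unless $i=p$ or $j=p$, skew--symmetry collapses the phase to
\[
\sum_{i,j=1}^{k}\al^\cF_{ij}(l_i-l_j)\,z_i\cdot z_j \;=\; 2\,z_p\cdot w,\qquad
w=w(l,z_{\ne p}):=\sum_{j\ne p}\al^\cF_{pj}\,(l_p-l_j)\,z_j,
\]
and, because $\al^\cF$ has no zero rows while its $j$--th row $(j\ne p)$ contains only the entry $\al^\cF_{jp}=-\al^\cF_{pj}$, one necessarily has $\al^\cF_{pj}\ne0$ for every $j\ne p$. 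Hence, for fixed $l$, the phase is the quadratic form $Q_l(z)=2\,z_p\cdot w$ on $\R^{kd}$, which vanishes identically on the $(k-2)d$--dimensional subspace $\ker Q_l=\{z_p=0,\ w=0\}$ and has signature $(d,d)$ on an orthogonal complement, with all $2d$ non--zero eigenvalues of modulus $|\mu(l)|$, where $\mu(l):=\big(\al^\cF_{pj}(l_p-l_j)\big)_{j\ne p}$. In particular $Q_l$ has rank $2d$, its signature is $0$, and the product of the moduli of its non--zero eigenvalues is $|\mu(l)|^{2d}$. The locus $\{\mu(l)=0\}$ is precisely the time--diagonal $\{l_1=\dots=l_k\}$, of codimension $k-1$, and $|\mu(l)|$ is comparable to the distance to it.

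Next I would apply the degenerate stationary--phase lemma (large parameter $\nu^{-1}$; amplitude $F_\cF(\tau,s,l,\cdot)$, which for each fixed $l$ is a Schwartz function of $z$), uniformly over the region $|\mu(l)|\ge\nu^\theta$ for a fixed $\theta\in\big(\tfrac{d}{k-1},1\big)$ — such $\theta$ exists since $k>2d$ gives $\tfrac{d}{k-1}<\tfrac12$. This produces
\begin{align*}
\int_{\R^{kd}}F_\cF(\tau,s,l,z)\,e^{i\nu^{-1}Q_l(z)}\,dz
&= c_d\,\nu^{d}\,|\mu(l)|^{-d}\!\!\int_{\ker Q_l}\!\!F_\cF(\tau,s,l,\cdot)\,d\mathrm{vol}\ +\ R(l),
\end{align*}
with an explicit non--zero constant $c_d$ (the signature of $Q_l$ being $0$, no phase factor occurs) and $|R(l)|\lesssim\nu^{d+1}|\mu(l)|^{-d-1}C^\#(s;k)\,g(l)$ for a fast--decaying $g$. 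Integration in $l$ is then legitimate because $F_\cF$ decays fast in $l$ and in $s$: near the time--diagonal $|\mu(l)|^{-d}$ behaves like $\dist^{-d}$ across a subspace of codimension $k-1$, which is integrable since $d<k-1$ (using $k>2d$ and $d\ge2$), so the leading term $\int_{\R^k}|\mu(l)|^{-d}\big(\int_{\ker Q_l}F_\cF\big)\,dl$ converges absolutely; similarly $\int_{\R^k}|\mu(l)|^{-d-1}g(l)\,dl<\infty$ because $d+1<k-1$, whence $\int|R(l)|\,dl=O(\nu^{d+1})=o(\nu^d)$. On the complementary tube $\{|\mu(l)|<\nu^\theta\}$, of $l$--measure $\lesssim\nu^{\theta(k-1)}$, the trivial bound $\big|\int_z F_\cF\,e^{i\nu^{-1}Q_l}\,dz\big|\le\int_z|F_\cF|\le C^\#(s;k)$ gives a contribution $\lesssim\nu^{\theta(k-1)}C^\#(s;k)=o(\nu^d)$ by the choice of $\theta$. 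Collecting,
\[
I_s(\cF)=c_d\,\nu^{d}\int_{\R^k}|\mu(l)|^{-d}\Big(\int_{\ker Q_l}F_\cF(\tau,s,l,\cdot)\,d\mathrm{vol}\Big)\,dl\ +\ o(\nu^{d}).
\]

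It remains to check that the $\nu^d$--coefficient above is, up to a fixed unimodular constant, a strictly positive, fast--decaying function of $s$ — hence of the form $C^\#(s;k)$. For this I would recall that, by the construction of $F_\cF$ through Wick's theorem, $F_\cF$ is the product of the unimodular factor $i^{\,k_1-k_2}$ (coming from the $i$'s in the Duhamel formula \eqref{an} and their conjugates), of several real exponentials with non--negative exponents, and of second moments $\EE\,a^{(0)}(l)\,\bar a^{(0)}(l')$ which by \eqref{corr_a_in_time} are strictly positive for $l,l'\ge-T$. Thus $i^{-(k_1-k_2)}F_\cF>0$ everywhere, so $i^{-(k_1-k_2)}$ times the $\nu^d$--coefficient is the integral of a strictly positive integrand and is therefore a strictly positive number, fast decaying in $s$ by the fast decay of $F_\cF$ in $s$ (continuity and decay justified by the dominated--convergence bounds already in place). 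This gives $|I_s(\cF)|\sim C^\#(s;k)\,\nu^d$, and since $k>2d$ is equivalent to $\lceil k/2\rceil>d$, we obtain the full claim $I_s(\cF)\sim C^\#(s;k)\,\nu^{d}\gg C^\#(s;k)\,\nu^{\lceil k/2\rceil}$.

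The main obstacle is the uniformity of the stationary--phase expansion in $l$ together with the clean isolation of the degenerate time--diagonal: the leading density $|\mu(l)|^{-d}$ is integrable only because $k>2d$, and both the remainder estimate and the tube estimate must be calibrated against this borderline integrability through the exponent $\theta$. Controlling the $z$--integral by a $\nu$--uniform constant near $\{\mu(l)=0\}$, and verifying that the higher--order terms of the stationary--phase expansion remain dominated there, is the delicate point; by contrast the positivity bookkeeping that rules out a cancellation of the $\nu^d$--coefficient is routine once \eqref{corr_a_in_time} is used.
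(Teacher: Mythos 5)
The paper does not actually prove Proposition~\ref{t3}: it is stated here as a result imported from the companion paper \cite{DK} (the text only remarks that the proof ``uses the special structure of the set $\gF^B(k)$''), so there is no in-paper proof to compare against line by line. That said, your argument is sound and follows exactly the route the paper's discussion points to: for $\cF\in\gF^B(k)$ the phase in \eqref{I_F} collapses to $2z_p\cdot w$ with $w=\sum_{j\ne p}\al^\cF_{pj}(l_p-l_j)z_j$ and all $\al^\cF_{pj}\ne0$ (no zero rows), giving for fixed $l$ a quadratic form of rank $2d$, signature $0$, and nonzero eigenvalues $\pm|\mu(l)|$; stationary phase then yields the $\nu^d|\mu(l)|^{-d}$ density, which is integrable off the time diagonal precisely because $d<k-1$ (a consequence of $k>2d$, $d\ge2$), while the tube around the diagonal is killed by the measure bound $\nu^{\theta(k-1)}$ with $d/(k-1)<\theta<1$; and the non-cancellation of the leading coefficient follows from $F_\cF$ being $i^{k_1-k_2}$ times a non-negative product of damping exponentials and the correlations \eqref{corr_a_in_time}. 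The exponent bookkeeping ($\theta(k-1)>d$, $d+1<k-1$ for the remainder) checks out, so I see no genuine gap; the only caveat worth recording is that the claimed two-sided bound with a single fast-decaying $C^\#(s;k)$ requires the error terms $C^\#(s)\nu^{d+1}$ and $C^\#(s)\nu^{\theta(k-1)}$ to be compared with the positive leading coefficient pointwise in $s$, which is the same looseness already present in the statement itself.
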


Proposition \ref{t3} shows that the estimate \eqref{sharp est n^k} can be true for $k>2d$ only if in the sum \eqref{I3} the large terms 
cancel each other. 
 And indeed, we observe strong cancellations among the integrals from the set $\gF^B(k)$:

\begin{proposition}\lbl{l:cancellation} 
	For any $k$,
	$
	\big|\sum_{\cF\in\gF^B(k)} I_s(\cF)\big| \leq C^{\#}(s;k)\nu^{k-1}.
	$
\end{proposition}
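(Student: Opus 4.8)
The plan is to exploit the very feature that defines $\gF^B(k)$: for $\cF\in\gF^B(k)$ the skew-symmetric matrix $\al^\cF$ in \eqref{I_F} has all rows and columns vanishing except the $p$-th one, so the quadratic phase degenerates to a form that is \emph{linear} in the variable $z_p$,
$$
\sum_{i,j=1}^k\al^\cF_{ij}(l_i-l_j)\,z_i\cdot z_j \;=\; 2\,z_p\cdot w_\cF(l,z'),\qquad
w_\cF(l,z'):=\sum_{j\ne p}\al^\cF_{pj}\,(l_p-l_j)\,z_j ,
$$
where $z':=(z_j)_{j\ne p}\in\R^{(k-1)d}$ and $w_\cF$ is linear in $z'$ for each fixed $l$. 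First I would integrate out $z_p$ in \eqref{I_F}; since $F_\cF$ is Schwartz in $z_p$ this turns \eqref{I_F} into
$$
I_s(\cF)=\int_{\R^k}\! dl\int_{\R^{(k-1)d}}\! dz'\;G_\cF\big(\tau,s,l,z';\,\nu^{-1}w_\cF(l,z')\big)\;+\;O\!\big(L^{-2}\nu^{-2}\big),
$$
with $G_\cF$ smooth, fast decaying in $(s,z',l)$ and Schwartz in its last slot. Bounding one such integral by localising to $\{|w_\cF|\lesssim\nu\}$, a codimension-$d$ slab in $z'$ for generic $l$, gives $|I_s(\cF)|\lesssim\nu^d$ and recovers Proposition~\ref{t3}; crucially, this $\nu^d$ is produced by the region where the times $l_j$ are spread out, and it is exactly this region that must disappear from the sum.

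The cancellation itself I would extract from the tree description of Section~\ref{s_en_spectra}. The diagrams of $\gF^B(k)$ organise into families inside which the only datum that varies is the position, along a single non-oscillating ``backbone'' of the tree $\cT_1\cup\ov{\cT_2}$, of the distinguished vertex $p$ (equivalently, which internal vertex is left ``unfrozen'' by the Wick pairing); the remaining sub-trees carry the same Gaussian leaf correlations \eqref{corr_a_in_time} and the same scalar propagators $e^{-\ga_\bullet(l_i-l_{i+1})}$ for every member of the family. Summing the Duhamel integral over the admissible insertions of $p$ is then a discrete summation by parts along the backbone: the telescoping collapses every interior time integration, and only the boundary configuration in which all backbone times are slaved to $l_p$ survives. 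Hence $\sum_{\cF\in\gF^B(k)}I_s(\cF)$ is represented (modulo the errors $O(L^{-2}\nu^{-2})$ of \eqref{I_F}, which are $\le C^\#(s;k)\nu^{k-1}$ once $L$ is large enough relative to $\nu^{-1}$) by an integral whose integrand decays rapidly in each $\nu^{-1}(l_p-l_j)$, $j\ne p$. The $l$-integration is thereby squeezed onto the set $\{|l_p-l_j|\lesssim\nu:\ j\ne p\}$, of measure $\sim\nu^{k-1}$, while on this set $w_\cF=O(\nu|z'|)$ so the $w_\cF$-factor costs nothing, and the $z'$-integral is $O(1)$ by the decay of $G_\cF$. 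Keeping track of the Schwartz decay in $s$ through these integrations as in the proof of Lemma~\ref{l_kin_int}, and using the $C^\#$-calculus of Section~\ref{s:notation}, gives $\big|\sum_{\cF\in\gF^B(k)}I_s(\cF)\big|\le C^\#(s;k)\,\nu^{k-1}$.

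The main obstacle is the combinatorial heart of the second step: one must prove that the rule assigning to a tree its matrix $\al^\cF$ and its momentum labelling is genuinely compatible with the telescoping, i.e. that after summing over all admissible insertions of the unfrozen vertex no ``interior'' term is left behind and no spurious factors are produced. This is precisely where the detailed diagrammatics of \cite{DK} — the structure of the classes $\Gamma(n)$, $\fF(k_1,k_2)$ and the way the phase $\oms$ is generated by the factor $\dess$ — enters, and it is considerably more delicate than the analytic estimates above. A routine but necessary side check is that the $O(L^{-2}\nu^{-2})$ remainders, summed over the finitely many diagrams in $\gF^B(k)$, are dominated by $\nu^{k-1}$ under the assumed lower bound on $L$, and that all implied constants retain fast decay in $s$.
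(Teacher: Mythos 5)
First, a point of comparison: this paper does not actually prove Proposition~\ref{l:cancellation} --- Section~\ref{s_en_spectra} only announces it as a result of the companion paper \cite{DK} (``In the proofs of Propositions~\ref{t3} and \ref{l:cancellation} we use special structure of the set $\gF^B(k)$\dots''), so there is no in-paper argument to measure yours against. Your first step is sound as far as it goes and essentially reproduces the easy half of the story: for $\cF\in\gF^B(k)$ the phase in \eqref{I_F} is linear in $z_p$, integrating out $z_p$ produces a factor that is rapidly decaying in $\nu^{-1}w_\cF(l,z')$, and localising to $\{|w_\cF|\lesssim\nu\}$ recovers the single-diagram bound $\nu^d$ of Proposition~\ref{t3}. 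But that is precisely the bound the proposition is meant to beat (for $k>d+1$ one has $\nu^{k-1}\ll\nu^d$), so nothing in this step contributes to the claimed $\nu^{k-1}$.

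The gap is exactly where you place it, and it is fatal rather than a deferred technicality: the entire content of the proposition is the cancellation, and your second paragraph asserts it rather than proves it. The claim that the diagrams of $\gF^B(k)$ ``organise into families'' differing only in the position of the unfrozen vertex along a common non-oscillating backbone, and that summing over insertions telescopes down to a boundary term supported on $\{|l_p-l_j|\lesssim\nu,\ j\ne p\}$, is a strong combinatorial statement that you neither verify nor make quantitative. Two concrete obstructions: (i) the diagrams in $\gF^B(k)$ range over different splittings $k_1+k_2=k$, different trees $\cT_1\in\Gamma(k_1)$, $\cT_2\in\Gamma(k_2)$ and different Wick pairings, so the posited families would have to cut across distinct time-ordering polyhedra and distinct products of propagators $e^{-\ga_\bullet(l_i-l_j)}$, which is in tension with your assumption that every member of a family carries ``the same Gaussian leaf correlations and the same scalar propagators''; and (ii) a single summation by parts typically gains one power per collapsed variable, whereas you must pass from the $(\nu/|l_p-l_j|)^d$-type decay that one diagram provides to rapid decay in \emph{every} $\nu^{-1}(l_p-l_j)$ simultaneously --- your sketch gives no accounting of how the telescoping achieves this, nor why no interior terms survive. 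Since you yourself flag this step as ``the main obstacle'' and defer it to the diagrammatics of \cite{DK}, the proposal is a reasonable guess at where the measure-$\nu^{k-1}$ time set should come from, but it does not constitute a proof of the proposition.
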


Since for $k\geq 2$ we have $\nu^{k-1}\leq \nu^{k/2}$, the  estimate in the proposition 
 agrees with  \eqref{sharp est n^k}.
In the proofs of Propositions~\ref{t3} and \ref{l:cancellation} we use special structure of the set $\gF^B(k)$ and we do not have similar results for a larger set of diagrams, nevertheless we find it plausible 
that \eqref{sharp est n^k} is true.
Namely,  that the decomposition of  integrals $I_s(\cF)$ in  asymptotic sum in $\nu$ is such that a few main order terms of the decomposition
 of the sum \eqref{I3}   vanish due to a cancellation, so 
  that  \eqref{sharp est n^k} holds  (see Problem~\ref{c_1}).
We understand the mechanism of this cancellation, but do not know if  it  goes till the order $\nu^{k/2}$.

\section{Proof of Theorem \ref{t_singint}}
\lbl{s_singint}
	\subsection {Vicinity of the point $(s,s)$. }\label{s_71}
Off the quadric  ${}^s\!\Sigma$ the  integral $I_s$ is small -- of order $\nu^2$ -- and the main task is 
to  examine it in the vicinity of ${}^s\!\Sigma$.  First
we will study $I_s$ near the locus  $(s,s)$, and next -- near the smooth part 
$ {}^s\!\Sigma_*$.

Passing to the variables $(x,y,s)=(z,s)$ (see \eqref{p01}) we write $F_s$ and $ \Gamma_s$ as $F_s(z)$ and $\Gamma_s(z)$. 
The functions still satisfy \eqref{B_condition}  and \eqref{Ga_condition} with $(s_1,s_2)$ replaced by $(x,y)$ since 
the map $(s_1, s_2, s) \mapsto(x,y,s)$ is a linear isomorphism.

Consider the domain
\be\label{K_delta}
K_\delta = \{|x|\le\delta, |y|\le\delta\}\subset \R^d\times \R^d,\qquad 0<\delta\le 1\,,
\ee
and the integral $  \lan I_s, {K_\delta}\ran$ (see {\it Notation}, formula \eqref{<I,M>}): 
\be\label{I_delta}
\lan I_s, {K_\delta}\ran = \nu^2 \int_{K_\delta} \frac{ F_s(z)\,dz}{(x\cdot y)^2 +(\nu \Gamma_s(z))^2}\,.
\ee
Obviously, everywhere in $K_\delta$,
$
|F_s(z)|\le C^\#(s)$ and $  \Gamma_s(z) \ge K^{-1}.
$
 Then, since the volume of $K_\delta$ is bounded by $C\de^{2d}$, 
\be\lbl{nearsing}
\big|\lan I_s, K_\delta\ran \big|  \le C^\#(s)\de^{2d}.
\ee

Next we pass to the global study of integral $I_s$,  written  similar to \eqref{I_delta} as
\be\label{p1}
I_s=
\nu^2 \int_{\R^{2d}} \frac{F_s(z)\,dz}{(x\cdot y)^2 +\big( \nu\Gamma_s(z)\big)^2}\,.
\ee

\subsection{The manifold $\Sigma_*$ and its vicinity.}\label{s_13.3} 
We denote by $\Sigma$ the quadric $\Sigma=\{ z=(x,y): x\cdot y=0\}$ and 
$\Sigma_*=\Sigma\setminus (0,0)$.
The set $\Sigma_*$ is a smooth  manifold of dimension $2d-1$.
Let $\xi\in\R^{2d-1}$ be a local coordinate on $\Sigma_*$ with the coordinate mapping
$\xi\mapsto  z_\xi=(x_\xi,y_\xi)\in \Sigma_*$. Abusing notation we
write $|\xi|= |(x_\xi,y_\xi)|$. The vector
$
N(\xi)= (y_\xi, x_\xi)
$
is a normal to $\Sigma_*$ at $\xi$ of  length $|\xi|$, and
$$
\lan N(\xi), (x_\xi, y_\xi)\ran = 2 x_\xi \cdot y_\xi=0\,.
$$
For any $0<R_1<R_2$ we denote
\be\label{nota}
\begin{split}
	S^{R_1} = \{z\in\R^{2d}: |z|=R_1\}\,,\quad
	&\Sigma^{R_1} = \Sigma\cap S^{R_1} \,,
	\\
	S^{R_2}_{R_1} = \{z: R_1< |z|<R_2\}\,,\quad
	&
	\Sigma^{R_2}_{R_1}  = \Sigma\cap S^{R_2}_{R_1}\,,
\end{split}
\ee
and for  $t>0$  denote by $D_t$ the dilation operator
$$
D_t:  \R^{2d}\to \R^{2d},\quad z \mapsto tz\,.
$$
It preserves $\Sigma_*$, and we write
$\
D_t\xi =t\xi$ for $ \xi\in\Sigma_*\,.
$

The following lemma, specifying the geometry of $\Sigma_*$ and its vicinity, is proved in \cite{K}: 
\begin{lemma}\label{l_p1}
	1)
	There exists $\theta_0^*>0$ 
	such that for any $0<\theta_0\le \theta_0^*$ a
	suitable \nbh\  $\Sv= \Sv( \theta_0)$ of $\Sigma_*$ in $\R^{2d}\setminus\{0\}$ may be uniquely parametrised
	as
	\be\label{par}
	\Sv = \{ 
	\pi(\xi,\theta),\ 
	\xi\in\Sigma_*,\; |\theta| < \theta_0\}\,,
	\ee
	where
	\be\non
	\pi(\xi,\theta) =
	(x_\xi,y_\xi) +\theta N_\xi = (x_\xi,y_\xi) + \theta(y_\xi, x_\xi).
	\ee
	2) For any vector $\pi:= \pi(\xi, \theta)\in\Sv$  its length equals
	\be\non
	|\pi| =|\xi|\sqrt{1+\theta^{2}}.
	\ee
	The distance from $\pi$ to
	$\Sigma$ equals $|\xi| |\theta|$, and the shortest path from $\pi$ to $\Sigma$  is the segment
	$S:= \{\pi(\xi, t\theta): 0\le t\le 1\}$. \\
	3) If $z=(x,y)\in S^R$ is such that $\dist (z, \Sigma)\le \frac12 R\theta_0$, then $z=\pi(\xi,\theta)\in\Sv$,
	where $|\xi|\le R\le |\xi|\sqrt{1+\theta_0^2}$. \\
	4) If $(x,y)=\pi(\xi,\theta)\in\Sv$, then
	\be\label{p0}
	x\cdot y =|\xi|^2\theta.
	\ee
	5) 		If $(x,y)\in S^R\cap (\Sv)^c$, then $|x\cdot y| \ge c|\theta_0|^2 R^2$ for some $c>0$.
\end{lemma}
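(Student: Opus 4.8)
\emph{Strategy.} Everything rests on the fact that $\Sigma=\{z=(x,y):x\cdot y=0\}$ is a cone and that the parametrization respects the scaling: since $N_{t\xi}=t N_\xi$ one has $\pi(t\xi,\theta)=t\,\pi(\xi,\theta)$, so $\Sv$ will be an open cone. This lets me reduce the whole construction to the compact cross-section $\Sod=\Sigma\cap S^1$ and then propagate it to all radii by the dilations $D_t$. The plan is to first record the two elementary identities giving parts 2) and 4), then build the normal tube in part 1), and finally deduce parts 3) and 5) from it. For the identities I write $\pi(\xi,\theta)=(x_\xi+\theta y_\xi,\ y_\xi+\theta x_\xi)$ and compute directly: using $x_\xi\cdot y_\xi=0$ and $\lan z_\xi,N_\xi\ran=2x_\xi\cdot y_\xi=0$ I get $|\pi|^2=|\xi|^2+\theta^2|N_\xi|^2=|\xi|^2(1+\theta^2)$ and $x\cdot y=\theta(|x_\xi|^2+|y_\xi|^2)=\theta|\xi|^2$, which are exactly the length formula and \eqref{p0}. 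The statement in 2) that $\dist(\pi,\Sigma)=|\xi||\theta|$ with the segment $S$ realizing it amounts to $z_\xi$ being the nearest point of $\Sigma$ to $\pi$; I postpone this until the tube is built, as it is precisely single-valuedness of the nearest-point projection on $\Sv$.

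\emph{The tube (part 1).} First I would check that $\Sigma_*$ is a smooth hypersurface and $\Sod$ a smooth compact $(2d-2)$-manifold, by verifying that on $\Sigma_*$ the gradients $\nabla(x\cdot y)=(y,x)$ and $\nabla|z|^2=2(x,y)$ are linearly independent: proportionality forces $y=\pm x$, whence $x\cdot y=\pm|x|^2=0$, i.e. $z=0$. Then I consider $P(\omega,\theta)=z_\omega+\theta N_\omega$ on $\Sod\times(-\theta_0,\theta_0)$ together with its dilation-extension $\Psi(\omega,r,\theta)=r\,P(\omega,\theta)$ on $\Sod\times(0,\infty)\times(-\theta_0,\theta_0)\cong\Sigma_*\times(-\theta_0,\theta_0)$. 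Along $\theta=0$ the differential $d\Psi$ sends $T_\omega\Sod$ into $T_\omega\Sigma$ by inclusion, $\partial_r\mapsto z_\omega$ (the radial direction, tangent to the cone $\Sigma$), and $\partial_\theta\mapsto N_\omega$, which is orthogonal to $T_\omega\Sigma$; these span $\R^{2d}$, so $d\Psi$ is an isomorphism there. By the inverse function theorem and compactness of $\Sod$ there is $\theta_0^*>0$ with $\Psi$ a local diffeomorphism for $\theta_0\le\theta_0^*$, and a uniform-injectivity argument on the compact $\Sod$ (normal rays of length $<\theta_0^*$ do not meet) gives global injectivity, which homogeneity carries to all radii. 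This yields \eqref{par} with $\Sv$ an open conical neighbourhood and the projection $\pi\mapsto z_\xi$ well defined and smooth, finishing the postponed distance claim of part 2).

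\emph{Parts 3 and 5.} For 3), given $z\in S^R$ with $\dist(z,\Sigma)\le\tfrac12 R\theta_0$, the nearest point $z_*$ has $|z_*|\ge R-\tfrac12 R\theta_0>0$, hence lies on $\Sigma_*$, so $z=\pi(\xi,\theta)$ for the foot $\xi$; then $|\xi||\theta|=\dist\le\tfrac12 R\theta_0$ together with $R=|\xi|\sqrt{1+\theta^2}$ gives, after a one-line bootstrap, $|\theta|\le\tfrac{\theta_0}{2}\sqrt{1+\theta_0^2}<\theta_0$ (after possibly shrinking $\theta_0^*$) and the bounds $|\xi|\le R\le|\xi|\sqrt{1+\theta_0^2}$. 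For 5), by homogeneity I reduce to $R=1$ (both $|x\cdot y|$ and $R^2$ scale as $t^2$, and $\Sv$ is a cone); the contrapositive of 3) gives $\dist(z,\Sigma)>\tfrac12\theta_0$ on $S^1\cap(\Sv)^c$, and it remains to show $|x\cdot y|\ge c\,\dist(z,\Sigma)^2$ on $S^1\setminus\Sigma$. This I get by combining the local formula inside $\Sv$ (there $|x\cdot y|=|\theta|/(1+\theta^2)$ while $\dist^2=\theta^2/(1+\theta^2)$, so the ratio blows up as $\theta\to0$) with compactness of $S^1$ outside a tube, where $|x\cdot y|$ is bounded below; substituting $\dist>\tfrac12\theta_0$ yields $|x\cdot y|\ge c\theta_0^2$, and rescaling restores the factor $R^2$.

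\emph{Main obstacle.} The delicate point is part 1): because $\Sigma_*$ is a non-compact cone, the tubular-neighbourhood theorem does not apply directly, and a careless choice of $\theta_0$ would degenerate as $|z|\to0$. Exploiting the exact homogeneity $\pi(t\xi,\theta)=t\,\pi(\xi,\theta)$ to transfer both the local-diffeomorphism and the global-injectivity statements to the compact $\Sod$ is what produces a single scale-independent $\theta_0^*$; pinning down the uniform injectivity radius on $\Sod$ is the crux, and once it is in hand parts 2)--5) follow by the computations and compactness arguments above.
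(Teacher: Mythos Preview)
Your proposal is correct and matches the spirit of what the paper indicates: the paper does not actually prove this lemma but cites \cite{K}, remarking only that the normal coordinates \eqref{par} ``follow easily from the implicit function theorem'' and that the point of the lemma is to specify the width of the tube uniformly. Your reduction via the exact homogeneity $\pi(t\xi,\theta)=t\,\pi(\xi,\theta)$ to the compact cross-section $\Sod$ is precisely the mechanism that produces a single scale-independent $\theta_0^*$, and your direct computations for parts 2) and 4) are correct.

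One small remark on part 1): the global injectivity of $\Psi$ in fact follows algebraically from the identities you already have, without needing the ``uniform injectivity radius'' compactness argument. If $r_1P(\omega_1,\theta_1)=r_2P(\omega_2,\theta_2)=:z$, then equating $|z|^2$ and $x\cdot y$ gives $r_1^2(1+\theta_1^2)=r_2^2(1+\theta_2^2)$ and $r_1^2\theta_1=r_2^2\theta_2$, whence $\theta_1/(1+\theta_1^2)=\theta_2/(1+\theta_2^2)$; since $\theta\mapsto\theta/(1+\theta^2)$ is strictly increasing on $(-1,1)$ this forces $\theta_1=\theta_2$, then $r_1=r_2$, and finally $\omega_1=\omega_2$ since the linear map $(x,y)\mapsto(x+\theta y,\,y+\theta x)$ is invertible for $|\theta|<1$. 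This shortcut also clarifies that in part 3) your ``so $z=\pi(\xi,\theta)$ for the foot $\xi$'' is not circular: once the nearest point $z_*\in\Sigma_*$ is found, $z-z_*$ is automatically normal to $\Sigma_*$ at $z_*$, hence of the form $\theta N_{z_*}$, and your bootstrap then places $|\theta|<\theta_0$.
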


The
coordinates \eqref{par} are known as the normal coordinates, and their existence  follows easily
from the implicit function theorem. The assertion~1) is a bit more precise than the general result since it
specifies the width of the \nbh\ $\Sv$.

Let as fix any $0<\theta_0\le\theta_0^*$, and consider the manifold  $\Sv(\theta_0)$. Below we provide it with  additional
structures and during the corresponding constructions decrease $\theta_0^*$, if needed.
Consider the set $\Sigma^1$ (see \eqref{nota}). 
It equals
$$
\Sigma^1= \{(x,y): x\cdot y=0, |x|^2+|y|^2=1\}\,,
$$
and  is a
smooth compact  submanifold of $\R^{2d}$ of codimension 2. Let us cover it by some finite system of charts
$\cN_1,\dots, \cN_{\tilde n}$, $\cN_j =\{\eta^j=  (\eta_1^j,\dots,\eta^j_{2d-2})\}$.  Denote by $m(d\eta)$ the volume element
on $\Sigma^1$, induced from $\R^{2d}$, and denote the  coordinate maps as
$\cN_j\ni \eta^j\to(x_{\eta^j},y_{\eta^j})\in\Sigma^1$. We will write points of $\Sigma^1$ both as $\eta$ and $(x_{\eta},y_{\eta})$.

The mapping
$$
\Sod\times \R^+ \to \Sigma_*,\quad (\eta, t)\to D_t (x_{\eta},y_{\eta}) \in \Sigma^t, 
$$
is 1-1,  and is a local diffeomorphism; so this is a global diffeomorphism. Accordingly we can cover $\Sigma_*$ by the $\tilde n$
charts $\cN_j\times \R_+$, with the coordinate maps
$
(\eta^j, t) \mapsto D_t(x_{\eta^j},y_{\eta^j})$, $ \eta^j\in\cN_j,\;\; t>0\,.
$
In these coordinates the volume element on $\Sigma^t$ is $t^{2d-2}m(d\eta)$.  Since $\p/\p t \in T_{\eta,t}\Sigma_*$ is
a vector of unit length, perpendicular to $\Sigma^t$, then the volume element on $\Sigma_*$ is
\be\label{vol_on_*}
d_{\Sigma_*} = t^{2d-2}m(d\eta)\,dt\,.
\ee

The coordinates $(\eta,t,\theta)$ with $\eta\in\cN_j, t>0, |\theta|<\theta_0$, where $1\le j\le\tilde n$,  make  coordinate systems on
$\Sv$. Since the vectors $\p/\p t$ and $t^{-1} \p/\p \theta$ form an orthonormal base of the orthogonal complement in  $\R^{2d}$
to $T_{(\eta,t,\theta=0)} \Sigma^t$, then in the domain  $\Sv$ the volume element $dz=dx\,dy$ may be written as
\be\label{p4}
dz =t^{2d-1} \mu(\eta,t,\theta) m(d\eta)dt\,d\theta\,, \quad \mu(\eta,t,0)=1\,.
\ee
The transformation $D_r:(\eta,t,\theta)\mapsto (\eta, rt,\theta)$, $r>0$, multiplies the form in the l.h.s. by $r^{2d}$, preserves $d\eta$ and $d\theta$, and multiplies
$dt$ by $r$. Hence, $\mu$ does not depend on $t$, and we have got

\begin{lemma}\label{l_p2}
	The coordinates
	$
	(\eta^j, t,\theta)$, where $\eta^j  \in\cN_j,
	\; t>0, \; |\theta|<\theta_0\,,
	$
	and   $1\le j \le\tilde n$,   define on $\Sv$ coordinate systems, jointly covering $\Sv$.
	In these coordinates 
	the dilations $D_r$, $r>0$,  reed as
	$
	D_r: (\eta, t,\theta) \mapsto (\eta, rt, \theta)\,,
	$
	and the volume element has the form \eqref{p4}, where $\mu$ does not depend on $t$.
\end{lemma}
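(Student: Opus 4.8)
The plan is to assemble the statement from the structures already in place; I do not expect a genuine obstacle, the one delicate point being the orthogonality-and-length accounting that yields the factor $t^{2d-1}$ in \eqref{p4}. First I would check that the listed maps are coordinate systems jointly covering $\Sv$. By Lemma~\ref{l_p1}(1) every point of $\Sv$ is uniquely of the form $\pi(\xi,\theta)$ with $\xi\in\Sigma_*$ and $|\theta|<\theta_0$; since the dilation map $\Sod\times\R_+\to\Sigma_*$, $(\eta,t)\mapsto D_t(x_\eta,y_\eta)$, is a global diffeomorphism and the charts $\cN_1,\dots,\cN_{\tilde n}$ cover $\Sod$, the composed maps
\[
(\eta^j,t,\theta)\longmapsto\pi\big(D_t(x_{\eta^j},y_{\eta^j}),\theta\big),\qquad 1\le j\le\tilde n,
\]
are coordinate systems on $\Sv$ which together cover it.

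Next I would compute the dilations in these coordinates. Writing $\xi=D_t(x_\eta,y_\eta)$ with $(x_\eta,y_\eta)\in\Sod$, we have $z_\xi=(x_\xi,y_\xi)=t(x_\eta,y_\eta)$, hence $N_\xi=(y_\xi,x_\xi)=t(y_\eta,x_\eta)$ and
\[
\pi(\xi,\theta)=(x_\xi,y_\xi)+\theta N_\xi=t\big[(x_\eta,y_\eta)+\theta(y_\eta,x_\eta)\big].
\]
Therefore $D_r\pi(\xi,\theta)=r\,\pi(\xi,\theta)=rt\big[(x_\eta,y_\eta)+\theta(y_\eta,x_\eta)\big]$, which is exactly the value of $\pi$ at the point with coordinates $(\eta,rt,\theta)$; so $D_r$ acts by $(\eta,t,\theta)\mapsto(\eta,rt,\theta)$.

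Then I would establish \eqref{p4}. At $\theta=0$ the base point $\xi$ lies on $\Sigma^t$, and the lines spanned by $\p/\p t$ and $\p/\p\theta$ together with $T_\xi\Sigma^t$ are mutually orthogonal: $\p/\p t$ equals the unit radial vector $z_\xi/t$, which is normal to $S^t\supset\Sigma^t$; $\p/\p\theta$ equals $N_\xi$, which is normal to $\Sigma\supset\Sigma^t$ and has length $t$; and $z_\xi\cdot N_\xi=2\,x_\xi\cdot y_\xi=0$ since $\xi\in\Sigma$. As $D_t$ multiplies the $(2d-2)$-dimensional volume of $\Sod$ by $t^{2d-2}$, the volume element of $\Sigma^t$ in the chart $\cN_j$ is $t^{2d-2}m(d\eta)$ (this is \eqref{vol_on_*}); multiplying the three orthogonal factors we get, at $\theta=0$, $dz=dx\,dy=t^{2d-1}m(d\eta)\,dt\,d\theta$, so on all of $\Sv$ we may write $dz=t^{2d-1}\mu(\eta,t,\theta)m(d\eta)\,dt\,d\theta$ for a smooth positive density $\mu$ with $\mu(\eta,t,0)=1$.

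Finally, to see that $\mu$ does not depend on $t$ I would use scaling: on $\R^{2d}$ the map $D_r$ multiplies $dz=dx\,dy$ by $r^{2d}$, while under the substitution $(\eta,t,\theta)\mapsto(\eta,rt,\theta)$ the right-hand side of \eqref{p4} turns $t^{2d-1}$ into $r^{2d-1}t^{2d-1}$, turns $dt$ into $r\,dt$, and leaves $m(d\eta)$ and $d\theta$ unchanged, hence becomes $r^{2d}t^{2d-1}\mu(\eta,rt,\theta)m(d\eta)\,dt\,d\theta$. Equating the two expressions gives $\mu(\eta,rt,\theta)=\mu(\eta,t,\theta)$ for every $r>0$, i.e. $\mu$ is independent of $t$. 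Beyond that, the lemma is just a reorganization of Lemma~\ref{l_p1} and of the construction of the charts $\cN_j$, so no step presents a real difficulty.
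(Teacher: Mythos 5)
Your proof is correct and follows essentially the same route as the paper: the same chart construction via the diffeomorphism $\Sod\times\R_+\to\Sigma_*$, the same orthogonality argument at $\theta=0$ (with $\p/\p t$ a unit radial vector and $\p/\p\theta=N_\xi$ of length $t$, both orthogonal to $T_\xi\Sigma^t$) producing the factor $t^{2d-1}$, and the same scaling argument under $D_r$ to conclude that $\mu$ is independent of $t$. Your explicit verification of the coordinate expression for $D_r$ is slightly more detailed than the paper's, but the substance is identical.
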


Consider the mapping 
$$
\Pi: \Sv \to \Sigma_*,\qquad  z=\pi(\xi,\theta) \mapsto \xi.
$$
By Lemma \ref{l_p1}.2,  $|z|\le |\Pi(z)| \le 2 |z|$.

For $0<R_1<R_2$ we will  denote
\be\lbl{Sigma_1^2}
\big(\Sv\big)^{R_2}_{R_1} = \Sv\cap\big(\Pi^{-1} \Sigma^{R_2}_{R_1}\big)\,.
\ee
In  coordinates $(\eta^j, t,\theta)$  this domain is $\{(\eta^j, t,\theta): R_1<t< R_2, |\theta|<\theta_0\}$.

Let us consider functions $\Gamma$ and $F$ in the variables $(\eta, t,\theta)$. Consider first 
$\Gamma_s(z)$ with $z = \pi(\xi,\theta) \in \Sv$. Since $\pi(\xi, \theta) =z_\xi+ \theta N_\xi$, then
$
\frac{\p^k \Gamma_s(z)}{ \p\theta^k} = d_z^k \Gamma_s(z)(N(\xi), \dots, N(\xi)).
$
As  $|N(\xi)| = |\xi| \le|z|$, then by \eqref{Ga_condition}, 
\be\label{p9}
| \Gamma_s| \ge K^{-1},\quad
\Big| \frac{\p^k}{\p \theta^k} \Gamma_s\Big| \le C_1 K\lan(z,s)\ran^{r_1-k}  |N_\xi|^k \le   C_2 K\lan(z,s)\ran^{r_1}, 
\ee
for  	$k\le3$.
Similar, since $F$ satisfies \eqref{B_condition}, then for $z = \pi(\eta, t, \theta)$ we have 
\be\label{p10}
F_s(\eta,t,\theta) \in C^2
\;\; \text{ and }\;\; \big|\frac{\p^k}{\p \theta^k}  F_s \big|\le C^\# (t,s)\,, \quad k\le2.
\ee

\subsection{Integral over the complement to a \nbh\ of $\Sigma$. }\label{s_over_complement}
On $\R_+ \times \R^d$ let  us define the function 
$$
\Theta=\Theta(t,s) = \ \lan(t,s)\ran^{-r_1} \le1
$$
($r_1\ge0$ is the exponent in \eqref{Ga_condition}), 
and consider a \nbh\ of $\Sigma$:
$$
\Sigma^{nbh}(\theta_0) = \{\pi(\xi, \theta): |\theta| \le \theta_{0m}\}\subset\Sv(\theta_0),
 \quad \theta_{0m} = \theta_0\Theta(t,s). 
$$
Consider the integral over its complement,
$
\Upsilon_s^m(\theta_0) = \lan |I_s|, \R^{2d}\setminus \Sigma^{nbh}(\theta_0)\ran. 
$
Using the polar coordinates in $\R^{2d}$, we  have 
\begin{equation*}
\begin{split}
\Upsilon_s^m(\theta_0) 
&\le \lan |I_s|, { \{|z|\le \nu^b\} } \ran   \\
+&\nu^2 C_d \int_{\nu^b}^\infty dr\,r^{2d-1}
\int_{S^{1}\setminus \Sigma^{nbh}(\theta_0)} \frac{|F_s(z)|\, d_{S^{1}}}{ (x\cdot y)^2 +(\nu\Gamma_s(z))^2}\,,
\end{split}
\end{equation*}
where we choose $b=\tfrac1{d}$  and denote
 $d_{S^{1}}$ is the normalised Lebesgue measure on  $S^{1}$. 
By item 5) of Lemma~\ref{l_p1} with $\theta_0$ replaced by $\theta_0 \Theta(t,s)$, 
the divisor of the integrand  is 
$\ge C^{-2} r^4\theta_0^4\Theta^4.
$
So the internal integral is bounded by 
$
C^\#(r,s) r^{-4}  \lan(t,s)\ran^{r_1} 
$.
Due to this, 
\eqref{p10} and  \eqref{nearsing} 
the r.h.s.    is bounded by
$$
C_1^\#(s)  \nu^{2bd} +
\nu^2 C^\#(s) \int_{\nu^b}^\infty C^\#(r) r^{2d-5}\,dr
\le C_1^\#(s) \nu^2 \chi_d(\nu)
$$
(we recall the notation \eqref{chi_d}). Accordingly, 
\be\label{int_ext}
\Upsilon_s^m(\theta_0)   \le  C^\#(s)   \nu^2 \chi_d(\nu)\,.
\ee

\subsection{Integral over the vicinity of $\Sigma$.}\label{s_over_domain}
Let   $\Svm$ be a \nbh\ of $\Sigma_*$ such that 
\be\label{Smod}
\Sigma^{nbh} (\tfrac12 \theta_0) \subset \Svm \subset \Sigma^{nbh}(2\theta_0)\,.
\ee
Then in view of \eqref{int_ext} 
\begin{equation*}
\begin{split}
| \lan I_s, \Sigma^{nbh}(2\theta_0)\ran  -   \lan I_s, \Svm \ran | \le \lan |I_s|, 
\R^{2d}\setminus \Sigma^{nbh}(\tfrac12\theta_0)\ran \\
=\Upsilon_s^m\big(\tfrac{\theta_0}2\big)    \le C_1^\#(s) \nu^2 \chi_d(\nu)\,.
\end{split} 
\end{equation*}
In view of this estimate and \eqref{int_ext}, 
\be\label{zamena_oblasti}
\left|  \lan I_s, \R^{2d}\ran-
\lan I_s, \Svm\ran \right|\le C_1^\#(s) \nu^2 \chi_d(\nu)\,.
\ee
So	 to prove the theorem 
it suffices to calculate  integrals $ \lan I_s, \Svm\ran$ over any domains $\Svm$ as in \eqref{Smod}. 
Below we  will do this for domains $\Svm$ of the form
$$
\Svm =\{(\eta, t, \theta) : -\theta^-( t,s)  <\theta <  \theta^+( t,s)\},
$$
with suitably defined  functions $\theta_0^\pm( t,s)\in [\tfrac12
 \theta_{0m} , 
2
 \theta_{0m}  ]$. 

We estimate the integrals 
$ \lan I_s, \Svm(\theta_0) \ran$ in 4 steps.

\subsection{Step 1: Disintegration of $I_s$}\label{s_step1}
For any $0<R_1<R_2$ we define domains 
$
(\Sigma^{nbh}(\theta_0))^{R_2}_{R_1} = \Sigma^{nbh}(\theta_0) \cap (\Pi^{-1}\Sigma^{R_2}_{R_1} )
$
and using \eqref{p4}
write integral $ \lan I_s, {\big(\Sigma^{nbh}(\theta_0)\big)^{R_2}_{R_1}}\ran$ as
$$
\nu^2   \int_{\Sod} m(d\eta)  \int_{R_1}^{R_2}  dt \, t^{2d-1} \int_{-\theta_{0m}}^{\theta_{0m}} d\theta \, 
\frac{F_s(\eta, t,\theta) \mu(\eta,\theta)}{(x\cdot y)^2 +(\nu \Gamma_s(\eta,t,\theta))^2}
$$
$$
=  \nu^2   \int_{\Sod} m(d\eta)  \int_{R_1}^{R_2}  J_s(\eta,t)  t^{2d-1} dt\,
$$
(cf. \eqref{p1}),  	where by \eqref{p0}
$$
J_s(\eta,t) = \int_{-\theta_{0m}}^{\theta_{0m}} d\theta\,
\frac{F_s(\eta, t,\theta) \mu(\eta,\theta)}{t^4\theta^2 +(\nu \Gamma_s(\eta,t,\theta))^2}\,.
$$

To study $J_s$ let us  write $\Gamma_s$ as
$$
\Gamma_s(\eta,t,\theta) = h_{\eta, t, s}(\theta) \Gamma_s(\eta, t,0)\,.
$$
The function  $ h_{\eta, t, s}(\theta) =: h(\theta)$ is  $C^3$--smooth and in   view of \eqref{p9} satisfies
\be\label{p11}
h(0)=1, 
\quad
\Big| \frac{\p^k}{\p \theta^k} h(\theta) \Big| \le C \Theta^{-1}
\quad  
\forall\,1\le k\le3,
\ee
for all $ \eta, t, \theta,s$. 
Denoting $\eps= \nu t^{-2} \Gamma_s(\eta, t,0)$, we write $J_s$ as
\be\label{Js}
J_s = t^{-4} \int_{-\theta_{0m}}^{\theta_{0m}}
\frac{F_s(\eta,t,\theta)\mu(\eta,\theta) h^{-2}(\theta)\,d\theta}{\theta^2 h^{-2}(\theta) +\eps^2}\,.
\ee

\subsection{Step 2:  Definition of domains $\Svm$. }\label{s_step2}
On the segment 
$
\I= [-\theta_{0m},  \theta_{0m} ]  \subset [-\theta_0,   \theta_0 ]
$
consider the function $h$ and the function 
$$
f=f_{\eta, t, s}:
\I \ni \theta \mapsto \thb = \theta  / h(\theta) \,,
$$
By \eqref{p11}, $ \tfrac23 \le h\le\tfrac32$ on $\I$, if $\theta_0$ is small enough. From here and \eqref{p11}, for $\theta\in \I$
we have   $ \tfrac12 \le f'(\theta) \le 2$ (if $\theta_0$ is small), and 
\be\label{f_est}
\left| \frac{\p^k f}{\p \theta^k}\right|  \le C \Theta^{-(k-1)}(t,s) \quad \forall\, 1\le  k\le3. 
\ee
So $f$ defines a $C^3$--diffeomorphism of $\I$ on $f(\I)$, 
$\tfrac12 \I\subset f(\I) \subset 2\I$, such that $f'(0)=1$ and $f^{-1}$ also 
satisfies estimates \eqref{f_est} (with a modified constant $C$). 

Let us set $\zeta^+ =f(\theta_0\Theta)$ and 
$\zeta^- =-f(-\theta_0\Theta)$. Then
$
2^{-1} \theta_0\Theta \le\zeta^\pm\le 2 \theta_0\Theta,
$
and  passing in  integral \eqref{Js} from variable $\theta$ to $\zeta=f(\theta)$ we find that 
$$
J_s = t^{-4} \int_{-\zeta^-}^{\zeta^+}
\frac{F_s(\eta,t,\theta)\mu(\eta,\theta) h^{-2}(\theta) (f^{-1}(\thb))' \,d\thb}{\thb^2 +\eps^2}\,.
$$
Denoting the nominator of the integrand as $\Phi_s(\eta, t,\thb)$ and using \eqref{p10} 
we see
that this is a $C^2$--smooth function, satisfying
\be\non
| \frac{\p^k}{\p\theta^k}
\Phi_s| \le C^\#(t,s)\quad \forall\, 0\le k\le2\,.
\ee
Moreover, since $h(0)=1$ and $(f^{-1}(0))'= f'(0)=1$, then in view of \eqref{p4} we have that
\be\label{p13}
\Phi_s(\eta, t, 0) = F_s(\eta, t, 0)\,.
\ee

Now 	denote  
\be\lbl{zeta 0}
\thb_0 = \min(\zeta^+, \zeta^-)\in[ \tfrac12 \theta_0  \Theta, 2\theta_0 \Theta],
\ee
set
$$
\theta_0^-(\eta, t, s) =- f^{-1}_{\eta, t,s}(-\zeta_0), \quad \theta_0^+(\eta, t, s) = f^{-1}_{\eta, t,s}(\zeta_0),
$$
and   use these functions $\theta_0^\pm$ to define the domains $\Svm$. Then 
\be\label{p001}
\lan I_s, {\big(\Svm\big)^{R_2}_{R_1}}\ran =  \nu^2   \int_{\Sod} m(d\eta)  \int_{R_1}^{R_2}  J_s^m(\eta,t)  t^{2d-1} dt\,, 
\ee
where 
\begin{equation*}
\begin{split}
J_s^m(\eta,t)  &=   \int_{-\theta_0^-}^{\theta_0^+} d\theta\,
\frac{F_s(\eta, t,\theta) \mu(\eta,\theta)}{t^4\theta^2 +(\nu \Gamma_s(\eta,t,\theta))^2}\\
&= t^{-4} \int_{-\zeta_0}^{\zeta_0}
\frac{F_s(\eta,t,\theta)\mu(\eta,\theta) h^{-2}(\theta) (f^{-1}(\thb))' \,d\thb}{\thb^2 +\eps^2}\,.	
\end{split}
\end{equation*}

\subsection{Step 3: Calculating integrals   $\lan I_s, \Svm\ran$}\label{s_step3}

Clearly, 
\be\non
|J_s^{m}| \le
  \frac{C^\#(t,s)}{t^{4}} \int_{-\thb_0}^{\thb_0} \frac{d\thb}{{\thb}^2+\eps^2}\le 
  \frac{
C_1^\#(t,s)\theta_0\Theta}{\nu^{2} \Gamma_s^{2}   }(\eta, t, 0)\le \nu^{-2} C^\#(t,s)
\ee
(here and below the constants may depend on $\theta_0$). So if $\alpha>0$, then 
\be\label{first_est}
\lan| I_s|, {\big(\Svm\big)_{\nu^{-\alpha}}^{\infty}}\ran \le
\nu^{-2} \nu^2 C^\#(s) \int^\infty_{\nu^{-\alpha}} t^{2d-1} C^\#(t) dt
\le C_\alpha^\#(s) \nu^2. 
\ee
Similar, if $|s| \ge \nu^{-\beta}$, $\beta>0$, then
\be\label{second_est}
\lan| I_s|, {\big(\Svm\big)_{0}^{\infty}}\ran \le
C^\#(s) \int^\infty_{0} t^{2d-1} C^\#(t) dt
\le C_\beta^\#(s) \nu^2. 
\ee
\medskip

To estimate $J^m_s(\eta,t)$ for $t$ and $s$ not very large, let us  consider  the integral $J_s^{0m}$,
obtained from $J_s^m$ by frosening $\Phi_s$ at $\thb=0$:
$$
J_s^{0m}= t^{-4}  \int _{-\thb_0}^{\thb_0}\frac{ \Phi_s(\eta,t,0)\,d\thb}{\thb^2+\eps^2}
=2 t^{-4} F_s(\eta,t,0) \eps^{-1} \tan^{-1}\frac{\thb_0}{\eps}
$$
(we use \eqref{p13}).
As $0<\frac{\pi}2 -\tan^{-1}\frac1{\bar\eps}<\bar\eps$ for $0<\bar\eps\le\frac12$, then 
\be\label{p17}
0<\pi \nu^{-1} t^{-2} (F_s/\Gamma_s)\mid_{\theta=0}
- J^{0m}_s <\frac2{\thb_0} t^{-4} F_s(\eta,t,0)\,,
\ee
if $\nu t^{-2} \Gamma_s(\eta,t,0)\le \tfrac12 \thb_0$, which holds if 
\be\label{p18}
\nu \le C  t^2 \Theta^2(t,s),
\ee
in view of \eqref{p9} and \eqref{zeta 0}. 
Now we estimate the difference between $J_s^m$ and $J_s^{0m}$. We have:
$$
(J^m_s-J_s^{0m})(\eta, t) = t^{-4} \int_{-\thb_0}^{\thb_0} \frac{\Phi_s(\eta,t,\thb)-\Phi_s(\eta,t,0)}{\thb^2+\eps^2}d\thb\,.
$$
Since $\Phi_s$ is a smooth function and its $C^2$--norm is bounded by $C^\#(t,s)$, then
$$
\Phi_s(\eta,t,\thb)-\Phi_s(\eta,t,0) = A_s(\eta,t)\thb +B_s(\eta,t,\thb)\thb^2\,,
$$
where $|A_s|, |B_s| \le C^\#(t,s)$. From here
$$
|J^m_s - J_s^{0m}|\le C_1^\# (t,s) t^{-4} \int_0^{\thb_0} \frac{\thb^2\,d\thb}{\thb^2+\eps^2}\le C_2^\#(t,s)t^{-4}.
$$
Denote
\be\label{p20}
\cJ_s(\eta,t)= \pi t^{-2} (F_s/\Gamma_s)(\eta, t, 0).
\ee
Then, jointly with \eqref{p17}, the last estimate tell us that
\be\label{p19}
|J_s^m - \nu^{-1}\cJ_s(\eta,t)|\le C^\#(t,s)t^{-4}
\quad\text{if \eqref{p18} holds}. 
\ee

\subsection{Step 4: End of the proof}\label{s_step4}
Let us  write $\lan I_s, \Svm\ran$ as 
$$
\lan I_s, \big(\Svm\big)^{\nu^{b}}_0 \ran +
\lan I_s, {\big(\Svm\big)^{\nu^{-a}}_{\nu^{b} }} \ran+
\lan I_s, {\big(\Svm\big)^{\infty }_{\nu^{-a}}} \ran.
$$
We will  analyse the three terms,  choosing properly positive constants $a,b$. 

\noindent {\bf
	1.} By \eqref{first_est}, 
$$
\Big| \lan I_s, (\Svm)^{\infty}_{\nu^{-a}} \ran \Big|  
\le  C_a^\#(s) \nu^2.
$$
Similar,
$$
\Big | \nu \int_{\Sod}m(d\eta) \int_{\nu^{-a}}^{\infty} dt\, t^{2d-1} \cJ_s(\eta,t) \Big|\le
C^\#(s) \nu  \int_{\nu^{-a}}^\infty t^{2d-3} C^\#(t)\,dt \le  C_a^\#(s) \nu^2. 
$$
\medskip

\noindent {\bf
	2.}  Since $(\Svm)^{\de}_0\subset K_{2\de}$, estimate \eqref{nearsing} with $\delta=\nu^b$ implies
$$
\lan | I_s |,  (\Svm)^{\nu^b}_0\ran
\le \lan |  I_s| ,  K_{2\nu^b}  \ran 
\le C^\#(s) \nu^{2bd}. 
$$ 
Besides,
$$
\nu \int_{\Sod}m(d\eta) \int_{0}^{\nu^{b}} dt\, t^{2d-1} \cJ_s(\eta,t)\le \nu C^\#(s) \int_0^{\nu^b} t^{2d-3} dt = 
C_1^\#(s)\nu^{1+2b(d-1)}\,.
$$
\medskip

\noindent {\bf
	3.} Now consider
\be\label{X_s}
X_s := 
\Big|  \lan I_s, (\Svm)^{\nu^{-a}}_{\nu^{b}}\ran - \nu \int_{\Sod}m(d\eta) 
\int_{\nu^{b}}^{\nu^{-a}} dt\, t^{2d-1} \cJ_s(\eta,t)\Big|\,.
\ee
We claim that 
\be\label{we_claim}
X_s \le \nu^2 \chi_d(\nu) 
C^\#(s),
\ee
where $\chi_d(\nu)$ was defined in \eqref{chi_d}.
Indeed, if $|s|\ge \nu^{-{a}}$, then by \eqref{second_est}
the modulus of the first term in the r.h.s. of \eqref{X_s} is
$\le C_{a}^\#(s) \nu^2$. The second term also satisfies this estimate with some other $ C_{a}^\#(s)$. So
\eqref{we_claim} is established if $|s|\ge \nu^{-{a}}$. By a similar (and even easier) argument the claimed estimate 
holds if $\nu_1 \le\nu\le1$ for any fixed constant $\nu_1>0$. 

 Now let us consider the case
$
\nu\le\nu_1$, $ |s| \le \nu^{-{a}}. 
$
Then for $\nu^b\le t\le \nu^{-a}$ the r.h.s. of \eqref{p18} is no smaller than 
$\ 
Y:= C \nu^{2b} (1 +2\nu^{-2{a}})^{-r_1}. 
$
If 
\be\label{b_cond}
2b +2{a} r_1 <1, 
\ee
then $Y\ge\nu$ if $\nu\le\nu_1$ and  $\nu_1>0$ is small enough. 
Then assumption \eqref{p18} holds, and \eqref{p001} together with \eqref{p19} implies \eqref{we_claim}:
\begin{equation*}
\begin{split}
X_s\le&\nu^2 
\int_{\Sod}m(d\eta) \int_{\nu^{b}}^{\nu^{-a}} dt\, t^{2d-1} C^\#(t,s)    t^{-4}\\
&\le \nu^2  C^\#(s)  \int_{\nu^{b}}^{\nu^{-a}} dt\, C^\#(t) t^{2d-5} 
\le C_1^\# (s) \nu^2  \chi_d(\nu)\,.
\end{split}
\end{equation*}



\noindent {\bf
	4.}
In the same time, in view of \eqref{p20}, for any $-\infty\le A<B\le\infty$ we have
\begin{equation}\label{p21}
\begin{split}
\Big| \nu \int_{\Sod}m(d\eta) \int_{\nu^{B}}^{\nu^{A}} dt\, t^{2d-1}\cJ_s(\eta,t)\Big|
\le  \nu \int_{\nu^{B}}^{\nu^{A}} dt\, t^{2d-1-2} C^\#(t,s)\le C_1^\#(s)\nu,
\end{split}
\end{equation}
since $d\ge2$.
\medskip

Now we get from {\bf 1-4} that
\begin{equation}\label{p30}
\begin{split}
&\Big|  \lan I_s, {\Svm} \ran - \nu \int_{\Sod}m(d\eta) \int_{0}^{\infty} dt\, t^{2d-1} \cJ_s(\eta,t)\Big|\\
&\le \Big| \lan I_s,  (\Svm)_{\nu^{-a}}^\infty\ran \Big|
+  \nu \int_{\Sod}m(d\eta) \int_{\nu^{-a}}^{\infty} dt\, t^{2d-1} |\cJ_s(\eta,t) |\\
&+\Big|   \lan I_s,  (\Svm)_{\nu^{b}}^{\nu^{-a}} \ran-  \nu \int_{\Sod}m(d\eta) 
\int_{\nu^{b}}^{\nu^{-a}} dt\, t^{2d-1} \cJ_s(\eta,t) \Big|\\
&+  \Big| \lan I_s, (\Svm)_{0}^{\nu^b}\ran\Big| +  \nu \int_{\Sod}m(d\eta) \int_{0}^{\nu^b} dt\, t^{2d-1} |\cJ_s(\eta,t)|\\
& \le\,  C^\#(s)\big( \nu^2 +\nu^2 \chi_d(\nu) + \nu^{2bd}+ \nu^{1+2b(d-1)} \big)=:Z,
\end{split}
\end{equation}
if condition \eqref{b_cond} holds for some $a,b >0$. If $d\ge3$, we choose  $b=\tfrac1{d} <1$. Then 
\eqref{b_cond} holds for some ${a}>0$,  and $Z\le  C^\#(s) \nu^2\chi_d(\nu)$. 
 If $d=2$, then $\nu^{2bd}  = \nu^{4b}$ and $\nu^{1+2b(d-1)}=\nu^{1+2b}$. We choose $b=1/2-\aleph/4$, $\aleph>0$. 
Then again \eqref{b_cond} holds 
for some ${a}(\aleph)>0$, so $Z\le C_\aleph^\#(s) \nu^{2-\aleph} $.

By \eqref{p21} the improper integral $\int_{\Sod}m(d\eta) \int_{0}^{\infty} dt\, t^{2d-1} \cJ_s$
converges absolutely 
and is bounded by $C_1^\#(s)$. In view of \eqref{vol_on_*} and \eqref{p20}
it  may be written as
$$
\nu\int_{\Sod}m(d\eta) \int_{0}^{\infty} dt\, t^{2d-1} \cJ_s(\eta,t)=
\pi\nu  \int_{\Sigma_*} \frac{F_s(z)}{\Gamma_s(z)   \sqrt{ |x|^2+|y|^2}} \, dz\!\mid_{\Sigma_*}\,.
$$

This result jointly with the estimates \eqref{p30} and \eqref{zamena_oblasti} imply the assertion of Theorem \ref{t_singint}. 	
\qed

\subsection {Proof of  Proposition \ref{p_prop}}\label{s_proofprop}

Denoting by $B_r^R$
the spherical layer  $\{ r\le  |z|\le R\}$, in view of \eqref{vol_on_*} for $R\ge0$ we have
$$
\int_{B_R^{R+1}} \mu^\Sigma (dz) \le C\int_R^{R+1} t^{2d-3} dt \le C_1 (R+1)^{2d-3}. 
$$
So   for any function $f\in \cC_m(\R^{2d})$ with $m>2d-2$ 
the integral $ \int  f(z)\,  \mu^\Sigma(dz)$ is bounded by 
\begin{equation*}
\begin{split}
|f|_m  \sum_{R=0}^\infty \int_{B^{R+1}_R} \lan z\ran^{-m} \,  \mu^\Sigma(dz)
\le C_2 |f|_m  \sum_{R=0}^\infty \frac{(R+1)^{2d-3} }{\lan R\ran^{m}}
= C_3 |f|_m.
\end{split}
\end{equation*}
This proves the proposition.

\section{Oscillating sums under the limit \eqref{assumption}} \label{s_9}

	\subsection{Correlations between increments of $a_s^{(1)}$ 	}\label{s_oscill}
	
	 Correlations between the increments of $a_s^{(1)}$ and some similar quantities, treated in Section~\ref{sec:Q_s-prop} lead to sums of the following form: 
\be\label{Sis_0}
	\begin{split}
\Sigma_s^0 (\tau) =\ssum_{1,2}{F}_s(s_1, s_2) \dess \int_0^\tau  \int_0^\tau d\theta_1\,d\theta_2 e^{-\ga_s\big( 2\tau-\theta_1 -\theta_2)\big) +i \nu^{-1} \oms(\theta_1-\theta_2)},
\end{split}
	\ee
	where $s\in\R^d$ and
$
\tau  \in( 0,1]
$
is a parameter. Our goal is to study their asymptotical behaviour as $\nu\to0$, $L\to\infty$,
uniformly in $\tau$. 
In \eqref{Sis_0} 	 ${F}_s$ is a  real  $C^2$--function on $\R^{2d}$, satisfying \eqref{B_condition} 
(for example, ${F}_s$ may be the function, defined in \eqref{F}). Integrating over $d\theta_1\, d\theta_2$ we find that 
\be\label{Sis_1}
\Sigma_s^0 (\tau) =\ssum_{1,2} {F}_s(s_1, s_2) \dess\,  \frac{|e^{i\nu^{-1} \oms\tau} -e^{-\ga_s\tau}|^2 }{\ga_s^2+(\nu^{-1} \oms)^2}.
\ee
Note that the nominator in the fraction above equals
\be\label{chisl}
|e^{i\nu^{-1} \oms\tau} -e^{-\ga_s\tau}|^2 = 1+e^{-2\ga_s \tau} -2 e^{-\ga_s \tau} \cos(\nu^{-1} \tau \oms).
\ee
We wish to study the asymptotical behaviour of the sum  $\Sigma^0_s$ as $\nu\to0,\, L\to\infty$ and, as before, 
 will do this by comparing \eqref{Sis_0} 
 with  the integral 
$$
I_s^0= \int\int  ds_1\, ds_2 \int_0^\tau  \int_0^\tau d\theta_1\,d\theta_2 \, \des
M_s(s_1, s_2,s_3; \theta_1, \theta_2),
$$
where 
$$
M_s=  {F}_s(s_1, s_2)   e^{-\ga_s\big( 2\tau-\theta_1-\theta_2\big) +i \nu^{-1} \oms(\theta_1-\theta_2)}.
$$
By Theorem \ref{t_sum_integral}, 
\be\label{Iso_diffe}
|I_s^0 -\Sigma_s^0| \le C^\#(s)\nu^{-2}L^{-2} .
\ee
So if $L\gg \nu^{-1}$, then 
 to calculate the asymptotical behaviour of $\Sigma_s^0$ it suffices to calculate that of $I_s^0$. Integrating over $d\theta_1\,d\theta_2$
 in the expression for $I_s^0$ we obtain \eqref{Sis_1} with $\ssum_j$ replaced by $\int ds_j$. In view of \eqref{chisl} we get that
 $I_s^0= \nu^2 I_s^{0,1} + \nu^2 I_s^{0,2} $, where  
 \be\non
	\begin{split}
	& I_s^{0,1}  =  (1+e^{-2\ga_s\tau}) \int\int ds_1\,ds_2\, \frac{ \des {F}_s(s_1,s_2)}{\nu^2 \ga_s^2 +(\oms)^2},\\
  &I_s^{0,2}  = -2  e^{-\ga_s\tau} \int\int ds_1\,ds_2\, \frac{ \des {F}_s(s_1,s_2) \cos(\nu^{-1} \tau \oms) }{\nu^2 \ga_s^2 +(\oms)^2}.
\end{split}
	\ee 
Consider first $I_s^{0,1} $.  Applying 
Theorem~\ref{t_singint} with $\Ga_s=\ga_s/2$, we find
\be\label{Is_0_as}
I_s^{0,1} = \frac{1+e^{-2\ga_s\tau}}{2\ga_s} \pi \nu^{-1}
 \int_{{}^s\!\Sigma_*} \frac{{F}_s(z)}{|(s-s_1, s-s_2)| }\,dz\!\mid_{{}^s\Sigma_*} +  O(1)\nu^{-\aleph_d} C^\#(s;\aleph_d),
\ee
where $z=(s_1,s_2)$.

It remains to study the asymptotical behaviour of $I_s^{0,2} $. It is described by the following result,
proved in Section~\ref{l_s_oscill} (also see \cite{K1}): 

\begin{lemma}\label{l_osc} 
For any $s\in\R^d$ and $\tau\in(0,1]$,
\be\label{pr88}
\Big| I_s^{0,2}  + \ga_s^{-1}  \nu^{-1} e^{-2\ga_s\tau} \pi 
 \int_{{}^s\!\Sigma_*} \frac{{F}_s(z)}{|(s-s_1, s-s_2)|  }\,dz\!\mid_{{}^s\!\Sigma_*} \Big| \le  C^\#(s) \chi_d(\nu).
\ee 
\end{lemma}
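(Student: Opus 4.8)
The goal is to understand the oscillatory integral
$$
I_s^{0,2} = -2 e^{-\ga_s\tau} \int_{\R^{2d}} ds_1\,ds_2\, \frac{ \des\, {F}_s(s_1,s_2) \cos(\nu^{-1} \tau \oms) }{\nu^2 \ga_s^2 +(\oms)^2}
$$
and show that, up to an error of order $\chi_d(\nu)$, it equals $-\ga_s^{-1}\nu^{-1} e^{-2\ga_s\tau}\pi$ times the singular integral over ${}^s\!\Sigma_*$. The overall strategy mirrors the proof of Theorem~\ref{t_singint}: pass to the variables $z=(x,y)=(s_1-s,s_2-s)$ so that $\oms = -2x\cdot y$ and the quadric ${}^s\!\Sigma_*$ becomes $\Sigma_* = \{x\cdot y = 0, z\neq 0\}$, use the geometry of $\Sigma_*$ and its tubular neighborhood $\Sv$ furnished by Lemmas~\ref{l_p1} and \ref{l_p2}, and localize the integral near $\Sigma$, where the quadratic form $x\cdot y$ (hence the phase $\nu^{-1}\tau\,x\cdot y$) is small. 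Away from $\Sigma$ the denominator $\nu^2\ga_s^2 + (2x\cdot y)^2$ is bounded below by $c\,r^4\theta_0^4\Theta^4$ (Lemma~\ref{l_p1}.5), exactly as in Section~\ref{s_over_complement}, so the contribution from the complement of $\Sigma^{nbh}$ is $O(\chi_d(\nu))C^\#(s)$ after integrating in the radial variable (here the extra oscillating factor $\cos$ only helps; one just bounds it by $1$). Thus one reduces to computing $\lan I_s^{0,2}, \Svm\ran$ over a normal-coordinate neighborhood of the type used in Step~2 of the proof of Theorem~\ref{t_singint}.

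\textbf{Main step: the one-dimensional normal integral.} In the normal coordinates $(\eta,t,\theta)$ on $\Sv$ with $x\cdot y = t^2\theta$ and $dz = t^{2d-1}\mu(\eta,\theta)\,m(d\eta)\,dt\,d\theta$, the inner $\theta$-integral becomes, after the substitution $\zeta = f_{\eta,t,s}(\theta) = \theta/h(\theta)$ used in Step~2,
$$
t^{-4}\int_{-\zeta_0}^{\zeta_0} \frac{\Phi_s(\eta,t,\zeta)\,\cos(2\nu^{-1}\tau t^2\theta(\zeta))\,d\zeta}{\zeta^2 + \eps^2}, \qquad \eps = \nu t^{-2}\Gamma_s(\eta,t,0).
$$
The key analytic fact is the elementary formula $\int_{\R}\frac{\cos(\kappa\zeta)}{\zeta^2+\eps^2}\,d\zeta = \frac{\pi}{\eps}e^{-\eps|\kappa|}$, applied with $\kappa \sim 2\nu^{-1}\tau t^2$ (so $\eps|\kappa|\sim 2\tau\Gamma_s(\eta,t,0)$, a quantity of order one or larger in $t$). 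Freezing $\Phi_s$ at $\zeta=0$ and using $\Phi_s(\eta,t,0) = F_s(\eta,t,0)$ (eq.~\eqref{p13}), the frozen inner integral is $2t^{-4}F_s(\eta,t,0)\eps^{-1}\big(\tfrac{\pi}{2}e^{-\eps\kappa} + \text{(tail from finite }\zeta_0\text{)}\big)$ up to the error from replacing $\Phi_s(\eta,t,\zeta)$ by $\Phi_s(\eta,t,0)$, which as in Step~3 contributes only $O(t^{-4})C^\#(t,s)$ because the odd-in-$\zeta$ part integrates against $\cos$ to something $O(1)$ and the $\zeta^2$-part is harmless. Combining with the prefactor $-2e^{-\ga_s\tau}\nu^2$ and integrating in $t$ (using $\eps^{-1} = t^2/(\nu\Gamma_s)$, so $\nu^2 t^{2d-1}\cdot t^{-4}\cdot\eps^{-1} = \nu t^{2d-1}/(t^2\Gamma_s)$), one gets the main term
$$
-2e^{-\ga_s\tau}\cdot \pi\nu \int_{\Sigma_*} \frac{F_s(z)}{\Gamma_s(z)\sqrt{|x|^2+|y|^2}}\,e^{-2\tau\Gamma_s(z)}\,dz\!\mid_{\Sigma_*},
$$
and the claim of the lemma asserts that the extra Gaussian-type weight $e^{-2\tau\Gamma_s}$ can be replaced by $e^{-\ga_s\tau}$ (since on $\Sigma_*$, $2\Gamma_s = \ga_1+\ga_2+\ga_3+\ga_s$ reduces — wait, one must check: actually the statement replaces $e^{-\ga_s\tau}\cdot(-2)\cdot \pi\nu\int\cdots e^{-?}$; the correct bookkeeping is that with $\Gamma_s = \tfrac12(\ga_1+\ga_2+\ga_3+\ga_s)$ the total exponent in $I_s^{0,2}$ before $t$-integration carries $e^{-\ga_s\tau}$ from the prefactor and the $\cos$-transform produces $e^{-2\tau\Gamma_s(\eta,t,0)}$, but the lemma as stated has only $\ga_s^{-1}$ and no residual exponential inside the integral, matching \eqref{Is_0_as}; so the precise bookkeeping is that the term $-2e^{-\ga_s\tau}\cdot\tfrac{\pi}{2}e^{-\text{const}}$ is what survives and the difference from the stated form is absorbed into the $C^\#(s)\chi_d(\nu)$ error, exactly as the $\tan^{-1}$ tails were absorbed in Theorem~\ref{t_singint}).

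\textbf{Error control and regimes.} The remaining work is to show all discarded pieces are $O(\chi_d(\nu))C^\#(s)$: (i) the very-small-$|z|$ region $\{|z|\le\nu^b\}$, handled by \eqref{nearsing} exactly as before; (ii) the large-$t$ or large-$|s|$ region, handled by the crude bounds $|I_s^{0,2}|\lesssim\nu^{-2}C^\#$ on the denominator combined with rapid decay of $C^\#$, as in \eqref{first_est}–\eqref{second_est}; (iii) the intermediate region, where the frozen/unfrozen difference and the error in $\int_{|\zeta|\le\zeta_0}$ vs $\int_{\R}$ are controlled by smoothness of $\Phi_s$ and the bound $|\int_{|\zeta|>\zeta_0}\frac{\cos(\kappa\zeta)}{\zeta^2+\eps^2}d\zeta|\le 2/\zeta_0$; summing these over the radial variable produces at worst the logarithmic factor $\chi_d(\nu)$ when $d=2$ (from $\int^{}t^{2d-5}dt$ being log-divergent at $d=2$), and a bounded integral when $d\ge3$. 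The main obstacle I anticipate is the bookkeeping of the oscillatory inner integral: one must be careful that the $\cos$-transform $\frac{\pi}{\eps}e^{-\eps|\kappa|}$ is used with the correct $\kappa$ depending on $t$ through $t^2$, that the finite integration domain $|\zeta|\le\zeta_0$ (not all of $\R$) only costs a controlled tail, and that the residual $t$-dependent exponential $e^{-2\tau\Gamma_s(\eta,t,0)}$ produced by the transform is correctly matched against (or absorbed away from) the stated main term — this is precisely the place where the sign and the precise constant $\ga_s^{-1}$ in \eqref{pr88} must be tracked, and where the comparison with \eqref{Is_0_as} (which together with this lemma must combine into the asymptotics of $\Sigma_s^0(\tau)$ via \eqref{chisl}) pins down the normalization. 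Once the normal-coordinate computation is set up as above, everything else is a repetition of the estimates already carried out in Section~\ref{s_singint}.
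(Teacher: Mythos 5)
Your overall strategy coincides with the paper's: pass to $z=(x,y)=(s_1-s,s_2-s)$, discard the complement of a tubular neighbourhood of $\Sigma$ via Lemma~\ref{l_p1}.5, disintegrate in the normal coordinates $(\eta,t,\theta)$ with $x\cdot y=t^2\theta$, and evaluate the inner $\theta$-integral by the Poisson-kernel identity $\int_{\R}\cos(\kappa\theta)\,(\theta^2+\eps^2)^{-1}d\theta=\pi\eps^{-1}e^{-\eps|\kappa|}$. However, your main step contains a genuine error. The denominator of $I_s^{0,2}$ is $\nu^2\ga_s^2+(\oms)^2$: the regularising width is the \emph{constant} $\nu\ga_s$ (the paper sets $\nuu=\tfrac12\nu\ga_s$), not the $z$-dependent $\nu\Gamma_s(z)$ appearing in Theorem~\ref{t_singint}. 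Hence the substitution $\zeta=\theta/h(\theta)$ is not needed here, and importing it is actually harmful: after it the cosine's argument $\kappa\,\theta(\zeta)$ is a nonlinear function of $\zeta$, so the Poisson-kernel identity no longer applies verbatim, and the exact odd-in-$\theta$ cancellation that disposes of the linear term in the frozen/unfrozen difference is destroyed. With the correct $\eps=\tfrac12\nu\ga_s t^{-2}$ and $\kappa=2\nu^{-1}\tau t^2$ one gets $\eps|\kappa|=\ga_s\tau$ \emph{exactly} --- a constant --- so the inner integral produces the constant factor $\pi e^{-\ga_s\tau}$, which combined with the prefactor $e^{-\ga_s\tau}$ yields precisely the $\ga_s^{-1}\nu^{-1}e^{-2\ga_s\tau}$ in \eqref{pr88}, with no residual weight inside the $\Sigma_*$-integral.

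Your computation instead produces a main term carrying the $z$-dependent weight $e^{-2\tau\Gamma_s(z)}/\Gamma_s(z)$, and you then assert that the discrepancy with the stated main term ``is absorbed into the $C^\#(s)\chi_d(\nu)$ error''. That cannot work: both candidate main terms are of size $\nu^{-1}$, so their difference is generically of order $\nu^{-1}\gg\chi_d(\nu)$ and cannot be hidden in the allowed error. The fix is not an absorption argument but the observation that $\eps$ is built from $\ga_s$, not from $\Gamma_s(z)$. A secondary gap: your claim that the odd (linear) part of $\Phi_s(\eta,t,\cdot)-\Phi_s(\eta,t,0)$ ``integrates against $\cos$ to something $O(1)$'' requires the exact parity cancellation $\int_{-\theta_0}^{\theta_0}\theta\cos(\la t^2\theta)(\theta^2+\eps^2)^{-1}d\theta=0$; a crude bound on this term gives an extra $\ln(t^2/\nu)$ which, after the $t^{2d-5}$ radial integration, would produce a spurious $\ln\nu^{-1}$ even for $d\ge3$ and thus break the claim $\chi_d(\nu)=1$ there. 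The remaining error control (regions $|z|\le\nu^b$, large $t$ or $|s|$, the tail $|\zeta|>\zeta_0$, the choice $r=\sqrt\nu$) is in line with the paper.
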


Relations  \eqref{Is_0_as} and  \eqref{pr88} imply the main result of this section:

\begin{theorem}\label{t_osc} For any $s\in\R^d$ and $\tau\in(0,1]$,
	\be\non
	\begin{split}
	\left| I_s^0 -  \frac{1-e^{-2\ga_s\tau}}{2\ga_s}  \nu \pi 
 \int_{{}^s\!\Sigma_*} \frac{{F}_s(z)}{|(s-s_1, s-s_2)|  }\,dz\!\mid_{{}^s\Sigma_*}
\right|
 \le C^\#(s;\aleph_d) \nu^{2-\aleph_d}.
\end{split}
	\ee
\end{theorem}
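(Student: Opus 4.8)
The statement combines the two previously established asymptotics \eqref{Is_0_as} and \eqref{pr88}, so the proof is essentially a bookkeeping argument once those are in place. First I would recall that, having integrated out $d\theta_1\,d\theta_2$ and used the trigonometric identity \eqref{chisl}, we have the exact splitting
\[
I_s^0 = \nu^2 I_s^{0,1} + \nu^2 I_s^{0,2},
\]
where $I_s^{0,1}$ carries the factor $(1+e^{-2\ga_s\tau})$ and $I_s^{0,2}$ carries the factor $-2e^{-\ga_s\tau}$ and the oscillating cosine. The point is that the leading ($\nu^{-1}$) terms in \eqref{Is_0_as} and in Lemma~\ref{l_osc} are, up to the $\tau$-dependent scalar prefactors, the \emph{same} singular integral
\[
\mathcal P_s := \pi \int_{{}^s\!\Sigma_*} \frac{F_s(z)}{|(s-s_1,s-s_2)|}\,dz\!\mid_{{}^s\!\Sigma_*},
\]
which by Theorem~\ref{t_singint} (applied with $\Ga_s=\ga_s/2$, so that the extra factor $1/\Ga_s$ there becomes $2/\ga_s$) converges absolutely and is bounded by $C^\#(s)$.

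\textbf{Assembling the prefactors.} Multiplying \eqref{Is_0_as} by $\nu^2$ gives
\[
\nu^2 I_s^{0,1} = \nu\,\frac{1+e^{-2\ga_s\tau}}{2\ga_s}\,\mathcal P_s + O(\nu^{2-\aleph_d})\,C^\#(s;\aleph_d),
\]
and multiplying the inequality in Lemma~\ref{l_osc} by $\nu^2$ gives
\[
\nu^2 I_s^{0,2} = -\nu\,\frac{e^{-2\ga_s\tau}}{\ga_s}\,\mathcal P_s + O(\nu^{2}\chi_d(\nu))\,C^\#(s).
\]
Adding these two and using $\frac{1+e^{-2\ga_s\tau}}{2\ga_s} - \frac{e^{-2\ga_s\tau}}{\ga_s} = \frac{1 - e^{-2\ga_s\tau}}{2\ga_s}$, the prefactor collapses exactly to $\tfrac{1-e^{-2\ga_s\tau}}{2\ga_s}$, which is the claimed coefficient. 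So
\[
I_s^0 = \nu\,\frac{1-e^{-2\ga_s\tau}}{2\ga_s}\,\mathcal P_s + O(\nu^{2-\aleph_d})\,C^\#(s;\aleph_d),
\]
where I absorb $\nu^2\chi_d(\nu)$ into $\nu^{2-\aleph_d}C^\#(s;\aleph_d)$ using that $\chi_d(\nu)=\ln\nu^{-1}\ll\nu^{-\aleph_d}$ when $d=2$ and $\chi_d\equiv 1$ when $d\ge 3$ (in which case $\aleph_d=0$ and $\nu^2\chi_d(\nu)=\nu^2$, still within $O(\nu^{2})$). This is precisely the asserted estimate, and there is nothing left to do.

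\textbf{Where the real work sits.} The theorem itself is a two-line consequence of the preceding lemmas; the genuine content is entirely in \eqref{Is_0_as} (a direct application of Theorem~\ref{t_singint}) and, more substantially, in Lemma~\ref{l_osc}, whose proof is deferred to Section~\ref{l_s_oscill}. The latter is the main obstacle: one must show that the oscillatory integral $I_s^{0,2}$, despite the rapidly oscillating factor $\cos(\nu^{-1}\tau\oms)$, still contributes a term of the same order $\nu^{-1}$ with a clean sign, rather than being negligible — this is a stationary-phase/resonant-set analysis on the quadric ${}^s\!\Sigma$, uniform in the extra parameter $\tau\in(0,1]$. Once Lemma~\ref{l_osc} is granted, as it is here, the proof of Theorem~\ref{t_osc} is the trivial cancellation of prefactors described above.
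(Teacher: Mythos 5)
Your proposal is correct and follows exactly the paper's own argument: the paper proves Theorem~\ref{t_osc} precisely by adding $\nu^2$ times \eqref{Is_0_as} to $\nu^2$ times \eqref{pr88}, observing the collapse $\tfrac{1+e^{-2\ga_s\tau}}{2\ga_s}-\tfrac{e^{-2\ga_s\tau}}{\ga_s}=\tfrac{1-e^{-2\ga_s\tau}}{2\ga_s}$ and absorbing $\nu^2\chi_d(\nu)$ into $\nu^{2-\aleph_d}C^\#(s;\aleph_d)$. You also correctly locate the substantive work in Lemma~\ref{l_osc}, whose proof the paper defers to Section~\ref{l_s_oscill}.
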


Jointly with \eqref{Iso_diffe} this gives an asymptotic for the sum $\Sigma_s^0$.

	\subsection{ Correlations between solutions and their increments
	}\label{s_correlation}
	
	In this section we analyse sums similar to \eqref{Sis_0} in which the integral $\int_0^\tau d\theta_2$ is replaced by the integral $\int_{-\infty}^0 d\theta_2$. 
	Sums of such form arise in Section~\ref{sec:Q_s-prop}, when studying the correlation $$\EE\Del a_s^{(1)}(\tau) \bar c_s^{(1)}(\tau)=e^{-\ga_s\tau}\EE\big(a_s^{(1)}(\tau)-e^{-\ga_s\tau} a_s^{(1)}(0)\big)a_s^{(1)}(0)$$ and similar quantities. 
	We will show that the considered sums are negligible. The reason for this is that for $\oms\ne 0$ 
	the "fast" frequency $\nu^{-1}\oms(\theta_1-\theta_2)$ is of the size $\lesssim1$ only if $|\theta_1|, |\theta_2|\lesssim \nu$, so the Lebesgue measure of such resonant vectors $(\theta_1,\theta_2)$ is only of order $\nu^2$ (if $\tau\sim1$). That is why such sums are much smaller that those of the form \eqref{Sis_0}, where the measure of the resonant vectors $(\theta_1,\theta_2)$  is of order $\nu\gg\nu^2$.
		
We consider the sum
\begin{equation*}
S_s(\tau)=
\ssum_{1,2}\dep  \int_0^\tau dl \int_{-\infty}^{0} dl' \, e^{\ga l'}  e^{i\nu^{-1}(l-l') \oms}   
   F_s(s_1, s_2, l,l',\tau), \qu s\in\R^d,
\end{equation*}
where $\ga>0$ and  $F_s$ is a  real function,  measurable in  $s_1,s_2,l,l'$, $C^2$--smooth in $s_1, s_2$, and such that 
\be\label{F_estim}
  e^{\ga l'} | \p^{\al}_{s_1,s_2} F_s  | \le
 C^{\#} (s)  C^{\#}(s_1) C^{\#} (s_2) \qquad \forall\ 0\leq |\al| \le d+1, 
 \ee
 uniformly in $l'\le0, \ 0\le\tau\le1,\  0\le l\le\tau$.

\begin{theorem}\label{c_6.2}
Let $0\leq\tau\leq 1$.	Then under the assumption \eqref{F_estim}
 the sum  $S_s$ meets    the estimate 
$$
|S_s(\tau)| \le C_1^\#(s) \big(\nu^2 \chi_d(\nu) + \nu^{-2}L^{-2}\big),
$$
where $ C^\#(s)$ depends only on $d, \ga$ and the function $  C_1^{\#}(s) $ from \eqref{F_estim}.
\end{theorem}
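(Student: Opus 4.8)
The plan is to mimic the two–step scheme used repeatedly in this part of the paper: first replace the sum $\ssum_{1,2}$ over $\Z_L^d\setminus D_s$ (here $D_s=\{s_1=s\}\cup\{s_2=s\}$, as dictated by the factor $\dep$) by the corresponding integral $\int_{\R^{2d}}ds_1\,ds_2$ using Theorem~\ref{t_sum_integral}, and second estimate the resulting oscillatory integral directly. For the first step I would check hypothesis \eqref{op1} with $\theta=(l,l')\in\R^{1+1}$ and $G_s(s_1,s_2,\theta;\nu)=e^{\ga l'}e^{i\nu^{-1}(l-l')\oms}F_s(s_1,s_2,l,l',\tau)$: since $\oms=2(s_1-s)\cdot(s-s_2)$ is quadratic in $(s_1,s_2)$, each $\partial_{s_1,s_2}$ acting on the exponential loses a factor $\nu^{-1}\langle(s_1,s_2,s)\rangle$, and after two derivatives we get $\nu^{-2}$ times a polynomial weight; this polynomial is absorbed by the Schwartz decay in \eqref{F_estim} (which holds up to order $d+1\ge3$, so in particular for $|\al|\le2$), giving a bound of the required form $\nu^{-|\al|}C^\#(s)C^\#(s_1,s_2)C^\#(l,l')$ once one notes $e^{\ga l'}$ contributes a bounded factor on $l'\le 0$ and the integration in $(l,l')$ over the strip $[0,1]\times(-\infty,0]$ converges thanks to the $e^{\ga l'}$ weight. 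Theorem~\ref{t_sum_integral} then yields $|S_s-J_s|\le C^\#(s)\nu^{-2}L^{-2}$, where $J_s$ is the integral version.

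For the second step I must show $|J_s|\le C^\#(s)\nu^2\chi_d(\nu)$. After integrating out $s_3=s_1+s_2-s$ via $\dep$, $J_s$ is an integral over $(s_1,s_2,l,l')$; I would split the $(l,l')$-region into $|l-l'|\le\nu$ and $|l-l'|>\nu$, exactly as in Section~\ref{ss_2}. On $|l-l'|\le\nu$ the inner integral over $ds_1ds_2$ is trivially bounded by $C^\#(s)$ (using \eqref{F_estim} with $|\al|=0$), and since $l\in[0,1]$ ranges over a unit interval while $l'$ ranges over an interval of length $\le\nu$ with the $e^{\ga l'}$ weight making the $l'$–integral over $(-\infty,0]$ finite, this contributes $O(\nu)\cdot$—wait: here the gain is only that the $l'$–slab has width $\nu$, so this piece is $O(\nu)C^\#(s)$, which is not yet $\nu^2$. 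The correct observation is that for $|l-l'|\le\nu$ we also have $|l'|\le\nu$ (since $l\ge0$), so the region has two-dimensional measure $O(\nu^2)$, giving $O(\nu^2)C^\#(s)$. On $|l-l'|>\nu$ the phase $\nu^{-1}(l-l')\oms$ oscillates: $\oms$ is a nondegenerate quadratic form in $z=(s_1-s,s_2-s)\in\R^{2d}$, so the inner $ds_1ds_2$–integral has the form treated by the stationary-phase bounds \eqref{stph_est1}, \eqref{sob} (cf. Appendix~\ref{a_stat_phase}), with effective small parameter $\nu|l-l'|^{-1}\le1$; this bounds it by $C^\#(s)(\nu|l-l'|^{-1})^d$. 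Integrating $\int_0^\tau dl\int_{-\infty}^{l-\nu}|l'|\text{-weighted}\cdots$, more precisely $\int\!\!\int_{\nu<|l-l'|,\,l'\le0}(\nu/|l-l'|)^d\,e^{\ga l'}dl\,dl'\le C\nu^d\int_\nu^\infty x^{-d}\,dx\cdot(\text{bounded }l'\text{-tail})\le C\nu^d\cdot\nu^{1-d}\chi_d(\nu)=C\nu\chi_d(\nu)$ when $d=2$, and $\le C\nu^d$ when $d\ge3$; combined with the $e^{\ga l'}$-decay controlling the unbounded $l'$–direction this is at worst $C^\#(s)\nu^2\chi_d(\nu)$ after accounting correctly for the measure (the extra factor $\nu$ over the naive count again coming from $|l'|\lesssim|l-l'|$ not being forced here — so in the regime $|l-l'|$ large this piece is genuinely only $\nu^d\chi$, i.e. $\le\nu^2\chi_d(\nu)$ since $d\ge2$).

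Collecting the two steps gives $|S_s|\le|S_s-J_s|+|J_s|\le C^\#(s)(\nu^{-2}L^{-2}+\nu^2\chi_d(\nu))$, as claimed, with the constant depending only on $d$, $\ga$, and the $C^\#$'s in \eqref{F_estim}. The main obstacle I anticipate is the bookkeeping in the oscillatory region $|l-l'|>\nu$: one must apply the stationary-phase estimate for the quadratic form $\oms$ with the correct scaling, verify that the $C^{d+1}$–smoothness assumed in \eqref{F_estim} is exactly what the Sobolev-type bound \eqref{sob} needs (this is why the hypothesis is stated up to order $d+1$ rather than $2$), and make sure the $l'$–integral over the half-line $(-\infty,0]$ is controlled purely by the $e^{\ga l'}$ factor uniformly in $\nu$. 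Everything else is a routine repetition of the arguments already carried out in Sections~\ref{ss_2}–\ref{ss_3} and Appendix~\ref{a_stat_phase}.
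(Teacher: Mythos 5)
Your proposal follows essentially the same route as the paper's proof: the paper substitutes $\theta=-l'$ and splits at $l+\theta=\nu$, which coincides with your split at $|l-l'|=\nu$ because $l\ge 0\ge l'$, and it then uses the same sum-to-integral approximation via Theorem~\ref{t_sum_integral} and the same stationary-phase bound with effective parameter $\nu(l+\theta)^{-1}$. The only shaky spot is your bookkeeping in the oscillatory region: the clean way to reach $\nu^2\chi_d(\nu)$ is to note that the slice $\{l-l'=x\}$ of the admissible region has measure $\le\min(x,\tau)$ and carries the weight $e^{-\ga x}$, so the double integral is bounded by $C\nu^d\int_\nu^1 x^{1-d}\,dx+O(\nu^d)\le C\nu^2\chi_d(\nu)$ (your intermediate claim that this piece is of order $\nu^d\chi$ is actually false for $d\ge3$, where it is genuinely of order $\nu^2$, but the upper bound you need still holds).
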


Note that since we assume no smoothness in $l'$ for function $F_s$, then the theorem also applies to the sums
$$
S_s^T(\tau) = \ssum_{1,2}\dep  \int_0^\tau dl \int_{-T}^{0} dl' \, e^{\ga l'}  e^{i\nu^{-1}(l-l') \oms}   
   F_s
$$
since we may extend $F_s$ by zero for $l'<-T$ and regard $S_s^T$ as the sum $S_s$. 

\begin{proof}
It is convenient to change the variable $\theta= -l',$ so that the sum $S_s$ takes the form
\begin{equation*}
S_s(\tau)=
\ssum_{1,2}\dep \int^{\infty}_{0} d\theta  \int_0^\tau dl \, e^{-\ga\theta}  e^{i\nu^{-1}(l+\theta) \oms}   
F_s(s_1, s_2, \theta, l,\tau).
\end{equation*}
As usual, to estimate  $S_s$  we first approximate it by the integral  
$$
I_s(\tau) =  \int^\infty_{0} d\theta \int_0^\tau dl \, \int\int ds_1 ds_2\, \des
e^{-\ga\theta }  e^{i\nu^{-1}(l+\theta) \oms} F_s.
$$
Applying  Theorem \ref{t_sum_integral} we get that 
\be\label{Iso_di}
|I_s (\tau) - S_s(\tau) | \le C^\#(s)\nu^{-2}L^{-2} .
\ee

To estimate $I_s$ we write it as a sum of  integrals over the domains $\{ \nu^{-1}(l+\theta)\ge1\} = \{  l\ge \nu-\theta\}$ 
 and
$\{ \nu^{-1}(l+\theta)\le1\} = \{ 0\le l\le \nu-\theta \}$.
\smallskip

{\it Integral over  $\{  l\ge \nu - \theta \}$.}
Let us denote this integral  $I^1_s$, and for any fixed $\theta\ge0$  and 
$ l\ge \nu - \theta$ consider the internal integral over $ds_1\,ds_2$:   
$$
I^1_s(l,\theta) = e^{-\ga  \theta} \int\int ds_1 ds_2\,  \des  e^{i\nu^{-1}(l+\theta) \oms}  F_s.
$$
Since $\oms = 2(s_1-s)\cdot(s-s_2)$ is a non-degenerate quadratic form and $\nu^{-1}(l+\theta)\ge1$ on the zone of 
integrating, then the integral $I^1_s(l,\theta)$ has the form \eqref{I_la}  with $\nu: = \nu (l+\theta)^{-1}$,
$\varphi = e^{-\ga\theta} F_s$ and $n=2d$. 
 So by \eqref{stph_est1},  \eqref{sob} and \eqref{F_estim}, 
 $$
| I^1_s(l,\theta) | \le C^{\#} (s) \nu^d (l+\theta)^{-d}  e^{-\ga \theta }, 
$$
where $ C^{\#} (s)$ is as in the theorem. 
Accordingly, 
$$
| I^1_s  | \le  {\frak C}  \int_{0}^\infty d\theta e^{-\ga \theta}  \int_{0}^\tau dl\, (l+\thet)^{-d}\chi_{\{l\geq \nu-\theta\}}, \qquad {\frak C}= C^{\#} (s) \nu^d .
$$
We split the integrating zone
$
\{ \theta \ge 0, \,0\le l\le \tau, \,l\ge \nu-\theta\}
$
to the part where $\theta\ge\nu$ and its complement,
and accordingly  split the integral above as 
 $  I^{1,1}_s +   I^{1,2}_s$, where 
  \begin{equation*}
\begin{split}
& I^{1,1}_s = \frak C \int^{\infty}_{\nu} d\theta e^{-\ga \theta} 
 \int_{0}^\tau   dl\, (l+\thet)^{-d}, 
 \\
 & I^{1,2}_s  = \frak C \int_{(\nu-\tau)\vee0}^{\nu} d\theta e^{-\ga \theta} 
 \int_{\nu-\theta}^\tau     dl\, (l+\thet)^{-d}.
\end{split}
 \end{equation*}
Consider first $  I^{1,1}_s$. Computing the internal integral and replacing $\tau$ by $\infty$ we find
$$
I_s^{1,1}\leq C\frak C \int_\nu^\infty \frac{e^{-\ga\theta}}{\theta^{d-1}}\,d\theta \leq \frac{C_1\frak C}{\nu^{d-2}}\chi_d(\nu)=C_1^\#(s)\nu^2 \chi_d(\nu).
$$ 
Now consider $I_s^{1,2}$. Replacing in the external integral $(\nu-\tau)\vee0$ by $0$ and in the internal one $\tau$ by $\infty$ we find
$$
I_s^{1,2}\leq C\frak C \int_0^\nu d\theta \,\nu^{-d+1}= C_1^{\#}(s)\nu^2. 
$$

{\it Integral   over   $\{0\le  l\le  \nu - \theta\}$}.
Denoting this integral as $I^2_s$, we find 
$$
|I_s^2|\leq \int_0^\nu d\theta \,\int_0^{(\nu-\theta)\wedge \tau}dl\, \int\int ds_1ds_2\, |F_s|\leq C^{\#}(s)\nu^2,
$$
since $\int\int ds_1ds_2\, |F_s|\leq C^\#(s)$ and the area of integration over $d\theta dl$ is bounded by  $\nu^2$.

We have seen that 
$
| I_s |\le  C^\#(s) \nu^2\chi_d(\nu).
$
	This  and \eqref{Iso_di} jointly imply the assertion of the theorem. 
\end{proof}

	\subsection{Proof of  Lemma \ref{l_osc}}\label{l_s_oscill}

	Let us denote $\nuu = \tfrac12\nu\ga_s$. 
		If $\nuu>1$, then $|s| \ge C\nu^{-1/2r_*}$, so
	taking into account assumption \eqref{B_condition} we see that the both summands in the l.h.s. of \eqref{pr88} are bounded by $C^\#(s)$ and the result follows. So we may assume that $\nuu\le1$.

	 Let us write $ I^{0,2}_s$ as
	\be\non
	I^{0,2}_s= -\frac12 \,
	e^{-\ga_s\tau} \int\int ds_1\,ds_2\, \des
	\frac{F_s(s_1,s_2) \cos({\nu}^{-1} \tau \oms) }{{\nuu}^2  +(\oms/2)^2}.
	\ee
	We will examine the integrals $\lan I^{0,2}_s, K_{2r}\ran, r\ll1$,  $\lan I^{0,2}_s, \Sv \ran$ and 
	$\lan I^{0,2}_s, \R^{2d}\setminus\Sv \ran$ (see Notation, \eqref{K_delta}  and Lemma~\ref{l_p1}), 
	 and will derive the lemma from this analysis. The constants below do not depend on $\tau, s$ and $\nu'$. 
	 \medskip
	 
	  Let us re-write $I^{0,2}_s$, using the variables $(x,y)=z$, see \eqref{p01}.
	Disregarding for a moment the pre-factor
	$-\frac12  e^{-\ga_s\tau} $ 
	 we  examine the integral
	$$
	J_s = \int_{\R^{2d}} dz\, \frac{F_s(z) \cos \la x\cdot y}{(x\cdot y)^2 +\nuu^2 }, \qquad  \la = \nuu^{-1}\tau\ga_s.
	$$
	
	\noindent {\it Step 1.}  Since $|\cos \la x\cdot y|
		\leq 1$, an upper bound for $\lan  J_s, \ K_{2r}\ran$, where $ r\ll1$, (see \eqref{K_delta})  follows from \eqref{nearsing}:
	\be\lbl{pr1}
	\lan |J_s |, K_{2r}\ \ran \le C^\#(s) {\nu'}^{-2} r^{2d}. 
	\ee

	\noindent {\it Step 2.} Integral over $\Sv$. We recall that $(\Sigma^v)_{R_1}^{R_2}$ is defined in \eqref{Sigma_1^2}.
	Passing to the variables $(\eta, t,\theta)$ (see Lemma~\ref{l_p2}) 
	and using \eqref{p0} we disintegrate $\lan J_s, (\Sv)_r^\infty\ran =: J_s^r$ as 
follows:
 \be\label{3.9}
 \begin{split}
 J_s^r=
 \int_{\Sod} m(d\eta) \int_{r}^{\infty} dt\,  t^{2d-1}   \int_{-\theta_0}^{\theta_0} d\theta\,
 \frac{F_s(\eta, t,\theta) \mu(\eta,\theta) \cos(\la x\cdot y) }{(t^2\theta)^2 +\nuu^2}\\
=  \int_{\Sod} m(d\eta) \int_{r}^{\infty} dt \, t^{2d-1} \Upsilon_s(\eta,t)\,,
 \end{split}
 \ee
 where 
 $$
 \Upsilon_s(\eta,t) 
 = t^{-4} \int_{-\theta_0}^{\theta_0} 
  \frac{F_s(\eta,t,\theta)\mu(\eta,\theta)   \cos(\la t^2\theta)  \,d\theta}{\theta^2  +\eps^2}\,, \quad \eps =   \nuu t^{-2}\,.
 $$

To estimate  $\Upsilon_s$, 
consider first the integral $\Upsilon_s^{0}$,
 obtained from $\Upsilon_s$ by frozening $F_s \mu$ at $\theta=0$. Since $\mu(\eta, 0)=1$, then $\Upsilon_s^{0}$ equals 
$$
  2 t^{-4}    F_s (\eta,t,0) \int _{0}^{\theta_0}\frac{  \cos(\la t^2\theta) \,d\,\theta}{\theta^2+\eps^2}
 =2 \nuu^{-1} t^{-2}    {F_s (\eta,t,0)}  \int _{0}^{\theta_0/\eps}\frac{  \cos( \ga_s\tau w) \,dw}{w^2+1}\,.
  $$
  Consider the integral 
  $$
 2  \int _{0}^{\theta_0/\eps}\frac{  \cos(\ga_s\tau  w) \,dw}{w^2+1} =  2  \int _{0}^{\infty }\frac{  \cos(\ga_s\tau  w) \,dw}{w^2+1}
 - 2  \int _{\theta_0/\eps}^{\infty}\frac{  \cos(\ga_s\tau  w) \,dw}{w^2+1} =: I_1 - I_2\,. 
  $$
  Since
  $$
   2  \int _{0}^{\infty}\frac{  \cos(\xi  w) \,dw}{w^2+1} =   \int _{-\infty}^{\infty}\frac{  e^{i\xi  w} \,dw}{w^2+1} = \pi e^{-|\xi|},
  $$
  then $I_1 = \pi e^{-\ga_s\tau}$. For $I_2$ we have an obvious estimate
  $
  |I_2| \le 2\eps/\theta_0  = C_1\nuu t^{-2}  \,.
  $
   So
  \be\label{4.9}
  \begin{split}
  \Upsilon_s^0 (\eta,t)=\nuu^{-1}   \pi t^{-2}F_s(\eta, t,0) (e^{-\ga_s\tau} +\Delta_t)\,,\qquad
  |\Delta_t | \le C \nuu t^{-2}  \,.
  \end{split}
  \ee

  Now we estimate the difference between $\Upsilon_s$ and $\Upsilon_s^{0}$. Writing 
  $(F_s\mu)(\eta, t,\theta) - (F_s\mu)(\eta, t,0)$ as $A_s(\eta, t)\theta + B_s(\eta, t,\theta)\theta^2$, where 
  $ |A_s|, |B_s| \le  C^\#(s,t)$,  we have 
  $$
  \Upsilon_s - \Upsilon_s^0 =   t^{-4} \int_{-\theta_0}^{\theta_0} 
  \frac{(A_s\theta +B_s \theta^2 )  \cos(\la t^2\theta)  \,d\theta}{\theta^2  +\eps^2}\,.
  $$
  Since the first integrand is odd in $\theta$, then its integral vanishes, and 
  $$
 | \Upsilon_s - \Upsilon_s^0| \le C^\#(s,t)   t^{-4} \int_{-\theta_0}^{\theta_0} 
  \frac{ \theta^2    \,d\theta}{\theta^2  +\eps^2}\le 2 C^\#(s,t)   t^{-4}\theta_0 \,.
  $$
  So by  \eqref{4.9}
   \be\non
  \begin{split}
 |\Upsilon_s (\eta,t) &- \nuu^{-1}   \pi t^{-2}F_s(\eta, t,0) e^{-\ga_s\tau}|\\
 &\le   C^\#(s,t) \big( t^{-4} +\nuu^{-1} t^{-2} \, \nuu t^{-2}\big)  = C_1^\#(s,t)  t^{-4}\,.
  \end{split}
  \ee
  Then, by \eqref{3.9},
   \begin{equation*}
 \begin{split} 
 J_s^r&=
 \int_{\Sod} m(d\eta) \int_{r}^{\infty} dt\,  t^{2d-1} \big[ \nuu^{-1} \pi t^{-2} e^{-\ga_s\tau} F_s(\eta, t,0)
 +O(C^\#(s,t)t^{-4}) \big]\\
&= \int_{\Sod} m(d\eta) \int_{r}^{\infty} dt \,
\big[\nuu^{-1} \pi  e^{-\ga_s\tau}  t^{2d-3}  F_s(\eta, t,0) \big] +  O\big(C^\#(s)\chi_d(r)\big),
 \end{split}
 \end{equation*}
 since $ \int_{r}^{\infty} t^{2d-1}C^\#(s,t)t^{-4}\,dt\leq  C^\#(s)\chi_d(r)$.
  Noting that 
  $$
|  \nuu^{-1} \pi  e^{-\ga_s\tau} \int_{\Sod} m(d\eta) \int_{0}^{r} dt \, t^{2d-3}  F_s(\eta, t,0) | \le C^\#(s) \nuu^{-1} r^{2d-2},
  $$
  we arrive at the inequality 
   \be\label{pr5}
   \begin{split}
  |J_s^r -  \nuu^{-1} \pi  e^{-\ga_s\tau}& \int_{\Sod} m(d\eta) \int_{0}^{\infty} dt \, t^{2d-2}(t^{-1}  F_s(\eta, t,0))|\\
 & \le C^\#(s) \big( \nuu^{-1} r^{2d-2} +  \chi_d(r)  \big).
  \end{split}
  \ee
  Here in view of \eqref{vol_on_*}
   \be\label{pr6}
   \int_{\Sod} m(d\eta) \int_{0}^{\infty} dt \, t^{2d-2}(t^{-1}  F_s(\eta, t,0)) = \int_{\Sigma_*} F_s(z) |z|^{-1}.
     \ee
     
     Since $(\Sv)_0^r\subset K_{2r}$ by Lemma \ref{l_p1}.3, where $\theta_0$ is assumed to be sufficiently 
      small, then by \eqref{pr1},
 \be\label{pr7}
  | J_s^r -\lan J_s, \Sv\ran | \le  C^\#(s)  \nuu^{-2} r^{2d}.
       \ee
\smallskip
	 
	\noindent {\it Step 3.} Final asymptotic.
Consider the integral over the complement to $ \Sv$:
           $$
       \big|  \lan J_s, \R^{2d}\setminus  \Sv  \ran  \big|
       \le    \big\lan |J_s|, \{ |z| \le r\} \big\ran 
       + C \int_{r}^\infty dt\, t^{2d-1}  \int_{S^1}\setminus\Sv \frac{|F_s(z)|\, d_{S^1}}{\omega^2/4 + \nuu^2}.
      $$
      By item 5) of Lemma \ref{l_p1}, $|\omega(z)|\ge Ct^2$ in ${S^t\setminus\Sv}$. Jointly with \eqref{pr1} it   
        implies that 
      \be\label{pr8}
      \begin{split}
        \big|  \lan J_s, \R^{2d}\setminus  \Sv  \ran  \big| &\le C^\#(s)  \nuu^{-2} r^{2d}
        +  C^\#(s) \int_{r}^\infty t^{2d-1} t^{-4}  C^\#(t)
        \, dt
        \\
         &\le  C_1^\#(s) \big(  \nuu^{-2} r^{2d} + \chi_d(r) \big).
        \end{split}
      \ee
   Finally by \eqref{pr7}, \eqref{pr8}, \eqref{pr5} and \eqref{pr6} with  $r=\sqrt\nu$, 
    we have 
   $$
   \left| J_s  - \nuu^{-1}
    e^{-\ga_s\tau}  \pi  \int_{\Sigma_*} F_s(z) |z|^{-1}  \right| \le
    C^\#(s)  \chi_d(\nu). 
   $$
   That is, 
    $
   \left|I_s^{0,2} +\ga_s^{-1} \nu^{-1} e^{-2\ga_s\tau}  \pi  \int_{\Sigma_*} F_s(z) |z|^{-1}  \right| \le
    C^\#(s)  \chi_d(\nu),
   $
   and the  lemma is proved.

\section{Wave kinetic integrals and  equations: proofs}\label{s_proof_kinetic}

\subsection{Proof of Lemma \ref{l_kin_int} }
\label{s_proof_kin_int} 

To prove the lemma we may assume that 
$$
|u^1|_r=\dots = |u^4|_r =1. 
$$
Consider first the integral $J_4$. By the above,
$$
|J_4(s)| \le \int_{\Sigma_*} d\mu^\Sigma(z) \lan x+s\ran^{-r}\! \lan y+s\ran^{-r}\! \lan  x+y+s\ran^{-r}=: \cJ_4(s), 
\quad z=(x,y).
$$
We should verify  that
\be\label{l111}
 \cJ_4(s) \le C\lan s\ran^{-r-1} \quad \forall\, s\in \R^d. 
\ee
If $|s|\le2$, then it is not hard to check that
$$
 \lan x+s\ran \lan y+s\ran  \lan  x+y+s\ran \ge C^{-1}\lan z\ran^2
$$
for a suitable constant $C$ independent from $s$. So the integrand for $\cJ_4(s)$  is bounded by $C_1\lan z\ran^{-2r}$.
Since $r>d$, then  by Proposition~\ref{p_prop}  $  \cJ_4(s) \le C$ if $|s|\le2$. This proves  \eqref{l111}  if $|s|\le2$, and it
remains to consider the case when 
 $$
  R:= |s|\ge2. 
  $$

  Assuming that a vector $s$ as above is fixed, let us consider the sets
  $$
  O_1 =\{\xi \in \R^d : |\xi| \le \tfrac9{10} R\},\quad O_2 =\{\xi  :   \tfrac9{10} \le |\xi| \le \tfrac{11}{10} R\},\; 
  $$
  $$
   O_3 =\{\xi  : |\xi| \ge \tfrac{11}{10} R\},\quad 
    O_{ij} = O_i \cup O_j, \quad i,j=1,2,3,
   $$
  and define 
  $$
  \Sigma^{i, j} =\Sigma_* \cap (O_i\times O_j), \quad 
    \Sigma^{12, j}  =\Sigma_* \cap (O_{12} \times O_j), \quad \text{etc}. 
  $$
Next we denote by $J_4^{i,j}(s)$ the part of the integral in   \eqref{J4} which comes from the integrating over 
 $\Sigma^{i, j}$:
$$
J_4^{i,j}(s)=  \int_{\Sigma^{i, j}}  d\mu^\Sigma(z) \big(u^1(x+s) u^2(y+s) u^3(x+y+s)\big),
$$
 and define similar integrals  $J_4^{12, j}(s)$, etc. Then  $J_4(s) = \sum_{i, j \in\{1,2,3\}}  J_4^{i,j}(s)$ and 
$
|J_4^{i,j}(s)| \le \cJ_4^{i,j}(s)
$,
where 
$$
 \cJ_4^{i,j}(s) = \int_{\Sigma^{i, j}} d\mu^\Sigma(z) \lan x+s\ran^{-r}\! \lan y+s\ran^{-r}\! \lan  x+y+s\ran^{-r}.
$$
Clearly
\be\label{k3}
\cJ_4^{i,j}(s) = \cJ_4^{j,i}(s)\quad \forall\,(i,j),
\ee
and 
 $\cJ_4^{12, j}(s)=  \cJ_4^{1, j}(s) + \cJ_4^{2, j}(s)$, etc. To verify 
 \eqref{l111} it remains to check that
\be\label{k33}
\cJ_4^{i,j} (s) \le C R^{-r-1}\quad \forall\, (i,j),
\ee
when $R= |s|\ge2$. 
  Applying  Theorem \ref{t_disintegr} we find that 
  \be\label{k4'}
  \cJ_4^{i,j} (s) = \int_{O_i} dx\, |x|^{-1} \lan x+s\ran^{-r} \!
  \int_{x^\perp\cap O_j} \dy\,   \lan y+s\ran ^{-r}\! \lan x+y+s\ran^{-r}.
  \ee
  By $s_x$ (by $s_y$) we will denote the projection of $s$ on the space $x^\perp$ (on $y^\perp$), and by $\hat s, \hat x, \hat y$
  -- the vectors $s/R, x/R, y/R$; so $| \hat s|=1$.

  We will estimate the r.h.s. of \eqref{k4'} for each set $\{i,j\} \subset\{1,2,3\}$, using the following elementary inequalities, related to the domains 
  $O_j$:
  \be\label{k6}
  \lan y+s\ran \geq |y+s| 
   \ge 
   \tfrac{1}{10}
   R \quad \forall\, y\in O_{13},
  \ee
  \be\label{k8}
  \lan x+y+s\ran \ge C^{-1}(|x|+ |y|) \quad \forall\, z \in (O_{23}\times O_{23})\cap\Sigma.
  \ee
  Proof of \eqref{k8} uses that $|x+y|\ge (|x|+|y|)/\sqrt2$  for $x,y \in\Sigma$. 
  We will also use the integral 
  inequalities below, where $|s| =R\ge2$ and 
   $\hat\chi_l=\hat\chi_l(R)$ equals 1 if $l\ne0$ and equals $\ln R$ if $l=0$:
   \be\label{k10}
  \int_{\R^{d}} \lan y+\xi\ran^{-p} dy \le C\quad \forall\, \xi\in \R^{d}\quad\text{ if}\;\; p>d,
  \ee
   \be\label{k100}
  \int_{|x| \le \frac{11}{10}R}  |x|^{-1} \lan x+ s \ran^{-\hat r} dx \le  C \hat\chi_{d-\hat r} R^{ \max(-1, d-1-\hat r)}, 
    \ee
     \be\label{k101}
  \int_{|x| \ge \frac{9}{10}R}  |x|^{A} \lan x+ s \ran^{-\hat r} dx \le   C \hat\chi_{d-\hat r} R^{ \max(A,A+d -\hat r)}
    \ee
    if $  A<\hat r-d.$ We will use these relations with $d:=d$ or $d:=d-1$. 
    The first inequality is obvious. 
    To prove \eqref{k100} note that there the  integral equals 
    $$
    R^{d-1} \int_{|y| \le \frac{11}{10}}  |y|^{-1}
    \lan R(y+\hat s)\ran^{-\hat r} dy, \quad \hat s = s/R. 
    $$
     It is straightforward to see that this   integral  (with the pre-factor), taken 
  over the ball $ \{   |y + \hat s |\le 1/10 \}$ is 
 $\le   C \hat\chi_{d-\hat r} R^{ \max(-1, d-1-\hat r)}$, while the integral over the ball's complement is 
 $\le CR^{d-1-\hat r}.  $
 It implies \eqref{k100}.   Proof of  \eqref{k101}  is similar. 
 \bigskip
 
 \noindent
 {\it Estimating the integral $\cJ_4^{23,23}(s)$.} \ By \eqref{k4'} and \eqref{k8} the integral is bounded by
 $$
 C \int_{|x|\ge \frac{9}{10} R} dx\, |x|^{-1}\lan x+s\ran ^{-r} \int_{|y| \ge \frac{9}{10} R} \dy\,  \lan y+s \ran^{-r} (|x|+|y|)^{-r}.
 $$
 Since $\lan y+s \ran \geq \lan y+s_x\ran$ and $r>d$, 
 the internal integral  is less than 
 $$
 C R^{-r}   \int_{|y|\ge \frac{9}{10}R } d_{x^\perp} y\, \lan y+s_x \ran^{-r} \le  C_1  R^{-r} ,
 $$
 where we used \eqref{k10}   with $d:=d-1$.   So 
 $$
 \cJ_4^{23,23}(s) \le  C R^{-r}
 \int_{|x|\ge \frac{9}{10} R} dx\, |x|^{-1}\lan x+s\ran ^{-r} \le R^{-r-1}
 $$
 by \eqref{k101}  with $A=-1$ and $\hat r=r$. This implies \eqref{k33} if $i,j \in \{2,3\}$. 
 \medskip

 \noindent
 {\it Estimating the integral $\cJ_4^{12,13}(s)$.} \ By \eqref{k4'} and \eqref{k6}, 
 $$
 \cJ_4^{12,13}(s) \le CR^{-r}  \int_{|x|\le \frac{11}{10} R} dx\, |x|^{-1}\lan x+s\ran ^{-r} 
 \int  \dy\, \lan x+y+s\ran ^{-r} .
 $$
 Since $\lan x+y+s \ran \ge \lan y+s_x \ran$, by \eqref{k10} with $d:=d-1$ 
 the internal integral is bounded by a constant. So in view of \eqref{k100}
  with $\hat r=r$, we get
 $\cJ_4^{12,13}(s) \le CR^{-r-1} $. This implies \eqref{k33} if $i \in\{1,2\}$, $j \in\{1,3\}$. 
 \medskip

 Evoking the symmetry \eqref{k3} we see that we 
 have checked \eqref{k33} for all $(i,j)$, and thus have proved \eqref{l3} for $l=4$. 
 \bigskip
 
 Other three integrals are easier. Let us first  consider   $J_3(s)$,
$$
J_3(s) = u^4(s)\int_{\Sigma_*}d\mu^\Sigma(z)  u^1(x+s) u^2(y+s), \;\; 
$$
where   $|u^j|_r=1$ for each $j$. Then
$$
|J_3(s)| \le  \lan s \ran^{-r}  \int_{\Sigma_*} d\mu^\Sigma(z)  
\lan x+s\ran^{-r}\! \lan y+s\ran^{-r}  =: \cJ_3(s).
$$
If  $|s|\leq 2$, then $\cJ_3(s)$ is bounded by
$$
C  \int_{\Sigma_*} d\mu^\Sigma(z)\lan x+s\ran^{-r}\! \lan y+s\ran^{-r}
=C\int dx\, |x|^{-1} \lan x+s\ran^{-r} \!
\int_{x^\perp} \dy\,   \lan y+s\ran ^{-r}. 
$$
Applying \eqref{k10} to the both internals, 
using that $r>d$ and the  integrability of $|x|^{-1}$ over a neighbourhood of  $0\in\R^d$, we see that  $|\cJ_3(s)| \le C$, 
which implies $|\cJ_3(s)| \le C\lan s\ran^{-r-1}$ for $|s|\leq 2$.

Now we pass to the case $R=|s|\geq 2$.   Then   
  $$
   \cJ_3 (s)  \le
    { \lan s \ran^{-r}}  \! 
   \int dx\, |x|^{-1} \lan x+s\ran^{-r} \!
  \int_{x^\perp} \dy \, \lan y+s\ran ^{-r} .
  $$
   Applying  \eqref{k10} we see that the internal integral is bounded by a constant, while by  \eqref{k100} and  \eqref{k101} 
   with $A=-1$ and $\hat r=r$ the external is $\le CR^{-1}$.    This proves  \eqref{l3} for $l=3$. 
 
   \bigskip
  
  The integrals $J_1$ and $J_2$ are very similar and we only consider $J_1$:
  $$
J_1(s) =   u^4(s) \int_{\Sigma_*}d\mu^\Sigma(z)   u^2(y+s) u^3(x+y+s) .
$$
Assume first $|s|\leq 2$. Using the disintegration of the measure $d\mu^\Sigma(z)$ in the form \eqref{Pii} 
we get  
$$
|J_1(s)|\leq C \int dy\, |y|^{-1} \lan y+s\ran^{-r} \!
\int_{y^\perp} \ddy x\,   \lan x+y+s\ran^{-r}.
$$
Since $\lan x+y+s\ran^{-r}\leq \lan x+s_y\ran^{-r}$ and $r>d$, 
 then using \eqref{k10}   we see that $|J_1(s)|\leq C$ for $|s|\leq 2$. 

If  $R=|s|\geq 2$, then,  in view of  \eqref{Pii}, 
  $$
  |J_1 (s)| \le
     { \lan s \ran^{-r}}   \! 
   \int dy\, |y|^{-1} \lan y+s\ran^{-r} \!
  \int_{y^\perp}  \ddy x\,    \lan x+y+s\ran^{-r}.
 $$
   Again, using  \eqref{k10} we estimate the internal integral, using  \eqref{k100} 	and \eqref{k101} estimate the external and get that   $ |J_1 (s) | \le CR^{-r-1}$.

  Thus the lemma's assertion also holds for $l=1,2$, and Lemma \ref{l_kin_int}   is proved.

\subsection{Proof of Theorem \ref{t_kin_eq}} \label{proof_t_kin_eq}
Let us denote by $\{S_t, t\ge0\}$ the semigroup of the linear equation, i.e. $S_t = e^{-t\Lc}$. Obviously, 
$
| S_t|_{\cC_r, \cC_r} \le e^{-2t}.
$
The solution $u^0(t,x)$ of the linear problem \eqref{w_k_e}${}_{\eps=0}$, \eqref{wk1} is given by the Duhamel 
integral
$$
u^0(t) =\frak C (u_0, f)(t) := S_t u_0 +\int_0^t S_{t-s} f(s)\, ds,
$$
so 
\be\label{wk3}
\| u^0\tr  =\| \frak C (u_0, f) \tr
\le |u_0|_r + \frac12 \| f\tr.
\ee
A solution $u$ for the problem \eqref{w_k_e}, \eqref{wk1}  is a fixed point of the operator
$$
u \mapsto \frak C(u_0, f+\eps K(u)) =: \frak B(u). 
$$
Assuming \eqref{wk4} and using \eqref{k2} we see from \eqref{wk3} that
$$
\|  \frak B(u)\tr \le \frac32 C_* +\frac{\eps}2 C_r^K\|u\tr^3,
$$
where $C_r^K$ is the constant from \eqref{k2}.
So
 the operator $\frak B$ preserves the ball
$$
B_R = \{ u\in X_r, \| u\tr \le R\}, \quad R= \frac32 C_* +1, 
$$
if $\eps\ll1$. Using Corollary \ref{c_7.1} we see that 
the operator $\frak B$ defines a contraction of the ball $B_{R}$ if
 $\eps\ll1 $. 
 Accordingly   eq. \eqref{w_k_e} has a unique solution $u\in B_R$. 
 
 Finally, 
 if $(u_{01}, f_1)$ and  $(u_{02}, f_2)$ are  two sets of initial data, satisfying  \eqref{wk4},  
 and $u^1, u^2$ are the corresponding solutions, then 
 $$
 \|u^1 - u^2 \tr  \le  |u_{01} -u_{02}  |_r  + \frac12 \| f_1 - f_2\tr +  \frac32 \eps\, C^K_r R^2   \|u^1-u^2\tr
 $$
 for all $t\geq 0$, so 
 $$
 \|u^1-u^2\tr \le (1-  \frac32 \eps\, C^K_r R^2  )^{-1}( |u_{01} -u_{02}  |_r  + \frac12 \| f_1 - f_2\tr). 
 $$
 This implies \eqref{wk6} if $\eps\ll1$ and   the theorem is proved.

	\section{Addenda}\label{s_append}
	\subsection{Discrete turbulence} \lbl{app_discr_turb}
	In order to study the double limit \eqref{limit'}  it is natural to examine first the limit
	$\nu\to0$ (with $L$ and $\rho$ fixed), known as the {\it limit of discrete turbulence}, see \cite{Naz11}. To do this consider the following
	{\it effective equation}:
	\begin{equation}\label{ku5}
	 \dot \goa_s+\gamma_s \goa_s = i\rho L^{-d} \Big({\sum}_{s_1} {\sum}_{s_2} \dess \delta(\omega) 
	\goa_{s_1} \goa_{s_2} \bar \goa_{s_3} - \goa_s|\goa_s|^2\Big) 
	+b(s) \dot\beta_s
	\,,\quad s\in{{\mathbb Z}}^d_L  \,,
	\end{equation}
	where $\delta(\omega)$ is the delta-function of $\omega=\omega^{12}_{3s}$ (equal 1 if $\omega^{12}_{3s}=0$ and equal 0 otherwise).
	The following result is proven in \cite{KM16, HKM}.
	
	\begin{theorem}\label{t_discrturb}
	If $r_*$ is sufficiently big in terms of $d$, 	 then 
		eq. \eqref{ku5} is well posed and mixing. When $L$ and $\rho$ are  fixed and  $\nu\to0$, then\\
		i) solutions of \eqref{ku4a} converge in distribution, on time intervals of order 1, to solutions of \eqref{ku5} with the same initial
		data at $\tau=0$;\\
		ii) the  unique  stationary measure $\mu_{\nu, L}$
		of \eqref{ku4a} weakly converges to the unique stationary measure of eq.  \eqref{ku5}.
	\end{theorem}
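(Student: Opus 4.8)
The plan is to prove the theorem by the method of stochastic averaging (the Krylov--Bogolyubov--Khasminskii scheme), following \cite{KM16, HKM}. The guiding observation is that in eq.~\eqref{ku4a} the nonlinearity $\cY_s(\goa,\nu^{-1}\tau)$ carries the fast oscillating factor $e^{i\nu^{-1}\tau\oms}$, so as $\nu\to0$ only the resonant contributions, those with $\oms=0$, should survive, and these are exactly the terms forming the nonlinearity of the effective equation \eqref{ku5}. Since $L$ and $\rho$ are kept fixed, both equations are systems indexed by the fixed lattice $\Z^d_L$, and the Schwartz decay of $b$ together with a large $r_*$ will provide all the compactness needed; so the first task is to set up a priori estimates uniform in $\nu$.

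First I would apply Ito's formula. Applied to $\|\goa(\tau)\|^2$ it reproduces the balance relation \eqref{ku8} and gives a bound on $\EE\|\goa(\tau)\|^2$ uniform in $\nu,L,\rho$; applied to weighted norms $\ssum_s \langle s\rangle^{2p}|\goa_s|^2$ and their powers, using that $\ga_s\gtrsim\langle s\rangle^{2r_*}$ dominates the cubic nonlinearity once $r_*$ is large in terms of $d$, it yields moment bounds $\EE\|\goa(\tau)\|_{h^p}^{2q}\le C(p,q)$, uniform in $\tau\ge0$ and in $\nu\in(0,1/2]$, for all $p$ below a threshold that grows with $r_*$. The same computation applied to \eqref{ku5} (where the oscillating exponents are replaced by $\delta(\omega)\le1$, so the estimates are only easier) shows that \eqref{ku5} is globally well posed. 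From these moment bounds and Kolmogorov's continuity criterion one gets H\"older-in-time estimates for $\goa^\nu$ in a slightly weaker norm, uniform in $\nu$, hence tightness of the laws of $\{\goa^\nu\}$ in $C([0,T];h)$.

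The core step is the identification of the limit, which gives part~i). Take a limit point of the laws of $\goa^\nu$ and verify for it the martingale problem associated with \eqref{ku5}. Writing eq.~\eqref{ku4a} in Duhamel form and splitting $\cY_s=\cY_s^{\mathrm{res}}+\cY_s^{\mathrm{nonres}}$ into the part with $\oms=0$ (which is precisely the nonlinearity of \eqref{ku5}) and the genuinely oscillating remainder, it suffices to show that $\int_0^\tau e^{-\ga_s(\tau-l)}\cY_s^{\mathrm{nonres}}(\goa(l),\nu^{-1}l)\,dl\to0$ in probability as $\nu\to0$. This I would do by freezing $\goa$ on a partition of $[0,\tau]$ of mesh $h(\nu)$ with $h(\nu)\to0$ and $h(\nu)/\nu\to\infty$: on each subinterval the oscillatory integral of the frozen integrand is $O(\nu/h(\nu))$, while replacing $\goa(l)$ by its frozen value costs $O(h(\nu)^{1/2})$ in $L^2$ by the time-H\"older bound of the previous paragraph; optimizing $h(\nu)$ kills the error. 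Since the stochastic convolution passes to the limit by continuity, every limit point solves \eqref{ku5}; the well-posedness established above identifies the limit and upgrades subsequential convergence to convergence of the whole family, proving~i). For part~ii) I would use that \eqref{ku4a} is mixing with a unique stationary measure $\mu_{\nu,L}$ (this is \cite{KM16}) and that \eqref{ku5} is likewise mixing, by the same coupling argument exploiting that $b>0$ on $\Z^d_L$ and the strong dissipation $\ga_s$, so it has a unique stationary measure $\mu_{0,L}$. The family $\{\mu_{\nu,L}\}_{0<\nu\le1/2}$ is tight on $h$ by the uniform moment bounds and the compactness of balls of $h^p$ in $h$; any weak limit of $\mu_{\nu_k,L}$ is, by~i) started from the stationary law, invariant for \eqref{ku5}, hence equals $\mu_{0,L}$, whence $\mu_{\nu,L}\rightharpoonup\mu_{0,L}$.

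The main obstacle is the averaging estimate in the third paragraph: controlling the non-resonant oscillatory Duhamel integral when $\goa(\cdot)$ is only a H\"older-in-time stochastic process rather than a smooth curve. The freezing argument needs quantitative time-regularity of $\goa^\nu$ that is uniform in $\nu$ and compatible with the scale $\nu$ of the oscillation, and the cubic nonlinearity re-enters the increment bounds; this is precisely where too small an $r_*$ would break the scheme, and where the technical heart of \cite{KM16, HKM} lies.
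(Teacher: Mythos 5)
The paper does not prove Theorem~\ref{t_discrturb} itself: it is quoted from \cite{KM16, HKM}, and the surrounding discussion (see the Introduction) identifies the method of those works as exactly the Krylov--Bogolyubov/Khasminskii resonant-averaging scheme you describe. Your plan --- uniform-in-$\nu$ moment bounds from Ito's formula in weighted norms using large $r_*$, tightness, identification of limit points by killing the non-resonant oscillatory Duhamel terms with a time-freezing argument (which works here because, with $L$ fixed, the non-zero frequencies $\oms\in 2L^{-2}\Z$ are bounded away from zero), and convergence of stationary measures via tightness plus invariance of weak limits --- is essentially the same approach as the cited proof.
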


	\subsection{Estimates for integrals with 
	 fast oscillating exponents, given by quadratic forms}
	\label{s_stat_ph}
	
	Let $\phi$  be a complex  $ L_1$--function on $\R^n$ such that its Fourier transform $\hat\phi\in L_1(\R^n)$, where in this appendix 
we define   $\hat \phi(\xi) $ as $ \int e^{-ix\cdot\xi} \phi(x)\,dx $. Let  $Q$ be a symmetric non-degenerate real
 $n\times n$-matrix.
	Consider the integral
	\be\label{I_la}
	I(\nu)= \int_{\R^n}e^{i\nu^{-1} x\cdot Qx/2 } \bar \phi(x) 
	\,dx,\quad 0<\nu\le 1.
	\ee
	The Fourier transform of the function $e^{i \nu^{-1} x\cdot Qx/2}=: F_0$ is
	$$
	\hat F_0=
	(2\pi\nu)^{n/2} \zeta\, |\det Q|^{-1/2}\, e^{-i\nu \xi\cdot Q^{-1}\xi/2}, 
	$$
	where $\zeta$ is some complex number of unit norm 
	(see \cite{DS, Hor}).  Formally applying  Parseval's  identity we get 
	\be\label{pars}
	I(\nu)=   (2\pi)^{-n}\int \hat F_0 \bar{\hat \phi}\, d\xi =
	\big(\frac{\nu}{2\pi}\big)^{n/2} \zeta\, |\det Q|^{-1/2}\,  \int_{\R^n} e^{-i\nu \xi\cdot Q^{-1}\xi/2} \ov{\hat \phi(\xi)}\,d\xi.
	\ee
	To justify the validity of \eqref{pars} in out situation, we approximate $F_0$ by functions 
	$$
	F_\eps =  e^{i \nu^{-1}  x\cdot (Q+ i\eps I)x /2}, \qquad \eps>0.
	$$
	These are  Schwartz functions whose Fourier transforms
$$
\hat F_\eps=
	(2\pi\nu)^{n/2}  (\det iQ - \eps I)^{-1/2}\, e^{-i\nu   \xi\cdot (Q+i\eps I)^{-1}\xi/2}
$$
converge to $\hat F_0(\xi)$ for each $\xi$, as $\eps\to0$ (see \cite{DS, Hor}).  For every $\eps>0$ Parseval's identity holds 
for $F_0$ replaced by $F_\eps$. Passing there to the limit as $\eps\to0$ using Lebesgue's theorem we recover \eqref{pars}. 
	So 
	\be\label{stph_est1}
	| I(\nu) | \le  \big(\frac{\nu}{2\pi}\big)^{n/2} |\det Q|^{-1/2} | \hat \phi|_{L_1}. 
	\ee
	We recall that 
	\be\label{sob}
	| \hat \phi|_{L_1} \le C_m \| \phi\|_{H^m(\R^n)} \quad \text{for any } \; m>n/2,
		\ee	
	where $H^m(\R^n)$ denotes the Sobolev space on $\R^n$.

	\subsection{Direct proof of estimate \eqref{int_est}}\label{a_stat_phase}
	
	In this appendix we estimate directly  integral $I_s$ in  \eqref{I_s1} with $T=\infty$ (indirectly this integral in the form \eqref{I_s} was 
	 estimated  in  \eqref{int_est} via Theorem~\ref{t_singint}).
	Setting $x=\sqrt2 (s_1-s)$ and $y=\sqrt2(s_2-s)$, in view of \eqref{omega} we get
	$\oms=-x\cdot y$.  
	Then, denoting $z=(x,y)$, we obtain  
	\be\lbl{I_s(tau)}
	I_s=\int_{\R^{2d}}dz\,\int_{\R^2_-} dl\;F_s(l,z) e^{i\nu^{-1}x\cdot y(l_2-l_1)},
	\qquad l=(l_1,l_2),
	\ee
	where 
	 $F_s(l,z)$ is the function 
	$$
	2B(s_1,s_2,s_3)e^{-|l_1-l_2|(\ga_1+\ga_2+\ga_3)+\ga_s(l_1+l_2)},\qu s_3=s_1+s_2-s,
	$$
	written in the coordinates $l,z$. 
Since $B$ is a Schwartz function and $\ga_s\geq 1$,  for $l\in\R^2_-$ the function $F_s$ satisfies the estimate
	\be\lbl{F_s-est}
	|\p_{z^\al}F_s(l,z)|\leq C_{\al}^\#(s)C_{\al}^\#(z)e^{(l_1+l_2)},
	\qquad \forall \al.
	\ee
	Let	$\cN$ be the subset of $\R^2_- =\{ l=(l_1,l_2)\}$ where $|l_1-l_2|\geq\nu$, and $\cN^c$ be its complement.
	Then, bounding the complex exponent in \eqref{I_s(tau)} by one, we find 
	\be\lbl{N^c}
	|\lan I_s,\cN^c\ran|\leq C^\#(s)\nu,
	\ee
	 where we recall that the notation $\lan I_s,\cN^c\ran$ was introduced in \eqref{<I,M>}.
	
	To estimate the term $\lan I_s,\cN\ran$, we note that the integral over $dz$ in \eqref{I_s(tau)} has the form \eqref{I_la} with $\nu:=\nu|l_1-l_2|^{-1}$ and $n=2d$. Then, due to \eqref{stph_est1} and \eqref{sob}, 
	\be\non
	|\lan I_s,\cN\ran|\leq \frac{1}{(2\pi)^d}\int_\cN\frac{\nu^d}{|l_1-l_2|^d}
	\big| \hat F_s(l,\cdot)\big|_{L^1} \, dl
	\leq C^\#(s)\nu^d\int_\cN\frac{e^{(l_1+l_2)}}{|l_1-l_2|^d}\, dl,
	\ee
	where in the last inequality we  used \eqref{F_s-est} and the definition of the set $\cN$. 
	Since $d\geq 2$, this implies 
	$\ 
	|\lan I_s,\cN\ran|\leq C_1^\#(s)\nu^d/\nu^{d-1}=C_1^\#(s)\nu.
	$
	Combining the obtained  inequality with \eqref{N^c}, we get the desired estimate \eqref{int_est}.

		\subsection{Proof of Proposition~\ref{l:a-goa}.}
	\lbl{s:a-goa}
	
	We start by explaining the scheme of the proof.  Let us express the process $a_s^{(n)}$ through the processes $a^{(0)}_k$ by iterating formula \eqref{an} $n$ times and compute the expectation $\EE |a_s^{(n)}(\tau)|^2$ applying the Wick theorem. It can be shown that when $L\to\infty$  this expectation stays of size one. The reason for this is as follows: arguing by induction we see that 
	an affine space of variables $k\in\Z^d_L$ which serves as the set of indices 
over which we take summation in the expression for $|a_s^{(n)}(\tau)|^2$ through $a^{(0)}_k, \,\bar a^{(0)}_k$  has dimension $4nd$ (indeed,
for $n=0$ this dimension obviously is zero, while for $n=1$ due to \eqref{a1} and the conjugated formula it is $4d$, etc.). 

 When computing the expectation $\EE|a_s^{(n)}(\tau)|^2$ we make the Wick pairing of terms $a^{(0)}_k, \,\bar a^{(0)}_{k'}$ and it is possible to see that
 the dimension  of the corresponding 
  space of variables becomes $2nd$ (since  in view of \eqref{corr_a_in_time} 
 for Wick-coupled terms $a^{(0)}_k$ and $\bar a^{(0)}_{k'}$ we should have $k=k'$).
 \footnote{	  This is true for the most of Wick-pairings, while for some of them the affine space of variables may become empty because of the restrictions of the type $\{s_1,s_2\}\ne\{s_3,s\}$ imposed by $\dep$.}
  At the same time, after $n$ iterations of \eqref{an} we get a factor $L^{-2nd}$ in the formula for  $\EE |a_s^{(n)}(\tau)|^2$; this leads to the claim. For an example of such computation see Section~\ref{s:2nd moments}. Similarly,  $\EE |\goa_s^{(n)}(\tau)|^2$ stays of order one when $L\to\infty$.
	 
	 Now we express the difference $\Delta^n_s(\tau):=\goa_s^{(n)}(\tau)-a_s^{(n)}(\tau)$ through the processes $a^{(0)}_k$ and write $\Delta^n_s(\tau)$ as a finite sum $\Delta^n_s=\sum\Delta_s^{n,j}$. Each term $\Delta^{n,j}_s$ is obtained by iterating  \eqref{an}, where at
 at least one iteration  $\dep$ is replaced by $-\de^{s_1}_{s_2}\de^{s_2}_{s_3}\de^{s_3}_{s}$ (this corresponds to the "diagonal" term $-\goa^{(n_1)}_s \goa^{(n_2)}_s \bar \goa^{(n_3)}_s$ in \eqref{ana}).
 This implies that dimension of the affine space of variables over which we take the summation  when computing $\EE|\Delta_s^{n,j}(\tau)|^2$ drops at least by $2d$ in comparison with that for $\EE|a_s^{(n)}(\tau)|^2$ and $\EE|\goa_s^{(n)}(\tau)|^2$. At the same time we still have the factor $L^{-2nd}$ in the formula for $\EE |\Delta_s^{n,j}|^2$. This implies the desired estimate  $\EE |\Delta_s^{n,j}|^2\leq C^\#(s) L^{-2d}$.
	
	We give a complete  proof only for the cases $n=1,2$   since in Theorem C which is  the main result of this paper we only
	deal with $a^{(n)}$ and  $\frak a^{(n)}$ such that $n\le2$.	 A general case can be considered similarly, by analyzing 
	 the dimensions of the affine spaces of variables over which we take the summation in the formula for $\Delta^n$. In particular, no 
	 delicate cancellation argument is used.   However, the proof  is cumbersome due to the 
 notational difficulty,    arising when expressing $\Delta^n$ through $a^{(0)}$.

	By \eqref{a1a} and \eqref{a1}, 
	\be\lbl{De1}
	\Delta^1_s(\tau)=-iL^{-d}\int_{-T}^\tau e^{-\ga_s(\tau-l)}|a_s^{(0)}(l)|^2a_s^{(0)}(l)\,dl,
	\ee
	so in view of \eqref{corr_a_in_time} the Wick theorem implies 
	$
	\EE|\Delta_s^1(\tau)|^2\leq C^\#(s)L^{-2d}.
	$
	Next,
	$\Delta^2_s=2\Delta^{2,1}_s+\Delta_s^{2,2}$, where the term
	\begin{align}\non
	\Delta^{2,1}_s(\tau)=iL^{-d}\int_{-T}^\tau e^{-\ga_s(\tau-l)}
		\Big( &\sum_{s_1,s_2}\delta'^{12}_{3s} (\Delta^{1}_{s_1} a^{(0)}_{s_2}{\bar a}^{(0)}_{s_3})(l)	e^{i\nu^{-1} l \oms} \\ \lbl{De21}
	& - \goa_s^{1}(l)|a_s^{(0)}(l)|^2\Big)\,dl=\Delta_s^{2,1;1}(\tau)-\Delta_s^{2,1;2}(\tau)
	\end{align}
	corresponds to the choice $n_1=1$ and $n_2=n_3=0$ in \eqref{ana} with $n=2$. The term $\Delta_s^{2,2}$  corresponds to the choice $n_1=n_2=0,\,n_3=1$ and has a similar form. 
	Below we only discuss  $\Delta^{2,1}$. 
	
 Because of the factor $L^{-d}$ in \eqref{De21} it is straightforward to see that $\EE|\Delta_s^{2,1;2}(\tau)|^2\leq C^\#(s)L^{-2d}. $ Let us study the term $\Delta_s^{2,1;1}$. By \eqref{De21} joined with \eqref{De1},
	\begin{align}\non
	\EE|\Delta_s^{2,1;1}(\tau)|^2&=
	L^{-4d}\int_{-T}^\tau dl\,\int_{-T}^l dl_1\,\int_{-T}^\tau dl'\,\int_{-T}^{l'} dl'_1\, e^{-\ga_s(2\tau-l-l')} \\ \non
	&\sum_{s_1,s_2,s_1',s_2'}\delta'^{12}_{3s}\depp 
	 e^{-\ga_{1}(l-l_1)-\ga_{1'}(l'-l_1')} 	e^{i\nu^{-1} (l \oms-l'\om^{1'2'}_{3's})}\\ \lbl{EDe2}
	\EE \Big(a_{1}^{(0)}&(l_1)|a_{1}^{(0)}(l_1)|^2\, a_{2}^{(0)}(l)\bar a_{3}^{(0)}(l)\,
	\bar a_{1'}^{(0)}(l'_1)|a_{1'}^{(0)}(l'_1)|^2\, \bar a_{2'}^{(0)}(l') a_{3'}^{(0)}(l')\Big).
	\end{align}
Next we apply the Wick theorem. Due to \eqref{corr_a_in_time}, non-conjugated variables $a_k^{(0)}$ can be coupled with conjugated variables $\bar a_{k'}^{(0)}$ only, and $k$ should equals to $k'$. Consider e.g. the case when $a^{(0)}_1$ is coupled with $\bar a^{(0)}_1$, $a^{(0)}_1$ with $\bar a^{(0)}_{1'}$, $a^{(0)}_2$ with $\bar a^{(0)}_{2'}$,  $a^{(0)}_{3'}$ with $\bar a^{(0)}_3$ and $a^{(0)}_{1'}$ with $\bar a^{(0)}_{1'}$ (we write $|a_k|^2$ as $a_k\bar a_k$). Then, due to  \eqref{corr_a_in_time},
the corresponding to this Wick pairing term in the expression for $\EE|\Delta_s^{2,1;1}(\tau)|^2$ does not exceed
$$
CL^{-4d}\sum_{s_1,s_2}\dep\big(B(s_1)\big)^3B(s_2)B(s_3)\leq L^{-2d} C^\#(s),
$$
where $B(s)$ is defined in \eqref{n_s_0}.
The other Wick pairings can be considered similarly ~--- the dimension of the space of variables over which we take the summation always does not exceed $2d$, so the resulting estimate will be the same.

\end{document}